\definecolor{ForestGreen}{rgb}{0.0333,0.4451,0.0333}
\definecolor{DarkRed}{rgb}{0.65,0,0}
\definecolor{Red}{rgb}{1,0,0}
\newtheorem{theorem}{Theorem}[section]
\newtheorem{lemma}[theorem]{Lemma}
\newtheorem{definition}[theorem]{Definition}
\newcommand{\eps}{\varepsilon}
\newcommand{\poly}{\text{poly}}
\newcommand{\E}{\mathbb{E}}
\newcommand{\supp}{\text{supp}}
\newcommand{\st}{\text{val}}
\newcommand{\deficit}{\text{deficit}}
\renewcommand{\l}{\ell}
\newcommand{\w}{w}
\newcommand{\U}{U}
\newcommand{\OPT}{\text{OPT}}
\DeclareMathOperator*{\argmax}{arg\,max}
\newcommand{\calP}{\mathcal{P}}
\newcounter{note}[section]
\newcommand{\mcC}{\mathcal{C}}
\newcommand{\mcE}{\mathcal{E}}
\newcommand{\mcH}{\mathcal{H}}
\newcommand{\mcI}{\mathcal{I}}
\newcommand{\mcP}{\mathcal{P}}
\newcommand{\mcS}{\mathcal{S}}
\newcommand{\mcT}{\mathcal{T}}
\newcommand{\mcV}{\mathcal{V}}
\newif\ifcomments
\newcommand{\tnote}[1]{\colnote{red}{#1--Ohad}{TS}}
\newcommand{\tnote}[1]{}
\newcommand{\yaowei}[1]{}
\title{Maximum Length-Constrained Flows and Disjoint Paths:\\ Distributed, Deterministic and Fast
}
\author{
\begin{tabular}[t]{c@{\extracolsep{1.3em}}cc} 
        Bernhard Haeupler  & \qquad D Ellis Hershkowitz & Thatchaphol Saranurak \\
        \small Carnegie Mellon University \&  & \small \qquad Carnegie Mellon University \& & \small University of Michigan \\
        \small  ETH Z\"urich &  \small \qquad ETH Z\"urich  & \\
        \small\texttt{haeupler@cs.cmu.edu} & \small \qquad \texttt{dhershko@cs.cmu.edu} & \small \texttt{thsa@umich.edu}
\end{tabular}
}
\date{}
\begin{document}

\maketitle

\begin{abstract}
    Computing routing schemes that support both high throughput and low latency is one of the core challenges of network optimization. Such routes can be formalized as $h$-length flows which are defined as flows whose flow paths have length at most $h$. Many well-studied algorithmic primitives---such as maximal and maximum length-constrained disjoint paths---are special cases of $h$-length flows. Likewise the optimal $h$-length flow is a fundamental quantity in network optimization, characterizing, up to poly-log factors, how quickly a network can accomplish numerous distributed primitives.
    
    In this work, we give the first efficient algorithms for computing $(1 - \epsilon)$-approximate $h$-length flows that are nearly ``as integral as possible.'' We give deterministic algorithms that take $\tilde{O}(\text{poly}(h, \frac{1}{\epsilon}))$ parallel time and $\tilde{O}(\text{poly}(h, \frac{1}{\epsilon}) \cdot 2^{O(\sqrt{\log n})})$ distributed CONGEST time. We also give a CONGEST algorithm that succeeds with high probability and only takes $\tilde{O}(\text{poly}(h, \frac{1}{\epsilon}))$ time. 
    
    Using our $h$-length flow algorithms, we give the first efficient deterministic CONGEST algorithms for the maximal length-constrained disjoint paths problem---settling an open question of Chang and Saranurak (FOCS 2020)---as well as essentially-optimal parallel and distributed approximation algorithms for maximum length-constrained disjoint paths. The former greatly simplifies deterministic CONGEST algorithms for computing expander decompositions. We also use our techniques to give the first efficient and deterministic $(1-\epsilon)$-approximation algorithms for bipartite $b$-matching in CONGEST. Lastly, using our flow algorithms, we give the first algorithms to efficiently compute $h$-length cutmatches, an object at the heart of recent advances in length-constrained expander decompositions.
\end{abstract}

    
\pagenumbering{gobble}
\newpage

\tableofcontents
\newpage
\pagenumbering{arabic}

\setcounter{page}{1}

\section{Introduction}\label{sec:intro}

Throughput and latency are two of the most fundamental quantities in a communication network. Given node sets $S$ and $T$, throughput measures the rate at which bits can be delivered from $S$ to $T$ while the worst-case latency measures the maximum time it takes for a bit sent from $S$ to arrive at $T$. Thus, a natural question in network optimization is:
\begin{quote}\centering\textit{
    How can we achieve high throughput while maintaining a low latency?}
\end{quote}
If we imagine that each edge in a graph incurs some latency and edges in a graph can only support limited bandwidth, then achieving high throughput subject to a latency constraint reduces to finding a large collection of paths that are both short and non-overlapping. One of the simplest and most well-studied ways of formalizing this is the maximal edge-disjoint paths problems (henceforth we use $h$-length to mean length \emph{at most} $h$). \newcommand{\maximalEdgeDisjointUndirected}{
\begin{quote}
    \textbf{Maximal Edge-Disjoint Paths}: Given graph $G = (V, E)$, length constraint $h \geq 1$ and two disjoint sets $S, T \subseteq V$, find a collection of $h$-length edge-disjoint $S$ to $T$ paths $\mcP$ such that any $h$-length $S$ to $T$ path shares an edge with at least one path in $\mcP$.
\end{quote}
}\maximalEdgeDisjointUndirected
The simplicity of the maximal edge-disjoint paths problem has made it a crucial primitive in numerous algorithms. For example, algorithms for maximal edge-disjoint paths are used in approximating maximum matchings \cite{lotker2008improved} and computing expander decompositions \cite{chuzhoy2020deterministic,saranurak2019expander}. While efficient randomized algorithms are known for maximal edge-disjoint paths in the CONGEST model of distributed computation \cite{lotker2008improved,chang2020deterministic}, no deterministic CONGEST algorithms are known. Indeed, the existence of such algorithms was stated as an open question by \citet{chang2020deterministic}.

Of course, a maximal collection of routing paths need not be near-optimal in terms of cardinality and so a natural extension of the above problem is its \emph{maximum} version. 
\newcommand{\maximumEdgeDisjointUndirected}{
\begin{quote}
    \textbf{Maximum Edge-Disjoint Paths}: Given graph $G = (V, E)$, length constraint $h \geq 1$ and disjoint sets $S, T \subseteq V$, find a max cardinality collection of $h$-length edge-disjoint $S$ to $T$ paths.
\end{quote}
}\maximumEdgeDisjointUndirected
First studied by \citet{lovasz1978mengerian}, this problem and its variants have received considerable attention, especially for small constant $h$ \cite{exoo1983line,itai1982complexity,bley2003complexity,kleinberg1996approximation,baveja2000approximation,broder1994existence,golovach2011paths}. It is unfortunately known to suffer from strong hardness results: the above problem has an $\Omega(h)$ integrality gap and is $\Omega(h)$-hard-to-approximate under standard complexity assumptions in the directed case \cite{guruswami2003near,baier2010length}. Indeed, as observed in several works \cite{andoni2020parallel,haeupler2021tree,kleinberg1996approximation} adding length constraints can make otherwise tractable problems computationally infeasible and render otherwise structured objects poorly behaved.

The above problems are common primitives because their solutions are special cases of a more general class of routing schemes that are central to distributed computing, length-constrained flows.
\begin{quote}
    \textbf{Maximum Length-Constrained Flow}: Given digraph $D = (V, A)$, length constraint $h \geq 1$ and two disjoint sets $S, T \subseteq V$, find a collection of $h$-length $S$ to $T$ paths $\mcP$ and a value $f_P \geq 0$ for $P \in \mcP$ where $\sum_{P \ni a} f_P \leq 1$ for every $a \in A$ and $\sum_P f_P$ is maximized.
\end{quote}
In several formal senses, length-constrained flows are \emph{the} problem that describes how to efficiently communicate in a network. \citet{haeupler2020network} showed that, up to poly-log factors, the maximum length-constrained flow gives the minimum makespan of multiple unicasts in a network, even when (network) coding is allowed. Even stronger, the ``best'' length-constrained flow gives, up to poly-log factors, the optimal running time of a CONGEST algorithm for numerous distributed optimization problems, including minimum spanning tree (MST), approximate min-cut and approximate shortest paths \cite{haeupler2021universally}. 

Correspondingly, there has been considerable work on centralized, parallel and distributed algorithms for computing length-constrained flows, again especially for small constant $h$ \cite{mahjoub2010max,altmanova2019polynomial,awerbuch2007distributed,awerbuch2007distributedSteep,awerbuch2008greedy,cohen1995approximate,pienkosz2015integral,baier2010length,fleischer2007quickest,dahl2004directed}. Most notably for this work, \citet{awerbuch2007distributed} gave efficient (about $\poly(h)$) deterministic algorithms in the distributed ROUTERS model and \citet{altmanova2019polynomial} gave sequential algorithms that take about $O(m^2 \cdot \poly(h))$ time. The principal downside of the former's algorithms is that it may produce solutions that are arbitrarily fractional in the sense that they are a convex combination of arbitrarily-many integral solutions. The latter does not do this but does not clearly admit an efficient distributed or parallel implementation. Often, however, there is a need for efficient algorithms that produce (nearly) integral length-constrained flows; in particular computing many classic integral objects (such as matchings) reduces to length-constrained flows with $h = O(1)$ and so, if we hope to use length-constrained flows for computing such objects integrally, we often require that these flows be (nearly) integral.


Thus, in summary a well-studied class of routing problems aims to capture both latency and throughput concerns. These problems are known to serve as important algorithmic primitives as well as complete characterizations of the distributed complexity of many problems. However, the simplest of these problems---maximal edge-disjoint paths---lacks good deterministic CONGEST algorithms while the maximum version of this problem has no known efficient (distributed) approximation algorithms and its fractional generalization, length-constrained flows, lack efficient algorithms with reasonable integrality guarantees.

\subsection{Our Contributions}\label{sec:contributions}

We give the first efficient algorithms for computing these objects in several models of computation. 

\subsubsection{Algorithms for Length-Constrained Flows}

Given a digraph with $n$ nodes and $m$ arcs, our main theorem shows how to \emph{deterministically} compute $h$-length flows that are $(1 - \eps)$-approximate in $\tilde{O}(\poly(h, \frac{1}{\eps}))$ parallel time with $m$ processors and $\tilde{O}(\poly(h, \frac{1}{\eps}) \cdot 2^{O(\sqrt{\log n})})$ distributed CONGEST time. We additionally give a randomized CONGEST algorithm that succeeds with high probability and runs in time $\tilde{O}(\poly(h, \frac{1}{\eps}))$. Our distributed algorithms for length-constrained flows algorithms can be contrasted with the best distributed algorithms for (non-length-constrained) flows which run in $(d + \sqrt{n}) \cdot n^{o(1)}$ time \cite{ghaffari2015near}, nearly matching an $\tilde{\Omega}(d + \sqrt{n})$ lower bound of \citet{sarma2012distributed}.\footnote{We use $\tilde{O}$ notation to suppress dependence on $\poly(\log n)$ factors, ``with high probability'' to mean with probability at least $1 - \frac{1}{\poly(n)}$ and $d$ for the diameter of the input graph.}


Our algorithms work for general arc capacities (i.e.\ connection bandwidths), general lengths (i.e.\ connection latencies) and multi-commodity flow variants. Furthermore, they are are sparse with support size $\poly(h,1/\eps) \cdot |A|$ and also come with a certifying dual solution; a so-called moving cut \cite{haeupler2021universally,chlamtavc2020cut,awerbuch2007distributed}. Lastly, and most critically, the flows we compute are nearly ``as integral as possible'':
\begin{quote}
    \textbf{Optimal Integrality:} for constant $\eps > 0$ they are a convex combinations of $\tilde{O}(h)$ sets of arc-disjoint paths. No near-optimal $h$-length flow can be a convex combination of $o(h)$ such sets since, by an averaging argument, this would violate the aforementioned $\Omega(h)$ integrality gap.
\end{quote}
As an immediate consequence of our parallel algorithms we also get deterministic sequential algorithms running in $\tilde{O}(m \cdot \poly(h, \frac{1}{\eps}))$ which improves upon the aforementioned $O(m^2)$-dependence of \citet{altmanova2019polynomial}. Thus our work can be understood as getting the best of prior works---the (near)-integrality of \citet{altmanova2019polynomial} and the efficiency of \citet{awerbuch2007distributed}---both of which are necessary for our applications. See \Cref{sec:mainResults} for a formal description.


\subsubsection{Applications of our Length-Constrained Flow Algorithms}
Using the optimal integrality of our solutions, we are able to achieve several new results.

    \paragraph{Maximal and Maximum Edge-Disjoint Paths.}
    First, as an almost immediate corollary of our length-constrained flow algorithms and their near-optimal integrality, we derive the first efficient deterministic CONGEST algorithms for maximal edge-disjoint paths. This settles the open question of \citet{chang2020deterministic}.
    
    Similarly, we give efficient parallel and distributed  $\tilde{O}(h)$-approximation algorithms for the maximum edge-disjoint paths problem, nearly matching the known $\Omega(h)$ hardness. See \Cref{sec:disjointPaths} for details as well and additional results on variants of these problems.

    \paragraph{Simpler Distributed Expander Decompositions Deterministically.} As a consequence of our maximal edge-disjoint paths algorithms, we are able to greatly simplify known distributed algorithms for deterministically computing expander decompositions. 
    
    We refer the reader to \citet{chang2020deterministic} for a more thorough overview of the area, but provide a brief synopsis here. An $(\epsilon, \phi)$ expander decomposition removes an $\epsilon$ fraction of edges from a graph so as to ensure that each remaining connected component has conductance at least $\phi$. Expander decompositions have led to many recent exciting breakthroughs, including in linear systems \cite{spielman2004nearly}, unique games \cite{arora2015subexponential,trevisan2005approximation,raghavendra2010graph}, minimum cut \cite{kawarabayashi2018deterministic}, and dynamic algorithms \cite{nanongkai2017dynamic}. 
    
    \citet{chang2020deterministic} gave the first deterministic CONGEST algorithms for constructing expander decompositions. However, most existing paradigms for computing expander decompositions repeatedly find maximal disjoint paths. As a result of the lack of such algorithms, the authors employ significant technical work-arounds, observing:
    \begin{quote}\textit{
        In the deterministic setting, we are not aware of an algorithm that can [efficiently] solve [maximal disjoint paths]... [A solution to this problem would] simplify our deterministic expander decomposition and routing quite a bit. \cite{chang2020deterministic}}
    \end{quote}
    
    Our deterministic CONGEST algorithms for maximal edge-disjoint paths when plugged into \citet{chang2020deterministic} provide a conceptual simplification of deterministic distributed algorithms by bringing them in line with known paradigms. Additionally, we note that the algorithm of \citet{chang2020deterministic} incurs a $2^{O(\sqrt{\log n})}$ overhead regardless of the maximal disjoint paths algorithm used so further improvement requires a fundamentally different approach. See \Cref{sec:expander}.

    \paragraph{Bipartite $b$-Matching.} Using our length-constrained flow algorithms, we give the first efficient and deterministic $(1-\eps)$-approximations for bipartite $b$-matching in CONGEST. $b$-matching is a classical problem in combinatorial optimization which generalizes matching where we are given a graph $G = (V, E)$ and a function $b : V \to \mathbb{Z}_{> 0}$. Our goal is to assign integer values to edges so that each vertex $v$ has at most $b(v)$ assigned value across its incident edges. $b$-matching and its variants have been extensively studied in distributed settings \cite{fischer2020improved,balliu2021lower, brandt2020truly,koufogiannakis2009distributed,fischer2021local,ahmadi2018distributed,faour2021approximating,halldorsson2015distributed}. A standard folklore reduction which replaces  vertex $v$ with $b(v)$ non-adjacent copies and edge $e = \{u, v\}$ with a bipartite clique between the copies of $u$ and $v$ reduces $b$-matching to matching but requires overhead $\max_{\{u, v\} \in E} b(u) \cdot b(v)$ to run in CONGEST. Thus, the non-trivial goal here is a CONGEST algorithm whose running time does not depend on $b$. While $b$-matching has been extensively studied in distributed settings, currently all that is known is either deterministic algorithms which give $(\frac{1}{2} - \eps)$-approximations in $\tilde{O}(\poly(\log \frac{1}{\eps}))$ time \cite{fischer2020improved} or randomized $(1-\eps)$-approximations in $\tilde{O}(\poly(\frac{1}{\eps}))$ time but which only allow for each edge to be chosen at most once \cite{halldorsson2015distributed}.\footnote{The consensus in the literature generally seems to be that allowing for edges to be chosen multiple times  is the better generalization of matching: e.g.\  \citet{gabow2013algebraic} state ``The fact that b-matchings have an unlimited number of copies of each edge makes them decidedly simpler. For instance b-matchings have essentially the same blossom structure (and linear programming dual variables) as ordinary matching.''}
    
    Similarly to classical matching, it is easy to reduce bipartite $b$-matching to $O(1)$-length flow. Thus, applying our length-constrained flow algorithms and our flow rounding techniques allows us to give the first $(1-\eps)$-approximation for $b$-matching in bipartite graphs running in CONGEST time $\tilde{O}(\poly(\frac{1}{\eps}) \cdot 2^{O(\sqrt{\log n})})$. Our algorithms are deterministic and work for the more general problem where each edge has some capacity indicating the number of times it may be chosen. See \Cref{sec:bmatching}.
    
    \paragraph{Length-Constrained Cutmatches.} Our results allow us to give the first efficient constructions of length-constrained cutmatches. Informally, an $h$-length cutmatch with congestion $\gamma$ is a collection of $h$-length $\gamma$-congestion paths between two vertex subsets along with a moving cut that shows that adding any more $h$-length paths to this set would incur congestion greater than $\gamma$. Like our flows, our cutmatches are also sparse. See \Cref{sec:cutMatches}.
    
    A recent work \cite{haeuplerExpander2022} uses our length-constrained cutmatches algorithms to give the first efficient constructions of length-constrained expander decompositions. This work uses these constructions to give CONGEST algorithms for problems, including MST, $(1+\epsilon)$-min-cut and $(1+\epsilon)$-shortest paths, that are guaranteed to run in sub-linear rounds if such algorithms exist on the network. 

\section{Notation and Conventions}\label{sec:notation}

Before moving on to a formal statement of length-constrained flows, moving cuts and our results we introduce some notation and conventions. Suppose we are given a digraph $D = (V,A)$.

\paragraph{Digraph Notation.} We will associate three functions with the arcs of $D$. We clarify these here.
\begin{enumerate}
    \item \textbf{Lengths:} We will let $ \l = \{\l_a\}_a$ be the \emph{lengths} of arcs in $A$. These lengths will be input to our problem and determine the lengths of paths when we are computing length-constrained flows. Throughout this work we imagine each $\l_a$ is in $\mathbb{Z}_{> 0}$. Informally, one may think of $\l$ as giving link latencies. We assume $\l_a$ is $\poly(n)$.
    \item \textbf{Capacities:} We will let $\U = \{\U_a\}_a$ be the capacities of arcs in $A$. These capacities will specify a maximum amount of flow (either length-constrained or not) that is allowed over each arc. Throughout this work we imagine each $\U_a$ is in $\mathbb{Z}_{\geq 0}$ and we let $\U_{\max}$ give $\max_a \U_a$. We assume $\U_{\max}$ is $\poly(n)$. Informally, one may think of $\U$ as link bandwidths.
    \item \textbf{Weights:} We will let $\w = \{\w_a\}_a$ stand for the weights of arcs in $A$. These weights will be given by our moving cut solutions. Throughout this work each $\w_a$ will be in $\mathbb{R}_{> 0}$.
\end{enumerate}

In general we will treat a path $P = ((v_1, v_2), (v_2, v_3), \ldots)$ as series of consecutive arcs in $A$ (all oriented consistently towards one endpoint). For any one of these weighting functions $\phi \in \{\l, \U, \w\}$, we will let $d_\phi(u,v)$ give the minimum value of a path in $D$ that connects $u$ and $v$ where the value of a path $P$ is $\phi(P) := \sum_{a \in P} \phi(a)$. That is, we think of $d_\phi(u,v)$ as the distance from $u$ to $v$ with respect to $\phi$. We will refer to paths which minimize $\w$ as lightest paths (so as to distinguish them from e.g.\ shortest paths with respect to $\l$).

We let $\delta^+(v) := \{a : a = (v, u)\}$ and $N^+(v) := \{u : (v,u) \in A\}$ give the out arcs and out neighborhoods of vertex $v$. Likewise $\delta^+(W) := \bigcup_{v\in W} \delta^+(W)$. $\delta^-(v) := \{a : a = (u, v)\}$ and $N^-(v) := \{u : (u,v) \in A\}$ are defined symmetrically. We let $\mcP(u,v)$ be all simple paths between $u$ and $v$ and for $W,W' \subseteq V$, we let $\mcP(W, W') := \bigcup_{w \in W, w' \in W'} \mcP(w, w')$ give all paths between vertex subsets $W$ and $W'$.

Given sources $S \subseteq V$ and sinks $T \subseteq V$, we say that $D$ is an $S$-$T$ DAG if $\delta^-(v) = \emptyset$ iff $v \in S$ and $\delta^+(v) = \emptyset$ iff $v \in T$. We say that such an $S$-$T$ DAG is an $h$-layer DAG if the vertex set $V$ can be partitioned into $h+1$ layers $S = V_1 \sqcup V_2 \sqcup  \ldots \sqcup V_{h+1} = T$ where any arc $a = (u,v)$ is such that $u \in V_i$ and $v \in V_{j}$ for some $i$ and $j > i$. We say that $D$ has diameter at most $d$ if in the graph where we forget arc directions every pair of vertices is connected by a path of at most $d$ edges. Notice that the diameter of an $h$-layer $S$-$T$ DAG might be much larger than $h$.

For a (di)graph $D = (V,A)$ and a collection of subgraphs $\mcH$ of $D$, we let $D[\mcH]$ be the graph induced by the union of elements of $\mcH$. $A[\mcH]$ is defined as all arcs contained in some element of $\mcH$.

\paragraph{(Non-Length Constrained) Flow Notation and Conventions.}
We will make extensive use of non-length constrained flows and so clarify our notation for such flows here.

Given a DAG $D = (V,A)$ with capacities $\U$ we will let a flow $f$ be any assignment of non-negative values to arcs in $a$ where $f_a$ gives the value that $f$ assigns to $a$ and $f_a \leq \U_a$ for every $a$. If it is ever the case that $f_a > \U_a$ for some $a$, we will explicitly state that this ``flow'' does not respect capacities. We say that $f$ is an integral flow if it assigns an integer value to each arc. We let $f(A') := \sum_{a \in A'} f_a$ for any $A' \subseteq A$. We define the deficit of a vertex $v$ as $\deficit(f,v) := |\sum_{a \in \delta^+(f,v)} f_a - \sum_{a \in \delta^-(v)} f_a|$. We will let $\supp(f) := \{a : f_a > 0\}$ give the support of flow $f$.

Given desired sources $S$ and sinks $T$, we let $\deficit(f) := \sum_{v \not \in S \cup T} \deficit(f,v)$ be the total amount of flow produced but not at $S$ plus the amount of flow consumed but not at $T$; likewise, we say that a flow $f$ is an $S$-$T$ flow if $\deficit(f) = 0$. We let $\st(f) = \bigcup_{s \in S} f(\delta^+(s))$ be the amount of flow delivered by an $S$-$T$ flow $f$ and we say that $f$ is $\alpha$-approximate if $\st(f) \geq \alpha \cdot \st(f^*)$ where $f^*$ is the $S$-$T$ flow that maximizes $\st$. We say that $f$ is $\alpha$-blocking for $\alpha \in [0, 1]$ if for every path from $S$ to $T$ there is some $a \in P$ where $f_a \geq \alpha \cdot \U_a$. We say that a $1$-blocking flow is blocking.  We say that flow $f'$ is a subflow of $f$ if $f'_a \leq f_a$ for every $a$. 

Given a maximum capacity of $\U_{\max}$, we may assume that every flow $f$ is of the form $f = \sum_i f^{(i)}$ where $(f^{(i)})_a \in \{0, 2^{\log(\U_{\max}) - i}\}$ for every $a$ and $i$; that is, a given flow can always be decomposed into its values on each bit. We call $f^{(i)}$ the $i$th bit flow of $f$ and call the decomposition of $f$ into these flows be the bitwise decomposition of $f$.

\paragraph{Length-Constrained Notation.} 
Given a length function $\l$, vertices $u, v \in V$ and length constraint $h \geq 1$, we let $\mcP_h(u,v) := \{P \in \mcP(u,v) : \l(P) \leq h\}$ be all paths between $u$ and $v$ which have length at most $h$. For vertex sets $W$ and $W'$, we let $\mcP_h(W,W'):= \{P \in \mcP(W, W') : \l(P) \leq h\}$. 
If $G$ also has weights $\w$ then we let $d_\w^{(h)}(u,v) := \min_{P \in \mcP_h(u,v)}\w(P)$ give the minimum weight of a length at most $h$ path connecting $u$ and $v$. For vertex sets $W, W' \subseteq V$ we define $d_\w^{(h)}(W,W') := \min_{P \in \mcP_h(W, W')}\w(P)$ analogously. 
As mentioned an $h$-length path is a path of length at most $h$.

\paragraph{Parallel and Distributed Models.} 
Throughout this work the parallel model of computation we will use is the EREW PRAM model \cite{karp1989survey}. Here we are given some processors and shared random access memory; every memory cell can be read or written to by one processor at a time.

The distributed model we will make use of is the CONGEST model, defined as follows \cite{peleg2000distributed}. The network is modeled as a graph $G=(V,E)$ with $n=|V|$ nodes and $m=|E|$ edges. Communication is conducted over discrete, synchronous rounds. During each round each node can send an $O(\log n)$-bit message along each of its incident edges. Every node has an arbitrary and unique ID of $O(\log n)$ bits, first only known to itself. The running time of a CONGEST algorithm is the number of rounds it uses. We will slightly abuse terminology and talk about running a CONGEST algorithm in digraph $D$; when we do so we mean that the algorithm runs in the (undirected) graph $G$ which is identical to $D$ but where we forget the directions of arcs. In this work, we will assume that if an arc $a$ has capacity $\U_a$ then we allow nodes to send $O(\U_a \cdot \log n)$ bits over the corresponding edge, though none of our applications rely on this assumption.\footnote{We only make use of this assumption once and only make use of it in our deterministic algorithms (in \Cref{lem:decongest}). Furthermore, we do not require this assumption if the underlying digraph is a DAG.}

\section{Length-Constrained Flows, Moving Cuts and Main Result}\label{sec:mainResults}
We proceed to more formally define a length-constrained flow, moving cuts and our main result which computes them. While we have defined length-constrained flows in \Cref{sec:intro} for unit capacities, it will be convenient for us to formally define length-constrained flows for general lengths and capacities in terms of a relevant linear program (LP). We do so now.

Suppose we are given a digraph $D = (V,A)$ with arc capacities $\U$, lengths $\l$ and specified source and sink vertices $S$ and $T$. A maximum $S$ to $T$ flow in $D$ in the classic sense can be defined as a collection of paths between $S$ and $T$ where each path receives some value and the total value incident to an edge does not exceed its capacity. This definition naturally extends to the length-constrained setting where we imagine we are given some length constraint $h \geq 1$ and define a length-constrained flow as a collection of $S$ to $T$ paths each of length at most $h$ where each such path $P$ receives some some value $f_P$. Additionally, these values must respect the capacities of arcs. More precisely, we have the following LP with a variable $f_P$ for each path $P \in \mcP_h(S,T)$.
\begin{align*}\label{LP:hopConFlow}
    \max \sum_{P \in \calP_h(S,T)} f_P \quad \text{s.t.} \tag{Length-Constrained Flow LP} \\
    \sum_{P : a \in P} f_P \leq \U_a & \qquad \forall a \in A\\
    0 \leq f_P & \qquad \forall P \in \calP_h(S,T)
\end{align*}
For a length-constrained flow $f$, we will use the shorthand $f(a) := \sum_{P \ni a} f_P$ and $\supp(f) := \{P : f_P > 0\}$ to give the support of $f$. We will let $\st(f) := \sum_{P \in \mcP_h(S,T)} f_P$ give the value of $f$. An $h$-length flow, then, is simply a feasible solution to this LP.

\begin{definition}[$h$-Length Flow]
 Given digraph $D = (V,A)$ with lengths $\l$, capacities $\U$ and vertices $S, T \subseteq V$, an $h$-length $S$-$T$ flow is any feasible solution to \ref{LP:hopConFlow}.
\end{definition}

With the above definition of length-constrained flows we can now define moving cuts as the dual of length-constrained flows with the following moving cut LP with a variable $\w_a$ for each $a \in A$.
\begin{align*}\label{LP:movingCut}
    \min \sum_{a \in A} \U_a \cdot \w_a \quad \text{s.t.} \tag{Moving Cut LP}  \\
    \sum_{a \in P} \w_a \geq 1 & \qquad \forall P \in \calP_h(S,T)\\
    0 \leq \w_a & \qquad \forall a \in A
\end{align*}
An $h$-length moving cut is simply a feasible solution to this LP.

\begin{definition}[$h$-Length Moving Cut]
 Given digraph $D = (V,A)$ with lengths $\l$, capacities $\U$ and vertices $S, T \subseteq V$, an $h$-length moving cut is any feasible solution to \ref{LP:movingCut}.
\end{definition}

We will use $f$ and $\w$ to stand for solutions to \ref{LP:hopConFlow} and \ref{LP:movingCut} respectively. We say that $(f, \w)$ is a feasible pair if both $f$ and $\w$ are feasible for their respective LPs and that $(f, \w)$ is $(1 \pm \epsilon)$-approximate for $\epsilon \geq 0$ if the moving cut certifies the value of the length-constrained flow up to a $(1 - \epsilon)$; i.e.\ if $(1-\epsilon)\sum_{a} U_a \cdot \w_a \leq \sum_P f_P$. 

We clarify what it means to compute $(f, \w)$ in CONGEST. When we are working in CONGEST we will say that $f$ is computed if each vertex $v$ stores the value $f_a(h') := \sum_{P \in \mcP_{h,h'}(s,a,t)} f_P$ for every $a$ incident to $v$ and $h' \leq h$. Here, we let $ \mcP_{h,h'}(s,a,t)$ be all paths in $\mcP_h(S,T)$ of the form $P' =(a_1, a_2, \ldots a, b_1, b_2, \ldots)$ where the path $(a, b_1, b_2, \ldots)$ has length exactly $h'$ according to $\l$. We say moving cut $\w$ is computed if each vertex $v$ knows the value of $w_a$ for its incident arcs. Likewise, we imagine that each node initially knows the capacities and lengths of its incident arcs.

With the above notions, we can now state our main result. In the following we say $f$ is integral if $f_P$ is an integer for every path in $\mcP_h(S,T)$. The notable aspect of our results is the polynomial dependence on $h$ and $\frac{1}{\epsilon}$; the polynomials could be optimized to be much smaller.

\begin{restatable}{thm}{mainThm}
\label{thm:main}
Given a digraph $D=(V,A)$ with capacities $\U$, lengths $\l$, length constraint $h \geq 1$, $ \eps > 0$ and source and sink vertices $S, T \subseteq V$, one can compute a feasible $h$-length flow, moving cut pair $(f, \w)$ that is $(1 \pm \epsilon)$-approximate in:
\begin{enumerate}
    \item Deterministic parallel time $\tilde{O}(\frac{1}{\eps^9} \cdot h^{17})$ with $m$ processors where $|\supp(f)| \leq \tilde{O}(\frac{h^{10}}{\eps^7} \cdot |A|)$;
    \item Randomized CONGEST time $\tilde{O}(\frac{1}{\eps^{9}} \cdot h^{17})$ with high probability;
    \item Deterministic CONGEST time $\tilde{O}\left(\frac{1}{\eps^9} \cdot h^{17}  + \frac{1}{\eps^7} \cdot h^{16} \cdot (\rho_{CC})^{10} \right)$.
\end{enumerate}
Also, $f = \eta \cdot \sum_{j=1}^k f_j$ where $\eta = \tilde{\Theta}(\epsilon^2)$, $k = \tilde{O}\left(\frac{h}{\epsilon^4} \right)$ and each $f_j$ is an integral $h$-length $S$-$T$ flow.
\end{restatable}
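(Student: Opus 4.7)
The plan is to build on the multiplicative weights update (MWU) framework for approximate multicommodity flow (Garg--K\"onemann, Awerbuch--Leighton), adapted along two axes: a length-constrained shortest-path oracle that respects the cap $h$ on path length, and a rounding step that turns each MWU update into an integral flow. Concretely, I would maintain arc weights $\w_a$ initialized to $\delta/\U_a$, and run $T=\tilde{O}(h/\eps^2)$ phases; in each phase I compute (approximately) an $h$-length lightest $S$-$T$ path structure with respect to the current $\w$, route a ``bulk'' augmentation of size $\eta=\tilde{\Theta}(\eps^2)$ along a blocking flow in that structure, and update $\w_a \leftarrow \w_a\cdot\exp(\eps\cdot f_a/\U_a)$. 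Standard MWU analysis then guarantees that the total scaled flow $f=\eta\sum_j f_j$ is $(1-\eps)$-approximate and that the final (normalized) $\w$ is a feasible $h$-length moving cut whose value matches the flow to within $(1-\eps)$, i.e.\ $(f,\w)$ is $(1\pm\eps)$-approximate.

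For the integrality claim, the bulk augmentation in each phase must not be an arbitrary fractional flow but a \emph{blocking flow on an $h$-layer DAG}: build the layered DAG $\mcH$ whose arcs are the out-arcs of $v$ on $h$-length lightest $S$-to-$T$ paths, compute a blocking flow on $\mcH$ with respect to residual capacities, then apply a Cohen-style flow rounding to express this fractional blocking flow as a convex combination of $\tilde{O}(1/\eps^2)$ integral flows (one per bit of the bitwise decomposition, rounded on the DAG). Summing over the $T=\tilde{O}(h/\eps^2)$ phases yields $k=\tilde{O}(h/\eps^4)$ integral $h$-length $S$-$T$ flows and $f=\eta\sum_{j=1}^{k}f_j$, matching the stated integrality up to logarithmic factors; the support size bound follows from the fact that each $f_j$ is supported on a single layered DAG with $\tilde{O}(|A|)$ arcs.

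For the three models, the per-phase cost is dominated by (a) $h$-length approximate SSSP to build $\mcH$, (b) a blocking flow on an $h$-layer DAG, and (c) the rounding. In the EREW PRAM model, both (a) and (b) admit $\poly(h,1/\eps)$-time parallel algorithms by standard layer-by-layer augmentation, and the Cohen flow rounding parallelizes with $\tilde{O}(\poly(h,1/\eps))$ overhead, giving the deterministic parallel bound. In randomized CONGEST, $h$-length SSSP and blocking flows can both be computed with high probability in $\tilde{O}(\poly(h,1/\eps))$ rounds using known length-constrained primitives (e.g., random-delay scheduling of short BFS-like computations), yielding the stated time. In deterministic CONGEST, the shortest-path step is the only non-trivial primitive: plugging in the deterministic length-constrained SSSP subroutine with $\rho_{CC}$ overhead per invocation (together with deterministic scheduling of blocking-flow augmentations on the layered DAG, which costs $\poly(h)$ since $\mcH$ has diameter at most $h$) gives the stated $\tilde{O}(h^{17}/\eps^9 + h^{16}\rho_{CC}^{10}/\eps^7)$ bound.

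The main obstacle I expect is bounding the ``width'' needed per MWU phase so that only $\tilde{O}(h/\eps^2)$ phases suffice while still producing a blocking flow amenable to rounding, and doing so with deterministic distributed machinery. In particular, the blocking-flow computation in the layered DAG must respect capacities in the original graph $D$ (not just in $\mcH$), and must be integer-roundable without charging large congestion violations back into the MWU potential; this requires carefully coupling the bit-wise rounding of Cohen with the multiplicative weight update so that the congestion error stays within the $(1-\eps)$ guarantee. The remaining steps---MWU correctness, duality for the moving cut, and the distributed implementation of the DAG operations---are then largely assembly of existing primitives with the quantitative accounting producing the stated polynomial exponents.
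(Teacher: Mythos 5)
Your high-level plan mirrors the paper's structure---a multiplicative-weights outer loop whose per-iteration bulk update is a blocking flow on a layered DAG, rounded bitwise in the style of Cohen---but two concrete steps that your outline treats as routine are in fact the central obstructions the paper is designed to overcome, and your proposal does not close them.

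First, you propose to ``build the layered DAG $\mcH$ whose arcs are the out-arcs of $v$ on $h$-length lightest $S$-to-$T$ paths.'' This is not a DAG. Unlike unconstrained lightest paths, the arcs lying on $h$-length lightest paths can contain cycles (the paper gives an explicit three-vertex counterexample in its Figure~2), and this failure is precisely why the paper introduces the \emph{length-weight expanded DAG} $D^{(h,\lambda)}$: it makes $O(h^2/\eps)$ copies of each vertex, indexed by discretized accumulated weight and accumulated length, so that all near-lightest $h$-length paths lift to a genuine $h$-layer DAG. Without such a construction you have no DAG on which to run a blocking-flow subroutine, and a blocking flow computed on the raw near-lightest arc set need not block all near-lightest $h$-length paths, so the MWU potential argument does not go through. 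A related issue you do not address: projecting a blocking flow from the expanded DAG back to $D$ overcongests low-capacity arcs by up to the number of vertex copies, which the paper repairs with a ``decongestion'' step (an approximate maximum-weight independent set on a conflict graph of flow paths); your plan assumes the projected flow respects capacities, which it does not.

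Second, you attribute the $\rho_{CC}$ overhead in deterministic CONGEST to an SSSP subroutine and claim ``the shortest-path step is the only non-trivial primitive.'' This is a mischaracterization. Building the layered DAG is just a Bellman-Ford-style sweep costing $\tilde{O}(h)$ rounds; there is no SSSP bottleneck. The $\rho_{CC}$ factor enters through deterministic flow \emph{rounding}: Cohen's bit-by-bit rounding requires partitioning the support of each bit-level into cycles and paths and cancelling flow along them, but those cycles and paths can have large diameter, which is exactly the obstruction Chang and Saranurak identified (``Cohen's algorithm \dots does not seem to have an efficient implementation in CONGEST''). The paper's fix is a $(1-\eps)$-near Eulerian partition built from low-diameter cycle covers, and $\rho_{CC}$ measures the quality of the cycle-cover algorithm used. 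Your proposal silently assumes Cohen-style rounding distributes for free; that assumption is the gap, and without a low-diameter cycle/path decomposition (or an equivalent deterministic distributed rounding primitive) the deterministic CONGEST bound does not follow.
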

All of our algorithms compute and separately store each $f_j$. The above result immediately gives the deterministic parallel and randomized CONGEST algorithms running in time $\tilde{O}(\poly(h, \frac{1}{\epsilon}))$ mentioned in \Cref{sec:contributions}. For our deterministic CONGEST algorithms, $\rho_{CC}$ in the above gives the quality of the optimal deterministic CONGEST cycle cover algorithm. We formally define this parameter in \Cref{sec:preliminaries} but for now we simply note that $\rho_{CC} \leq 2^{O(\sqrt{\log n})}$ by known results \cite{parter2019optimal,hitron2021general}. Applying this bound on $\rho_{CC}$ gives deterministic CONGEST algorithms running in time $\tilde{O}(\poly(h, \frac{1}{\epsilon})\cdot 2^{O(\sqrt{\log n})})$. If $\rho_{CC}$ is shown to be $\poly(\log n)$, we immediately would get an $\tilde{O}(\poly(h, \frac{1}{\epsilon}))$ time deterministic algorithm for solving $(1-\epsilon)$-approximate $h$-length flow in CONGEST. As mentioned in \Cref{sec:contributions}, $k$ in the above result is optimal up to $\tilde{O}(1)$ factors \cite{guruswami2003near,baier2010length}.

\section{Technical Highlights, Intuition and Overview of Approach}\label{sec:overview}
Before moving on, we give an overview of our strategy for length-constrained flows. In the interest of highlighting what is new in this work we begin by summarizing three key technical contributions. To our knowledge these ideas are new in our work. We will then proceed to provide more detail on how these ideas fit together. For simplicity, we assume the capacity $\U_a = 1$ for all $a$ in this section.
\begin{enumerate}
    \item \textbf{Batched Multiplicative Weights:} First, the core idea of our algorithm will be a ``batched'' version of the ``multiplicative weights'' framework. In particular, we will use what we call ``near-lightest path blockers'' to perform many independent multiplicative weight updates in parallel. Both this batched approach to multiplicative weights and our analysis showing that it  converges to a near-optimal solution quickly are new to our work.
    \item \textbf{Length-Weight Expanded DAG:} Second, we provide a new approximate representation of all near-lightest $h$-length paths by a ``length-weight expanded DAG.'' This representation can be efficiently simulated in a distributed setting and serves as a provably good proxy for flows on all near-lightest $h$-length paths by a DAG. It is a priori not clear such a DAG exists since lightest $h$-length paths do not even induce a DAG. Even harder, this representation has to summarize three arc values at once: lengths, weights and capacities. 
    \item \textbf{Deterministic Integral Blocking Flows:} Third, we give the first efficient distributed deterministic algorithms for computing so-called integral blocking flows. In particular, we show how to use flow rounding techniques to derandomize an approach of \citet{lotker2008improved}; previous works noted that this approach seems inherently randomized \cite{chang2020deterministic}. Our flow rounding techniques are, in turn, built around a novel application of the recently introduced idea of ``cycle covers.'' In particular, we will make use of a slight variant of cycle covers and show how to use them to efficiently round flows in a distributed setting.
\end{enumerate}

\subsection{Using Lightest Path Blockers for Multiplicative Weights}\label{sec:overviewMW}

Computing a length-constrained flow, moving cut pair is naturally suggestive of the following multiplicative-weights-type approach. We initialize our moving cut value $\w_a$ to some very small value for every $a$. Then, we find a lightest $h$-length path from $S$ to $T$ according to $\w$, send some small ($\approx \epsilon$) amount of flow along this path and multiplicatively increase the value of $\w$ on all arcs in this path by $\approx(1 + \epsilon)$. We repeat this until $S$ and $T$ are at least $1$ apart according to $d^{(h)}_{\w}$ (where $d^{(h)}_\w(u,v)$ gives the lightest according to $\w$ path from $u$ to $v$ with length at most $h$). This general idea is an adaptation of ideas of \citet{garg2007faster}.

The principle shortcoming of using such an algorithm for our setting is that it is easy to construct examples where there are polynomially-many arc-disjoint $h$-length paths between $S$ and $T$ and so we would clearly have to repeat the above process at least polynomially-many times until $S$ and $T$ are at least $1$ apart according to $d^{(h)}_\w$. This is not consistent with our goal of $\poly(h)$ complexities since $h$ may be much smaller than $n$. To solve this issue, we use an algorithm similar to the above but instead of sending flow along one path, we send it along a large batch of arc-disjoint paths. 

What can we hope to say about how long such an algorithm takes to make $S$ and $T$ at least $1$ apart according to $d^{(h)}_\w$? If it were the case that every lightest (according to $\w$) $h$-length path from $S$ to $T$ shared an arc with some path in our batch of paths then after each batch we would know that we increased $d^{(h)}_\w(S,T)$ by some non-zero amount. However, there is no way to lower bound this amount; in principle we might only increase $d^{(h)}_\w(S,T)$ by some tiny $\epsilon' > 0$. To solve this issue we find a batch of arc-disjoint paths which have weight essentially $d^{(h)}_\w(S,T)$ but which share an arc with every $h$-length path with weight at most $(1 + \epsilon) \cdot d^{(h)}_\w(S,T)$. Thus, when we increment weights in a batch we know that all \emph{near}-lightest $h$-length paths have their weights increased so we can lower bound the rate at which $d^{(h)}_\w(S,T)$ increases, meaning our algorithm completes quickly.

Thus, in summary we repeatedly find a batch of arc-disjoint $h$-length paths between $S$ and $T$ which have weight about $d^{(h)}_\w(S,T)$; these paths satisfy the property that every $h$-length path from $S$ to $T$ with weight at most $(1 + \epsilon) \cdot d^{(h)}_\w(S,T)$ shares an edge with at least one of these paths; we call such a collection an $h$-length $(1+\epsilon)$-lightest path blocker. We then send a small amount of flow along these paths and multiplicatively increase the weight of all incident edges, appreciably increasing $d^{(h)}_\w(S,T)$. We repeat this until our weights form a feasible moving cut. See \Cref{fig:MW}.

\begin{figure}[ht]
    \centering
    \begin{subfigure}[b]{0.49\textwidth}
        \centering
        \includegraphics[width=\textwidth,trim=0mm 0mm 0mm 0mm, clip]{./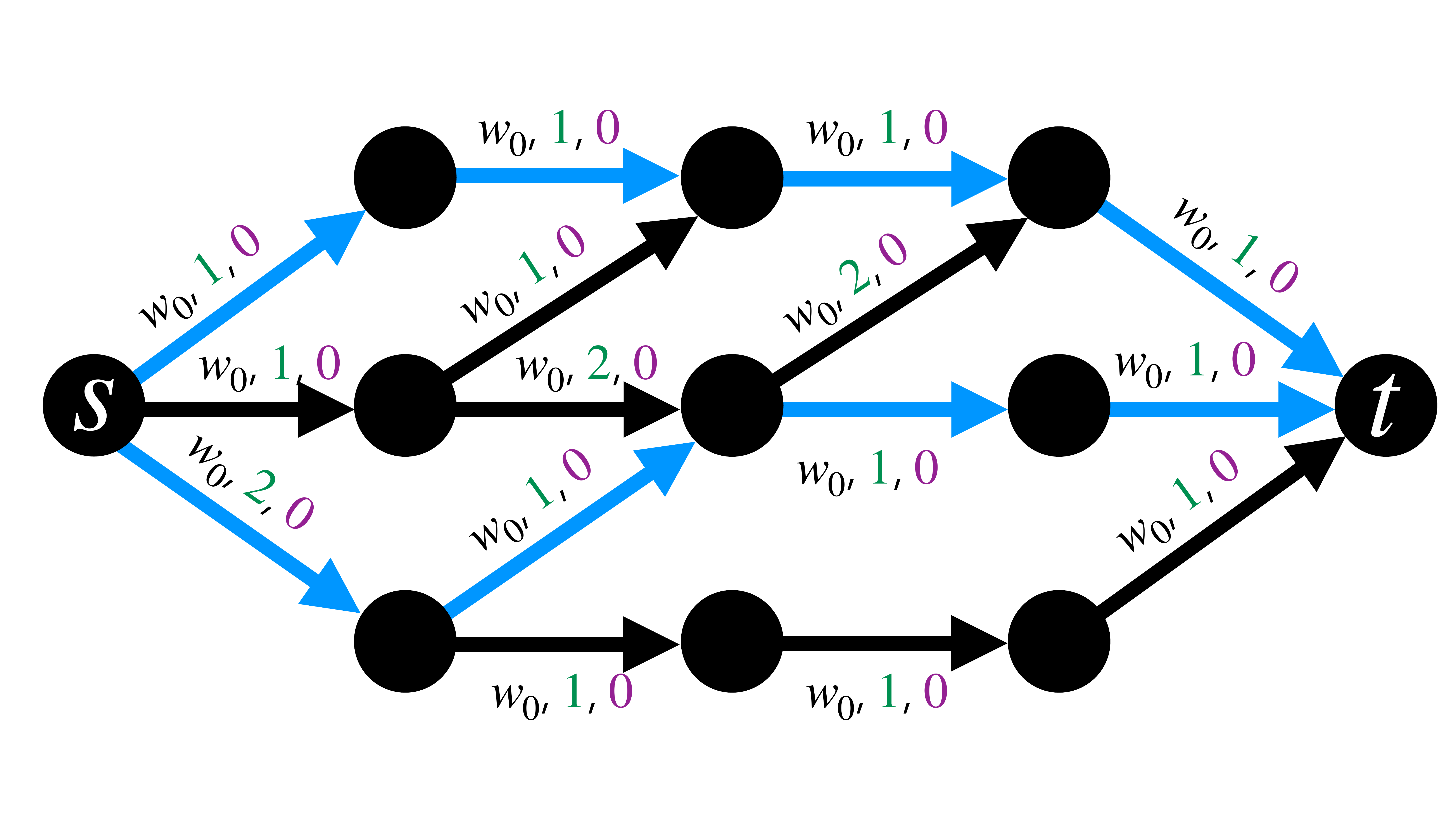}
        \caption{Compute lightest path blocker.}\label{sfig:MW2}
    \end{subfigure}    \hfill
    \begin{subfigure}[b]{0.49\textwidth}
        \centering
        \includegraphics[width=\textwidth,trim=0mm 0mm 0mm 0mm, clip]{./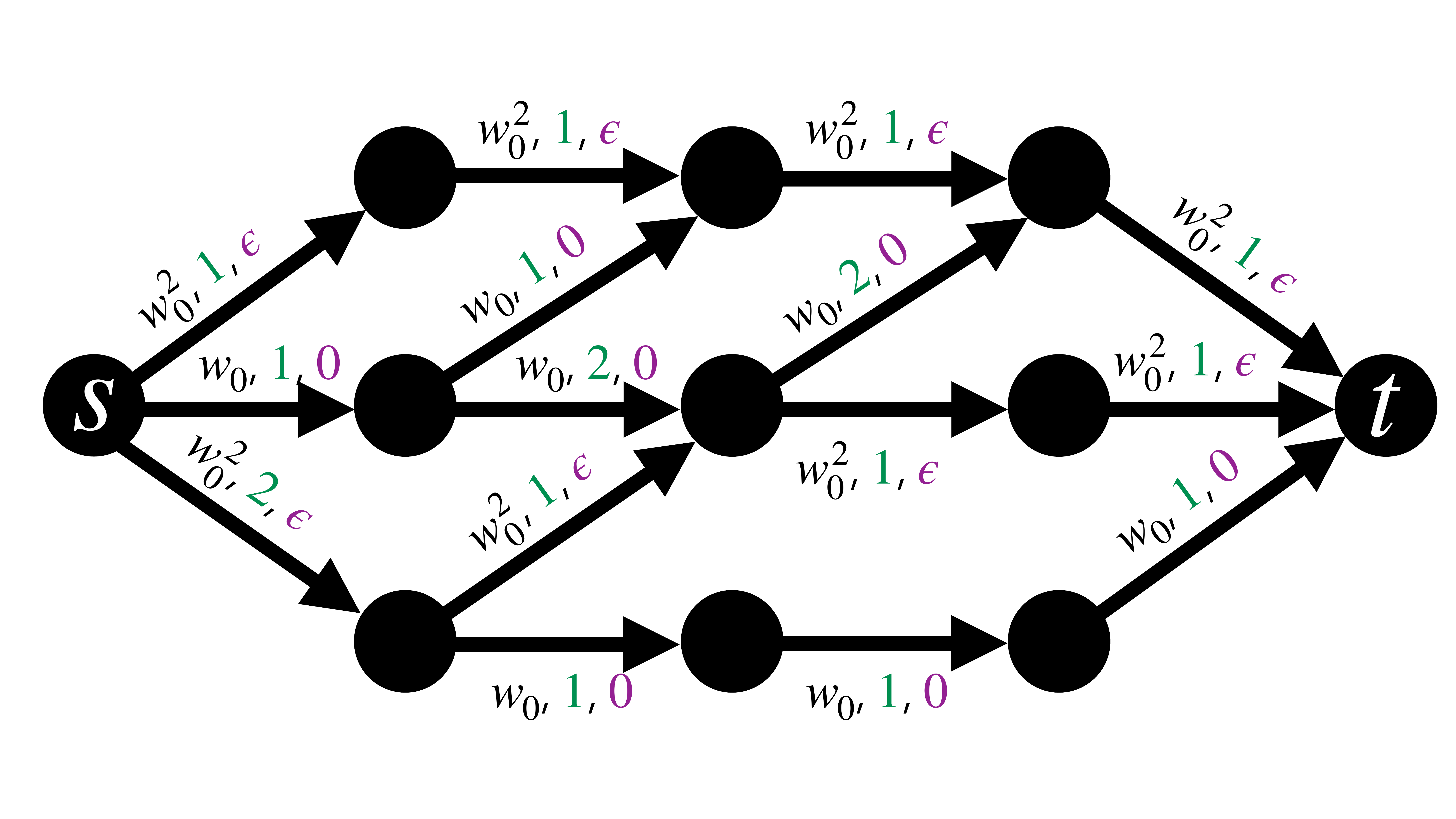}
        \caption{Update flow and weights.}\label{sfig:MW3}
    \end{subfigure}    \hfill
    \begin{subfigure}[b]{0.49\textwidth}
        \centering
        \includegraphics[width=\textwidth,trim=0mm 0mm 0mm 0mm, clip]{./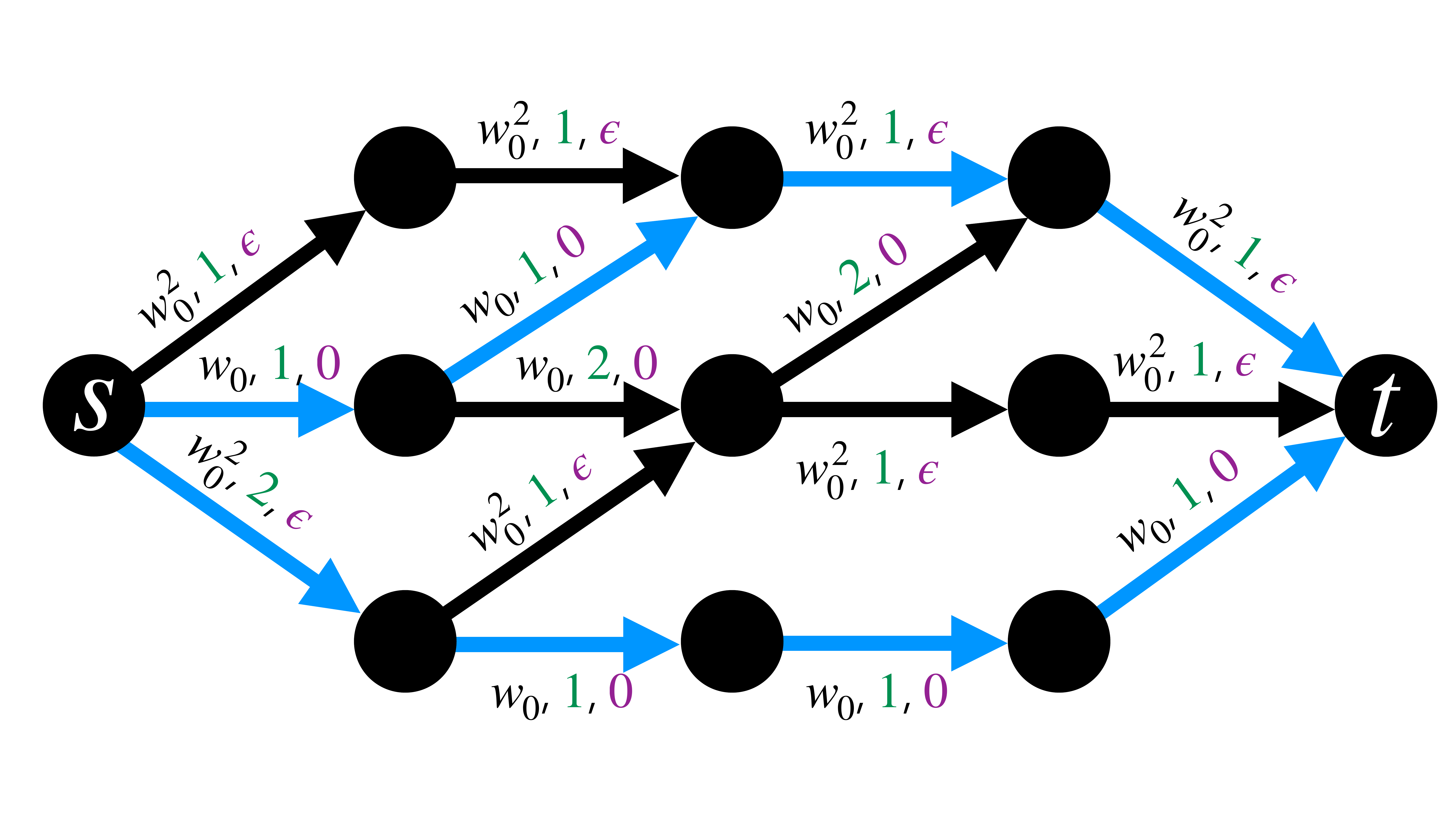}
        \caption{Compute lightest path blocker.}\label{sfig:MW4}
    \end{subfigure}    \hfill
    \begin{subfigure}[b]{0.49\textwidth}
        \centering
        \includegraphics[width=\textwidth,trim=0mm 0mm 0mm 0mm, clip]{./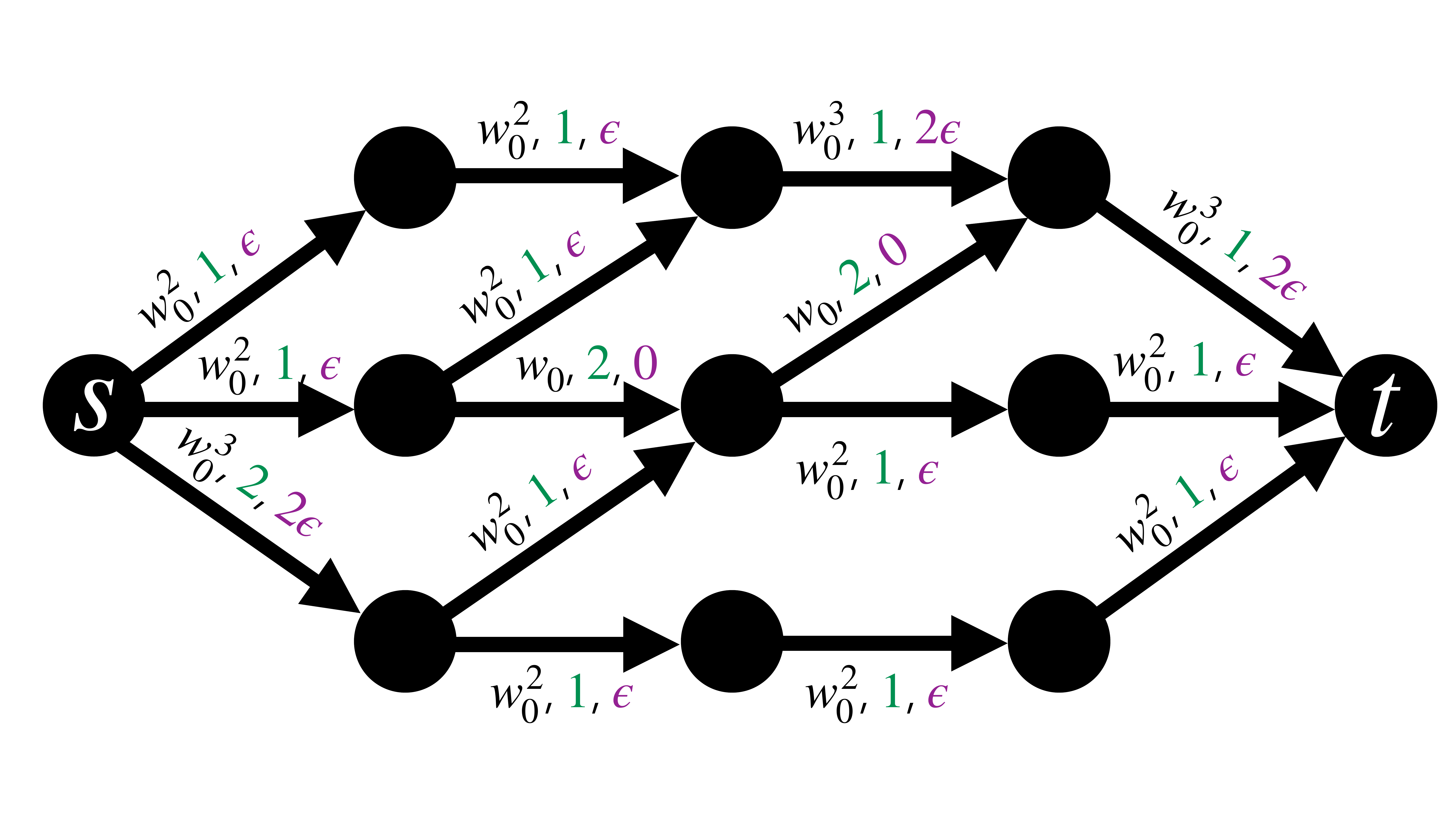}
        \caption{Update flow and weights.}\label{sfig:MW5}
    \end{subfigure}
    \caption{An illustration of the first two iterations of our multiplicative-weights-type algorithm where $h=5$, $S = \{s\}$ and $T = \{t\}$ and capacities are all $1$. Each arc is labelled with the value we multiply its initial weight by (initialized to $w_0 := 1+\epsilon$) then length then flow. Our $h$-length shortest path blockers are in blue.}\label{fig:MW}
\end{figure}


\subsection{Length-Weight Expanded DAG to Approximate $h$-Length Lightest Paths}

The above strategy relies on the computation of $h$-length lightest path blockers. Without the presence of a weight constraint computing such an object easily reduces to computing an integral blocking $S$-$T$ flow on an $h$-layer $S$-$T$ DAG. Specifically, consider the problem of computing a collection of paths from $S$ to $T$ so that every lightest $S$ to $T$ path shares an arc with one path in this collection. It is easy to see that all lightest paths between $S$ and $T$ induce an $h'$-layer $S$-$T$ DAG where $h'$ is the minimum weight of a path between $S$ and $T$. One can then consider this DAG and compute an integral blocking $S$-$T$ flow in it---i.e.\ a maximal arc-disjoint collection of $h'$-length $S$-$T$ paths. By maximality of the flow, the paths corresponding to this flow will guarantee that every $h'$-length $S$ to $T$ path shares an arc with one path in this collection.

However, the presence of a length constraint \emph{and} a weight constraint make such an object much tricker. Indeed, lightest paths subject to length constraints are known to be notoriously poorly behaved; not only do lightest paths subject to a length constraint not induce a metric but they are also arbitrarily far from any metric \cite{andoni2020parallel,haeupler2021tree}. As such, all $S$ to $T$ lightest paths subject to a length constraint do not induce a DAG, much less an $h$-layer $S$ to $T$ DAG; e.g.\ see \Cref{fig:noDAG}.

Our solution is to observe that, if we are allowed to duplicate vertices, then we can construct an $S$-$T$ DAG with about $h^2$ layers that approximately captures the structure of all $h$-length $(1+\eps)$-lightest paths. Specifically, we discretize weights and then make a small number of copies of each vertex to compute a DAG $D^{(h,\lambda)}$---which we call the length-weight expanded DAG. $D^{(h,\lambda)}$ will satisfy the property that if we compute an integral blocking flow and then project this back into $D$ as a set of paths $\mcP$, then $\mcP$ is \emph{almost} a $(1+\eps)$-lightest path blocker. In particular, $\mcP$ will guarantee that some arc of any $h$-length path with weight at most $(1 + \eps) \cdot d^{(h)}_{\w}(S,T)$ is used by some path in $\mcP$; however, the paths of $\mcP$ may not be arc-disjoint as required of a lightest path blockers. Nonetheless, by carefully setting capacities in $D^{(h,\lambda)}$, we will be able to argue that $\mcP$ is nearly arc-disjoint and these violations of arc-disjointness can be repaired with small loss by a  ``decongesting'' procedure. It remains to understand how to compute integral blocking flows in layered $S$-$T$ DAGs.

\begin{figure}
    \centering
    \begin{subfigure}[b]{0.3\textwidth}
        \centering
        \includegraphics[width=\textwidth,trim=0mm 0mm 0mm 0mm, clip]{./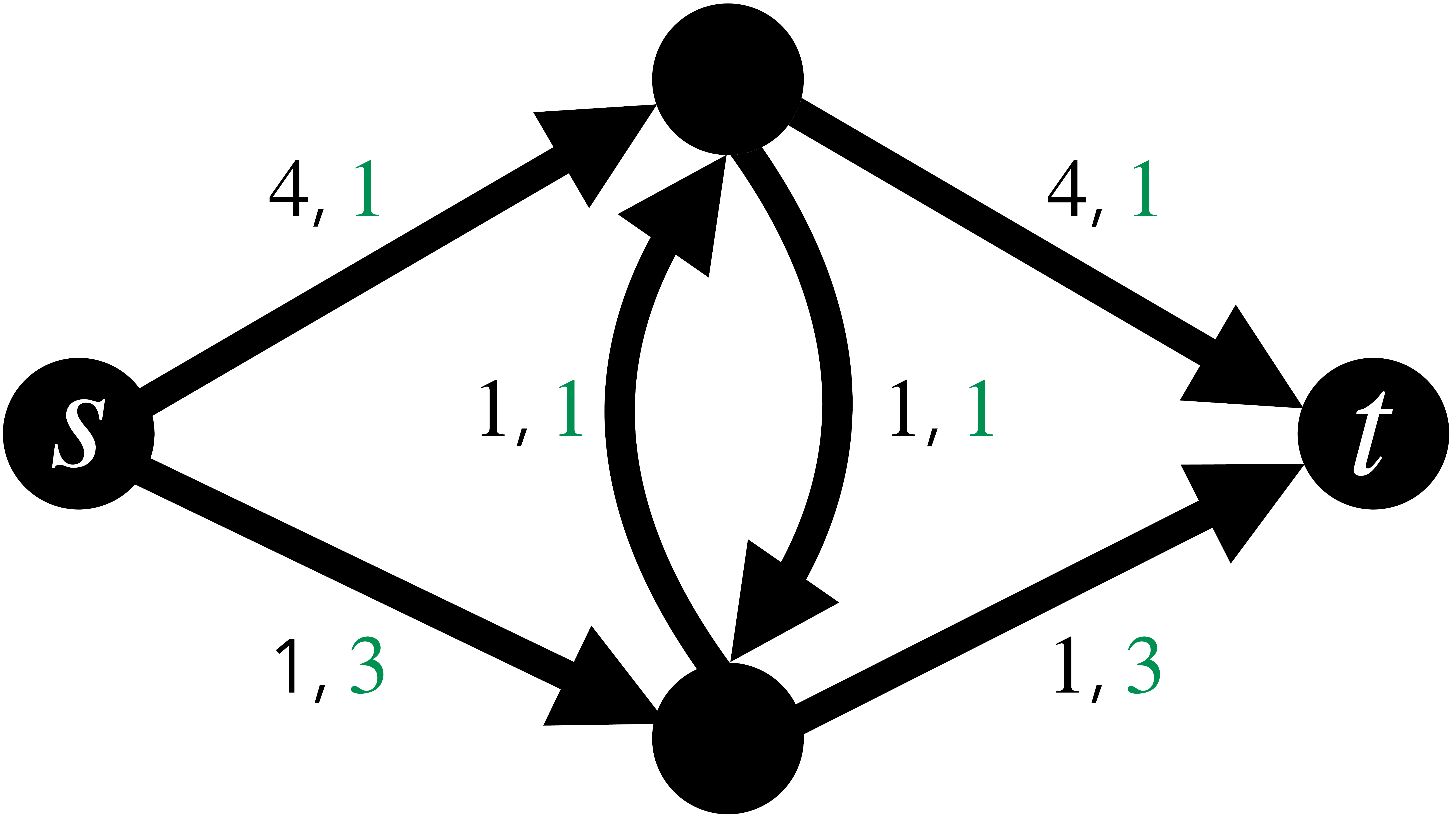}
        \caption{Digraph $D$.}\label{sfig:noDAG1}
    \end{subfigure}    \hfill
    \begin{subfigure}[b]{0.3\textwidth}
        \centering
        \includegraphics[width=\textwidth,trim=0mm 0mm 0mm 0mm, clip]{./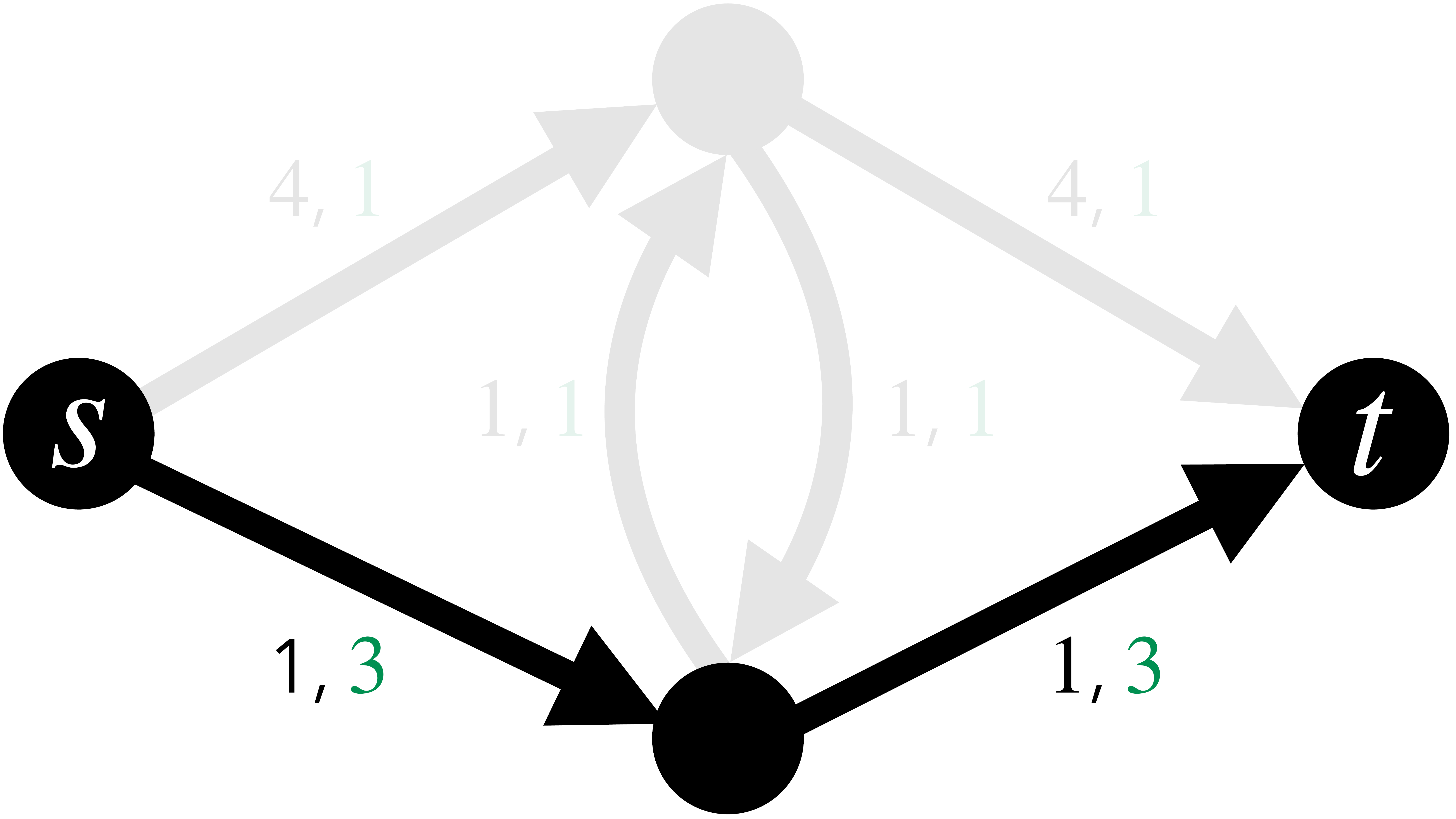}
        \caption{Lightest paths of $D$.}\label{sfig:noDAG2}
    \end{subfigure}    \hfill
    \begin{subfigure}[b]{0.3\textwidth}
        \centering
        \includegraphics[width=\textwidth,trim=0mm 0mm 0mm 0mm, clip]{./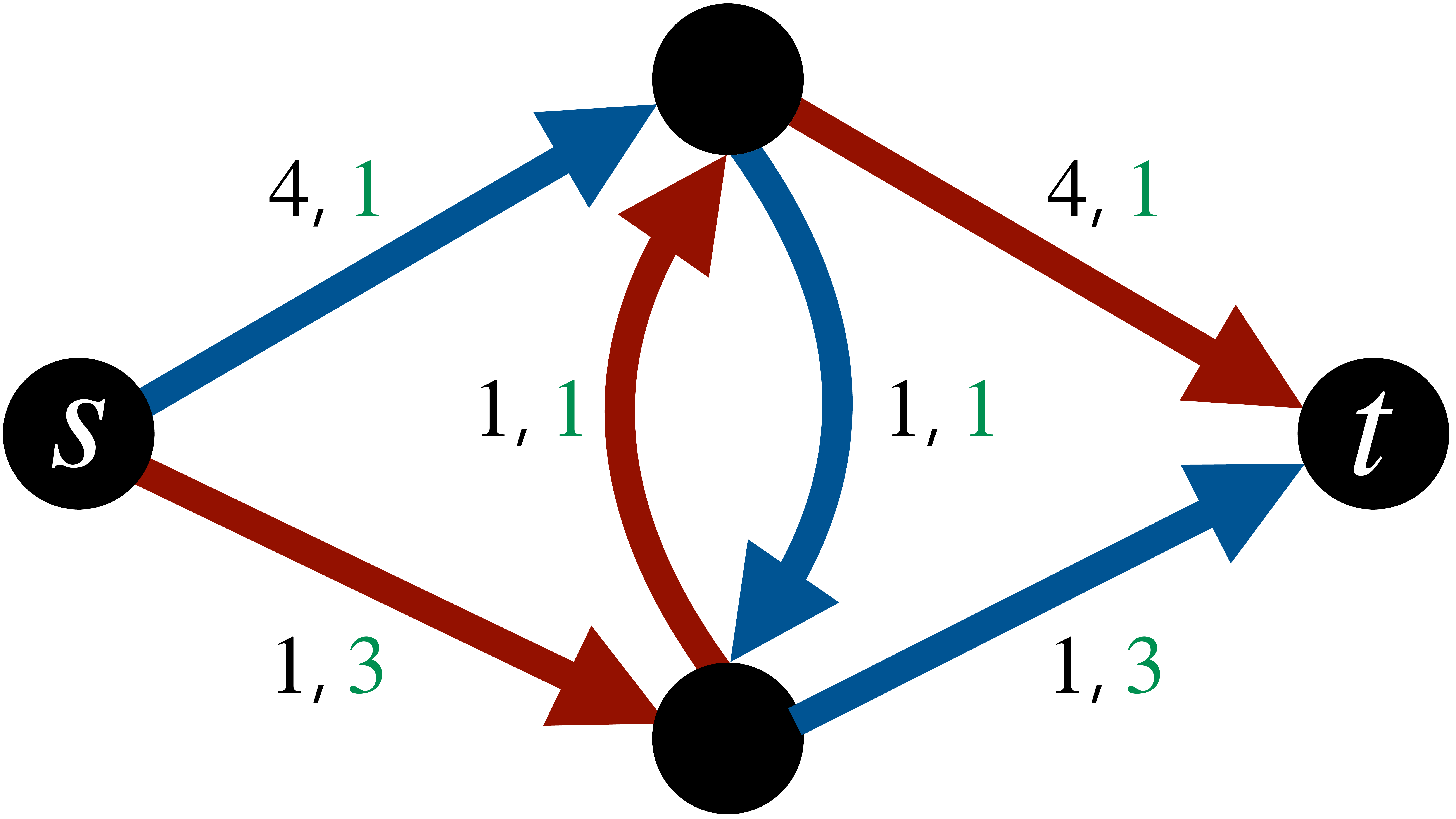}
        \caption{Lightest $5$-length paths of $D$.}\label{sfig:noDAG3}
    \end{subfigure}
    \caption{A digraph $D$ with $S = \{s\}$ and $T = \{t\}$ where the $5$-length lightest $S$-$T$ paths do not induce a DAG. \ref{sfig:noDAG1} gives $D$ where each arc is labeled with its weight (in black) and length (in green). \ref{sfig:noDAG2} shows how all lightest $S$-$T$ paths have weight $2$ and induce a DAG. \ref{sfig:noDAG3} shows how the two  $5$-length lightest $S$-$T$ paths (in blue and red) have weight $6$ and induce a digraph with a cycle.}\label{fig:noDAG}
\end{figure}

\subsection{Deterministic Integral Blocking Flows Paths via Flow Rounding}\label{sec:overviewSPB}
Lastly, we describe how we compute integral blocking flows in layered $S$-$T$ DAGs.

A somewhat straightforward adaptation of a \emph{randomized} algorithms of \citet{lotker2008improved} solves this problem in $\tilde{O}(\poly(h))$ time both in parallel and in CONGEST. This algorithm samples an integral $S$-$T$ flow in $D$ (i.e.\ a collection of arc-disjoint $S$ to $T$ paths) according to a carefully chosen distribution based on ``path counts'', deletes these paths and repeats. The returned solution is the flow induced by all paths that were ever deleted. Unfortunately \citet{lotker2008improved}'s algorithm seems inherently randomized and our goal is to solve this problem deterministically.

We derandomize the algorithm of \citet{lotker2008improved} in the following way. Rather than integrally sampling according to \citet{lotker2008improved}'s distribution and then deleting arcs that appear in sampled paths, we instead calculate the probability that an arc is in a path in this distribution and then ``fractionally delete'' it to this extent. We repeat this until every path between $S$ and $T$ has some arc which has been fully deleted. In other words, we run a smoothed version of \citet{lotker2008improved} which behaves (deterministically) like the algorithm of \citet{lotker2008improved} does in expectation. 
The fractional deletion values of arcs at the end of this process induce a blocking $S$-$T$ flow but a blocking flow that \emph{may be fractional}.  We call this flow the ``iterated path count flow.'' 

However, recall that our goal is to compute an \emph{integral} blocking flow in an $S$-$T$ DAG. Thus, we may naturally hope to round the iterated path count flow. Indeed, drawing on some flow rounding techniques of \citet{cohen1995approximate}, doing so is not too difficult in parallel. Unfortunately, it is less clear how to do so in CONGEST. Indeed, \citet{chang2020deterministic} state:
\begin{quote}\textit{
...Cohen’s algorithm that rounds a fractional flow into an integral flow does not seem to have an efficient implementation in CONGEST...} 
\end{quote}
Roughly, Cohen's technique relies on partitioning edges in a graph into cycles and paths and then rounding each cycle and path independently. The reason this seems infeasible in CONGEST is that the cycles and paths that Cohen's algorithm relies on can have unbounded diameter and so communicating within one of these cycles or paths is prohibitively slow. To get around this, we argue that, in fact, one may assume that these cycles and paths have low diameter \emph{if} we allow ourselves to discard some small number of arcs. This, in turn allows us to orient these cycles and paths and use them in rounding flows. We formalize such a decomposition with the idea of a $(1-\eps)$-near Eulerian partition.\footnote{Somewhat similar to our approach, \citet{chu2020graph} showed how to partition all but $O(n \log n)$ edges of a graph into short disjoint cycles. However, these guarantees are unsuitable for our us on e.g. graphs with $\Theta(n)$ edges since we may only discard a small fraction of all edges.} Arguing that discarding these arcs does bounded damage to our rounding then allows us to make use of Cohen-type rounding to deterministically round the path count flow, ultimately allowing us to compute $h$-length $(1+\epsilon)$-lightest path blockers.

\subsection{Summary of Our Algorithm}
We now summarize the above discussion with a bottom-up sketch of our algorithm and highlight where each of these components appear in our paper.

The most basic primitive that we provide is an algorithm for efficiently computing blocking integral flows in $h$-layer $S$-$T$ DAGs. To do so we make use of path count flows (formally defined in \Cref{sec:pathCounts}). In \Cref{sec:randMaxPaths} we observe that, essentially by the ideas of \citet{lotker2008improved}, sampling paths proportional to the path count flows gives an efficient randomized algorithm for blocking integral flows in such DAGs. In \Cref{sec:detMaxPaths} we give a deterministic algorithm for computing such flows. This algorithm relies on the idea of near Eulerian partitions (\Cref{sec:nearEul}) which is an adaptation of recent ideas in cycles covers for our purposes. Our deterministic algorithm takes the expected result of \citet{lotker2008improved} and deterministically rounds it by ``turning'' flow along the components of a near Eulerian partition and then repairs the resulting solution into a true flow by discarding flows not from $S$ to $T$. More generally, we show how to efficiently round any fractional flow on such a DAG with only a small loss in flow value.

Next, in \Cref{sec:shortestPathBlockers} we use our algorithms for blocking integral flows in $h$-layer $S$-$T$ DAGs to show how to compute $(1+\eps)$-lightest path blockers which, informally, are a collection of paths that share an edge with every $h$-length near-lightest path. We do this by constructing the length-weight expanded DAG (\Cref{sec:lengthExpanded}), a DAG that approximates the structure of $h$-length near-lightest paths. We then apply our blocking flow algorithms on this DAG, project the result back into our original graph and then ``decongest" the result by finding an appropriate subflow that respects capacities. We use an algorithm from \Cref{sec:sparseDecomp} to guarantee that the result is sparse (i.e.\ has small support size).

Lastly, in \Cref{sec:MW} we plug our $(1+\eps)$-lightest path blocker algorithm into a multiplicative-weights-type framework. In particular, we repeatedly compute a lightest path blocker, send some small amount of flow along the paths of this blocker and then update the weight of all edges that have flow sent along them by a multiplicative $(1+\eps)$.

The remainder of our paper gives applications and extensions of our results. In \Cref{sec:disjointPaths} we observe that our main result solves the aforementioned problem of \citet{chang2020deterministic} by giving deterministic algorithms for many disjoint paths problems in CONGEST. We also observe that our algorithms give essentially optimal parallel and distributed algorithms for maximum arc-disjoint paths. In \Cref{sec:expander} we give more details of how our results simplify expander decomposition constructions.  In \Cref{sec:bmatching} we give our new algorithms for bipartite $b$-matching based on our flow algorithms and in \Cref{sec:cutMatches} we show how to compute length-constrained cutmatches using our main theorem. Lastly, in \Cref{sec:multiComm} we observe that our length-constrained flow algorithms generalize to the multi-commodity setting.

\section{Preliminaries}\label{sec:preliminaries}
Before moving on to our own technical content, we briefly review some well-known  algorithmic tools and slight variants thereof (mostly for deterministic CONGEST).

\subsection{Deterministic CONGEST Maximum Independent Set}

We will rely on deterministic CONGEST primitives for maximal and maximum independent sets. Given graph $G = (V,E)$, a subset of vertices $V' \subseteq V$ is independent if no two vertices in $V'$ are adjacent in $G$. A maximal independent set (MIS) is an independent set $V'$ such that any $w \in V \setminus V'$ is adjacent to at least one node in $V'$.
If we are additionally given node weights $\{x_v\}_v$ where $x_v > 0$ for every $v$, then a maximum independent set is an independent set $V'$ maximizing $\sum_{v \in V'} x_v$; we say that an independent set is $\alpha$-approximate if its total weight is within $\alpha$ of that of the maximum independent set.


The following gives the deterministic CONGEST algorithm we will use for maximum independent set.
\begin{theorem}[\citet{bar2017distributed}]\label{thm:maxIS}
There is a deterministic CONGEST algorithm which given an instance of maximum independent in a graph $G = (V, E)$ with maximum degree $\Delta$ and node weights $\{x_v\}_v$, outputs a solution that is $\frac{1}{\Delta}$-approximate in time $O(\Delta + \log^* n)$.
\end{theorem}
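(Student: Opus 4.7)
The plan is to combine a classical deterministic $(\Delta+1)$-coloring algorithm with a local-ratio schedule for weighted maximum independent set, both of which can be implemented in deterministic CONGEST within the claimed round budget.

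First, I would compute a proper $(\Delta+1)$-coloring of $G$ in $O(\Delta + \log^* n)$ deterministic CONGEST rounds. A standard route is Linial's $O(\log^* n)$-round $O(\Delta^2)$-coloring followed by Kuhn/Barenboim--Elkin-style iterated color reduction, each step of which costs $O(\Delta)$ rounds and reduces the palette one color at a time (or geometrically) until reaching $\Delta+1$ colors. The total cost is $O(\Delta + \log^* n)$ rounds, since all the ingredients are local and need only $O(\log n)$-bit messages along each edge.

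Second, I would use the coloring as a schedule for a local-ratio algorithm. Maintain residual weights $y_v$ initialized to $x_v$, and process the color classes $c = 1, 2, \ldots, \Delta+1$ in sequence. When processing class $c$, in parallel every vertex $v$ of color $c$ with $y_v > 0$ adds itself to a tentative independent set $I$ and asks each of its neighbors $u$ to set $y_u \leftarrow y_u - y_v$. Because vertices of the same color are non-adjacent, updates within a color class never conflict, and each color class takes $O(1)$ rounds of communication; the schedule fits in $O(\Delta)$ additional rounds on top of the coloring.

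Third, I would verify feasibility and the approximation guarantee. Feasibility: if $u$ and $v$ are adjacent with $u$ processed first, then by the time $v$'s color is processed we have $y_v \leq x_v - y_u \leq 0$, so $v$ does not join $I$, hence $I$ is independent. Approximation: a local-ratio argument (in the style of Bar-Yehuda--Even) shows $\sum_{v \in I} x_v \geq \tfrac{1}{\Delta} \sum_{u \in I^*} x_u$ by charging each OPT vertex $u \notin I$ to a neighbor $v \in I$ that was processed before $u$ and killed it, noting that at that moment $y_v \geq y_u$ and that at most $\Delta$ OPT vertices can be charged to any single $v \in I$.

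The main obstacle will be the local-ratio analysis in the weighted setting, since one must telescope the residual-weight updates across all $\Delta+1$ color phases to extract a clean $\frac{1}{\Delta}$ global bound; the coloring and the local decisions are otherwise straightforward distributed primitives. Any slack in the analysis can be absorbed by replacing the simple decrement rule with a rounded/thresholded version (which Bar-Yehuda et al.\ employ to obtain essentially the stated guarantee).
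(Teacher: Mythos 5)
This statement is cited as a black-box result from \citet{bar2017distributed}; the paper does not prove it, so there is no in-paper proof to compare against. Your general plan---obtain a deterministic $(\Delta+1)$-coloring in $O(\Delta+\log^* n)$ rounds and then run a local-ratio schedule over the color classes---is in the right spirit and is roughly the route the cited work takes.

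There is, however, a genuine gap in your feasibility argument. You claim that if $u$ and $v$ are adjacent and $u$'s color class is processed first, then $y_v \leq x_v - y_u \leq 0$. The second inequality asserts $y_u \geq x_v$, i.e., that $u$'s residual weight at the moment it was processed dominates $v$'s \emph{original} weight. This is false in general: if $x_v$ is large and $y_u$ is small (but positive), then after $u$ joins and subtracts $y_u$, $v$'s residual is still positive, $v$ joins when its color class is reached, and the output set $I$ contains the adjacent pair $\{u,v\}$. The single forward pass you describe therefore does not produce an independent set. What is missing is the ``recursion'' half of local ratio: sequentially, a vertex $v$ is put in the solution only if the recursively-computed solution on the reduced instance leaves $N(v)$ empty. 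Unfolded over a $(\Delta+1)$-coloring, this becomes a two-pass protocol: a forward pass over colors in which each vertex with positive residual \emph{reserves} itself and subtracts its residual from its entire closed neighborhood (in particular zeroing itself, so the terminal residual is non-positive everywhere), followed by a backward pass over colors in reverse order in which a reserved vertex joins $I$ iff no neighbor has already joined. Independence is then immediate, and the approximation accounting works because every reserved $v$ has $|I \cap N[v]| \geq 1$ (either $v$ joined, or some neighbor that blocked it did), while any independent set can collect at most $\Delta$ vertices from the star $N[v]$ touched by the local-ratio step at $v$. The second pass and the "reduce your own residual to zero" detail are exactly what your sketch omits, and without them both the feasibility claim and the telescoping in the approximation bound break down.
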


\subsection{Deterministic Low Diameter Decompositions}

A well-studied object in metric theory is the low diameter decomposition which is usually defined as a distribution over vertex partitions \cite{linial1993low,miller2013parallel}. For our deterministic algorithms, we will make use of a deterministic version of these objects defined as follows where $G[V_i] := (V_i, \{\{u,v\} \in E: u,v \in V_i\})$ gives the induced graph on $V_i$.
\begin{definition}[Deterministic Low Diameter Decomposition]\label{dfn:DLDD}
Given graph $G = (V,E)$, a deterministic low diameter decomposition (DLDD) with diameter $d$ and cut fraction $\epsilon$ is a partition of $V$ into sets $V_1, V_2, \ldots$ where:
\begin{enumerate}
    \item \textbf{Low Diameter:} $G[V_i]$ has diameter at most $d$ for every $i$;
    \item \textbf{Cut Edges:} The number of cut edges is at most $\epsilon|E|$; i.e.\ $|\{e = (u, v): u \in V_i \wedge v \in V_j \wedge i \neq j\}| \leq \epsilon|E|$.
\end{enumerate}
\end{definition}
One can efficiently compute DLDDs deterministically in CONGEST as a consequence of many well-known results in distributed computing. We will use a result of \citet{chang2021strong} to do so.
\begin{theorem}\label{thm:LDD}
Given a graph $G = (V,E)$ and desired diameter $d$, one can compute a DLDD with diameter $d$ and cut fraction $\epsilon = \tilde{O}(\frac{1}{d})$ in deterministic CONGEST time $\tilde{O}(d)$.
\end{theorem}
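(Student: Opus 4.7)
The plan is to derive the DLDD by derandomizing an MPX-style ball-growing scheme and then invoking the deterministic CONGEST machinery of \cite{chang2021strong}. Since the theorem is attributed to that work, the bulk of the argument is about reducing the abstract ``DLDD'' guarantee of \Cref{dfn:DLDD} to the strong-diameter network decomposition they construct and then verifying the edge-cut accounting.

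First, I would recall the classical Miller--Peng--Xu template: each vertex $v$ draws a ``start time'' $t_v$ from an exponential distribution truncated at $d/4$ with rate $\lambda = \Theta(\log n / d)$, and then every vertex $u$ is assigned to the cluster of $\argmin_v\bigl(d_G(u,v) - t_v\bigr)$. By the memoryless property, for any edge $\{u,w\}$ the probability that $u$ and $w$ end up in different clusters is $O(\lambda) = \tilde O(1/d)$, so the expected cut-edge fraction is $\tilde O(1/d)$; meanwhile every cluster has strong diameter $O(d)$ because the winning center for any vertex lies within distance $d/4 + d/4 = d/2$. This gives the correct cut fraction and the correct diameter guarantee in expectation.

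Second, to make this deterministic in CONGEST in $\tilde O(d)$ rounds, I would invoke the strong-diameter network decomposition of \cite{chang2021strong}, which produces $\poly\log n$-colored partitions whose clusters have strong diameter $\poly\log n$ and can be stitched together at scale $d$: one runs their construction on the ``shrunk'' graph where distances are scaled to $\Theta(d/\poly\log n)$, then performs a final BFS of radius $O(d)$ to aggregate shrunk clusters into clusters of strong diameter $d$. The charging argument for the cut edges mirrors the MPX analysis, with the deterministic color-based tiebreaking of \cite{chang2021strong} playing the role of the random shifts $t_v$; each color class contributes $\tilde O(1/d)$ fraction of cut edges, and there are $\poly\log n$ color classes, giving a total of $\tilde O(1/d) \cdot |E|$ cut edges as required.

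The principal obstacle is the running time: naive ball growing gives $O(nd)$ rounds because the clusters are constructed sequentially, and so the key leverage from \cite{chang2021strong} is that their deterministic partition is built in $\poly\log n$ rounds in parallel across all clusters, after which a single $O(d)$-depth aggregation step suffices. Combining these two phases gives deterministic CONGEST round complexity $O(d) + \poly\log n = \tilde O(d)$, completing the proof. A secondary subtlety I would double-check is that the \emph{cut edges} bound in \Cref{dfn:DLDD} is against $|E|$ rather than $|V|$, which follows cleanly from the per-edge cut probability bound $\tilde O(1/d)$ and linearity, so no additional argument is needed beyond summing over edges.
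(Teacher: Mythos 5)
Your proposal correctly identifies that the theorem should be derived from the deterministic CONGEST machinery of \cite{chang2021strong}, but it misses the key technical reduction the paper uses, and the closing paragraph where you wave away the $|E|$-vs-$|V|$ issue is exactly where the gap lies.

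The paper does \emph{not} invoke a color-class network decomposition or a derandomization of MPX-style ball growing. It invokes a single clean statement (their Theorem~1.2): in deterministic CONGEST time $\tilde O(d')$ one can find a set $\bar V$ with $|\bar V| \le |V|/d'$ whose removal leaves all components with diameter $\tilde O(d')$. Note this bounds the \emph{fraction of vertices removed}, not a per-edge cut probability. Your claim that the edge-cut bound ``follows cleanly from the per-edge cut probability bound $\tilde O(1/d)$ and linearity'' is an argument about the \emph{randomized} MPX scheme; in the deterministic black box there is no per-edge cut probability to sum, only a global guarantee that at most a $1/d'$ fraction of vertices are deleted. Since a small set of deleted vertices can be incident to a very large fraction of the edges (think of deleting high-degree hubs), this does not by itself give the DLDD guarantee of at most $\epsilon |E|$ cut edges. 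The ``color-based tiebreaking plays the role of random shifts'' step is an assertion of a derandomization that you haven't actually carried out, and it is not what \cite{chang2021strong} provides.

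The paper's fix is a reduction you should internalize: build an auxiliary graph $G' = (V', E')$ in which each vertex $v \in V$ is replaced by a clique on $\deg(v)$ vertices, one vertex $v(u)$ for each neighbor $u$ of $v$, and each original edge $\{u,v\}$ becomes the edge $\{v(u), u(v)\}$ in $G'$. Now $|V'| = 2|E|$ and every vertex of $G'$ corresponds to a unique edge of $G$. Applying Theorem~1.2 of \cite{chang2021strong} to $G'$ with parameter $d'$ yields $\bar V \subseteq V'$ of size $\le |V'|/d' = 2|E|/d'$, which translates directly to a set of at most $2|E|/d'$ \emph{edges} of $G$ whose removal leaves low-diameter components; tuning $d' = d/\tilde\Theta(1)$ gives diameter $d$ and cut fraction $\tilde O(1/d)$. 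Simulating CONGEST on $G'$ inside $G$ is free since each $v$ can internally simulate its clique. This clique-expansion trick is precisely what converts a vertex-fraction bound into an edge-fraction bound, and it is the step your proposal is missing.
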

\begin{proof}

Theorem 1.2 of \citet{chang2021strong} states that there is a deterministic CONGEST algorithm which, given a graph $G = (V,E)$ and desired diameter $d'$, computes a set $\bar{V} \subseteq V$ where $|\bar{V}| \leq \frac{1}{d'} \cdot |V|$ and $G[V \setminus \bar{V}]$ has connected components $C_1, C_2, \ldots, C_k$ where each $C_i$ has diameter at most $\tilde{O}(d')$ in $\tilde{O}(d')$ rounds.

Given graph $G = (V,E)$ we can compute a DLDD in $G$ by applying the above result in a new graph $G' = (V', E')$. For each vertex $v \in V$, $G'$ will have a clique of $\Delta(v)$-many vertices where $\Delta(v)$ is the degree of $v$ in $G$. We then connect these cliques in the natural way. More formally, to construct $G'$ we do the following. For each $v$ with edges to vertices $v_1, v_2, \ldots, v_{\Delta(v)}$ we create a clique of vertices $v(v_1), v(v_2), \ldots, v(v_{\Delta(v)})$. Next, for each edge $e = \{u, v\}$ in $E$, we add the edge $\{v(u), u(v)\}$ to $G'$. Observe that each vertex of $G'$ corresponds to exactly one edge in $G$; that is, $v(u)$ in $V'$ corresponds to the edge $\{u,v\} \in E$.

Next, we apply the above theorem of \citet{chang2021strong} to $G'$ to get set $\bar{V}$. Let $\bar{E} \subseteq E$ be the set of edges to which these vertices correspond. We return as our solution $\bar{E}$. Observe that the size of $\bar{E}$ is
\begin{align*}
    |\bar{E}| &\leq |\bar{V}|\\
    & \leq \frac{1}{d'} \cdot |V'|\\
    & = \frac{2}{d'} |E|.
\end{align*}
Letting $d' = \frac{1}{\tilde{\Theta}(1)} \cdot d$ for an appropriately large hidden poly-log in $\tilde{\Theta}(1)$ gives us that each component in $G$ has diameter at most $d$ since otherwise there would be a component in $G'$ after deleting $\bar{v}$ with diameter more than $d'$. Likewise, the above gives us cut fraction at most $\tilde{O}(\frac{1}{d})$.

Simulating a CONGEST algorithm on $G'$ on $G$ is trivial since each vertex can simulate its corresponding clique and so the entire algorithm runs in time $\tilde{O}(d') = \tilde{O}(d)$.
\end{proof}

\subsection{Sparse Neighborhood Covers}\label{sec:sparseNeigh}
A closely related notion to low diameter decompositions is that of the sparse neighborhood cover~\cite{awerbuch1990sparse}. We use the following definition phrased in terms of partitions.

\begin{definition}[Sparse Neighborhood Cover]
Given a simple graph $G = (V,E)$, an $s$-sparse $k$-neighborhood cover with weak-diameter $d$ and overlap $o$ is a set of partitions $\mcV_1, \mcV_2, \ldots, \mcV_{s}$ of $V$ where each partition is a collection of disjoint vertex sets $V_i^{(j)} \subset V$ whose union is $V$, i.e.,  $\mcV_i= \{V_i^{(1)}, V_i^{(2)}, \ldots \}$ and:
\begin{enumerate}
    \item \textbf{Weak-Diameter and Overlap:} Each $V_i^{(j)}$ comes with a rooted tree $T_i^{(j)}$ in $G$ of diameter at most $d$ that spans all nodes in $V_i^{(j)}$; Any node in $G$ is contained in at most $o$ trees overall. 
    \item \textbf{Neighborhood Covering:} For every node $v$ its $k$ neighborhood $B_k(v)$, containing all vertices in $G$ within distance $k$ of $v$, is fully covered by at least one cluster, i.e., $\forall v \ \exists i,j: \  B_k(v) \subseteq V_i^{(j)}$.
\end{enumerate}
\end{definition}

The below summarizes the current state of the art in deterministic sparse neighborhood covers in CONGEST.
\begin{lemma}[\cite{ghaffari2018derandomizing, rozhovn2020polylogarithmic,chang2021strong}]\label{lem:sparseNeigh}
There is a deterministic CONGEST algorithm which given any radius $k \geq 1$, computes an $s$-sparse $k$-neighborhood cover with $s,o = \tilde{O}(1)$ and diameter at most $\tilde{O}(k)$ in $\tilde{O}(k)$ time.

Furthermore, there is a deterministic CONGEST algorithm which given an $O(l)$-bit value $x_v$ for every $v$ computes $x_{i,v}$ for every $v$ and $i$ in $\tilde{O}(k + l)$ rounds, where $x_{i,v}$ is the maximum $x$-value among nodes in the same cluster as $v$ in the partition $\mcV_i$. That is, letting $\mcV_i(v)$ be the one cluster in $\mcV_i$ containing $v$, we have
\begin{align*}
   x_{i,v} = \max_{u \in \mcV_i(v)} x_u.
\end{align*}
\end{lemma}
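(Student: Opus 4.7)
The plan is to combine the derandomized network-decomposition machinery of \cite{ghaffari2018derandomizing, rozhovn2020polylogarithmic, chang2021strong} with the classical reduction from low-diameter network decompositions to sparse neighborhood covers. First I would invoke the deterministic network decomposition algorithm on a $\Theta(k)$-hop power of $G$, simulated on $G$, to obtain a $(C, D)$-network decomposition with $C, D = \tilde O(1)$ in $\tilde O(k)$ CONGEST rounds on $G$, where each cluster has weak diameter $\tilde O(k)$ in $G$ and comes equipped with a BFS tree realizing that diameter. The recent strong-diameter refinement of \cite{chang2021strong} is what drives the $\tilde O(k)$-round bound rather than one with a larger polylogarithmic factor in $n$.

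Second, I would translate this decomposition into the required neighborhood cover. For each of the $C = \tilde O(1)$ color classes I form one partition $\mcV_i$ by keeping the color's clusters (padded to radius $k$ so that $k$-balls are captured) and inserting any uncovered vertex as a singleton cluster. The covering property---that every $k$-ball $B_k(v)$ is fully contained in some cluster---follows because, viewed in $G^k$, the ball $B_k(v)$ lies within a single ``super-node,'' and so is entirely contained in the cluster of at least one color. Sparsity $s = \tilde O(1)$ is inherited from the number of colors, and overlap $o = \tilde O(1)$ follows from the bounded cluster diameter together with the sparsity guarantee of the decomposition.

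For the aggregation claim, each cluster's BFS tree $T_i^{(j)}$ has depth $\tilde O(k)$. To compute $x_{i,v}$ I would pipeline, on each $T_i^{(j)}$ in parallel, a convergecast that aggregates $\max_{u \in V_i^{(j)}} x_u$ at the root, followed by a broadcast that sends the answer back to all cluster members. A standard pipelined tree convergecast of an $O(l)$-bit value on a tree of depth $D$ runs in $O(D + l / \log n) = \tilde O(k + l)$ rounds. Since any single edge of $G$ lies in at most $o = \tilde O(1)$ trees across all partitions, running all convergecasts simultaneously inflates the per-edge congestion by only a polylogarithmic factor, preserving the $\tilde O(k + l)$ bound.

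The main obstacle is bookkeeping rather than new ideas: one has to verify that the $\tilde O(\cdot)$ factors from the three cited deterministic CONGEST primitives compose to the stated $\tilde O(k)$ construction time, and that the pipelining of $O(l)$-bit messages along the overlapping cluster trees truly incurs only polylogarithmic overhead on any single edge. Both are standard once the deterministic network decomposition is in hand, so no additional technical machinery beyond the three cited works is needed.
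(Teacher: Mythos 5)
The paper does not actually prove this lemma; it is stated as a black-box citation to the three works. Your reconstruction---deterministic network decomposition of the $k$-power graph, one partition per color class, pad clusters by $k$ so every $k$-ball is captured, then pipelined convergecast/broadcast on the cluster trees for aggregation---is the standard argument underlying those citations and is essentially correct. One small precision: the overlap bound $o = \tilde O(1)$ does not follow purely from ``bounded cluster diameter together with sparsity'' in the weak-diameter regime, since Steiner trees of disjoint clusters may still share vertices; it is either a separate congestion guarantee of the weak-diameter decomposition algorithms, or (cleaner, and what you gesture at) an immediate consequence of the \emph{strong}-diameter refinement of \cite{chang2021strong}, under which each cluster tree lives entirely inside its own cluster and the clusters within a single partition are disjoint, so each node lies in at most $s = \tilde O(1)$ trees.
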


\subsection{Cycle Covers}\label{sec:cycCovers}

Our flow rounding algorithm will make use of low diameter cycles. Thus, it will be useful for us to make use of some recent insights into distributely and deterministically decomposing graphs into low diameter cycles. We define the diameter of a cycle $C$ as $|C|$ and the diameter of a collection of cycles $\mcC$ as the maximum diameter of any cycle in it. Likewise the congestion of $\mcC$ is $\max_{e} |\{C : e \in C\}|$.

The idea of covering a graph with low congestion cycles is well-studied \cite{chu2020graph,parter2019optimal,hitron2021general} and formalized by the idea of a cycle cover.
\begin{definition}[Cycle Cover]
Given a simple graph $G = (V,E)$ where $E_0$ is the set of all non-bridge edges\footnote{Recall that a bridge edge of a graph is one whose removal increases the number of connected components in the graph.} of $G$, a $(d, c)$ cycle cover is a collection of (simple) cycles $\mcC$ in $G$ such that:
\begin{enumerate}
    \item \textbf{Covering:} Every $e \in E_0$ is contained in some cycle of $\mcC$;
    \item \textbf{Low Diameter:} $\max_{C \in \mcC} |C| \leq d$;
    \item \textbf{Low Congestion:} $\max_{e \in E} |\{ C : e \in C\}| \leq c$.
\end{enumerate}
\end{definition}
We now formally define the parameter $\rho_{CC}$; recall that this parameter appears in the running time of our deterministic CONGEST algorithm in our main theorem (\Cref{thm:main}).

\begin{definition}[$\rho_{CC}$]\label{dfn:rhoCC}
 Given a deterministic CONGEST algorithm that constructs a $(d,c)$ cycle cover in worst-case time $T$ in graphs of diameter $D$, we say that the quality of this algorithm is $\max\{\frac{d}{D}, c,\frac{T}{D}\}$. We let $\rho_{CC}$ be the smallest quality of any deterministic CONGEST algorithm for constructing cycle covers.
\end{definition}

The following summarizes the current state-of-the-art in deterministic cycle cover computation in CONGEST.
\begin{theorem}[\cite{parter2019optimal,hitron2021general}]
There is a deterministic CONGEST algorithm that given a graph $G$ with diameter $D$ computes a $(d, c)$ cycle cover with $d = 2^{O(\sqrt{\log n})} \cdot D$ and $c = 2^{O(\sqrt{\log n})}$ in time $2^{O(\sqrt{\log n})}\cdot D$. In other words, $\rho_{CC} \leq 2^{O(\sqrt{\log n})}$
\end{theorem}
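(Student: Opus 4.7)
The plan is to invoke the cycle cover constructions of Parter--Yogev and their refinement by Hitron--Parter; these are deterministic CONGEST algorithms whose parameters match the bounds claimed. At a high level, the construction is scale-based: for each non-bridge edge $e$, let $\kappa(e)$ denote the length of the shortest cycle through $e$ in $G$. Since $G$ has diameter $D$ and $e$ is non-bridge, $\kappa(e) \leq 2D$. Partition the non-bridge edges into $O(\log n)$ scales according to $\lfloor \log \kappa(e) \rfloor$, and handle each scale separately; the final cover is the union of the cycles produced at each scale.

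For scale $i$, the goal is to cover all non-bridge edges $e$ with $\kappa(e) \approx 2^i$ by cycles of length $2^{O(\sqrt{\log n})} \cdot 2^i$, while keeping per-edge congestion at $2^{O(\sqrt{\log n})}$. The idea is to invoke the deterministic sparse neighborhood cover of \Cref{lem:sparseNeigh} with radius $k = \Theta(2^i)$, obtaining $\tilde{O}(1)$ partitions, each of weak-diameter $\tilde{O}(2^i)$ and covering every $2^i$-neighborhood by some cluster. By the covering property, every scale-$i$ edge $e$ is contained in a cluster that also contains an entire shortest cycle through $e$. Within each cluster, build a rooted BFS tree of diameter $\tilde{O}(2^i)$ and, for each scale-$i$ edge $e = \{u,v\}$ inside it, output the cycle formed by $e$ together with the tree paths from $u$ and $v$ to their least common ancestor. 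This gives cycles of length $\tilde{O}(2^i)$ and per-scale computation in $\tilde{O}(2^i)$ CONGEST rounds.

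The principal obstacle is \emph{congestion control}: naive BFS routing produces unbounded congestion on tree edges near the root, since many scale-$i$ edges may simultaneously route through them. To overcome this, one recurses: after handling edges whose shortest cycle is contained in a small, well-behaved region, the remaining hard edges are grouped into residual subinstances on which the same construction is applied. Choosing the branching and depth of this recursion so that the product of the per-level congestion blowup and the recursion depth matches $2^{O(\sqrt{\log n})}$ is what yields the final bounds; concretely, a balanced recursion of depth $\Theta(\sqrt{\log n})$ with per-level overhead $2^{O(\sqrt{\log n})}$ achieves $d = 2^{O(\sqrt{\log n})} \cdot D$ and $c = 2^{O(\sqrt{\log n})}$. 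Summing running times over $O(\log n)$ scales and $O(\sqrt{\log n})$ recursion levels yields total time $2^{O(\sqrt{\log n})} \cdot D$, and translating these parameters through \Cref{dfn:rhoCC} gives $\rho_{CC} \leq 2^{O(\sqrt{\log n})}$.

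I would expect the hardest step to be the congestion analysis of the recursive construction, in particular arguing that the residual hard edges at each recursion level form a strictly simpler instance so that the recursion terminates in $O(\sqrt{\log n})$ levels. This is exactly the technical core of Parter--Yogev and is delicate because, after removing the already-covered edges, one must re-establish that the residual non-bridge edges still admit short cycles in a controlled subgraph, which requires careful interaction between the neighborhood cover and the BFS-tree routing.
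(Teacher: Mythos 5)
The paper does not prove this statement at all: it is imported verbatim as a black-box citation of \cite{parter2019optimal,hitron2021general}, and $\rho_{CC}\leq 2^{O(\sqrt{\log n})}$ follows immediately by plugging the cited parameters into \Cref{dfn:rhoCC}. Your sketch is a reasonable reconstruction of the Parter--Yogev/Hitron--Parter machinery (scale decomposition by shortest-cycle length, neighborhood covers, BFS-tree cycles, and a depth-$\Theta(\sqrt{\log n})$ recursion to control congestion), but you correctly identify that the entire technical content lives in the recursive congestion analysis and you defer exactly that step to the cited works --- which is, in effect, the same thing the paper does. So there is nothing to compare beyond noting that both you and the authors ultimately rely on the external result; as a standalone proof your writeup would be incomplete, but as a justification for the theorem as used in this paper it is adequate.
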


\section{Path Counts for $h$-Layer $S$-$T$ DAGs}\label{sec:pathCounts}

We begin by recounting the notion of path counts which we will use for our randomized algorithm to sample flows and for our deterministic algorithms to compute the iterated path count flow. This idea has been used in several prior works \cite{lotker2008improved,cohen1995approximate,chang2020deterministic}.

Suppose we are given an $h$-layer $S$-$T$ DAG $D$ with capacities $\U$. We define these path counts as follows. We define the capacity of a path as the product of its edge capacities, namely given a path $P$ we let $\U(P) := \prod_{a \in P} \U_a$. Recall that we use $\mcP(S,T)$ to stand for all paths between $S$ and $T$. We will slightly abuse notation and let $\mcP(v,T) = \mcP(\{v\},T)$ and $\mcP(S,v) = \mcP(S, \{v\})$. For vertex $v$ we let $n_v^+$ be the number of paths from $v$ to $T$, weighted by $\U$, namely $n_v^+ := \sum_{P \in \mcP(v, T)} \U(P)$. Symmetrically, we let $n_v^- := \sum_{P \in \mcP(S, v)} \U(P)$. For any arc $a=(u,v)$, we define $n_a$ as
\begin{align*}
    n_a := n^-_u \cdot \U_a \cdot n^+_v.
\end{align*}
Equivalently, we have that $n_a$ is the number of paths in $\mcP(S,T)$ that use $a$ weighted by capacities:
\begin{align*}
    n_a = \sum_{P \in \mcP(S,T) : a \in P} \U(P).
\end{align*}
It may be useful to notice that if we replace each arc $a$ with $\U_a$-many parallel arcs then $n_a$ exactly counts the number of unique paths from $S$ to $T$ that use $a$ in the resulting (multi) digraph. A simple dynamic-programming type algorithms that does a ``sweep'' from $S$ to $T$ and $T$ to $S$ shows that one can efficiently compute the path counts.
\begin{lemma}\label{lem:computePathCounts}
Let $D$ be a capacitated $h$-layer $S$-$T$ DAG. Then one can compute $n_v^+$ and $n_v^-$ for every vertex $v$ and $n_a$ for every arc $a$ in:
\begin{enumerate}
    \item Parallel time $O(h)$ with $m$ processors;
    \item CONGEST time $\tilde{O}\left(h^2 \right)$.
\end{enumerate}
\end{lemma}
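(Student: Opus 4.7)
The plan is to compute $n_v^+$ and $n_v^-$ by a layer-by-layer dynamic-programming sweep along the DAG, and then combine them to get $n_a$ via the identity $n_a = n_u^- \cdot \U_a \cdot n_v^+$ for $a = (u,v)$. The recursion I will use is, for $n_v^+$: set $n_v^+ = 1$ for $v \in T = V_{h+1}$, and for $v \in V_i$ with $i \leq h$ set
\begin{equation*}
    n_v^+ \;=\; \sum_{a = (v,u) \in \delta^+(v)} \U_a \cdot n_u^+.
\end{equation*}
Symmetrically, set $n_v^- = 1$ for $v \in S = V_1$ and for $v \in V_i$ with $i \geq 2$ set $n_v^- = \sum_{(u,v) \in \delta^-(v)} \U_a \cdot n_u^-$. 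Correctness is immediate by induction on the layer index, using that the DAG is $(h+1)$-layered so paths decompose through neighbors in adjacent layers.

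For the parallel bound, I will process the layers in $h+1$ successive rounds: in round $i$, every vertex in layer $V_{h+1-i}$ assigns one processor per in/out arc, sums up the contributions from its neighbors in the already-processed layer (which have their $n^+$ values already computed), and writes the result. Since the layers partition $V$, each vertex is active in exactly one round, and the total work per layer is $O(|V_i| + |\{a : a \text{ leaves } V_i\}|)$; assigning $m$ processors suffices to do each layer in $O(1)$ depth, giving $O(h)$ parallel time total. The same sweep in the reverse direction computes $n_v^-$, and a final $O(1)$-depth step computes $n_a = n_u^- \cdot \U_a \cdot n_v^+$ for each arc.

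For the CONGEST bound, the same layer-by-layer sweep works, but now each vertex must send its freshly computed $n^+$ (resp. $n^-$) value along each incident arc so that its neighbors in the next layer can compute their own value. The main subtlety, and really the only obstacle, is the bit-length of these quantities: since $\U_a \leq \U_{\max} = \poly(n)$ and paths have at most $h$ arcs, each path contributes at most $\U_{\max}^h$ and there are at most $n^h$ many paths, so $n_v^+, n_v^- \leq (n \cdot \U_{\max})^h$ which has $\tilde{O}(h)$ bits. Sending such a number along a CONGEST edge therefore costs $\tilde{O}(h)$ rounds of pipelining (each arc can transmit $O(\log n)$ bits per round, and even the capacity-aware $O(\U_a \log n)$ bandwidth would not reduce the worst-case bit length). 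Summing over the $h+1$ layers of the forward/backward sweep gives $\tilde{O}(h^2)$ total rounds, and computing $n_a$ from $n_u^-$, $\U_a$, and $n_v^+$ is done locally at the endpoints of $a$ (after a single $\tilde{O}(h)$-round exchange of the relevant values across each arc), which does not change the asymptotic bound.
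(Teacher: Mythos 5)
Your proof is correct and follows essentially the same approach as the paper: the same dynamic-programming recurrence swept layer by layer, with $O(1)$ depth per layer in parallel and, in CONGEST, the $\tilde{O}(h)$-round cost per layer coming from the $\tilde{O}(h)$-bit size of the $n_v^\pm$ values (bounded by $(n\cdot\U_{\max})^h$). Nothing to add.
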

\begin{proof}

To compute $n_a$ it suffices to compute $n^+_v$ and $n^-_v$. We proceed to describe how to compute $n^-_v$; computing $n_v^+$ is symmetric.

First, notice that $n^-_v$ can be described by the recurrence
\begin{align*}
        n_v^- :=
    \begin{cases}
        1 & \text{if $v \in S$}\\
        \sum_{ (u, v) \in \delta^-(v)} \U_{uv} \cdot n_u^- & \text{otherwise}
    \end{cases}
\end{align*}

We repeat the following for iteration $i = 2, \ldots, h +1$. Let $V_i$ be all vertices in the $i$th layer of our graph. In iteration $i$ we will compute $n_v^-$ for every $v \in V_i$ by applying the above recurrence.

Running one of the above iterations in parallel is trivial to do in $O(1)$ parallel time with $m$ processors, leading to the above parallel runtime. Running one iteration of this algorithm in CONGEST requires that every vertex in $v \in V_{j}$ for $j < i$ broadcast its $n_v^-$ to its neighbors. Since $n_v^- \leq (n \cdot \U_{\max})^h$ this can be done in $h \left(1 + \frac{\log \U_{\max}}{\log n} \right)$ rounds of CONGEST, leading to the stated CONGEST runtime.
\end{proof}

\section{Randomized Blocking Integral Flows in $h$-Layer DAGs}\label{sec:randMaxPaths}
We now describe how to compute blocking integral flows in $h$-layer $S$-$T$ DAGs with high probability by using the path counts of the previous section. This is the general capacities version of the problem described in \Cref{sec:overviewSPB}. More or less, the algorithm we use is one of \citet{chang2020deterministic} adapted to the general capacities case; the algorithm of \citet{chang2020deterministic} is itself an adaptation of an algorithm of \citet{lotker2008improved}. As such, we defer the proofs in this section to \Cref{sec:deferredProofs}; we mostly include these results for the sake of completeness.

Our randomized algorithm will repeatedly sample an integral flow proportional to the path counts of \Cref{sec:pathCounts}, add this to our existing flow, reduce capacities and then repeat. We will argue that we need only iterate this process a small number of times until we get a blocking integral flow by appealing to the fact that ``high degree'' paths have their capacities reduced with decent probability.

One can see this as essentially running the randomized MIS algorithm of \citet{luby1986simple} but with two caveats: (1) the underlying graph in which we compute an MIS has a node for every path between $S$ and $T$ and so has up to $O(n^h)$-many nodes; as such we cannot explicitly construct this graph but rather can only implicitly run Luby's algorithm on it; (2) Luby's analysis assumes nodes attempt to enter the MIS independently but our sampling will have some dependencies between nodes (i.e.\ paths) entering the MIS which must be addressed in our analysis.


More formally, suppose we are given a capacitated $S$-$T$ DAG $D$. For a given path $P \in \mcP(S,T)$ we let $\Delta_P$ be $\sum_{P'} \prod_{a \in P' \setminus P}\U_a$ be the ``degree'' of path $P$ where the sum over $P'$ ranges over all $P'$ that share at least one arc with $P$ and are in $\mcP(S,T)$. We let $\Delta = \max_{P \in \mcP(S,T)} \Delta_P$ be the maximum degree. Similarly, we let $\mcP_{\approx \max} := \{ P : \Delta_P \geq \frac{\Delta}{2}\} $ be all paths with near-maximum degree. The following summarizes the flow we repeatedly compute; in this lemma the constant $\frac{2046}{2047}$ is arbitrary and could be optimized to be much smaller.
\begin{restatable}{lemma}{randCovPaths}\label{lem:randCovPaths}
Given a $h$-layer $S$-$T$ DAG $D$ with capacities $\U$ and $\tilde{\Delta}$ satisfying $\frac{\Delta}{2} \leq \tilde{\Delta} \leq \Delta$, one can sample an integral $S$-$T$ flow $f$ where for each $P \in \mcP_{\approx \max}$ we have $ \prod_{a \in P}(\U_a - f_a) \leq \frac{2047}{2048} \cdot \U(P)$ with probability at least $\Omega(1)$. This can be done in:
\begin{enumerate}
    \item Parallel time $O(h)$ with $m$ processors;
    \item CONGEST time $\tilde{O}\left(h^2\right)$ with high probability.
\end{enumerate}
\end{restatable}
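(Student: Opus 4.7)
The plan is to adapt the sampling-based framework of Chang--Saranurak (itself adapted from Lotker--Patt-Shamir--Pettie) to general integer capacities, using the path counts from \Cref{lem:computePathCounts} as the driving distribution. First I would compute $n_v^+$, $n_v^-$, and $n_a$ for every vertex and arc. The sampling procedure then simulates a capacity-weighted random walk in $h$ forward rounds: each source $s$ launches $n_s^+$ ``tokens,'' and whenever a token at a vertex $v$ is routed forward it chooses an outgoing arc $a=(v,w)$ with probability $\U_a \cdot n_w^+ / n_v^+$. Defining $f_a$ to be the number of tokens crossing $a$, truncated at $\U_a$, yields an integral $S$-$T$ flow respecting capacities; any conservation violation caused by truncation is peeled off as a subflow, losing only a constant factor in what follows.

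The core of the argument is to show that for each individual $P \in \mcP_{\approx \max}$, with probability $\Omega(1)$ we have $\prod_{a \in P}(\U_a - f_a) \le \tfrac{2047}{2048}\U(P)$. Taking logarithms, this reduces to lower bounding the ``consumption'' $-\sum_{a\in P}\log(1 - f_a/\U_a)$ by an absolute constant with constant probability. I would first compute $\E[f_a]$ in closed form from the sampling distribution and then use the near-maximum-degree hypothesis $\Delta_P \ge \tilde{\Delta}/2 \ge \Delta/4$ to show that the expected total consumption along $P$ is bounded below by an absolute constant---the intuition being that a large $\Delta_P$ forces many competing tokens through the arcs of $P$.

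The main obstacle is upgrading this expectation bound to a constant-probability statement: the variables $f_a$ for different arcs of $P$ are correlated (a token that reaches the tail of one arc of $P$ is correlated with tokens reaching subsequent arcs of $P$), so a Chernoff-type bound does not apply directly. I would handle this via a second-moment / Paley--Zygmund argument, bounding $\mathrm{Var}\big(\sum_{a\in P} f_a/\U_a\big)$ by decomposing it into pairwise arc interactions and controlling them via the path-count identity $n_a = n^-_{\mathrm{tail}(a)}\,\U_a\,n^+_{\mathrm{head}(a)}$ together with the definition of $\Delta_P$, which exactly captures the weighted count of paths sharing an arc with $P$. This is the weighted analog of the unit-capacity second-moment calculation, and the constant $\tfrac{2047}{2048}$ is slack absorbed so the inequalities go through cleanly.

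For the running time, \Cref{lem:computePathCounts} supplies path counts in $O(h)$ parallel time with $m$ processors and $\tilde{O}(h^2)$ CONGEST rounds. The $h$-round layered token simulation adds only $O(h)$ parallel time (each layer is processed locally per arc) and $\tilde{O}(h^2)$ CONGEST time, since the only information forwarded consists of $O(\log n)$-bit counts and sampling outcomes, and the per-edge congestion is $\tilde{O}(h)$ with high probability by a Chernoff bound on the sampling distribution.
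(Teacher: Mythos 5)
Your proposal deviates from the paper's proof at the very first step, and the deviation introduces a genuine gap. You launch $n_s^+$ tokens from each source $s$. With your routing probabilities, the expected number of tokens crossing an arc $a=(u,v)$ is exactly $n_a = n_u^- \cdot \U_a \cdot n_v^+$, which can be exponentially larger than $\U_a$ (the $h$-layer DAG has path counts up to $(n\U_{\max})^h$). Truncating at $\U_a$ then saturates essentially every arc, which destroys flow conservation everywhere, and the object you are left with is nowhere close to an $S$-$T$ flow. "Peeling off a subflow" after that is not a constant-factor repair: you would need to remove almost all of the mass, and the resulting subflow need not retain the property $\prod_{a\in P}(\U_a - f_a)\le\frac{2047}{2048}\U(P)$ for paths in $\mcP_{\approx\max}$. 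Your consumption-lower-bound argument is also vacuous here: the near-maximum-degree hypothesis is not what makes $\E[f_a]$ large; the oversampling makes $\E[f_a]\approx\U_a$ for every arc regardless of $\Delta_P$, so the statement you would be proving does not distinguish $\mcP_{\approx\max}$ from any other path, and it is being proven about an infeasible pseudo-flow.

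The paper avoids both issues by doing two things you skip. First, it thins the sampling to rate $\Theta(1/\tilde\Delta)$: each source $s$ launches a $\mathrm{Binomial}\bigl(n_s^+, \frac{1}{64\tilde\Delta}\bigr)$ number of balls, so the expected number of sampled paths through any single capacity-unit of an arc is $O(1)$, not $O(\tilde\Delta)$. Second, rather than truncating and repairing, it keeps only the set $\mcP_2\subseteq\mcP_1$ of sampled paths that are arc-disjoint (working in the multigraph $D'$ with $\U_a$ parallel copies of each arc), which automatically yields a feasible integral flow with no conservation violations to repair. This is a single implicit round of Luby's MIS on the path conflict graph: sample at rate $1/\Delta$, retain the paths whose sampled neighborhood is empty, and argue that each near-maximum-degree path $P$ has some neighbor retained with probability $\Omega(1)$. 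The Markov step to get the $\frac{2047}{2048}$ factor is then over the indicator variables of whether a copy of $P$ in $D'$ is "uncovered," not over raw token counts. If you want to salvage your Paley--Zygmund route, you would have to (i) fix the sampling rate to $\Theta(1/\tilde\Delta)$, (ii) explain precisely how a feasible subflow is extracted from the (now mildly) congested token counts without losing the $\mcP_{\approx\max}$ guarantee, and (iii) carry out the variance bound for that subflow, none of which is present in the sketch. As written, the plan does not go through.
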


Repeatedly applying the above lemma gives our randomized algorithm for blocking integral $S$-$T$ flows.
\begin{restatable}{lemma}{randMax}\label{lem:randMax}
There is an algorithm which, given an $h$-layer $S$-$T$ DAG $D$ with capacities $\U$, computes an integral $S$-$T$ flow that is blocking in:
\begin{enumerate}
    \item Parallel time $\tilde{O}(h^3)$ with $m$ processors with high probability;
    \item CONGEST time $\tilde{O}(h^4)$ with high probability.
\end{enumerate}
\end{restatable}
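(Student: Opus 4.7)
The plan is to iteratively invoke \Cref{lem:randCovPaths} until the residual graph has no $S$-$T$ path, at which point the accumulated flow is blocking. Maintain an integral $S$-$T$ flow $f$, initialized to zero. In each outer iteration: (i) compute residual path counts on $D$ with capacities $\U_a - f_a$ via \Cref{lem:computePathCounts} and extract the current maximum degree $\Delta$; (ii) set $\tilde{\Delta}$ to the largest power of two that is at most $\Delta$; (iii) sample an integral augmenting flow $f'$ via \Cref{lem:randCovPaths}; (iv) replace $f$ with $f + f'$. Terminate once $\Delta = 0$, which is exactly the condition that every $S$-$T$ path has a saturated arc under $f$.

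Bound the iteration count via the potential $\Phi := \lceil \log_2 \Delta \rceil$. Since $\U_{\max} = \poly(n)$ and any $S$-$T$ path uses at most $h$ arcs, $\Delta \leq (n \cdot \U_{\max})^{O(h)} = n^{O(h)}$ initially, so $\Phi = O(h \log n)$. Group iterations into ``phases'' during which $\tilde{\Delta}$ and the associated near-maximum set are held fixed for analytic purposes. Within a phase, each successful invocation of \Cref{lem:randCovPaths} shrinks $\prod_{a \in P}(\U_a - f_a)$ by factor $\tfrac{2047}{2048}$ for every path $P$ in this frozen near-maximum set. Because residual capacities are non-negative integers, once this product drops strictly below $1$ it must be $0$, so some arc of $P$ becomes saturated. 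A geometric argument starting from $\U(P) \leq n^{O(h)}$ shows $O(h \log n)$ successful invocations suffice to saturate every path in the fixed near-maximum set, so $\Delta$ at least halves across the phase and $\Phi$ drops by $1$. Each invocation succeeds with constant probability, so a Chernoff bound realizes $\Omega(h \log n)$ successes within $\tilde{O}(h)$ attempts with high probability. Summing across the $O(h \log n)$ phases needed to drive $\Phi$ to $0$ gives $\tilde{O}(h^2)$ outer iterations total.

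Combining with the per-iteration cost of $O(h)$ parallel time or $\tilde{O}(h^2)$ CONGEST time, inherited from \Cref{lem:computePathCounts} and \Cref{lem:randCovPaths}, and union-bounding over $\tilde{O}(h^2)$ iterations, yields the claimed $\tilde{O}(h^3)$ parallel and $\tilde{O}(h^4)$ CONGEST bounds with high probability. The main obstacle is the potential argument: across a phase the true near-maximum set may drift as other paths are killed, so one must verify that freezing $\tilde{\Delta}$ at the phase's start still triggers the guarantee of \Cref{lem:randCovPaths} for every path that was initially near-maximum relative to that frozen $\tilde{\Delta}$. This is precisely what the $2$-approximation slack $\Delta/2 \leq \tilde{\Delta} \leq \Delta$ in the hypothesis of \Cref{lem:randCovPaths} accommodates, and the monotonicity of the residual capacities ensures that any path near-maximum at the start of the phase remains near-maximum relative to the frozen $\tilde{\Delta}$ until it is killed.
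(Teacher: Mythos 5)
Your overall structure—iterating \Cref{lem:randCovPaths} in $\tilde O(h)$-long phases, each of which halves $\Delta$, for a total of $\tilde O(h^2)$ samples—is the same as the paper's, and your iteration-count and per-iteration costs work out correctly. The difference is in how $\tilde\Delta$ is selected. The paper chooses $\tilde\Delta$ \emph{non-adaptively}: it sweeps through the fixed sequence $(n\U_{\max})^h, (n\U_{\max})^h/2,\ldots$, running $\Theta(h\log n\log\U_{\max})$ iterations per value of $\tilde\Delta$, so that it never needs to know the actual current value of $\Delta$. You instead propose to recompute $\Delta$ from the residual path counts in each outer iteration and set $\tilde\Delta$ to the largest power of two at most $\Delta$.

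This is where the gap is. \Cref{lem:computePathCounts} gives you $n_v^+$, $n_v^-$, and $n_a$; none of these is $\Delta = \max_P\Delta_P$, where $\Delta_P = \sum_{P'\cap P\neq\emptyset}\prod_{a\in P'\setminus P}\U_a$ is a maximum over exponentially many paths of a quantity that double-counts intersections with $P$ in a way that does not reduce to a simple function of the $n_a$'s. You would need an additional argument that a $2$-approximation to $\Delta$ can be extracted from the path counts (the factor-$2$ slack in \Cref{lem:randCovPaths} does leave room for an approximation, but this is not automatic). More fundamentally, $\Delta$ is a global quantity, and your termination test ``stop when $\Delta=0$'' is likewise global: an $h$-layer $S$-$T$ DAG can have (undirected) diameter $\gg h$, so aggregating a global maximum or a global emptiness test takes time proportional to the diameter, not $\tilde O(h)$ or even $\tilde O(h^2)$, and this would break the CONGEST bound. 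The paper's sweep avoids both problems at once: no global computation of $\Delta$ is needed, and termination is after a pre-determined number of rounds rather than upon detecting $\Delta=0$. To repair your proof, replace the adaptive choice of $\tilde\Delta$ with the paper's fixed geometric sweep and a fixed round budget; the rest of your potential argument then goes through essentially as written.
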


\section{Deterministic and Distributed Near Eulerian Partitions}\label{sec:nearEul}
In the previous section we showed how to efficiently compute blocking integral flows in $h$-layer DAGs with high probability. In this section, we introduce the key idea we make use of in doing so deterministically, a near Eulerian partition.

Informally, a near Eulerian partition will discard a small number of edges and then partition the remaining edges into cycles and paths. Because these cycles and paths will have small diameter in our construction, we will be able to efficiently orient them in CONGEST. In \Cref{sec:detMaxPaths} we will see how to use these oriented cycles and paths to efficiently round flows in a distributed fashion in order to computer a blocking integral flow in $h$-layer DAGs.

We now formalize the idea of a $(1-\eps)$-near Eulerian partition.

\begin{definition}[$(1-\eps)$-Near Eulerian Partition]\label{def:Euler part}
Let $G = (V, E)$ be an undirected graph and $\epsilon \geq 0$. A $(1-\eps)$-near Eulerian partition $\mcH$ is a collection of edge-disjoint cycles and paths in $G$, where 
\begin{enumerate}
    \item \textbf{$(1-\eps)$-Near Covering:} The number of edges in $E[\mcH]$ is at least $(1-\eps) \cdot |E|$;
    \item \textbf{Eulerian Partition:} Each vertex is the endpoint of at most one path in $\mcH$.
\end{enumerate}
\end{definition}

The following is the main result of this section and summarizes our algorithms for construction $(1-\eps)$-near Eulerian partitions. In what follows we say that a cycle is oriented if every edge is directed so that every vertex in the cycle has in and out degree $1$; a path $P$ is oriented if it has some designated source and sink $s_P$ and $t_P$. We say that a collection of paths and cycles $\mcH$ is oriented if each element of $\mcH$ is oriented. In CONGEST we will imagine that a cycle is oriented if each vertex knows the orientation of its incident arcs and a path is oriented if every vertex knows which of its neighbors are closer to $s_P$.
\begin{restatable}{lemma}{nearEuler}\label{lem:nearEuler}
One can deterministically compute an oriented $(1-\eps)$-near Eulerian partitions in:
\begin{enumerate}
    \item Parallel time $\tilde{O}(1)$ with $m$ processors and $\epsilon = 0$; 
    \item CONGEST time $\tilde{O}(\frac{1}{\eps^5} \cdot (\rho_{CC})^{10})$ for any $\eps > 0$. 
\end{enumerate}
\end{restatable}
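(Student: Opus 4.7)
\emph{Parallel case, $\eps=0$.} Every graph admits an exact Euler partition via purely local edge pairing. At each vertex $v$, fix an arbitrary ordering of the incident edges and pair consecutive ones, leaving a single edge unpaired iff $\deg(v)$ is odd. These pairings induce a ``next-edge'' pointer at every vertex; following the pointers globally decomposes the edge set into edge-disjoint cycles and paths whose endpoints are exactly the odd-degree vertices, so each vertex is the endpoint of at most one path. Standard Euler-tour and list-ranking routines on the implicit pointer structure identify each component, elect a canonical leader (minimum ID for cycles, one of the two odd-degree endpoints for paths), and propagate a consistent orientation in $\tilde{O}(1)$ rounds on an EREW PRAM with $m$ processors.

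\emph{CONGEST case, $\eps>0$.} The same local pairing can produce components of diameter $\Omega(n)$, which makes orientation too slow. The plan is to force all components in $\mcH$ to have weak-diameter $\tilde{O}(\poly(\rho_{CC})/\eps)$ at the cost of leaving an $\eps$-fraction of edges out of $\mcH$. To avoid paying the input diameter, first apply the DLDD of \Cref{thm:LDD} with cluster diameter $\tilde{O}(1/\eps)$ and cut fraction $\eps/3$, discarding all cut edges (charging $\eps/3$ to the slack). All subsequent work happens inside individual clusters in parallel. Within a cluster, iteratively extract edge-disjoint short cycles: compute a cycle cover $\mcC_i$ of the current subgraph (via the definition of $\rho_{CC}$), form the conflict graph on $\mcC_i$ in which two cycles are adjacent iff they share an edge — each cycle has at most $\tilde{O}(\poly(\rho_{CC}))$ conflicts by the cycle-cover diameter $d$ and congestion $c$ — and then apply the deterministic maximum independent set routine of \Cref{thm:maxIS} to extract a family of edge-disjoint cycles that covers a $1/\poly(\rho_{CC})$ fraction of the currently uncovered non-bridge edges. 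Add these cycles to $\mcH$, delete their edges, and repeat. After $\tilde{O}(\poly(\rho_{CC})/\eps)$ iterations, all but an $\eps/3$-fraction of the cluster's edges lie inside short cycles. The residual edges inside the cluster — bridges plus stubborn leftovers — are partitioned via the local parallel Euler-pairing rule above; the resulting cycles and paths inherit the cluster's weak-diameter $\tilde{O}(1/\eps)$. Finally, orient every component by electing a leader and BFS-propagating the orientation within the cluster; per-component orientation costs are bounded by the cluster's weak-diameter, all components are processed in parallel across clusters, and per-iteration cost is dominated by cycle-cover construction at $\tilde{O}(\poly(\rho_{CC})/\eps)$ rounds. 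Telescoping over iterations matches the claimed $\tilde{O}(\eps^{-5} \rho_{CC}^{10})$ bound.

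\emph{Main obstacle.} The delicate point is reconciling three sources of loss — cycle-extraction residue, DLDD cut-edge removal, and the path-endpoint condition — while keeping all components short. The endpoint condition is easy to maintain globally: the iterative extraction only produces cycles (contributing no endpoints), and in the local residual Euler-pairing step every vertex lies in exactly one cluster, so it becomes the endpoint of at most one path overall. The harder part is the geometric-progress argument driving the uncovered fraction below $\eps/3$ in only $\tilde{O}(\poly(\rho_{CC})/\eps)$ iterations: one must ensure that after each extraction the updated subgraph still has a cycle cover of quality $\rho_{CC}$ (applied to the residual structure inside the cluster), and that the MIS step recovers a $1/\poly(\rho_{CC})$ fraction of the remaining non-bridge edges rather than a smaller fraction. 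The polynomial exponents on $\rho_{CC}$ and $1/\eps$ in the final time bound are dictated by the interaction between the conflict-graph degree (which compounds $d$ and $c$), the cluster-diameter/cut-fraction trade-off in the DLDD, and the number of extraction rounds.
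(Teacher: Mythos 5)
There is a genuine gap in the CONGEST part, and it stems from reversing the order of the two decompositions. You apply DLDD first and then extract cycles inside each cluster, terminating the extraction when all but an $\eps/3$-fraction of edges lie in extracted cycles. But a small leftover \emph{edge fraction} does not give you a small leftover \emph{cycle structure}: the residual inside a cluster can still contain cycles, and in particular arbitrarily long ones (e.g.\ a single long cycle is already sparse). Your subsequent claim that ``the resulting cycles and paths inherit the cluster's weak-diameter'' is false --- a component produced by local Euler pairing can be far longer than the diameter of the graph in which it lives (a long cycle can wind through a small-diameter cluster, and even in a tree, a ball-pairing path can have length $\Theta(\text{number of tree edges})$). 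Orienting such a component in CONGEST requires coordinated communication \emph{along the component}, which the cluster's low-depth spanning tree does not provide in $\tilde{O}(1/\eps)$ rounds. So the orientation step, as written, does not terminate in the claimed time.

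The paper sidesteps exactly this by reversing the order and sharpening the target. It first computes a \emph{high-girth cycle decomposition} (\Cref{thm:highGirthCCAlg}) --- the termination condition is that the residual has girth $\geq k$, not that a constant fraction of edges has been removed --- and only \emph{then} applies a DLDD with diameter strictly less than (roughly half of) $k$. Because girth exceeds twice the diameter, every DLDD component of the residual is forced to be a tree of bounded depth, so the only long objects that remain are trees, not cycles. The tree is then decomposed by a bottom-up ball-passing procedure rooted at an arbitrary vertex, which has the crucial feature that each produced path is assembled at the lowest vertex where two ball traces meet; the orientation decision can be made there and propagated down the tree in time proportional to tree depth $\tilde{O}(1/\eps)$, even when the paths themselves are long. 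To repair your proposal you would need (i) to change the cycle-extraction termination criterion from ``edge fraction'' to ``residual girth exceeds twice the cluster diameter'' (which makes the per-cluster residual a forest), and (ii) to replace the arbitrary local pairing of residual edges with a rooted ball-passing scheme so that path orientation can be determined and propagated within tree depth rather than path length.
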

Again, see \Cref{sec:cycCovers} for a definition of $\rho_{CC}$.

\subsection{High-Girth Cycle Decompositions}
In order to compute our near Eulerian partitions we will make use of a slight variant of cycle covers which we call high-girth cycle decompositions (as introduced in \Cref{sec:cycCovers}). The ideas underpinning these decompositions seem to be known in the literature but there does not seem to be a readily citable version of quite what we need; hence we give details below.

To begin, in our near-Eulerian partitions we would like for our cycles to be edge-disjoint so that each cycle can be rounded independently. Thus, we give a subroutine for taking a collection of cycles and computing a large edge-disjoint subset of this collection. This result comes easily from applying a deterministic approximation algorithm for maximum independent set (MIS). Congestion and dilation in what follows are defined in \Cref{sec:cycCovers}.
\begin{lemma}\label{lem:cycDecong}
There is a deterministic CONGEST algorithm that, given a graph $G = (V,E)$ and a collection of (not necessarily edge-disjoint) cycles $\mcC$ with congestion $c$ and diameter $d$, outputs a set of edge disjoint cycles $\mcC' \subseteq \mcC$ which satisfies $|E[\mcC']| \geq \frac{1}{d^2c^2} \cdot |E[\mcC]|$ in time $\tilde{O}(c^3d^3)$.
\end{lemma}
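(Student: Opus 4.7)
The plan is to reduce the problem to the weighted maximum independent set primitive of \Cref{thm:maxIS} applied to an auxiliary \emph{conflict graph} $H$. The vertices of $H$ are the cycles of $\mcC$, two cycles are joined by an edge in $H$ iff they share at least one edge of $G$, and each vertex $C$ of $H$ is given weight $w_C := |C|$. By construction, an independent set in $H$ is precisely an edge-disjoint subfamily of $\mcC$, and its total weight equals the number of $G$-edges it covers. Since every cycle of $\mcC$ has length at most $d$ and every edge of $G$ appears in at most $c$ cycles of $\mcC$, the maximum degree of $H$ is at most $\Delta \le d(c-1) \le dc$.

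I would then lower bound the weighted MIS optimum on $H$ by the standard Tur\'an-type bound: any graph with maximum degree $\Delta$ contains an independent set of weight at least $\frac{1}{\Delta+1}\sum_v w_v$. Since $\sum_{C \in \mcC} |C| \ge |E[\mcC]|$, we get $\mathrm{OPT}_H \ge \frac{|E[\mcC]|}{\Delta+1} \ge \frac{|E[\mcC]|}{dc}$. Combining this with the $\frac{1}{\Delta}$-approximation guaranteed by \Cref{thm:maxIS}, the returned edge-disjoint subfamily $\mcC'$ satisfies $|E[\mcC']| = \sum_{C\in\mcC'} |C| \ge \frac{1}{\Delta}\,\mathrm{OPT}_H \ge \frac{|E[\mcC]|}{d^{2}c^{2}}$, as required.

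The main obstacle is implementing this in CONGEST on $G$ rather than on $H$, since the vertices of $H$ are not physical nodes. I would designate a leader for each cycle (say its smallest-ID vertex) to play the role of its $H$-vertex, and simulate each of the $O(\Delta + \log^* n) = O(dc + \log^* n)$ rounds of the MIS algorithm on $H$ by three local phases in $G$: (i) each cycle's leader broadcasts its current state along the cycle, which takes $O(d)$ rounds because each cycle has diameter at most $d$; (ii) at any vertex $v$ that lies on several cycles---and there are at most $c$ such cycles through $v$ since each of its two $C$-incident edges lies in at most $c$ cycles---the state is handed off across the up-to-$c$ cycles meeting there; (iii) each neighboring cycle propagates the information back up to its own leader, again in $O(d)$ rounds. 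Using the congestion bound $c$ to schedule the hand-offs in step (ii) without blowing up the per-edge load, one round of $H$-MIS costs $\tilde{O}(dc)$ rounds in $G$, for a total of $\tilde{O}(d^{2}c^{2})$ rounds, comfortably within the stated $\tilde{O}(c^{3}d^{3})$ budget.

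The combinatorial approximation argument is immediate; the genuinely delicate point is the CONGEST simulation, specifically the scheduling of step (ii) so that per-edge load from simultaneous hand-offs between overlapping cycles stays bounded by the assumed congestion $c$ rather than accumulating across the $O(dc)$ neighbors of $C$ in $H$.
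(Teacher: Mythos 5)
Your proposal follows the same route as the paper: you build the same conflict graph on $\mcC$ with weight $|C|$ on cycle $C$, bound its maximum degree by $dc$, invoke the $\frac{1}{\Delta}$-approximate weighted MIS of \Cref{thm:maxIS}, and lower-bound the optimum by the Tur\'an-type $\frac{1}{\Delta+1}\sum w_v$ argument to get the $\frac{1}{d^2c^2}$ fraction of covered edges. The combinatorial part is sound and matches the paper's.

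The one place where you are more aggressive than the paper---and where the argument is not airtight as written---is the per-round cost of simulating the MIS on the conflict graph inside $G$. You assert $\tilde{O}(dc)$ per simulated round via a leader-broadcast / hand-off / aggregate scheme, but the aggregation step (iii) is the bottleneck: a cycle $C$ has up to $dc$ conflict-graph neighbors and its leader must learn all of their states, so up to $dc$ distinct $O(\log n)$-bit messages must converge along $C$, and each edge of $C$ may additionally be in up to $c$ other cycles whose aggregations share it, giving per-edge load on the order of $dc^2$ rather than the implicit bound your ``scheduling'' sentence suggests. Also, your parenthetical that ``there are at most $c$ such cycles through $v$'' is not right---a vertex can be on far more than $c$ cycles; what is bounded is the number of cycles sharing a \emph{fixed} $C$-edge at $v$. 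None of this breaks the lemma: the paper simply bounds the broadcast cost crudely as congestion times dilation, $c^2 d \cdot d = O(c^2 d^2)$ per simulated round, and multiplying by the $\tilde{O}(cd)$ MIS rounds gives $\tilde{O}(c^3 d^3)$, which is exactly the stated budget. So your total is also within budget, but the $\tilde{O}(dc)$ per-round figure should not be claimed without a concrete pipelining/scheduling argument.
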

\begin{proof}
Our algorithm simply computes an approximately-maximum independent set in the conflict graph which has a node for each cycle. In particular, we construct conflict graph $G' = (\mcC, E')$ as follows. Our vertex set is $\mcC$. We include edge $\{C , C'\}$ in $E'$ if $C \in \mcC$ and $C' \in \mcC$ overlap on an edge; that is, if $E[C] \cap E[C'] \neq \emptyset$.

Observe that since each cycle in $\mcC$ has at most $d$-many edges and since each edge is in at most $c$-many cycles, we have that the maximum degree of $G'$ is $cd$. Next, we let the ``node-weight'' of cycle $C \in \mcC$ be $|C|$. We apply \Cref{thm:maxIS} with these node-weights to compute a $\frac{1}{cd}$-approximate maximum independent set $\mcC'$. We return $\mcC'$ as our solution.

First, observe that since $\mcC'$ is an independent set in $G'$, we have that the cycles of $\mcC'$ are indeed edge-disjoint.

Next, we claim that $|E[\mcC']| \geq \frac{1}{d^2c^2} \cdot |E[\mcC]|$. Since \Cref{thm:maxIS} guarantees that $\mcC'$ is a $\frac{1}{dc}$-approximate solution, to show this, it suffices to argue that $|E[\mcC^*]| \geq \frac{1}{dc} \cdot |E[\mcC]|$ where $\mcC^* \subseteq \mcC$ is the set of edge-disjoint cycles of maximum edge cardinality, i.e.\ the maximum node-weight independent set in $G'$. However, notice that since the total node weight in $G'$ is $\sum_{C \in \mcC} |E[C]|$ and the max degree in $G'$ is at most $cd$, we have that the maximum node-weight independent set in $G'$ must have node-weight at least $\frac{1}{cd} \sum_{C \in \mcC} |E[C]| \geq \frac{1}{cd} |E[\mcC]|$. Thus, we conclude that $|E[\mcC']| \geq \frac{1}{d^2c^2} \cdot |E[\mcC]|$.

Next, we argue that we can implement the above in the stated running times. Computing our $\frac{1}{cd}$-approximate maximum independent set on $G'$ takes deterministic CONGEST time $\tilde{O}(cd)$ on $G'$ by \Cref{thm:maxIS}. Furthermore, we claim that we can simulate a CONGEST algorithm on $G'$ in $G$ with only an overhead of $O(c^2 d^2)$. In particular, since the maximum degree on $G'$ is $cd$, in each CONGEST round on $G'$ each node (i.e.\ cycle in $G$) receives at most $cd$-many messages. Fix a single round of CONGEST on $G'$. We will maintain the invariant that if $v \in V$ is a node in a cycle $C \in \mcC$, then in our simulation $v$ receives all the same messages as $C$ in our CONGEST algorithm on $G'$. We do so by broadcasting all messages that $C$ receives in this one round on $G'$ to all nodes in $C$. As a cycle in $G'$ receives at most $cd$ messages in one round of CONGEST on $G'$ and each edge is in at most $c$-many cycles, it follows that in such a broadcast the number of messages that need to cross any one edge is at most $c^2 d$. Since the diameter of each cycle is at most $d$, we conclude that this entire broadcast can be done deterministically in time $O(c^2 d^2)$, giving us our simulation.

Combining this $O(c^2 d^2)$-overhead simulation with the $\tilde{O}(cd)$ running time of our approximate maximum independent set algorithm on $G'$ gives an overall running time of $O(c^3 d^3)$.
\end{proof}

Recall that the girth of a graph is the minimum length of a cycle in it. The following formalizes the notion of high-girth cycle decompositions that we will need.

\begin{definition}[High-Girth Cycle Decomposition]\label{dfn:HGCC}
Given a graph $G = (V,E)$ and $\eps > 0$ where $E_0$ are all non-bridge edges of $G$, a high-girth cycle decomposition with diameter $d$ and deletion girth $k$ is a collection of edge-disjoint (simple) cycles $\mcC$ such that:
\begin{enumerate}
    \item \textbf{High Deletion Girth:} The graph $(V, E \setminus E[\mcC])$ has girth at least $k$.
    \item \textbf{Low Diameter:} $\max_{C \in \mcC} |C| \leq d $;
\end{enumerate}
\end{definition}
The following theorem gives the construction of high-girth cycle decompositions that we will use.
\begin{theorem}\label{thm:highGirthCCAlg}
There is a deterministic CONGEST algorithm that, given a graph $G = (V,E)$ and desired girth $k \geq 0$, computes a high-girth cycle decomposition with diameter $\tilde{O}(k \cdot \rho_{CC})$ and girth $k$ in time $\tilde{O}(k^5 \cdot (\rho_{CC})^{10})$.
\end{theorem}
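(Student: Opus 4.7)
The plan is to iteratively peel off short cycles by running the cycle-cover primitive inside small-diameter neighborhoods of the graph, until all cycles of length less than $k$ have been destroyed.

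At each iteration $i$, starting from the current graph $G_i$ (with $G_0 = G$), I would first compute a sparse neighborhood cover of $G_i$ with radius $\Theta(k)$ via \Cref{lem:sparseNeigh}. This gives $s = \tilde{O}(1)$ partitions of $V$ into clusters, each with a spanning tree of diameter $\tilde{O}(k)$ in $G_i$, with the key property that every $k$-ball of $G_i$ sits entirely inside some cluster. Since any cycle of length at most $2k+1$ has diameter at most $k$ and thus lies in the $k$-ball of any of its vertices, every short cycle of $G_i$ is contained in a single cluster. Within each cluster I would invoke the cycle-cover algorithm from the theorem preceding this lemma to obtain a collection of cycles of length $d = \tilde{O}(k \rho_{CC})$ with congestion $c = \tilde{O}(\rho_{CC})$ that covers every non-bridge edge of the cluster's induced subgraph. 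The union across clusters and partitions gives a global $(d,c)$-cycle cover $\mcC_i^{\text{cov}}$ that, in particular, covers every edge lying on a cycle of length at most $2k+1$ in $G_i$. Finally, I would run \Cref{lem:cycDecong} on $\mcC_i^{\text{cov}}$, weighting each cycle $C$ by the number of its edges that presently lie on a short cycle of $G_i$, to extract an edge-disjoint subfamily $\mcC'_i$, append it to the output $\mcC$, and form $G_{i+1}$ by removing $E[\mcC'_i]$.

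To bound the iteration count, I would track the set $S_i$ of edges of $G_i$ lying on some cycle of length at most $2k+1$ in $G_i$. Every $e \in S_i$ is a non-bridge inside some cluster, hence covered by at least one cycle of $\mcC_i^{\text{cov}}$, so $\sum_{C \in \mcC_i^{\text{cov}}} |E[C] \cap S_i| \geq |S_i|$. Because the conflict graph of cycles has maximum degree at most $cd$, the same weighted-MIS analysis used in \Cref{lem:cycDecong} (now applied with weights $|E[C] \cap S_i|$) certifies that $|E[\mcC'_i] \cap S_i| \geq |S_i|/\tilde{O}((cd)^2)$. Since any short cycle surviving into $G_{i+1}$ is already a cycle of $G_i$, we have $S_{i+1} \subseteq S_i \setminus E[\mcC'_i]$, and so $|S_{i+1}| \leq |S_i|\bigl(1 - 1/\tilde{O}((cd)^2)\bigr)$. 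Thus $T = \tilde{O}((cd)^2) = \tilde{O}(k^2 \rho_{CC}^4)$ iterations suffice to drive $|S_i|$ below $1$, at which point $G_T$ has girth at least $2k+2 \geq k$. The per-iteration cost is dominated by \Cref{lem:cycDecong}, which runs in $\tilde{O}((cd)^3) = \tilde{O}(k^3 \rho_{CC}^6)$ rounds; multiplying by $T$ yields the claimed $\tilde{O}(k^5 \rho_{CC}^{10})$ total time, and every cycle placed into $\mcC$ has length at most $d = \tilde{O}(k \rho_{CC})$ as required.

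The main obstacle is computing the weights used in the extraction step: every cycle $C \in \mcC_i^{\text{cov}}$ must know, for each of its edges $e = (u,v)$, whether $d_{G_i - e}(u,v) \leq 2k$. Because every such short cycle is fully contained in one cluster of the sparse neighborhood cover, this test is purely local to a cluster of diameter $\tilde{O}(k)$ in $G_i$ and can be resolved by a bounded-depth BFS along the cluster's spanning tree, followed by a broadcast of the resulting indicators back to the cycles of $\mcC_i^{\text{cov}}$. Fitting this into the $\tilde{O}(\poly(k,\rho_{CC}))$-round budget per iteration, while juggling the $\tilde{O}(1)$ overlapping partitions of \Cref{lem:sparseNeigh}, is the main piece of bookkeeping; all other building blocks are black-box applications of results stated earlier.
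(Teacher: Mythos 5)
Your proposal follows the same overall structure as the paper's proof: compute a sparse $k$-neighborhood cover, run the cycle-cover primitive inside each low-diameter cluster, extract an edge-disjoint subfamily via the weighted-MIS machinery of \Cref{lem:cycDecong}, delete it, and repeat. The one genuine difference is the weight function you feed into the decongestion step. The paper weights each cycle $C$ by $|C|$ and argues that since $|E[\mcC_0']| \geq \tilde{\Omega}\bigl(\tfrac{1}{k^2\rho_{CC}^4}\bigr)|E_0|$, the quantity $|E_0|$ (edges on a short cycle) drops by a multiplicative factor each round. You instead weight $C$ by $|E[C]\cap S_i|$, the number of its edges currently on a short cycle, and conclude $|E[\mcC'_i]\cap S_i| \geq |S_i|/\tilde{O}((cd)^2)$ directly. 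Your version is actually the tighter argument: the paper's bound on $|E[\mcC_0']|$ is a lower bound on \emph{total} edges deleted, which need not all lie in $E_0$ (the decongested cycles have diameter $\tilde{O}(k\rho_{CC})$ and can consist mostly of edges not on any short cycle), so the multiplicative decrease of $|E_0|$ does not immediately follow from the paper's weighting; your weighting targets the right potential and closes that step cleanly. The price you pay is having to compute $S_i$, which the paper's simpler weighting avoids entirely. You flag this as extra bookkeeping and sketch a cluster-local BFS to decide $d_{G_i-e}(u,v)\le 2k$ for each $e=(u,v)$; this is plausible within the $\tilde{O}(\poly(k,\rho_{CC}))$-per-iteration budget, but note that running a replacement-distance check for every edge of a cluster simultaneously can be congested, so this step deserves the same care you already apply to simulating the conflict graph in \Cref{lem:cycDecong}. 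Everything else (iteration count $\tilde{O}(k^2\rho_{CC}^4)$, per-iteration cost $\tilde{O}(k^3\rho_{CC}^6)$, resulting cycle diameter $\tilde{O}(k\rho_{CC})$) matches the paper.
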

\begin{proof}
The basic idea is: take a sparse neighborhood cover; compute cycle covers on each part of our neighborhood cover; combine all of these into a single cycle cover; decongest this cycle cover into a collection of edge-disjoint cycles; delete these cycles and; repeat. 

More formally, our algorithm is as follows, We initialize our collection of cycles $\mcC$ to $\emptyset$.

Next, we repeat the following $\tilde{\Theta}\left(k^2 \cdot (\rho_{CC})^4\right)$ times. Apply \Cref{lem:sparseNeigh} to compute an $\tilde{O}(1)$-sparse $k$-neighborhood cover of $G$ with diameter $\tilde{O}(k)$ and overlap $\tilde{O}(1)$. Let $\mcV_1, \mcV_2, \ldots$ be the partitions of this neighborhood cover. By definition of a neighborhood cover, for each $\mcV_i$ and each $V_i^{(j)} \in \mcV_i$, we have that $V_i^{(j)}$ comes with a tree $T_i^{(j)}$ where each node in the tree is in $\tilde{O}(1)$ other $V_i^{(j)}$. We let $H_i^{(j)} := G[V_i^{(j)}] \cup T_i^{(j)}$ be the union of this tree and the graph induced on $V_i^{(j)}$. By the guarantees of our neighborhood cover we have that the diameter of $H_i^{(j)}$ is at most $\tilde{O}(k)$. We then compute a cycle cover $\mcC_i^{(j)}$ of each $H_i^{(j)}$ with diameter $\tilde{O}(k \cdot \rho_{CC})$ and congestion $\rho_{CC}$ (we may do so by definition of $\rho_{CC}$). We let $\mcC_0 = \bigcup_{i, j} \mcC_i^{(j)}$ be the union of all of these cycle covers. Next, we apply \Cref{lem:cycDecong} to compute a large edge-disjoint subset $\mcC_0' \subseteq \mcC_0$ of $\mcC_0$. We add $\mcC_0'$ to $\mcC$ and delete from $G$ any edge that occurs in a cycle in $\mcC_0'$.

We first argue that the solution we return is indeed a high-girth cycle decomposition. Our solution consists of edge-disjoint cycles by construction. Next, consider one iteration of our algorithm. Observe that since each $\mcC_i^{(j)}$ has congestion at most $\rho_{CC}$, it follows by the $\tilde{O}(1)$ overlap and $\tilde{O}(1)$ sparsity of our neighborhood cover that $\mcC_0$ has congestion $\tilde{O}(\rho_{CC})$. Likewise, since each $H_i^{(j)}$ has diameter $\tilde{O}(k)$, it follows that each $\mcC_i^{(j)}$ has diameter at most $\tilde{O}(k \cdot \rho_{CC})$ and so $\mcC_0$ has diameter at most $\tilde{O}(k \cdot \rho_{CC})$. Thus, $\mcC_0$ has congestion at most $\tilde{O}(\rho_{CC})$ and diameter at most $\tilde{O}(k \cdot \rho_{CC})$. Since $\mcC_0' \subseteq \mcC_0$, it immediately follows that the solution we return has diameter at most $\tilde{O}(k \cdot \rho_{CC})$. 

It remains to show that the deletion of our solution induces a graph with high girth. Towards this, observe that applying the congestion and diameter of $\mcC_0$ and the guarantees of \Cref{lem:cycDecong}, it follows that
\begin{align}
    |E[\mcC_0']| \geq \tilde{\Omega}\left(\frac{1}{k^2 (\rho_{CC})^4}\right) \cdot |E[\mcC_0]|.\label{eq:x}
\end{align}
On the other hand, let $E_0$ be all edges in cycle of diameter at most $k$ at the beginning of this iteration. Consider an $e \in E_0$. Since $\mcV_1, \mcV_2, \ldots$ is a $k$-neighborhood cover we know that there is some $\mcC_i^{(j)}$ which contains a cycle which contains $e$. Thus, we have
\begin{align}
    |E[\mcC_0]| \geq |E_0|.\label{eq:y}
\end{align}
Combining \Cref{eq:x} and \Cref{eq:y}, we conclude that 
\begin{align*}
    |E[\mcC_0']| \geq \tilde{\Omega}\left(\frac{1}{k^2 (\rho_{CC})^4}\right) \cdot|E_0|.
\end{align*}
However, since in this iteration we delete every edge in $E[\mcC_0']$, it follows that we reduce the number of edges that are in a cycle of diameter at most $k$ by at least a $1 - \tilde{\Omega}\left(\frac{1}{k^2 (\rho_{CC})^4}\right)$ multiplicative factor. Since initially the number of such edges is at most $|E|$, it follows that after $\tilde{O}(k^2 \cdot (\rho_{CC})^4)$-many iterations we have reduced the number of edges in a cycle of diameter at most $k$ to $0$; in other words, our graph has girth at most $k$. This shows the high girth of our solution, namely that $(V, E \setminus E[\mcC])$ has girth at least $k$ after the last iteration of our algorithm.

Next, we argue that we achieve the stated running times. Fix an iteration.
\begin{itemize}
    \item By the guarantees of \Cref{lem:sparseNeigh}, the sparse neighborhood cover that we compute takes time $\tilde{O}(k)$. 
    \item We claim that by definition of $\rho_{CC}$, the $\tilde{O}(k)$ diameter of each part in our sparse neighborhood cover and the $\tilde{O}(1)$ overlap of our sparse neighborhood cover, we can compute every $\mcC_i^{(j)}$ in time $\tilde{O}(k \cdot \rho_{CC})$. Specifically, for a fixed $i$ we run the cycle cover algorithm simultaneously in meta-rounds, each consisting of $\tilde{\Theta}(1)$ rounds. In each meta-round a node can send the messages that it must send for the cycle cover algorithm of each of the $H_i^{(j)}$ to which it is incident by our overlap guarantees. Since the total number of $i$ is $\tilde{O}(1)$ by our sparsity guarantee, we conclude that we can compute all $\mcC_i^{(j)}$ in a single iteration in at most $\tilde{O}(k \cdot \rho_{CC})$ time.
    \item Lastly, by the guarantees of \Cref{lem:cycDecong} and the fact that $\mcC_0$ has congestion at most $\tilde{O}(\rho_{CC})$ and diameter at most $\tilde{O}(k \cdot \rho_{CC})$, we can compute $\mcC_0'$ in time $\tilde{O}(k^3 \cdot (\rho_{CC})^6)$.
\end{itemize}
Combining the above running times with the fact that we have $\tilde{\Theta}\left(k^2 \cdot (\rho_{CC})^4\right)$-many iterations gives us a running time of $\tilde{O}(k^5 \cdot (\rho_{CC})^{10})$.
\end{proof}

\subsection{Efficient Algorithms for Computing Near Eulerian Partitions}
We conclude by proving the main section of this theorem, namely the following which shows how to efficiently compute near Eulerian partitions in deterministic CONGEST by making use of our high-girth cycle decomposition construction and DLDDs.
\nearEuler*
\begin{proof}
The parallel result is well-known since a $1$-near Eulerian partition is just a so-called Eulerian partition; see e.g.\ \citet{karp1989survey}. 

The rough idea of our CONGEST algorithm is as follows. First we compute a high-girth cycle decomposition (\Cref{dfn:HGCC}), orient these cycles and remove all edges covered by this decomposition. The remaining graph has high girth by assumption. Next we compute a DLDD (\Cref{dfn:DLDD}) on the remaining graph; by the high girth of our graph each part of our DLDD is a low diameter tree. Lastly, we decompose each such tree into a collection of paths.


More formally, our CONGEST algorithm to return cycles $\mcC$ and paths $\mcP$ is as follows. Apply \Cref{thm:highGirthCCAlg} to compute a high-girth cycle decomposition $\mcC$ with deletion girth $\tilde{\Theta}(\frac{1}{\eps})$ and diameter $\tilde{O}(\frac{1}{\eps} \cdot \rho_{CC})$. Orient each cycle in $\mcC$ and delete from $G$ any edge in a cycle in $\mcC$. Next, apply \Cref{thm:LDD} to compute a DLDD with diameter $\tilde{\Theta}(\frac{1}{\eps})$ and cut fraction $\eps$. Delete all edges cut by this DLDD. Since $\mcC$ has deletion girth $\tilde{\Theta}(\frac{1}{\eps})$, by appropriately setting our hidden constant and poly-logs, it follows that no connected component in the remaining graph contains a cycle; in other words, each connected component is a tree with diameter $\tilde{\Theta}(\frac{1}{\eps})$.

We decompose each tree $T$ in the remaining forest as follows. Fix an arbitrary root $r$ of $T$. We imagine that each vertex of odd degree in $T$ starts with a ball. Each vertex waits until it has received a ball from each of its children. Once a vertex has received all such balls, it pairs off the balls of its children arbitrarily, deletes these balls and adds to $\mcP$ the concatenation of the two paths traced by these balls in the tree. It then passes its up to one remaining ball to its parent. Lastly, we orient each path in $\mcP$ arbitrarily.

We begin by arguing that the above results in a $(1-\eps)$-near Eulerian partition. Our paths and cycles are edge-disjoint by construction. The only edges that are not included in some element of $\mcC \sqcup \mcP$ are those that are cut by our DLDD; by our choice of parameters this is at most an $\eps$ fraction of all edges in $E$. To see the Eulerian partition property, observe that every vertex of odd degree in $G[\mcC \sqcup \mcP]$ is an endpoint of exactly one path in $\mcP$ since each odd degree vertex starts with exactly one ball. Likewise, a vertex of even degree will never be the endpoint of a path since no such vertex starts with a ball.

It remains to argue that the above algorithm achieves the stated CONGEST running time.
\begin{itemize}
    \item Computing $\mcC$ takes time at most $\tilde{O}(\frac{1}{\eps^5} \cdot (\rho_{CC})^{10})$ by \Cref{thm:highGirthCCAlg}. Furthermore, by \Cref{thm:highGirthCCAlg}, each cycle in $\mcC$ has diameter $\tilde{O}(\frac{1}{\eps} \cdot \rho_{CC})$ and so can be oriented in time $\tilde{O}(\frac{1}{\eps} \cdot \rho_{CC})$.
    \item Computing our DLDD takes time $\tilde{O}(\frac{1}{\eps})$ by \Cref{thm:LDD}.
    \item Since our DLDD has diameter $\tilde{O}(\frac{1}{\eps})$, we have that the above ball-passing to comptue $\mcP$ can be implemented in time at most $\tilde{O}(\frac{1}{\eps})$.
\end{itemize}
Thus, overall our CONGEST algorithm takes time $\tilde{O}(\frac{1}{\eps^5} \cdot (\rho_{CC})^{10})$.
\end{proof}

\section{Deterministic Blocking Integral Flows in $h$-Layer DAGs}\label{sec:detMaxPaths}

In \Cref{sec:randMaxPaths} we showed how to efficiently compute blocking integral flows in $h$-layer DAGs with high probability. In this section, we show how to do so deterministically by making use of the near Eulerian partitions of \Cref{sec:nearEul}. Specifically, we show the following.

\begin{restatable}{lemma}{detMax}\label{lem:detMax}
There is a deterministic algorithm which, given a capacitated $h$-layer $S$-$T$ DAG $D$, computes an integral $S$-$T$ flow that is blocking in:
\begin{enumerate}
    \item Deterministic parallel time $\tilde{O}(h^3)$ with $m$ processors;
    \item Deterministic CONGEST time $\tilde{O}(h^6 \cdot (\rho_{CC})^{10})$.
\end{enumerate}
\end{restatable}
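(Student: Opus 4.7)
The strategy is to derandomize the single-iteration sampling of Lemma~\ref{lem:randCovPaths} by replacing it with its deterministic expectation, accumulating a fractional blocking flow (the ``iterated path count flow'' from Section~\ref{sec:overviewSPB}), and then rounding this fractional flow to an integral blocking flow via near-Eulerian partitions. Concretely, using the path counts of Lemma~\ref{lem:computePathCounts}, I would set the fractional flow on each arc $a$ equal to the probability that Lotker-style sampling selects a path through $a$---a quantity of the form $n_a/n^+_S$ up to capacity normalisation---then fractionally subtract it from the current capacity, and iterate. The multiplicative-potential argument that underlies Lemma~\ref{lem:randMax}, applied deterministically to the per-iteration decrement of $\prod_{a\in P}(\U_a - f_a)$ on every near-maximum-``degree'' path $P$, should show that after $\tilde{O}(h^2)$ iterations every $S$-$T$ path contains an arc where the accumulated fractional flow $f_{\mathrm{frac}}$ equals $\U_a$ exactly; i.e.\ $f_{\mathrm{frac}}$ is a fractional blocking $S$-$T$ flow.

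To round $f_{\mathrm{frac}}$ into an integral flow I would peel off its bitwise decomposition $f_{\mathrm{frac}} = \sum_i f^{(i)}$ from the least significant bit upward, applying a Cohen-style round at each step but with the underlying cycle/path decomposition supplied by Lemma~\ref{lem:nearEuler} instead of an unbounded-diameter traversal. For each bit-flow $f^{(i)}$, the arcs where conservation of $f^{(i)}$ fails pair up at each vertex in a way that is captured by a near-Eulerian subgraph; applying Lemma~\ref{lem:nearEuler} to this subgraph with $\eps$ set to a sufficiently small inverse polynomial of $h$ produces edge-disjoint oriented cycles and paths covering all but an $\eps$-fraction of those arcs. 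Along each oriented cycle, alternately adding and subtracting $2^{\log \U_{\max} - i}$ in the orientation direction preserves conservation exactly; along each oriented path, the same alternation pushes imbalance only to the two endpoints, where, as in Cohen, one of the two parity choices leaves the total $S$-$T$ value unchanged. The $\eps$-fraction of discarded arcs plus path endpoints outside $S\cup T$ introduce a small deficit that is cleaned up by a path-decomposition-plus-cancellation pass and absorbed into the slack of $f_{\mathrm{frac}}$, so that arcs integer-saturated in $f_{\mathrm{frac}}$ remain saturated and blocking is preserved.

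The main obstacle I anticipate is exactly preservation of the blocking property under rounding, since rounding can in principle pull $f_a$ below $\U_a$ at a saturated arc and thereby unblock some $S$-$T$ path. I would handle this by noting that fully-saturated integer arcs contribute nothing to the odd-bit subgraph of any $f^{(i)}$ and so are never touched by near-Eulerian rounding, and by exploiting the freedom Lemma~\ref{lem:nearEuler} leaves in orienting cycles and paths to route rounding loss away from arcs that are the unique bottleneck of some path. For the runtimes: the $\tilde{O}(h^2)$ iterations of the fractional stage each cost $O(h)$ parallel work via Lemma~\ref{lem:computePathCounts}, totalling $\tilde{O}(h^3)$ parallel time, and the rounding adds only $\tilde{O}(1)$ overhead by the $\eps=0$ parallel guarantee of Lemma~\ref{lem:nearEuler}. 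In CONGEST, each iteration of the fractional stage costs $\tilde{O}(h^2)$ and each invocation of near-Eulerian rounding costs $\tilde{O}((\rho_{CC})^{10})$; after accounting for the fact that the blocking property is re-established after each batch of rounding by re-invoking path-counting, the product yields the claimed $\tilde{O}(h^6 \cdot (\rho_{CC})^{10})$ bound.
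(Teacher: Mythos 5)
Your proposal diverges from the paper at exactly the step you flagged as the main obstacle, and the patch you sketch for it does not hold up. You compute a single fractional blocking flow (the iterated path count flow) and then try to round it once while \emph{preserving} the blocking property. The paper never attempts this, precisely because rounding cannot be made to preserve blocking. Instead it repeats: compute a $\tilde{\Omega}(1/h)$-approximate fractional flow (\Cref{lem:iterPathCount}), round it with \Cref{lem:detRounding} losing only a constant factor of value, \emph{add it to the output and subtract it from the capacities}, and iterate. Blocking is then a byproduct of a value argument: since each pass recovers a $\tilde{\Omega}(1/h)$ fraction of the residual optimum, after $\tilde{O}(h)$ passes the residual optimum must be $0$, and a residual optimum of $0$ is exactly the statement that $f$ is blocking. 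At no point does the paper need rounding to respect saturation.

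Your attempted fix for blocking preservation fails on two fronts. First, the observation that integer-saturated arcs ($f_a = \U_a$) have all fractional bits zero and so are untouched by the near-Eulerian turning steps is correct, but it does not cover the final cleanup. \Cref{lem:maxSubflow} removes flow by an arbitrary greedy sweep to kill the accumulated deficit, and nothing prevents it from pulling flow off a saturated arc that is the unique blocker of some $S$-$T$ path; indeed, if a discarded arc or a path endpoint is immediately upstream of such an arc, the cleanup \emph{must} reduce flow somewhere on that path, and it has no information about which arcs are bottlenecks. Second, the claim that \Cref{lem:nearEuler} leaves enough ``freedom'' in orienting cycles and paths to route rounding loss away from bottleneck arcs has no basis: the orientation in \Cref{def:Euler part} and \Cref{lem:flowRoundLem} is chosen solely to control deficit and is entirely oblivious to which arcs are saturated; for a path component, the two orientation choices yield two different deficits, and the one that is better for deficit may be exactly the one that deflates a bottleneck. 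There is also a minor accounting slip: a single iterated path count flow uses $\tilde{O}(h)$ path-count iterations, not $\tilde{O}(h^2)$; the extra factor of $h$ in the $\tilde{O}(h^3)$ parallel bound comes from the paper's outer halving loop, which your proposal omits. The fix is to adopt the paper's iteration structure and drop the attempt to preserve blocking under a single rounding.
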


The above parallel algorithm is more or less implied by the work of \citet{cohen1995approximate}. However, the key technical challenge we solve in this section is a distributed implementation of the above. Nonetheless, for the sake of completeness we will include the parallel result as well alongside our distributed implementation.

Our strategy for showing the above lemma has two key ingredients.
    \paragraph{Iterated Path Count Flow.} First, we construct the iterated path count flow. This corresponds to repeatedly taking the expected flow induced by the sampling of our randomized algorithm (as given by \Cref{lem:randCovPaths}). As the flow we compute is the expected flow of the aforementioned sampling, this process is deterministic. The result of this is a $\tilde{\Omega}(\frac{1}{h})$-blocking but not necessarily integral flow. We argue that any such flow is also $\tilde{\Omega}(\frac{1}{h^2})$-approximate and so the iterated path count flow is nearly optimal but fractional.
    \paragraph{Flow Rounding.} Next, we provide a generic way of rounding a fractional flow to be in integral in an $h$-layer DAG while approximately preserving its value. Here, the main challenge is implementing such a rounding in CONGEST; the key idea we use is that of a $(1-\eps)$-near Eulerian partition from \Cref{sec:nearEul} which discards a small number of edges and then partitions the remaining graph into cycles and paths.
    
    These partitions enables us to implement a rounding in the style of \citet{cohen1995approximate}. In particular, we start with the least significant bit of our flow, compute a $(1-\eps)$-near Eulerian partition of the graph induced by all arcs which set this bit to $1$ and then use this partition to round all these bits to $0$. Working our way from least to most significant bit results in an integral flow.  The last major hurdle to this strategy is showing that discarding a small number of edges does not damage our resulting integral flow too much; in particular discarding edges in the above way can increase the $\deficit$ of our flow. However, by always discarding an appropriately small number of edges we show that this $\deficit$ is small and so after deleting all flow that originates or ends at vertices not in $S$ or $T$, we are left with a flow of essentially the same value of the input fraction flow. The end result of this is a rounding procedure which rounds the input fractional flow to an integral flow while preserving the value of the flow up to a constant.\\

Our algorithm to compute blocking integral flows in $h$-layer DAGs deterministically combines the above two tools. Specifically, we repeatedly compute the iterated path count flow, round it to be integral and add the resulting flow to our output. As the iterated path count flow is $\tilde{\Omega}(\frac{1}{h^2})$-approximate, we can only repeat this about $h^2$ time (otherwise we would end up with a flow of value greater than that of the optimal flow).


\subsection{Iterated Path Count Flows}
In this section we define our iterated path count flows and prove that they are $\tilde{\Omega}(\frac{1}{h})$-approximate.

Specifically, the path counts of \Cref{sec:pathCounts} naturally induce a flow. In particular, they induce what we will call the path count flow where the flow on arc $(u,v)$ is defined as:
\begin{align*}
    f_{a} =  \U_a \cdot \frac{n_a}{\max_{a \in A} n_a}.
\end{align*}
It is easy to see these path counts induce an $S$-$T$ flow.
\begin{lemma}\label{lem:pathCountFlowFeasible}
For a given capacitated $S$-$T$ DAG the path count flow is an $S$-$T$ flow.
\end{lemma}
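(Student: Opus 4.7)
The plan is to verify the two defining properties of an $S$-$T$ flow: capacity feasibility ($f_a \le \U_a$ for every arc $a$) and flow conservation ($\mathrm{deficit}(f,v) = 0$ for every $v \notin S \cup T$).

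Capacity feasibility is essentially immediate from the normalization. Writing $M := \max_{a' \in A} n_{a'}$, the ratio $n_a / M$ never exceeds $1$, which together with the leading $\U_a$ factor in the definition of $f_a$ yields $f_a \le \U_a$ for every $a$.

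For conservation, I plan to reduce the problem to the single combinatorial identity
\[
  \sum_{a \in \delta^+(v)} n_a \;=\; n_v^- \cdot n_v^+ \;=\; \sum_{a \in \delta^-(v)} n_a
\]
for each $v \notin S \cup T$; dividing through by $M$ and tracking the per-arc $\U_a$ factors will then give equality of inflow and outflow at $v$, hence $\mathrm{deficit}(f,v)=0$.

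To prove this identity I would first record the two path-count recurrences, $n_v^- = \sum_{(u,v)\in\delta^-(v)} \U_{(u,v)} \cdot n_u^-$ and $n_v^+ = \sum_{(v,w)\in\delta^+(v)} \U_{(v,w)} \cdot n_w^+$, each obtained by partitioning the paths in $\mcP(S,v)$ (respectively $\mcP(v,T)$) according to their last (respectively first) arc; this partitioning is legitimate precisely because $v \notin S \cup T$, so every such path has a last incoming and a first outgoing arc at $v$. Then expanding $n_{(v,w)} = n_v^- \cdot \U_{(v,w)} \cdot n_w^+$ and factoring the common $n_v^-$ out of $\sum_{(v,w)\in\delta^+(v)} n_{(v,w)}$ collapses it to $n_v^- \cdot n_v^+$ by the outgoing recurrence, and the symmetric computation handles $\sum_{(u,v)\in\delta^-(v)} n_{(u,v)}$ via the incoming recurrence. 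Intuitively, both sides count the capacity-weighted $S$-to-$T$ paths through $v$ as the product of $S$-to-$v$ paths with $v$-to-$T$ paths; in the multigraph obtained by replacing each arc $a$ with $\U_a$ parallel unit-capacity copies this is just standard flow conservation. I do not expect any real obstacle: the lemma is essentially a one-line consequence of the recursive structure of the path counts, so the ``work'' is just setting up the recurrences cleanly and performing the algebraic collapse.
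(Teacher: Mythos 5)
Your core identity $\sum_{a \in \delta^+(v)} n_a = n_v^- \cdot n_v^+ = \sum_{a \in \delta^-(v)} n_a$ is correct, and deriving it from the two path-count recurrences is a clean route; the paper obtains the same equality from the representation $n_a = \sum_{P \ni a} \U(P)$ together with the observation that each $S$-$T$ path through $v$ uses exactly one arc of $\delta^-(v)$ and one of $\delta^+(v)$. So far the two arguments are the same in substance.

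The step you wave at --- ``dividing through by $M$ and tracking the per-arc $\U_a$ factors'' --- is exactly where the argument fails to close for the definition as written, and you should not expect it to be routine. The path count flow is $f_a = \U_a \cdot n_a / M$ with $M := \max_{a'} n_{a'}$, so the incoming flow at $v$ is $\frac{1}{M}\sum_{a \in \delta^-(v)} \U_a \cdot n_a$; the extra $\U_a$ sits \emph{inside} the sum and depends on $a$, so your identity on the plain $n_a$'s does not imply equality of the $\U_a n_a$ sums. A minimal counterexample: $S=\{s\}$, $T=\{t\}$, a single path $s \to v \to t$ with $\U_{(s,v)}=2$ and $\U_{(v,t)}=3$. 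Then $n_{(s,v)} = n_{(v,t)} = 6 = M$ and your identity reads $6 = n_v^- n_v^+ = 2\cdot 3 = 6$, which is fine --- but $f_{(s,v)} = 2\cdot 6/6 = 2$ while $f_{(v,t)} = 3\cdot 6/6 = 3$, so conservation at $v$ fails. (The paper's own proof elides the same point by treating $\frac{\U_a}{M}$ as a constant prefactor pulled outside a sum over $a$, which is not legitimate when capacities vary.) The repair is to change the normalization rather than try to make the $\U_a$'s cancel: in the multigraph where each arc $a$ is replaced by $\U_a$ unit-capacity copies, the path count flow on one copy is $n_u^- n_v^+ / M'$ with $M' := \max_{a} (n_a/\U_a)$, and summing over the $\U_a$ copies gives $f_a = n_a / M'$. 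With that normalization your identity gives conservation immediately, and capacity follows because $M' \ge n_a/\U_a$ implies $f_a \le \U_a$. So either adopt this normalization explicitly, or restrict your claim to unit capacities, where the two normalizations coincide and your plan goes through as stated.
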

\begin{proof}
The above flow does not violate capacities by construction. Moreover, it obeys flow conservation for all vertices other than those in $S$ and $T$ since it is a convex combination of paths between $S$ and $T$. More formally, for any vertex $v \not \in S \cup T$ we have flow conservation by the calculation:
\begin{align*}
    \sum_{a = (u,v) \in \delta^-(v)} f_a &= 
    \frac{\U_a}{{\max_{a \in A} n_a}}\sum_{a = (u,v) \in \delta^-(v)}  \sum_{P \in \mcP(S,T) : a \in P} \U(P)\\
    &= \frac{\U_a}{\max_{a \in A} n_a}\sum_{a = (u,v) \in \delta^+(v)}  \sum_{P \in \mcP(S,T) : a \in P} \U(P)\\
    &= \sum_{a = (u,v) \in \delta^+(v)} f_a
\end{align*}
where the second line follows from the fact that every path from $S$ to $T$ which enters $v$ must also exit $v$.
\end{proof}

Path count flows were first introduced by \citet{cohen1995approximate}. Our notion of an iterated path count flow is closely related to \citet{cohen1995approximate}'s algorithm for computing blocking flows in parallel. In particular, in order to compute an integral blocking flow, \citet{cohen1995approximate} iteratively computes a path count flow, rounds it, decrements capacities and then iterates. For us it will be more convenient to do something slightly different; namely, we will compute a path count flow, decrement capacities and iterate; once we have \emph{a single} blocking fractional flow we will apply our rounding \emph{once}. Nonetheless, we note that many of the ideas of this section appear implicitly in \citet{cohen1995approximate}.

We proceed to define the iterated path count flow which is always guaranteed to be near-optimal. The iterated path count flow will be a sum of several path count flows. More formally, suppose we are given an $h$-layer capacitated $S$-$T$ DAG $D = (V, A)$ with capacities $\U$. In such a DAG we have $n_a \leq (n \cdot \U_{\max})^h$. We initialize $f_0$ to be the flow that assigns $0$ to every arc and $\U_0 = \U$. We then let $D_i = (V, A)$ with capacities $\U_i$ where $\U_i = \U_{i-1} - f_{i-1}$ and $f_{i-1}$ is the path count flow of $D_{i-1}$. Lastly, we define the iterated path count flow as a convex combination of these path count flows iterated $k = \Theta(h \cdot (\log n )\cdot \log(n \cdot \U_{\max} ))$ times. That is, the iterated path count flow is
\begin{align*}
    \tilde{f} := \sum_{i = 0}^k f_i.
\end{align*}

We begin by observing that the iterated path count flow is reasonably blocking.
\begin{lemma}\label{lem:pathCountFlowBlocking}
The iterated path count flow $\tilde{f}$ is a (not necessarily integral) blocking $S$-$T$ flow.
\end{lemma}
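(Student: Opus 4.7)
The plan is to verify feasibility by a telescoping argument and then establish the blocking property via a potential argument tracking the capacity product along each $S$-$T$ path.

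\noindent\textbf{Feasibility.} By Lemma~\ref{lem:pathCountFlowFeasible}, each $f_i$ is an $S$-$T$ flow in $D_i$, so flow conservation at every $v\notin S\cup T$ is preserved under summation by linearity. For capacities, the update rule $\U_{i+1}(a)=\U_i(a)-(f_i)_a$ telescopes to $\tilde f_a=\U_0(a)-\U_{k+1}(a)\le\U(a)$, using that $\U_i(a)\ge 0$ is maintained inductively because $(f_i)_a\le\U_i(a)$ by the construction of the path count flow.

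\noindent\textbf{Blocking.} I would fix an arbitrary $S$-$T$ path $P^\ast$ in $D$ and argue that some arc of $P^\ast$ must be saturated by iteration $k$. The key observation is that for any arc $a=(u,v)\in P^\ast$, the prefix of $P^\ast$ up to $u$ and the suffix from $v$ contribute to $n_u^{-,(i)}$ and $n_v^{+,(i)}$ respectively, so
\[
n_a^{(i)} \;=\; n_u^{-,(i)}\cdot \U_i(a) \cdot n_v^{+,(i)} \;\ge\; \U_i(P^\ast).
\]
Writing $\sigma_a^{(i)}:=n_a^{(i)}/M^{(i)}$ with $M^{(i)}:=\max_a n_a^{(i)}$, the update $\U_{i+1}(a)=\U_i(a)(1-\sigma_a^{(i)})$ yields $\U_{i+1}(P^\ast)\le \U_i(P^\ast)\bigl(1-\U_i(P^\ast)/M^{(i)}\bigr)^{|P^\ast|}$. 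In each iteration the max arc $a_i^\ast$ is fully saturated (its $\sigma$ equals $1$), so the global potential $M^{(i)}$ is monotone non-increasing. A careful amortized argument --- paralleling the derandomization of Lemma~\ref{lem:randCovPaths} --- shows that $M^{(i)}$ halves within every $O(\log n)$ iterations, and iterating this halving $O(h\log(n\U_{\max}))$ times (starting from $M^{(0)}\le (n\U_{\max})^h$) drives $M^{(k+1)}=0$. Since $M^{(k+1)}=0$ forces every $S$-$T$ path to have some arc with zero residual capacity, this is exactly the blocking property of $\tilde f$.

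\noindent\textbf{Main obstacle.} The delicate point is the amortized accounting in the halving-of-$M^{(i)}$ step. A single iteration saturates the max arc, but a different arc may inherit near-maximum count, so $M^{(i)}$ need not drop by any fixed factor per step. One must amortize these replacements --- essentially in the spirit of Lemma~\ref{lem:randMax}, whose constant-probability success of reducing near-maximum-degree path capacities by a factor of $2047/2048$ boosted over $O(\log n)$ repetitions provides the template --- to show the potential halves within every $O(\log n)$ iterations, giving the overall $k=\Theta(h\log n\log(n\U_{\max}))$ bound.
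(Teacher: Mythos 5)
Your feasibility argument is correct and is essentially the paper's. The blocking step, however, has a genuine gap: you flag the halving of $M^{(i)}:=\max_a n_a^{(i)}$ as the ``main obstacle'' and defer it to ``a careful amortized argument'' patterned on Lemma~\ref{lem:randMax}, but that halving \emph{is} the quantitative content of the lemma, and leaving it as an appeal to a template from the randomized section means the blocking property is not actually proved. Worse, the obstacle you describe---that a different arc may inherit near-maximum count, so that $M^{(i)}$ need not drop by a fixed factor per step---is illusory. Since capacities only decrease, the quantities $n_u^{-,(i)}$, $n_v^{+,(i)}$, $n_a^{(i)}$, and $M^{(i)}$ are all monotone non-increasing in $i$ (they are polynomials with nonnegative coefficients in the capacities). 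You assert the monotonicity of $M^{(i)}$ but attribute it to the max arc being saturated each round, which is a non-sequitur; the coordinatewise monotonicity above is the right reason. Given it, the halving is immediate with no amortization: if $n_a^{(i+1)}>M^{(i)}/2$ then by monotonicity $n_a^{(i)}\ge n_a^{(i+1)}>M^{(i)}/2$, so $\sigma_a^{(i)}>1/2$ and $\U_{i+1}(a)<\U_i(a)/2$, whence
\[
n_a^{(i+1)}=n_u^{-,(i+1)}\,\U_{i+1}(a)\,n_v^{+,(i+1)}\le n_u^{-,(i)}\cdot\tfrac12\U_i(a)\cdot n_v^{+,(i)}=\tfrac12\,n_a^{(i)}\le\tfrac12 M^{(i)},
\]
a contradiction. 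So $M$ in fact halves every iteration; the paper is even more conservative, arguing only a halving every $\Theta(\log n)$ rounds via the observation that every arc in $A_{\approx\max}:=\{a:n_a\ge M/2\}$ has its capacity halved each round.

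Once that is supplied, the detour through $\U_i(P^\ast)$ and the exponent $|P^\ast|$ is not needed for this lemma: $M^{(k+1)}=0$ alone already gives $\U(P^\ast)=0$ for \emph{every} $S$-$T$ path $P^\ast$, which is exactly the blocking property, and iterating the halving from $M^{(0)}\le(n\U_{\max})^h$ shows $M^{(k+1)}=0$ well within $k=\Theta(h\log n\log(n\U_{\max}))$ iterations.
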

\begin{proof}
Since each path count flow is an $S$-$T$ flow by \Cref{lem:pathCountFlowFeasible}, by how we reduce capacities it immediately follows that $\tilde{f}$ is an $S$-$T$ flow. 

Thus, it remains to argue that $\tilde{f}$ is blocking. Towards this, consider computing the $i$th path count flow when the current path counts are $\{n_a\}_a$ and the flow over arc $a$ is $(f_i)_a = (\U_i)_a \cdot \frac{n_a}{\max_a n_a} $. Letting $A_{\approx \max}$ be all arcs for which $n_a \geq \frac{1}{2}\max_{a}n_a$, we get that $(\U_{i+1})_a \leq \frac{1}{2} \cdot (\U_{i})_a$ for all $a \in A_{\approx \max}$. It follows that after $\Theta(\log n)$ iterations we will reduce $\max_a n_a$ by at least a multiplicative factor of $2$. Since initially $n_a \leq (n \cdot \U_{\max})^h$, it follows that after $k = \Theta(h \cdot (\log n )\cdot \log(n \cdot \U_{\max} ))$ iterations we have reduced $n_a$ to $0$ for every arc which is to say that for any path $P$ between $S$ and $T$ we have that there is some arc $a \in P$ it holds that $\sum_{i} (f_i)_a = \U_a$. Since $\tilde{f}_a = \sum_{i} (f_i)_a$, we conclude that $\tilde{f}$ is blocking.
\end{proof}

Next, we observe that any blocking flow is near-optimal.
\begin{lemma}\label{lem:blockingGivesApx}
Any $\alpha$-blocking $S$-$T$ flow in an $h$-layer $S$-$T$ DAG is $\left(\frac{\alpha}{h}\right)$-approximate.
\end{lemma}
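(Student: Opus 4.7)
Let $f$ be an $\alpha$-blocking $S$-$T$ flow in the $h$-layer $S$-$T$ DAG $D$ with capacities $\U$, and let $f^*$ denote a maximum $S$-$T$ flow in $D$. Let $A_{\mathrm{sat}} := \{a \in A : f_a \geq \alpha \cdot \U_a\}$ be the set of ``saturated'' arcs. The plan is to use $A_{\mathrm{sat}}$ simultaneously as an upper bound on $\st(f^*)$ (via cut duality) and as a lower bound on $\st(f)$ (via the layered structure), and to combine the two.

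First I would observe that $A_{\mathrm{sat}}$ contains an $S$-$T$ cut: by the $\alpha$-blocking property, every $S$-$T$ path in $D$ contains at least one arc of $A_{\mathrm{sat}}$, so deleting $A_{\mathrm{sat}}$ separates $S$ from $T$. By (classical, non-length-constrained) max-flow min-cut applied to $D$ with capacities $\U$, this gives
\[
\st(f^*) \ \leq \ \sum_{a \in A_{\mathrm{sat}}} \U_a.
\]

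Next I would exploit the $h$-layer structure to lower-bound $\st(f)$ in terms of the total arc-flow $\sum_{a \in A} f_a$. Since $D$ is an $S$-$T$ DAG, the flow $f$ admits a path decomposition $f = \sum_{P \in \mcP(S,T)} f_P$ with $f_a = \sum_{P \ni a} f_P$. Because arcs of $D$ go strictly from an earlier layer to a later one, every $P \in \mcP(S,T)$ uses at most $h$ arcs, so
\[
\sum_{a \in A} f_a \ = \ \sum_{P} |P|\cdot f_P \ \leq \ h \sum_{P} f_P \ = \ h \cdot \st(f).
\]
On the other hand, restricting the sum to saturated arcs and using the definition of $A_{\mathrm{sat}}$,
\[
\sum_{a \in A} f_a \ \geq \ \sum_{a \in A_{\mathrm{sat}}} f_a \ \geq \ \alpha \sum_{a \in A_{\mathrm{sat}}} \U_a \ \geq \ \alpha \cdot \st(f^*).
\]

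Combining the two bounds gives $h \cdot \st(f) \geq \alpha \cdot \st(f^*)$, i.e.\ $\st(f) \geq (\alpha/h) \cdot \st(f^*)$, which is exactly what $(\alpha/h)$-approximation requires. There is no real obstacle here; the only subtlety worth stating explicitly is that the ``path length at most $h$'' bound uses the layered structure (arcs strictly increase the layer index) rather than the general $S$-$T$ DAG property, and that $A_{\mathrm{sat}}$ need not itself be a minimal cut -- it just needs to \emph{contain} one, which is enough to apply cut duality since capacities are nonnegative.
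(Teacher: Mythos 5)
Your proof is correct. It reaches the same bound via a route that is organizationally a bit different from the paper's: you define the set $A_{\mathrm{sat}}$ of \emph{all} saturated arcs, observe it meets every $S$-$T$ path and hence contains a cut, and invoke max-flow/min-cut to get $\st(f^*) \leq \sum_{a\in A_{\mathrm{sat}}}\U_a$; then you bound $\sum_a f_a \leq h\cdot\st(f)$ via the path decomposition of the blocking flow $f$ itself. The paper instead takes the path decomposition of the \emph{optimal} flow $f^*$, picks one blocked arc $a_P$ per path $P$ to form a set $A'\subseteq A_{\mathrm{sat}}$, proves the cut bound $\st(f^*) \leq \sum_{a\in A'}f^*_a$ directly from that decomposition (rather than citing max-flow/min-cut), and then uses a layer-averaging step -- there is some layer $j$ with $f(\delta^+(V_j)\cap A') \geq \tfrac{1}{h}\sum_{a\in A'}f_a$ and $\st(f)\geq f(\delta^+(V_j))$ -- to finish. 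The two ``averaging over $h$'' steps are essentially dual views of the same fact (arcs partition by starting layer vs.\ every flow path touches at most $h$ arcs), and both cut arguments hinge on $\alpha$-blocking implying $A_{\mathrm{sat}}$ separates $S$ from $T$. Your version is arguably cleaner in that it does not need to talk about which flow to decompose or to choose a representative blocked arc per path, at the modest cost of invoking max-flow/min-cut as a black box (which, as you note, only needs $A_{\mathrm{sat}}$ to contain a cut, not be one).
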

\begin{proof}
Let $f$ be our $\alpha$-blocking flow and let $D$ be the input graph. Let $f^*$ be the optimal $S$-$T$ flow in the input DAG and let $ \sum_{P} f_P$ be it's flow decomposition into path flows where each $P$ is a directed path from $S$ to $T$ and $(f_P)_a$ is $1$ if $a \in P$ and $0$ otherwise. 

Since $f$ is blocking, for each such path there is some arc, $a_P$ where $f_{a_P} \geq \alpha \cdot \U_{a_P} \geq \alpha \cdot f_{a_P}^*$. Let $A' = \{a_{P} : P \text{ in flow decomposition of } f^*\}$ be the union of all such blocked arcs. Thus, $\st(f^*) \leq \sum_{a \in A'} f^*_a \leq \sum_{a \in A'} \frac{f_{a}}{\alpha}$. However, since $D$ is $h$-layered, by an averaging argument we have that there must be some $j$ such that $f(\delta^+(V_j) \cap A') \geq \frac{1}{h} \sum_{a \in A'} f_{a}$ where $V_j$ is the $j$th layer of our digraph. On the other hand, $\st(f) \geq f(\delta^+(V_j)) \geq f(\delta^+(V_j) \cap A')$ and so we conclude that 
\begin{align*}
    \st(f) &\geq f(\delta^+(V_j) \cap A')\\
    & \geq \frac{1}{h} \cdot \sum_{a \in A'} f_{a}\\
    & \geq \frac{\alpha}{h} \cdot \st(f^*),
\end{align*}
showing that $f$ is $\left(\frac{\alpha}{h}\right)$-approximate as desired.
\end{proof}

We conclude that the iterated path count flow is near-optimal and efficiently computable; our CONGEST algorithm will make use of sparse neighborhood covers to deal with potentially large diameter graphs.

\begin{lemma}\label{lem:iterPathCount}
Let $D$ be a capacitated $h$-layer $S$-$T$ DAG with diameter at most $\tilde{O}(h)$. Then one can deterministically compute a (possibly non-integral) flow $\tilde{f}$:
\begin{enumerate}
    \item In parallel that is $\Omega\left(\frac{1}{h} \right)$-approximate in time $\tilde{O}(h^2 )$ with $m$ processors;
    \item In CONGEST that is $\tilde{\Omega}\left(\frac{1}{h} \right)$-approximate in time $\tilde{O}\left(h^4 \right)$.
\end{enumerate}
\end{lemma}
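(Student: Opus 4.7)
My plan is to compute $\tilde f = \sum_{i=0}^k f_i$ directly as in the paragraph preceding the lemma, where $k = \Theta(h \log n \log(n \U_{\max})) = \tilde{\Theta}(h)$ and $f_i$ is the path count flow of $D_i$ (with $D_{i+1}$ obtained from $D_i$ by subtracting $f_i$ from the capacities). The approximation guarantee is already packaged by the preceding two lemmas: by \Cref{lem:pathCountFlowBlocking} the iterate $\tilde f$ is a blocking $S$-$T$ flow, and then by \Cref{lem:blockingGivesApx} applied with $\alpha = 1$, any blocking flow in an $h$-layer DAG is $\Omega(1/h)$-approximate. So the entire remaining argument is algorithmic: implement the inner loop in the claimed time bounds.

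For the parallel bound, each iteration does three things: (i) compute the path counts $n_v^\pm$ and $n_a$ on $D_i$, (ii) compute $\max_a n_a$ and the flow values $(f_i)_a = (\U_i)_a \cdot n_a / \max_a n_a$, and (iii) update capacities $\U_{i+1} \leftarrow \U_i - f_i$. Step (i) is exactly \Cref{lem:computePathCounts} and costs $O(h)$ with $m$ processors; step (ii) uses a parallel reduction in time $O(\log m)$ followed by a local computation per arc; step (iii) is local. Summing over $k = \tilde{\Theta}(h)$ iterations yields the claimed $\tilde{O}(h^2)$ parallel bound.

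For the CONGEST bound, the per-iteration cost of \Cref{lem:computePathCounts} is $\tilde{O}(h^2)$, since the $h$ layered sweeps each transmit path counts of up to $\tilde{O}(h)$ bits through edges of unit bandwidth. Computing $\max_a n_a$ and broadcasting it uses an aggregation on the underlying $\tilde{O}(h)$-diameter graph and likewise costs $\tilde{O}(h^2)$. Summing over $k = \tilde{\Theta}(h)$ iterations yields $\tilde{O}(h^3)$; the claimed $\tilde{O}(h^4)$ bound comfortably absorbs additional polylogarithmic overhead arising, for instance, from using sparse neighborhood covers to deliver the global maximum through clusters whose $\tilde{O}(1)$ overlap slightly inflates the message count, and is consistent with the slightly weaker $\tilde{\Omega}(1/h)$ approximation stated for the CONGEST case.

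The main obstacle lies entirely on the CONGEST side: each $n_a$ can be as large as $(n \U_{\max})^h$ and thus carries $\tilde{O}(h)$ bits, so even delivering a single path-count value between adjacent vertices takes $\tilde{O}(h)$ rounds; the layered sweep of \Cref{lem:computePathCounts} serializes these transmissions across $h$ layers, producing the $\tilde{O}(h^2)$ per-iteration cost, and one must verify that re-running this sweep $\tilde{O}(h)$ times with the updated (integer) capacities incurs no extra blow-up beyond the $\tilde{O}(h^4)$ budget. The rest of the argument is bookkeeping over the iterates.
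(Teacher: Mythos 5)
Your parallel argument coincides with the paper's. For CONGEST, however, you take a genuinely different route: you compute the iterated path count flow directly on $D$, relying on the hypothesized $\tilde{O}(h)$ diameter to converge-cast $\max_a n_a$ (each path count being an $\tilde{O}(h)$-bit value) over a BFS tree in $\tilde{O}(h^2)$ rounds. The paper instead avoids any global aggregation: it computes a sparse $h$-neighborhood cover via \Cref{lem:sparseNeigh}, runs the entire iterated-path-count construction \emph{independently inside each cluster} $V_i^{(j)}$ with source set $S\cap V_i^{(j)}$ and sink set $T\cap V_i^{(j)}$ (the cover, not $D$, supplies the $\tilde{O}(h)$ diameter needed for the local max), sums the per-cluster flows within each partition $\mcV_i$, and returns the average $\tilde f = \frac{1}{s}\sum_i \tilde f_i$ over the $s=\tilde{O}(1)$ partitions. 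That averaging step is precisely where the CONGEST guarantee weakens to $\tilde{\Omega}(1/h)$ rather than the cleaner $\Omega(1/h)$ your computation would give: every $h$-neighborhood is covered by some cluster, so $\sum_{i,j}\st(\hat f_i^{[j]}) \geq \st(f^*)$, and the $1/s$ normalization gives back a $\tilde{O}(1)$ factor.

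The two routes are not interchangeable. The paper's proof never actually uses the $\tilde{O}(h)$-diameter hypothesis on $D$ (the proof text even remarks that the diameter of $D$ ``might be very large''), and this lemma is invoked, via \Cref{lem:detMax}, on the length-weight expanded DAG $D^{(h,\lambda)}$, an $h$-layer DAG for which no such diameter bound holds -- the paper explicitly notes that $h$-layer $S$-$T$ DAGs can have diameter much larger than $h$. Your global converge-cast of $\max_a n_a$ stalls on such instances, while the neighborhood-cover route does not. So your argument is a correct and in fact simpler proof of the lemma as literally stated, but the diameter hypothesis is effectively a red herring: it is the diameter-free, cover-based version that the paper's downstream applications require. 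One small correction along the way: the intermediate capacities $\U_i = \U_{i-1} - f_{i-1}$ are not integral after the first subtraction (the path count flow $f_{i-1}$ is fractional), contrary to your parenthetical; this affects what bit lengths you charge per round, though it does not change the $\tilde{O}$ accounting.
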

\begin{proof}
Combining \Cref{lem:pathCountFlowBlocking} and \Cref{lem:blockingGivesApx} shows that the iterated path count flow is an $S$-$T$ flow that is $\Omega(\frac{1}{h})$-approximate.

For our parallel algorithm, we simply return the iterated path count flow. The iterated path count flow is simply a sum of $k = \Theta(h \cdot (\log n) \cdot  \log(n \cdot \U_{\max} )$-many path count flows. Thus, it suffices to argue that we can compute path count flows in $O(h)$ parallel time with $m$ processors. By \Cref{lem:computePathCounts} we can compute $n_a$ for every $a$ in these times and so to then compute the corresponding path count flows we need only compute $\max_a n_a$ which is trivial to do in parallel in the stated time.

For our CONGEST algorithm we do something similar but must make use of sparse neighborhood covers, because we cannot outright compute $\max_{a} n_a$ as the diameter of $D$ might be very large.  Specifically, we do the following. Apply \Cref{lem:sparseNeigh} to compute an $s$-sparse $h$-neighborhood cover with diameter $\tilde{O}(h)$ and partition $\mcV_1, \mcV_2, \ldots, \mcV_s$ for $s = \tilde{O}(1)$. Then we iterate through each of these partitions for $i = 1, 2, \ldots, s$. For each part $V_i^{(j)} \in \mcV_i$, we let $\tilde{f}_i^{(j)}$ be the iterated path count flow of $D[V_i^{(j)}]$ with source set $S \cap V_i^{(j)}$ and sink set $T \cap V_i^{(j)}$. We let $\tilde{f}_i := \sum_{j} f_i^{[j]}$ be the path count flows associated with the $i$th partition and return as our solution the average path count flow across partitions; namely we return
\begin{align*}
    \tilde{f} = \frac{1}{s} \cdot \sum_i \tilde{f}_i.
\end{align*}
This flow is an $S$-$T$ flow since it is a convex combination of $S$-$T$ flows. We now argue that this flow is $\tilde{\Omega}(\frac{1}{h})$-optimal. Let $\hat{f}_i^{[j]}$ be the optimal flow on $D[V_i^{(j)}]$ with source set $S \cap V_i^{(j)}$ and sink set $T \cap V_i^{(j)}$. As our path count flows are $\Omega(\frac{1}{h})$-approximate, we know that
\begin{align*}
\st(\tilde{f}_i^{[j]}) \geq \tilde{\Omega}\left(\frac{1}{h}\right) \cdot \st(\hat{f}_i^{[j]}).    
\end{align*}
Moreover, since every $h$-neighborhood is contained in one of the $V_i^{[j]}$, it follows that $\sum_{i, j} \st(\hat{f}_i^{[j]}) \geq \st(f^*)$ where $f^*$ is the optimal $S$-$T$ flow on $D$ with source set $S$ and sink set $T$. Thus, we conclude that
\begin{align*}
    \st(\tilde{f}) &= \frac{1}{s} \cdot \sum_{i,j} \tilde{f}_i^{[j]}\\
    & \geq  \Omega\left(\frac{1}{h}\right) \cdot \frac{1}{s} \cdot \sum_{i,j} \hat{f}_i^{[j]}\\
    & \geq \tilde{\Omega}\left(\frac{1}{h}\right) \cdot \st(f^*).
\end{align*}

Lastly, we argue the running time of our CONGEST algorithm. We describe how to compute $\tilde{f}_i$ for a fixed $i$. Again, $\tilde{f}_i$ on each part is simply a sum of $k = \tilde{\Theta}(h)$-many path count flows. To compute one of these path count flows we first compute the path counts $\{n_a\}_a$ on each part by applying \Cref{lem:computePathCounts} which takes $\tilde{O}(h^2)$ time. Next, we compute $\max_a n_a$ in $\tilde{O}(h)$ time by appealing to \Cref{lem:sparseNeigh} and the fact that $\max_a n_a \leq O(n^h)$. Thus, computing each $\tilde{f}_i$ takes time $\tilde{O}(h^3)$ and since there are $\tilde{O}(h)$ of these, overall this takes $\tilde{O}(h^4)$ time.
\end{proof}

\subsection{Deterministic Rounding of Flows in $h$-Layer DAGs}\label{sec:flowRounding}
In the previous section we showed how to construct our iterated path count flows and that they were near-optimal but possibly fractional. In this section, we give the flow rounding algorithm that we will use to round our iterated path count flows to be integral. Specifically, in this section we show the following flow rounding algorithm.
\begin{restatable}{lemma}{flowRound}
\label{lem:detRounding}
There is a deterministic algorithm which, given a capacitated $h$-layer $S$-$T$ DAG $D$, $\eps = \Omega(\frac{1}{\poly(n)})$ and (possibly fractional) flow $f$, computes an integral $S$-$T$ flow $\hat{f}$ in:
\begin{enumerate}
    \item Parallel time $\tilde{O}(h)$ with $m$ processors;
    \item CONGEST time $\tilde{O}(\frac{1}{\eps^5} \cdot h^5 \cdot (\rho_{CC})^{10})$.
\end{enumerate}
Furthermore, $\st(\hat{f}) \geq (1-\eps) \cdot \st(f)$.
\end{restatable}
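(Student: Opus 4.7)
The plan is to adapt Cohen's parallel flow-rounding \cite{cohen1995approximate} to the distributed setting using the near-Eulerian partitions from \Cref{lem:nearEuler}. Bitwise-decompose $f = \sum_{i} f^{(i)}$ where each $f^{(i)}$ puts value $v_i = 2^{\lfloor\log \U_{\max}\rfloor - i}$ on an arc set $A_i$; since $\U_{\max} = \poly(n)$ there are only $\tilde O(1)$ bits. Process bits from least to most significant, maintaining the invariant that after processing bit $i$ every arc flow of the current assignment $f'$ is an integer multiple of $2v_i$.

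At step $i$, let $A_i$ be the bit-$i$ support of $f'$. Reducing flow conservation at each $v \notin S \cup T$ modulo $2v_i$ gives $|\delta^+(v) \cap A_i| \equiv |\delta^-(v) \cap A_i| \pmod 2$, so every non-$S\cup T$ vertex has even undirected degree in $A_i$. Apply \Cref{lem:nearEuler} with parameter $\eps' = \Theta(\eps/h)$ (or $\eps' = 0$ in parallel) to the undirected subgraph induced by $A_i$, obtaining an oriented $(1-\eps')$-near Eulerian partition $\mcH_i$ of bounded-diameter cycles and paths. Round each component via Cohen's Euler-tour rule: traverse it in the partition orientation, maintain a running sign that flips whenever consecutive partition arcs agree in their $D$-direction, then add $+v_i$ on arcs labelled $+$ and $-v_i$ on arcs labelled $-$; discarded arcs are rounded down. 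This sign rule precisely cancels the net adjustment at every internal vertex of every cycle and path, so conservation is preserved except at the vertices outside $S \cup T$ whose parity in the covered subgraph was flipped by a discarded arc. By the even-degree invariant there are at most $2\eps'|A_i|$ such vertices per bit, each absorbing a deficit of at most $v_i$.

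Summing over the $\tilde O(1)$ bits, the total non-$S\cup T$ deficit of the resulting integral assignment $\hat f_0$ is at most $O(\eps') \sum_i v_i |A_i| = O(\eps') \sum_a f_a \leq O(\eps' h)\cdot \st(f)$, which equals $\eps\cdot\st(f)$ after taking $\eps' = \eps/(Ch)$ for a large constant $C$. Cleanup: two pipelined sweeps across the $h$ layers of $D$, forward from $S$ and backward from $T$, cancel every remaining deficit by discarding precisely the sub-flow that cannot lie on a complete $S$-$T$ path; a standard flow-decomposition exchange argument bounds the discarded amount by the total deficit. Symmetric accounting restricted to $\delta^+(S)$ shows that $\st(\hat f_0)$ itself differs from $\st(f)$ by at most the same quantity, so the valid integral $S$-$T$ flow $\hat f$ produced by the cleanup satisfies $\st(\hat f) \geq (1-\eps)\st(f)$ after rescaling constants. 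For running times, each invocation of \Cref{lem:nearEuler} costs $\tilde O(1)$ parallel / $\tilde O((h/\eps)^5 (\rho_{CC})^{10})$ CONGEST, the per-component rounding is a prefix-sum of length at most the (bounded) partition diameter, and the cleanup sweep runs in $O(h)$ parallel time or $\tilde O(h)$ CONGEST rounds; summing over $\tilde O(1)$ bits yields the claimed bounds.

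The main obstacle is the deficit bookkeeping. We must simultaneously argue (i) that the Euler-tour sign rule really preserves conservation at every interior cycle or path vertex despite the partition edges pointing in arbitrary directions in $D$, (ii) that discarding $\eps'|A_i|$ edges creates only $O(\eps'|A_i|)$ new odd-degree vertices in $V \setminus (S\cup T)$ rather than something larger, and (iii) that the final cleanup sweep loses at most the accumulated deficit rather than an amount scaling with $\U_{\max}$ or the degree of $S$. All three rely on the mod-$2v_i$ even-degree invariant on $A_i$ as the key structural input, and the choice $\eps' = \Theta(\eps/h)$ is exactly what converts the per-bit deficit bound into the desired $(1-\eps)$-factor loss on an $h$-layer DAG.
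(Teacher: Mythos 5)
Your proposal follows the same overall strategy as the paper's proof: bitwise decomposition, process bits from least to most significant, turn flow along the cycles and paths of a near-Eulerian partition at each bit, then a final layer-by-layer sweep to extract a true $S$-$T$ subflow. The running-time accounting also matches. However, the deficit bookkeeping has a real gap, and two smaller points also need attention.

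The main issue is that your argument hinges on a ``mod-$2v_i$ even-degree invariant'' at non-$S\cup T$ vertices, but that invariant is lost the moment any arc is discarded. Once bit $k$ introduces a nonzero deficit at some $v\notin S\cup T$, the bit-$i$ support for later $i$ can already have odd degree at $v$ \emph{without} any discarded arc at that bit, so ``new odd-degree vertices created by discards'' does not bound ``all odd-degree vertices,'' and the per-bit count $O(\eps'|A_i|)$ is not justified inductively. The paper sidesteps this issue entirely: \Cref{lem:flowRoundLem} shows that the turning step cannot increase the deficit of the current flow by more than $2\eps'\sum_a f_a$, full stop, with no even-degree hypothesis. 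The argument is vertex-by-vertex and relies on the specific definition of $H^+$ for paths (anchoring it at the designated source $s_H$) so that the total endpoint contribution across each path does not grow when the path is turned. To close the gap you would need an analogous unconditional vertex-level bound rather than the parity invariant.

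Two smaller points. First, ``$\tilde O(1)$ bits since $\U_{\max}=\poly(n)$'' only bounds the bits of value at least $1$; the input flow $f$ can have unbounded precision below $1$, which is why \Cref{alg:detFlow} truncates bits below $2^{-\Theta(\log n+\log\U_{\max})}$ before the loop (and this truncation is precisely where the hypothesis $\eps=\Omega(1/\poly(n))$ is used). Second, your sign-flip rule is stated backwards: at an interior vertex, conservation requires keeping the same sign on consecutive partition arcs when one points into $v$ and the other out of $v$, and flipping when both point in or both point out; your phrasing has the flip condition inverted. These are fixable, but the even-degree issue is the one that actually needs a different argument.
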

Parts of the above parallel result are implied by the work of \citet{cohen1995approximate} while the CONGEST result is entirely new.

\subsubsection{Turning Flows on $(1-\eps)$-Near Eulerian Partitions}

As discussed earlier, our rounding will round our flow from the least to most significant bit. To round the input flow on a particular bit we will consider the graph induced by the arcs which set this bit to $1$. We then compute an oriented near-Eulerian partition of these edges and ``turn'' flow along each cycle and path consistently with its orientation. We will always turn flow so as to not increase the deficit of our flow.

We now formalize how we use our $(1-\eps)$-near Eulerian partitions to update our flow. Given a path or cycle $H$, our flow update will carefully choose a subset of arcs of $H$ along which to increase flow (denoted $H^+$) and decrease flow along all other arcs of $H$. Specifically, let $H$ be an oriented cycle or path of a graph produced by forgetting about the directions in a digraph $D  = (V,A)$. Then $H^+$ is illustrated in \Cref{fig:HPlus} and defined as follows:
\begin{itemize}
    \item Suppose $H$ is an oriented cycle. Then, we let $H^+$ be all arcs of $D$ in this cycle that point in the same direction as their orientation. 
    \item Suppose $H = (s_H = v_0, v_1, v_2, \ldots)$ is an oriented path. We let $H^+$ be all arcs of $D$ in this path that point in the same direction as the one arc in $D$ incident to $s_H$ (i.e.\ the designated source of the path). That is either $(v_{0}, v_1)$ or $(v_{1}, v_0)$ are in $D$. In the former case we let $H^+$ be all arcs in $D$ of the form $(v_{i}, v_{i+1})$ for some $i$. In the latter case we let $H^+$ be all arcs in $D$ of the form $(v_{i+1}, v_{i})$ for some $i$.
\end{itemize}
\begin{figure}
    \centering
    \begin{subfigure}[b]{0.3\textwidth}
        \centering
        \includegraphics[width=\textwidth,trim=0mm 0mm 0mm 0mm, clip]{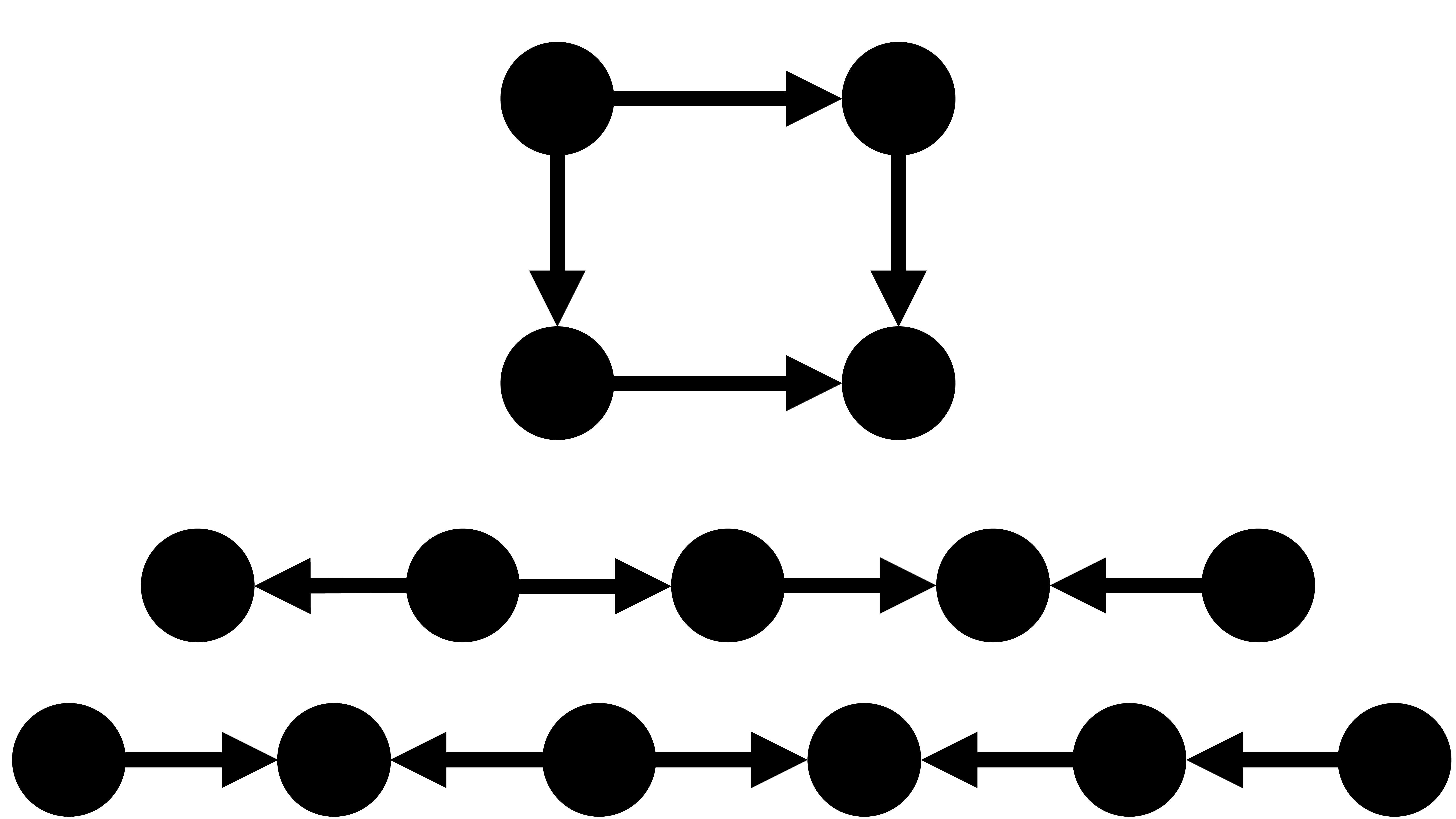}
        \caption{Near Eulerian Partition $\mathcal{H}$.}\label{sfig:HPlus1}
    \end{subfigure}    \hfill
    \begin{subfigure}[b]{0.3\textwidth}
        \centering
        \includegraphics[width=\textwidth,trim=0mm 0mm 0mm 0mm, clip]{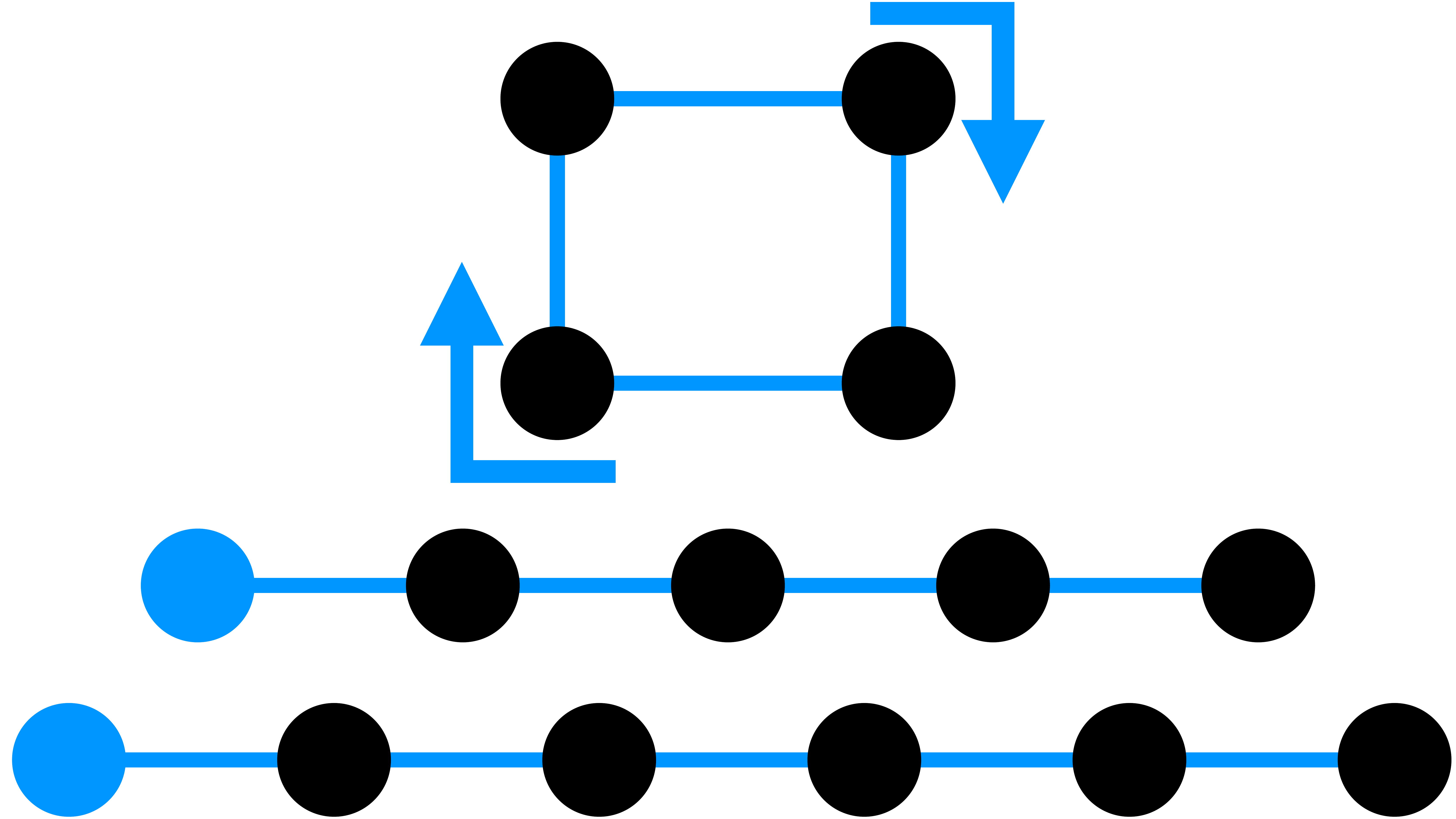}
        \caption{Orientation of $\mathcal{H}$.}\label{sfig:HPlus2}
    \end{subfigure}    \hfill
    \begin{subfigure}[b]{0.3\textwidth}
        \centering
        \includegraphics[width=\textwidth,trim=0mm 0mm 0mm 0mm, clip]{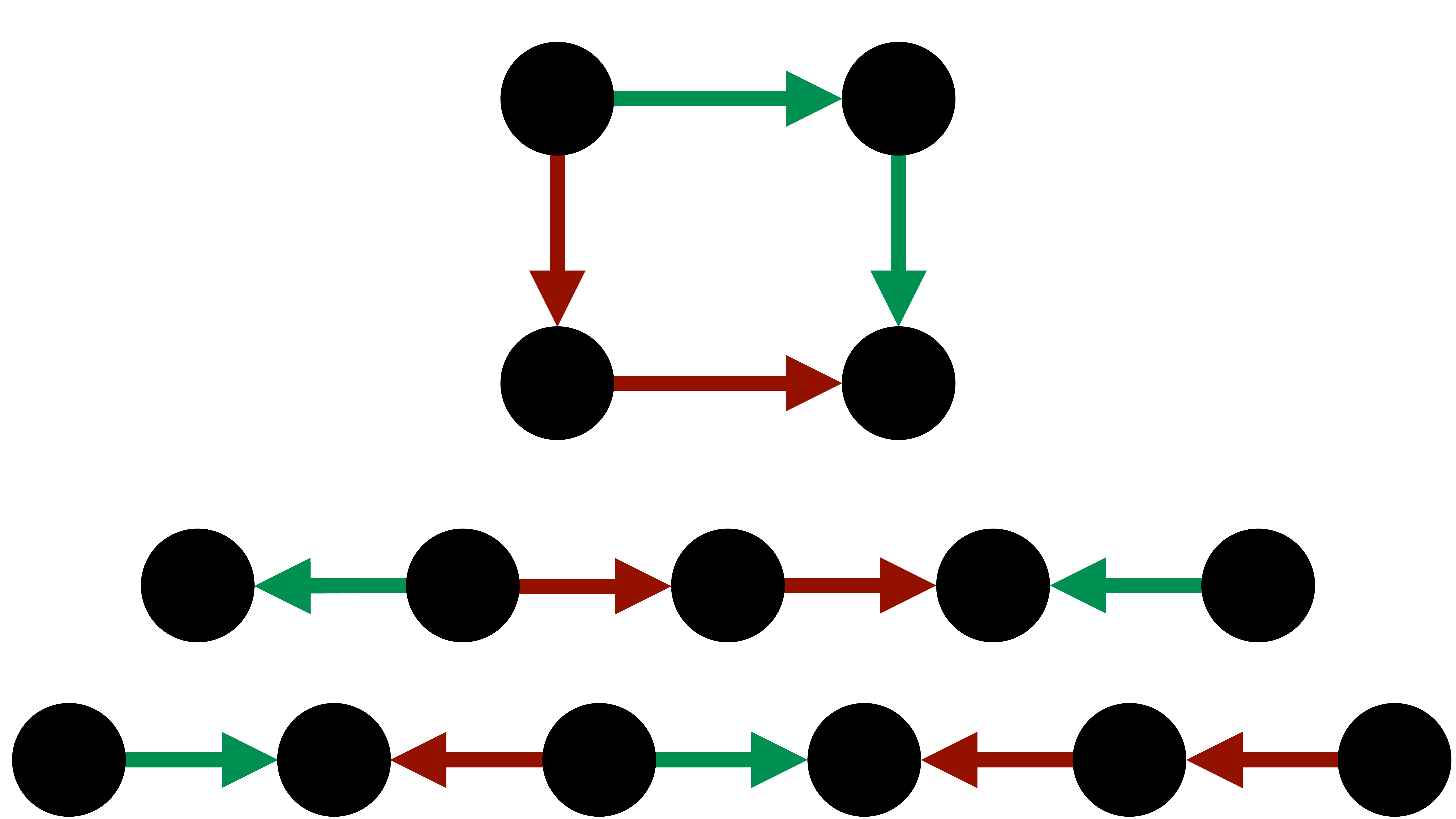}
        \caption{$H^+$ for each $H \in \mathcal{H}$.}\label{sfig:HPlus3}
    \end{subfigure}
    \caption{An illustration of a near Eulerian partition $\mcH$ and $H^+$ for each $H \in \mcH$. \ref{sfig:HPlus1} gives $\mcH$ which consists of one cycle and two paths. \ref{sfig:HPlus2} gives the orientation of $\mcH$ where the source of each path is in blue. \ref{sfig:HPlus3} gives $H^+$ (in green) and $H \setminus H^+$ (in red) for each $H \in \mcH$.}\label{fig:HPlus}
\end{figure}

With our definition of $H^+$ in hand, we now define our flow updates as follows.
\begin{definition}[$(1-\eps)$-Near Eulerian Partition Flow Update]\label{dfn:flowUpdate}
Let $f$ be a flow in a capacitated DAG for which $f_a \in \{0, c\}$ for every $a \in A$ for some $c$ and let $\mcH$ be an oriented $(1-\eps)$-near Eulerian partition of $\supp(f)$ after forgetting about edge directions. Then if $H  \in \mcH$, we define the flow $f_{H}$ on arc $a$:
\begin{align*}
(f_H)_a := 
\begin{cases}
2c & \text{if $a \in H^+$}\\
0 & \text{otherwise}
\end{cases}
\end{align*}

Likewise, we define the flow corresponding to $(f, \mcH)$ as
\begin{align*}
    f_{\mcH} := \sum_{H \in \mcH}f_H.
\end{align*}
\end{definition}

The following shows that our flow update will indeed zero out the value of each bit on each edge while incurring a negligible deficit.
\begin{lemma}\label{lem:flowRoundLem}
Let $f$ be a flow in a capacitated DAG $D$ with specified source and sink vertices $S$ and $T$ where $f_a \in \{0, c\}$ for every $a \in A$ for some $c$. Let $\mcH$ be an oriented $(1-\eps)$-near Eulerian partition of $\supp(f)$ after forgetting about edge directions. Then $f_{\mcH}$ (as defined in \Cref{dfn:flowUpdate}) satisfies:
\begin{enumerate}
    \item $(f_{\mcH})_a \in \{0, 2c\}$ for every $a \in A$;
    \item $\deficit(f_{\mcH}) \leq \deficit(f) + 2\epsilon \cdot \sum_{a}f_a$.
\end{enumerate}
\end{lemma}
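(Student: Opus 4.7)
The first claim is essentially immediate from edge-disjointness of $\mcH$: since each arc $a$ lies in at most one $H \in \mcH$, we have $(f_{\mcH})_a = (f_H)_a$ for that unique $H$, which is $2c$ when $a \in H^+$ and $0$ otherwise (arcs outside $\bigcup_H H^+$ get value $0$).

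For the deficit bound, I would introduce the signed net out-flow $g(v) := \sum_{a \in \delta^+(v)} f_a - \sum_{a \in \delta^-(v)} f_a$ (so that $\deficit(f,v) = |g(v)|$) and its analogue $g_{\mcH}$. Setting $\Delta g(v) := g_{\mcH}(v) - g(v)$, the triangle inequality gives
\[
\deficit(f_{\mcH}) - \deficit(f) \;\leq\; \sum_{v \notin S \cup T} |\Delta g(v)|,
\]
so the whole claim reduces to bounding this sum. I would decompose $\Delta g(v)$ into contributions from (i) the missing arcs $M := \supp(f) \setminus E[\mcH]$, which have value $c$ in $f$ but $0$ in $f_{\mcH}$, and (ii) the arcs of $\mcH$ incident to $v$, analyzed per $H \in \mcH$.

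The core computation is a per-$H$, per-vertex case analysis. For each $H$ and each $v$ incident to exactly two arcs of $H$ (i.e., any cycle vertex, or an internal path vertex), a four-way case split over the directions of those two arcs in $D$, using the definition of $H^+$, verifies that $f_H$'s contribution to $g(v)$ equals $f$'s contribution on those two arcs, so $H$ contributes $0$ to $\Delta g(v)$. At a path endpoint of $H$, the single incident arc contributes exactly $\pm c$, and the Eulerian partition property (every vertex is an endpoint of at most one path) lets us sum this to at most $c$ per vertex. Each missing arc contributes $\pm c$ at each of its two endpoints. Altogether,
\[
|\Delta g(v)| \;\leq\; c \cdot d_M(v) + c \cdot \mathbf{1}[v \text{ is a path endpoint}],
\]
where $d_M(v)$ is the degree of $v$ in missing arcs.

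The missing-arc term sums to $c \sum_v d_M(v) = 2c|M| \leq 2\eps \cdot c|\supp(f)| = 2\eps \sum_a f_a$, which alone matches the target bound. Controlling the path-endpoint term is the main obstacle. My plan is a parity argument: any non-$S$-$T$ path endpoint $v$ has odd degree in $E[\mcH]$, and writing $d_f(v) = d_{E[\mcH]}(v) + d_M(v)$ forces one of two cases. Either $d_M(v) \geq 1$, in which case $v$'s endpoint contribution can be charged to an already-budgeted incident missing arc; or $d_f(v)$ is odd, which by parity forces $\deficit(f,v) \geq c$, absorbing the contribution into $\deficit(f)$. The main technical hurdle is packaging this charging without double-counting against the missing-arc budget, so as to land cleanly in the stated $2\eps \sum_a f_a$ slack rather than a looser multiple of $\deficit(f) + \eps \sum_a f_a$.
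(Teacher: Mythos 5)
Your framework---bounding $\deficit(f_{\mcH}) - \deficit(f)$ by $\sum_{v \notin S \cup T}|\Delta g(v)|$, splitting $\Delta g$ into missing-arc and per-$H$ contributions, and verifying by case split that internal vertices of each $H$ contribute zero---is sound, and your per-vertex bound $|\Delta g(v)| \leq c\cdot d_M(v) + c\cdot\mathbf{1}[v \text{ is a path endpoint}]$ is exactly right (indeed tighter and more honest than what the paper writes down). But the gap you flag at the end is real and, it turns out, unclosable: the stated bound is false. Take $\supp(f)$ to be a single arc $a = (u,v)$ with $u,v \notin S \cup T$ and $f_a = c$. The only near Eulerian partition covering all edges is $\mcH = \{H\}$ with $H$ the one-edge path (so $\eps = 0$), and whichever endpoint is designated $s_H$, that single arc is in $H^+$, so $(f_{\mcH})_a = 2c$. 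Then $\deficit(f) = 2c$, $\deficit(f_{\mcH}) = 4c$, $\sum_a f_a = c$, and the claimed bound reads $4c \leq 2c$. The paper's own proof contains the matching error: its per-vertex claim $\deficit(f_{\mcH},v) \leq 2c|n_i(v) - n_o(v)| + 2c\cdot t(v)$ silently omits the contribution of the arc at the \emph{source} endpoint $s_H$, which always lies in $H^+$ and hence carries flow $2c$ under $f_{\mcH}$; at $u = s_H$ in the example the right-hand side is $0$ while $\deficit(f_{\mcH},u) = 2c$. (Its companion formula for $\deficit(f',v)$ is also written as an equality when the endpoint term can in fact cancel against the $n_i - n_o$ term, and the argument needs it as a lower bound.)

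Your parity observation is precisely what identifies the correct statement. At a path endpoint $v \notin S \cup T$, $d_{E[\mcH]}(v)$ is odd, so either $d_M(v) \geq 1$ (charge the endpoint's $c$ to a missing arc, costing another $O(\eps)\sum_a f_a$ over all such $v$), or $d_{\supp(f)}(v)$ is odd, whence $\deficit(f,v) \geq c$, which dominates the endpoint's contribution to $|\Delta g(v)|$. Summing, this yields $\deficit(f_{\mcH}) \leq 2\deficit(f) + O(\eps)\sum_a f_a$, and the single-arc example shows the factor of $2$ on $\deficit(f)$ cannot be removed. So your suspicion that the charging can only land on a looser multiple of $\deficit(f) + \eps\sum_a f_a$ rather than on the stated bound is exactly right---there is no clever packaging that avoids it, and what your argument proves is the corrected lemma, not the one as written.
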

\begin{proof}
$(f_{\mcH})_a \in \{0, 2c\}$ holds by the definition of $f_{\mcH}$ and the fact that the elements of $\mcH$ are edge-disjoint.

We next argue that $\deficit(f') \leq \deficit(f) + 2\epsilon \cdot \sum_{a}f_a$. The basic idea is that each edge in the support of $f$ which does not appear in $A[\mcH]$ contributes its value to the deficit but any way of turning a cycle in $\mcH$ leaves the deficit unchanged and the way we chose to turn paths also leaves the deficit unchanged.

We let $f'$ be $f$ projected onto the arcs in $A[\mcH]$. That is, on arc $a$ the flow $f'$ takes value
\begin{align*}
    f'_a =:
    \begin{cases}
    f_a & \text{if $a \in A[\mcH]$}\\
    0 & \text{otherwise}
    \end{cases}
\end{align*}

We have that $\deficit(f') \leq \deficit(f) + 2\epsilon \cdot \sum_{a}f_a$ since each arc $a \not \in A[\mcH]$ increases the deficit of $f'$ by at most $2f_a$ and, from \Cref{def:Euler part}, there are at most $\eps$-fraction of arcs not in $A[\mcH]$. Thus, to show our claim it suffices to argue that $\deficit(f_{\mcH}) \leq \deficit(f')$. For a given vertex $v$, we let $n_i(v)$ be the number of elements of $\mcH$ in which $v$ has in-degree $2$. Similarly, we let $n_o(v)$ be the number of elements of $\mcH$ for which $v$ has out-degree $2$. Lastly, we let $s(v)$ be the indicator of whether $v$ is the source of some path in $\mcH$ and $t(v)$ be the indicator of whether $v$ is the sink of a path in $\mcH$. Thus, we have
\begin{align*}
    \deficit(f', v) = 2c \cdot  |n_i(v) - n_o(v)| + c \cdot (s(v) + t(v))
\end{align*}
and so
\begin{align*}
    \deficit(f') & = \sum_v 2c \cdot |n_i(v) - n_o(v)| + c \cdot (s(v) + t(v))\\
    &= 2c|\mcP| + \sum_v 2c \cdot |n_i(v) - n_o(v)|
\end{align*}

On the other hand, we have
\begin{align*}
    \deficit(f_{\mcH}, v) &\leq 2c \cdot |n_i - n_o| + 2c \cdot t(v)
\end{align*}
and so
\begin{align*}
    \deficit(f_{\mcH}) & \leq \sum_v 2c \cdot |n_i(v) - n_o(v)| + 2c \cdot t(v)\\
    &= 2c|\mcP| + \sum_v 2c \cdot |n_i(v) - n_o(v)|
\end{align*}
showing $\deficit(f_{\mcH}) \leq \deficit(f')$ as required.
\end{proof}

\subsubsection{Extracting Integral $S$-$T$ Subflows}

The last piece of our rounding deals with how to fix the damage that the accumulating deficit incurs. Specifically, as we round each bit we discard some edges, increasing our deficit. This means that after rounding all bits we are left with some (small) deficit. In this section we show how to delete flows that originate or end at vertices not in $S$ or $T$, thereby reducing the value of our flow by the deficit but guaranteeing that we are left with a legitimate $S$-$T$ flow.

\begin{lemma}\label{lem:maxSubflow}
Let $\hat{f}$ be an integral (not necessarily $S$-$T$) flow on an $h$-layer $S$-$T$ DAG. Then one can compute an $S$-$T$ integral flow $f'$ which is a subflow of $\hat{f}$ and satisfies $\st(f') \geq \st(\hat{f}) - \deficit(\hat{f})$ in:
\begin{enumerate}
    \item Parallel time $O(h)$ with $m$ processors;
    \item CONGEST time $\tilde{O}(h)$.
\end{enumerate}
\end{lemma}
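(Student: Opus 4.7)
The plan is to execute two synchronized sweeps over the $h+1$ layers $V_1 = S, V_2, \ldots, V_{h+1} = T$ of the DAG, starting with $f' = \hat f$. In the \emph{forward sweep}, process layers $i = 2, 3, \ldots, h$ in increasing order; at each $v \in V_i$, if the current outgoing flow $f'(\delta^+(v))$ exceeds the current incoming flow $f'(\delta^-(v))$, arbitrarily decrement $f'$ in integer units on out-arcs of $v$ until $f'(\delta^+(v)) = f'(\delta^-(v))$. In the subsequent \emph{backward sweep}, process layers $i = h, h-1, \ldots, 2$ in decreasing order; at each $v \in V_i$, if $f'(\delta^-(v)) > f'(\delta^+(v))$, analogously decrement $f'$ on in-arcs of $v$. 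Because arcs go strictly from lower to higher layers, when $v \in V_i$ is processed in the forward sweep its inflow has already been finalized (only predecessors in lower layers can alter it), and symmetrically for the backward sweep. After both sweeps every $v \in V_2, \ldots, V_h$ satisfies $f'(\delta^-(v)) = f'(\delta^+(v))$, so $f'$ is an integer $S$-$T$ flow; and since we only decrement, $f' \leq \hat f$, so $f'$ is a subflow of $\hat f$.

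The main task is to prove $\st(f') \geq \st(\hat f) - \deficit(\hat f)$. Since the forward sweep never touches any $s \in S$, all of the loss $\st(\hat f) - \st(f')$ comes from the backward sweep and equals the total amount $R_S$ that the backward sweep decrements on arcs out of $S$. Let $f_1$ denote the flow after the forward sweep, and for $w \notin S \cup T$ let $X(w)$ be the total decrement performed when $w$ is backward-processed and $Y(w)$ the total decrement that higher-layer backward processing imposes on arcs out of $w$. The definition of the backward sweep gives $X(w) = f_1(\delta^-(w)) - f_1(\delta^+(w)) + Y(w)$. Reindexing $\sum_w Y(w)$ as a sum over arcs and matching it against $\sum_w X(w)$ causes all terms to cancel except for the contribution of arcs out of $S$, which telescopes to the identity $R_S = \sum_{w \notin S \cup T} (f_1(\delta^-(w)) - f_1(\delta^+(w))) = \deficit(f_1)$. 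A per-vertex case analysis then shows $\deficit(f_1) \leq \deficit(\hat f)$: the forward sweep at $w$ sets $f_1(\delta^+(w)) = \min(\hat f(\delta^+(w)), f_1(\delta^-(w)))$, so the surviving surplus $\max(0, f_1(\delta^-(w)) - f_1(\delta^+(w)))$ never exceeds $\max(0, \hat f(\delta^-(w)) - \hat f(\delta^+(w)))$. Combining yields $R_S \leq \deficit(\hat f)$, as required. As a sanity check, the bound is also visible via an integer path decomposition of $\hat f$ in the DAG: at most $\deficit(\hat f)$ of the $\st(\hat f)$ integer paths leaving $S$ can fail to terminate at $T$.

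The key potential obstacle is precisely the cascading behavior of backward-sweep reductions: a priori, decrementing an in-arc of $w$ reduces the out-flow of some predecessor $u$, which when later processed must itself decrement further, and it is not obvious that this cascade is bounded by $\deficit(\hat f)$. The identity $R_S = \deficit(f_1)$ above is what resolves this. Given the loss bound, the implementation is routine: within each layer all vertices are processed independently (no intra-layer arcs exist), each vertex performs $O(1)$ work per incident arc and $O(1)$ rounds of CONGEST using only local in/out totals plus one message per modified arc, and the $O(h)$ layers are processed in lockstep. This yields parallel time $O(h)$ with $m$ processors and deterministic CONGEST time $\tilde O(h)$, matching the stated bounds.
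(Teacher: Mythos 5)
Your proposal is correct and uses the same two-sweep algorithm as the paper: a forward pass that zeroes positive deficits by trimming out-arcs, followed by a backward pass that zeroes negative deficits by trimming in-arcs, with the observation that only the backward pass touches arcs out of $S$. Your loss accounting is somewhat more explicit than the paper's (the telescoping identity $R_S = \deficit(f_1)$ together with the per-vertex bound $\deficit(f_1,w) \le \deficit(\hat f,w)$), but it is a more careful rendering of the same argument, not a different one.
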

\begin{proof}
Our algorithm will simply delete out flow that originates not in $S$ or ends at vertices not in $T$. More formally, we do the following. We initialize our flow $f'$ to $\hat{f}$. Let $S = V_1, V_2, \ldots, V_{h+1} = T$ be the vertices in each layer of our input $S$-$T$ DAG $D = (V,A)$. Recall that we defined a flow $\hat{f}$ as an arbitrary function on the arcs so that $\hat{f}_a \leq \U_a$ for every $a$. The basic idea of our algorithm is to first push all ``positive'' deficit from left to right and then to push all ``negative'' deficit from right to left. The deficit will be non-increasing under both of these processes.

More formally, we push positive deficit as follows. For $i = 2, 3, \ldots h$ we do the following. For each $v \in V_i$, let $$\deficit^+(v) := \max\left(0, \sum_{a \in \delta^+(v)} f'_a - \sum_{a \in \delta^-(v)}f'_a \right)$$ be the positive deficit of $v$. Then, we reduce $\sum_{a \in \delta^+(v)} f'_a$ to be equal to $\sum_{a \in \delta^-(v)} f'_a$ by arbitrarily (integrally) reducing $f'_a$ for some subset of $a\in \delta^+(v)$. 

It is easy to see by induction that at this point we have $\deficit^+(v) = 0$ for all $v \not \in S \cup T$. Likewise, we have that $\sum_{v \not \in S \cup T} \deficit^+(v)$ is non-increasing each time we iterate the above. Thus, if $\deficit^+$ is the initial value of $\sum_{v \not \in S \cup T} \deficit^+(v)$ then in the last iteration of the above we may decrease the flow into $T$ by at most $\deficit(\hat{f})$.

Next, we do the same thing symmetrically to reduce the negative deficits. For $i = h, h-1, \ldots, 2$ we do the following for each $v \in V_i$. Let $$\deficit^-(v) := \max\left(0, \sum_{a \in \delta^-(v)} f'_a - \sum_{a \in \delta^+(v)}f'_a \right)$$ be the negative deficit of $v$. Then, we reduce $\sum_{a \in \delta^-(v)} f'_a$ to be equal to $\sum_{a \in \delta^+(v)} f'_a$ by arbitrarily (integrally) reducing $f'_a$ for some subset of $a\in \delta^-(v)$. Notice that this does not increase $\deficit^+(v)$ for any $v \not \in S \cup T$.

Symmetrically to the positive deficit case, it is easy to see that at the end of this process we have reduced $\deficit^-(v)$ to $0$ for every $v \not \in S \cup T$ while reducing the flow out of $S$ by at most $\deficit(\hat{f})$. 

Thus, at the end of this process we have an $S$-$T$ integral flow $f'$ whose value is at least $\st(\hat{f})-\deficit(\hat{f})$. Implementing the above in the stated running times is trivial; the only caveat is that updating a flow in CONGEST requires updating it for both endpoints but since the flow is integral and we reduce it integrally, this can be done along a single arc in time $O(\log \U_{\max}) = \tilde{O}(1)$ by assumption.
\end{proof}

\subsubsection{Flow Rounding Algorithm}
Having defined the flow update we use for each $(1-\eps)$-near Eulerian partition and how to extract a legitimate $S$-$T$ flow from the resulting rounding, we conclude with our algorithm for rounding flows from least to most significant bit. Our algorithm is given in \Cref{alg:detFlow} and illustrated in \Cref{fig:flowRound}.

\begin{figure}
    \centering
    \begin{subfigure}[b]{0.3\textwidth}
        \centering
        \includegraphics[width=\textwidth,trim=0mm 0mm 255mm 0mm, clip]{./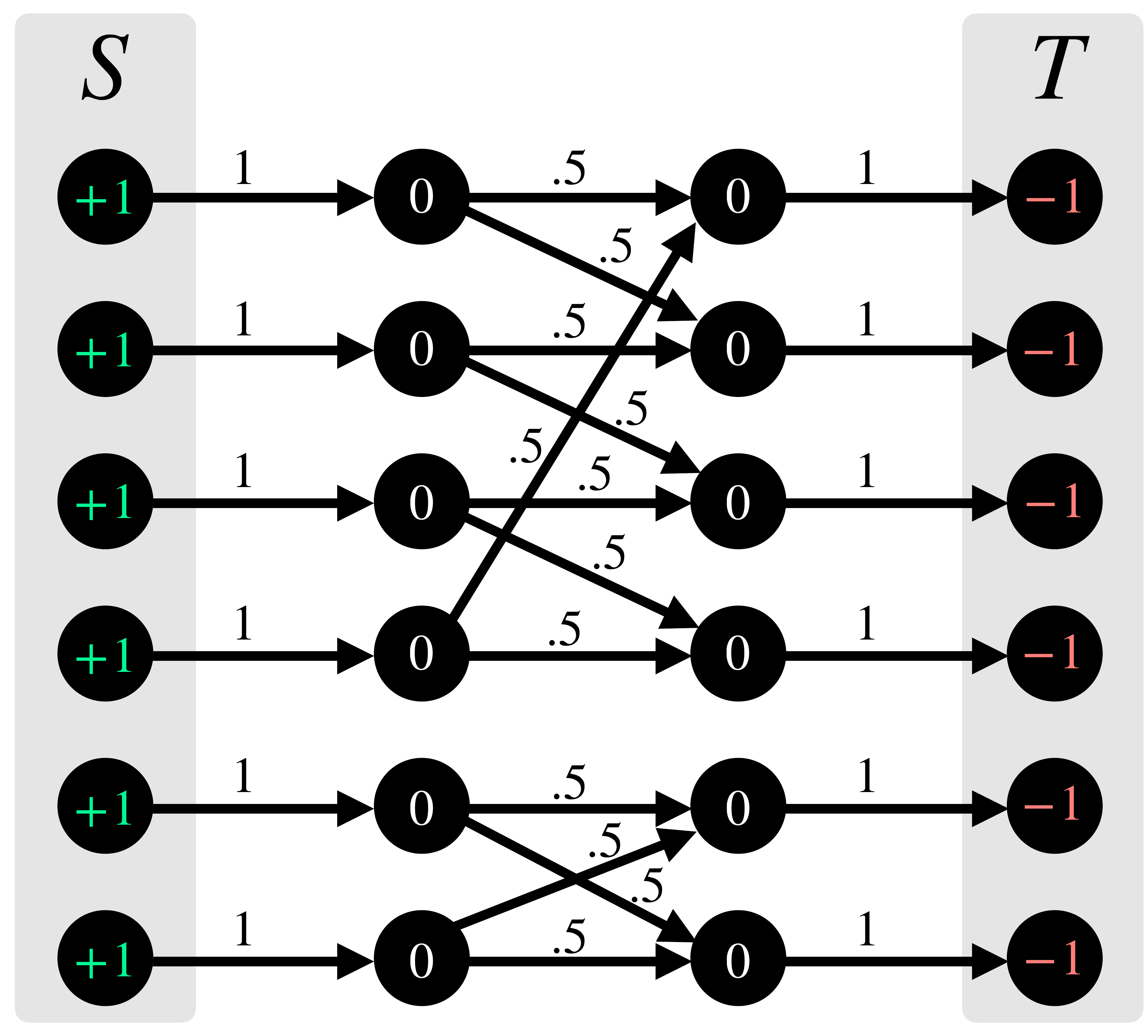}
        \caption{Fractional flow on $D$.}\label{sfig:flowRound1}
    \end{subfigure}    \hfill
    \begin{subfigure}[b]{0.3\textwidth}
        \centering
        \includegraphics[width=\textwidth,trim=0mm 0mm 255mm 0mm, clip]{./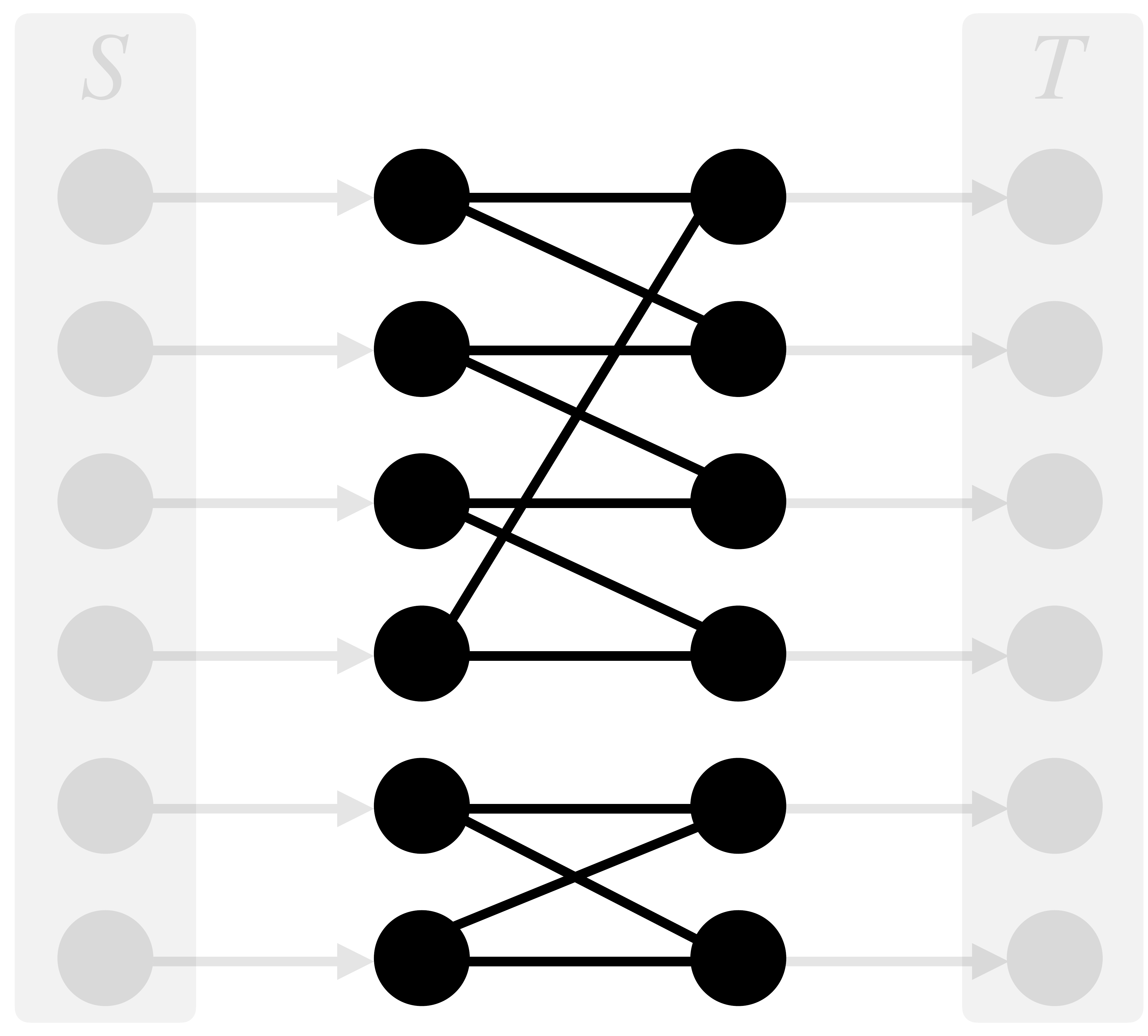}
        \caption{$D^{(2)}$.}\label{sfig:flowRound2}
    \end{subfigure}    \hfill
    \begin{subfigure}[b]{0.3\textwidth}
        \centering
        \includegraphics[width=\textwidth,trim=0mm 0mm 255mm 0mm, clip]{./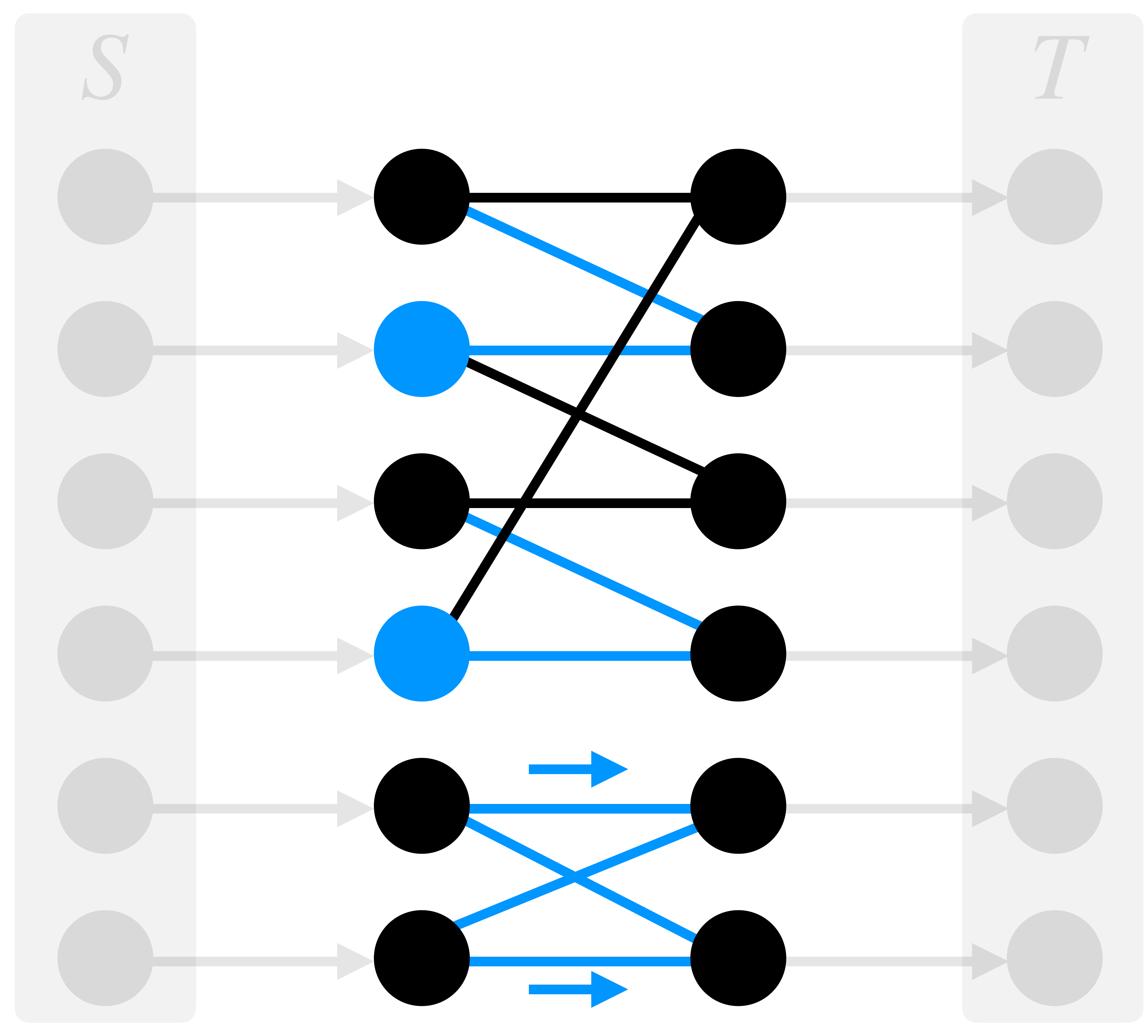}
        \caption{Near Eulerian partition $\mcH$.}\label{sfig:flowRound3}
    \end{subfigure}
    \vspace{5mm}\hfill\\
    \begin{subfigure}[b]{0.3\textwidth}
        \centering
        \includegraphics[width=\textwidth,trim=0mm 0mm 255mm 0mm, clip]{./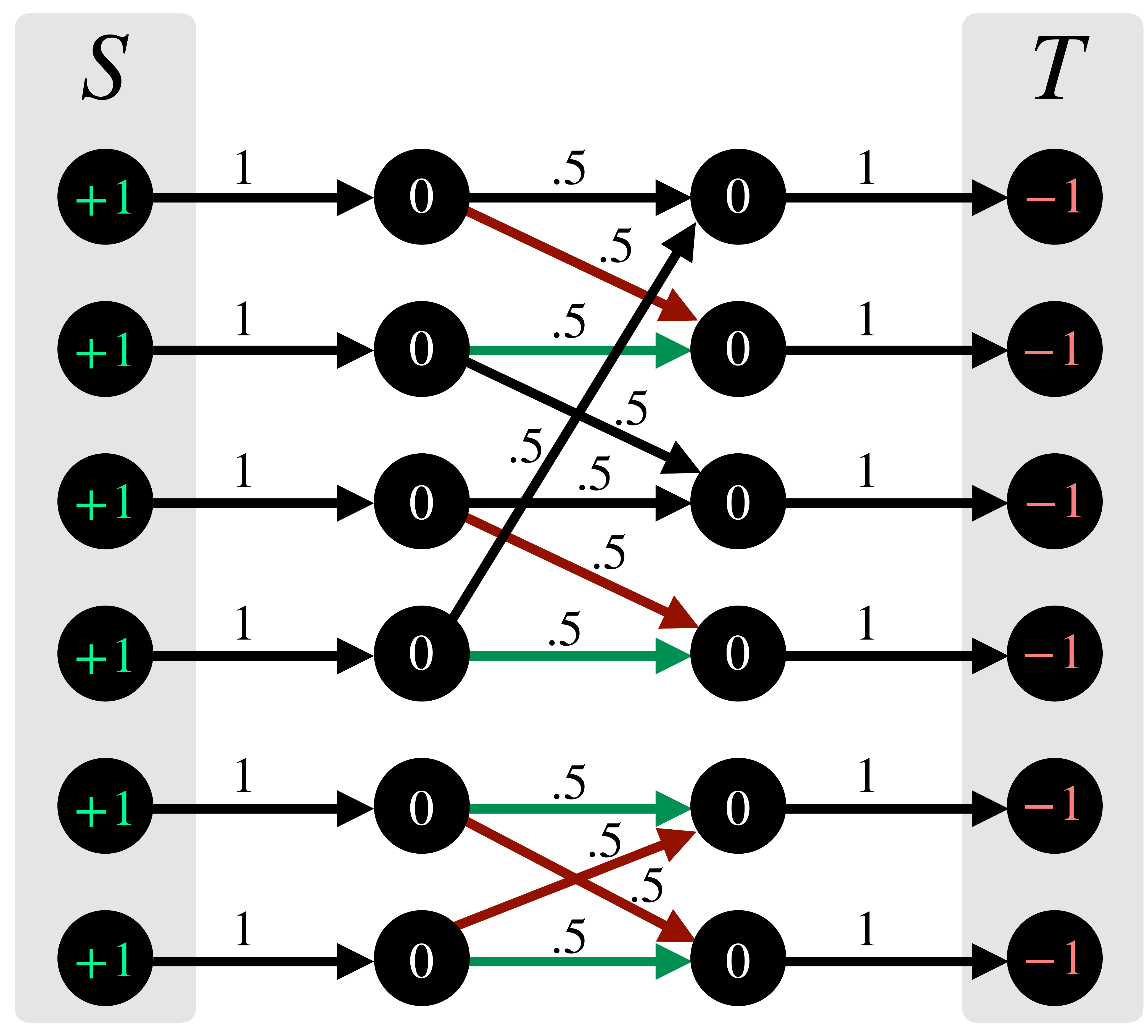}
        \caption{Flow update using $\mcH$.}\label{sfig:flowRound4}
    \end{subfigure} \hfill
    \begin{subfigure}[b]{0.3\textwidth}
        \centering
        \includegraphics[width=\textwidth,trim=0mm 0mm 255mm 0mm, clip]{./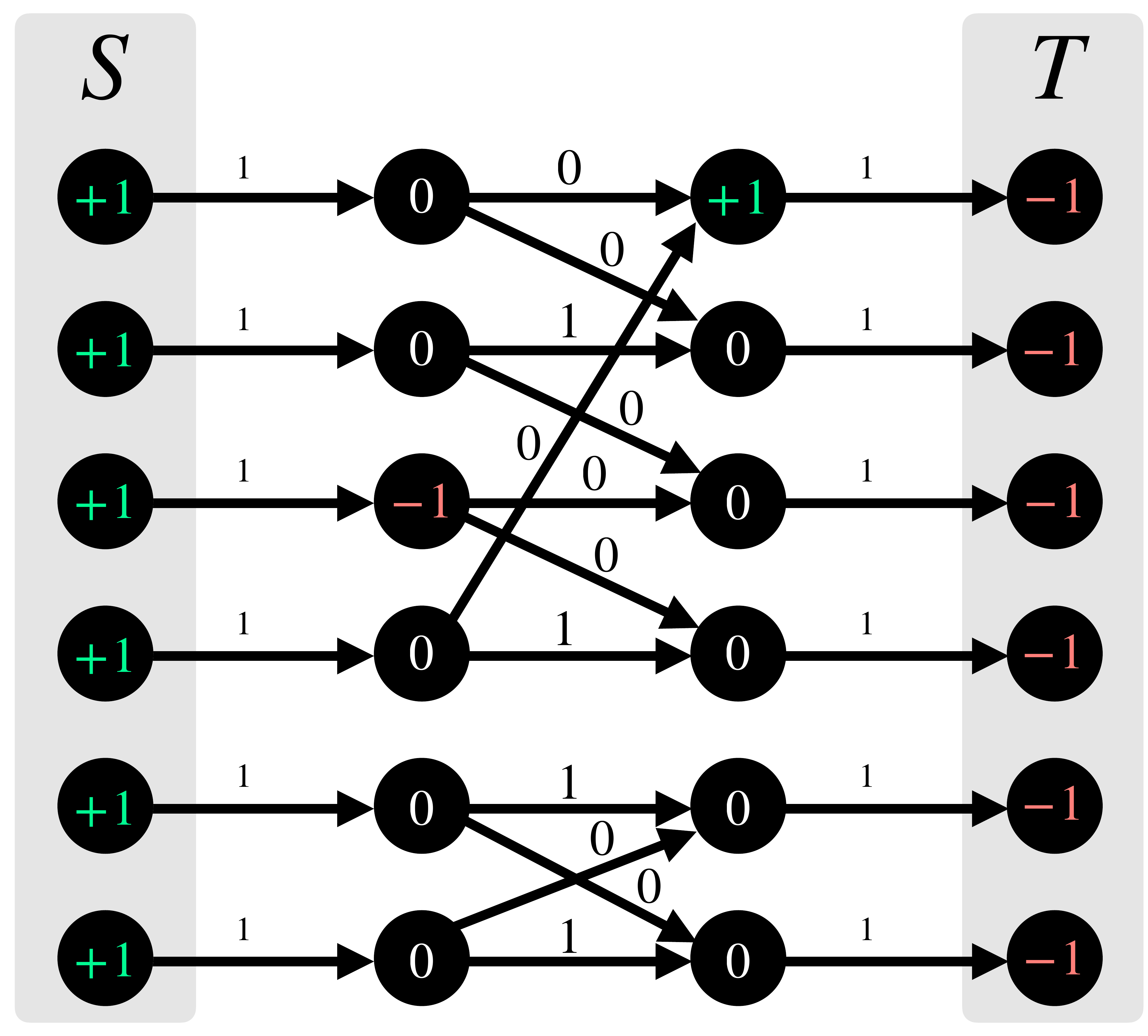}
        \caption{Flow after flow update.}\label{sfig:flowRound5}
    \end{subfigure} \hfill
    \begin{subfigure}[b]{0.3\textwidth}
        \centering
        \includegraphics[width=\textwidth,trim=0mm 0mm 255mm 0mm, clip]{./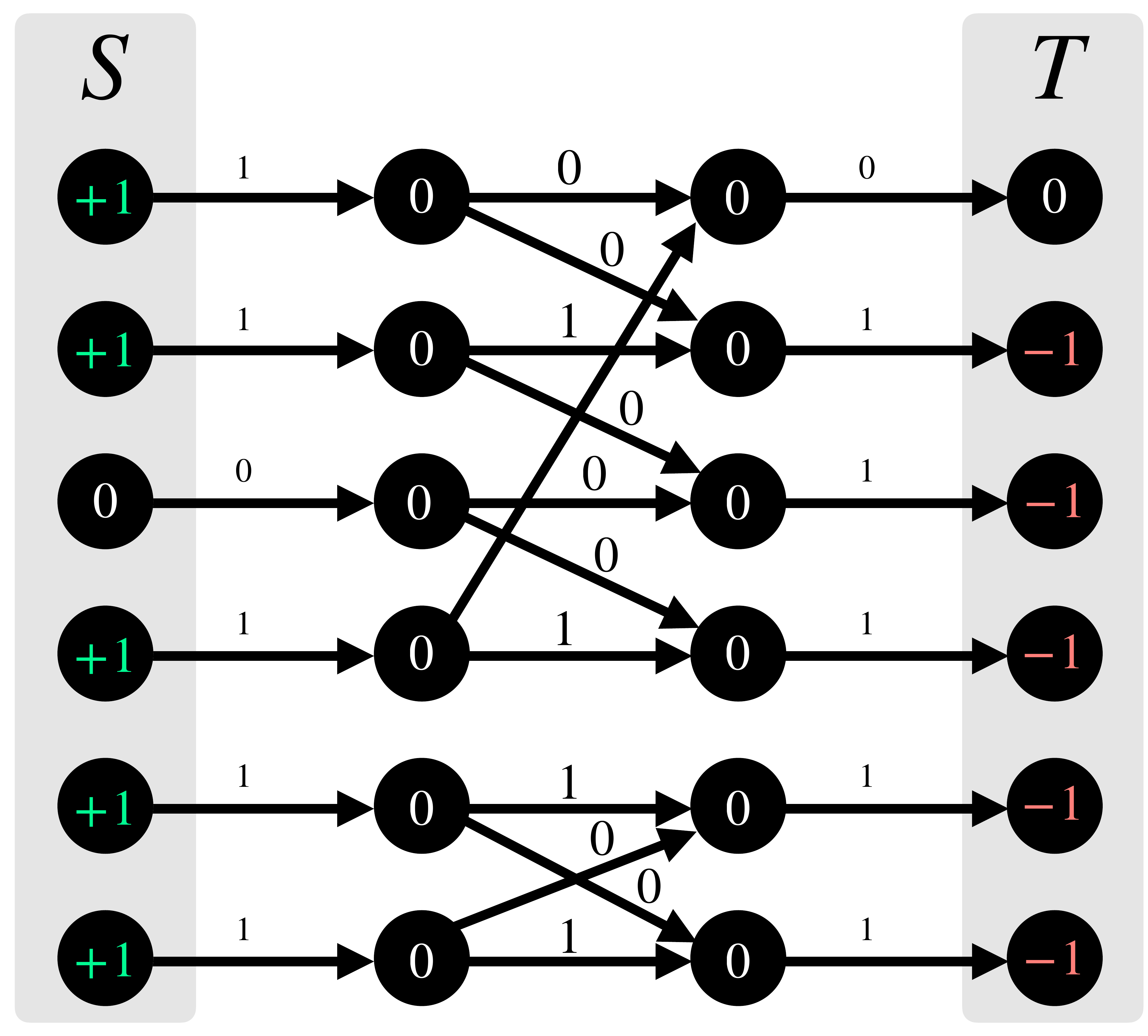}
        \caption{Returned integral $S$-$T$ flow.}\label{sfig:flowRound6}
    \end{subfigure}
    \caption{An example of our flow rounding algorithm on digraph $D$ with unit capacities. \ref{sfig:flowRound1} gives the input flow where arcs are labelled with their flow and vertices are labelled with their deficit. \ref{sfig:flowRound2} gives $D^{(2)}$, the graph induced by all arcs with flow value $.5$. \ref{sfig:flowRound3} gives our oriented near Eulerian partition of $D^{(2)}$ (in blue). \ref{sfig:flowRound4} shows how we update our flow based on the near Eulerian partition. \ref{sfig:flowRound5} gives the result of this flow update; notice that some vertices not in $S$ and $T$ have non-zero deficit. \ref{sfig:flowRound6} gives the $S$-$T$ subflow we return where only vertices in $S$ and $T$ have non-zero deficit.}\label{fig:flowRound}
\end{figure}

\begin{algorithm}
    \caption{Deterministic Flow Rounding}
    \label{alg:detFlow}
    \begin{algorithmic}[0] 
            \State \textbf{Input:} $h$-layer DAG $D$, $S$-$T$ flow $f = \sum_{i=0} f^{(i)}$ where $(f^{(i)})_a \in \{0, 2^{\log(\U_{\max}) - i}\}$ for every $a$, $i$.
            \State \textbf{Output:} integral $S$-$T$ flow $\hat{f}$.
            \State $\hat{f} \gets \sum_{i=0}^{k}f^{(i)}$ for $k = \Theta(\log n + \log (\U_{\max}))$.\Comment{Truncate lower order bits of input flow}
            \For{$i = k, \ldots, \log(\U_{\max})$}
                \State Let $\hat{f} = \sum_j \hat{f}^{(j)}$ be the bitwise flow decomposition of $\hat{f}$ (defined in \Cref{sec:notation}) and let $D^{(i)}$ be \\\qquad\qquad the undirected graph induced by the support of $f^{(i)}$.
                \State Compute an oriented $(1-\eps')$-near Eulerian partition $\mcH$ of $D^{(i)}$ (using \Cref{lem:nearEuler} with $\epsilon' = 0$ \\\qquad\qquad for the parallel algorithm and $\epsilon' = \Theta\left(\frac{\eps}{h \cdot \log n}\right)$ for the CONGEST algorithm).
                \State $\hat{f} \gets \hat{f}_\mcH^{(i)} + \sum_{j < i}  \hat{f}^{(j)}$ (as defined in \Cref{dfn:flowUpdate}). \Comment{Turn flow along $\mcH$}
            \EndFor
            \State Let $\hat{f}$ be an $S$-$T$ subflow of $\hat{f}$ (compute using \Cref{lem:maxSubflow}).
            \State \Return $\hat{f}$.
    \end{algorithmic}
\end{algorithm}


We conclude that the above rounding algorithm rounds with negligible loss in the value.
\flowRound*
\begin{proof}
We use \Cref{alg:detFlow}.

We first argue that the above algorithm returns an integral flow. Notice that by the fact that we initialize $\hat{f}$ to $\sum_{i=0}^k$ it follows that for $j > k$ on every $a$ we have $\hat{f}^{(j)}_a = 0$ just before the first iteration of our algorithm. Thus, to argue that the returned flow is integral it suffices to argue that if $\hat{f}^{(j)}$ is the $j$th bit flow of $\hat{f}$ just after the $i$th iteration then for $j \leq i$ we have $f^{(j)}_a = 0$ for every $a$. However, notice that, by \Cref{lem:flowRoundLem}, after we update $\hat{f}$ each $\hat{f}^{(i)}_a$ value is either doubled or set to $0$, meaning that $\hat{f}^{(i)}_a = 0$ after this update.

Next, we argue that $\st(\hat{f}) \geq (1-\eps) \cdot \st(f)$. By \Cref{lem:maxSubflow} it suffices to argue that just before we compute our  $S$-$T$ subflow of $\hat{f}$ we have $\deficit(\hat{f}) \leq \eps \cdot \st(f)$. We may set the constant in $k = \Theta(\log n + \log(\U_{\max}))$ to be appropriately large so that when we initialize $\hat{f}$ we reduce the flow value on each arc by at most $\frac{1}{\poly(n)}$. It follows that at this point $\deficit(\hat{f}) \leq \frac{2}{\poly(n)} \sum_a f_a$. Similarly, by \Cref{lem:flowRoundLem} in the $i$th iteration of our algorithm we increase the deficit of $\hat{f}$ by at most $2\epsilon' \sum_a \hat{f}^{(i)}_a \leq 2\epsilon' \sum_a f_a$.

For our parallel algorithm, since we have $\epsilon' = 0$, it immediately then follows that $\deficit(\hat{f}) \leq \frac{2}{\poly(n)} \sum_a f_a \leq \eps \cdot \st(f)$ by our assumption that $\eps = \Omega(\frac{1}{\poly(n)})$. For our CONGEST algorithm we choose $\eps' = \Theta(\frac{\eps}{h \log n})$ for some appropriately small constant. Since we have $\Theta(\log n)$ iterations it follows that after all of our iterations (but before we compute an $S$-$T$ subflow) it holds that $\deficit(\hat{f}) \leq \frac{\eps}{h} \cdot \sum_a f_a \leq \eps \cdot \st(f)$ where the last inequality follows from the fact that our flow is $h$-length.


Lastly, we argue that the algorithm achieves the stated running times. The above algorithm runs for $k = \Theta(\log n)$ iterations. The computation in each iteration is dominated by computing a $(1-\eps')$-near Eulerian partition. For our parallel algorithm, computing each $(1-\eps')$-near Eulerian partition takes time at most $\tilde{O}(1)$ with $m$ processors by \Cref{lem:nearEuler}. For our CONGEST algorithm computing each $(1-\eps')$-near Eulerian partition takes time at most $\tilde{O}(\frac{1}{\eps^5} \cdot h^5 \cdot (\rho_{CC})^{10})$ by \Cref{lem:nearEuler}. Lastly, we must compute an $S$-$T$ subflow of $\hat{f}$ which by \Cref{lem:maxSubflow} takes $O(h)$ parallel time with $m$ processors or $\tilde{O}(h)$ CONGEST time.
\end{proof}

\subsection{Deterministic Blocking Integral Flows}
Having shown that the iterated path count flow is near-optimal and fractional but that we can efficiently round fractional flows to be integral, we conclude with our algorithm to compute a blocking integral flow by repeatedly rounding iterated path count flows.

\detMax*
\begin{proof}
We repeatedly compute the iterated path count flow, round it to be integral, reduce capacities appropriately and repeat. We will return flow $f$ initialized to $0$ on all arcs.

Specifically, we repeat the following $\tilde{\Theta}(h)$ times. Apply \Cref{lem:iterPathCount} to compute a $\tilde{\Omega}(1/h)$-approximate (possibly fractional) flow $\tilde{f}$. Next, apply \Cref{lem:detRounding} with $\eps = .5$ to round this to an integral flow $\hat{f}$ where $\st(\hat{f}) \geq \frac{1}{2}\st(\tilde{f})$. Next, we update $f$ to $f + \hat{f}$ and for each arc $a$ we reduce $\U_a$ by $\hat{f}_a$. 

After each time we iterate the above $\tilde{\Theta}(h)$ times we must reduce the value of the optimal solution by at least a multiplicative $\frac{1}{2}$ since otherwise $f$ would be a flow with value greater than the max $S$-$T$ flow in the graph at the beginning of these iterations. Since the optimal solution is at most $m \cdot \U_{\max}$, it follows that we need only iterate the above $\tilde{\Theta}(h)$ times until the value of the optimal $S$-$T$ flow is $0$ which is to say that $f$ is a blocking flow.

By \Cref{lem:iterPathCount} and \Cref{lem:detRounding} each of the above iterations takes parallel time $\tilde{O}(h^2)$ with $m$ processors and CONGEST time $\tilde{O}(h^5 \cdot (\rho_{CC})^{10})$, giving the stated running times.
\end{proof}

\section{Sparse Decompositions of Acyclic Flows}\label{sec:sparseDecomp}

In this section we show that any flow $\hat{f}$ whose support induces an $h$-layer DAG can be decomposed sparsely into an $h$-length flow $f$. In particular, it can be decomposed into an $h$-length flow that sends flow along at most $m$ paths. We will use this result to sparsify our lightest path blockers and, by extension, the flows that we compute.

The basic idea of our algorithm is as follows. Given an $h$-layer DAG with sources $S$ and sinks $T$, we will sweep from $S$ to $T$ layer-by-layer and greedily build the support of $f$. Specifically, while considering a vertex $v$ in a particular layer we will have inductively constructed some number of paths from $S$ to $v$ each with some associated $f$ flow value. What we would like to do is simply forward the flow of each of these paths along arcs in $\delta^+(v)$. However, it may not be possible to do this because the amount of flow supported on each of these arcs by $\hat{f}$ might be much smaller than the $f$ flow of any one of the paths into $v$ sends. In this case we appropriately duplicate paths into $v$ to more finely divide up the flow that $f$ sends into $v$ so that this flow can actually be forwarded along arcs of $\delta^+(v)$. The challenge then becomes to argue that we do not need to duplicate paths too many times; we can bound the number of times we must duplicate a path in this way and therefore the size of the support of $f$ by uniquely charging each duplication to the moment the $\hat{f}$ flow is fully decomposed along some arc.

We formalize this algorithm in \Cref{alg:sparseDecomp}; recall that for an $h$-length flow $f : \mcP_h(S,T) \to \mathbb{R}_+$ we let $f(a) = \sum_{a \ni P}f_P$ and for a (non-length-constrained) flow $\hat{f} : A \to \mathbb{R}_+$ we let $\hat{f}_a$ be the flow value along $a$.

\begin{algorithm}
    \caption{Sparse $h$-Length Flow Decomposition}
    \label{alg:sparseDecomp}
    \begin{algorithmic}[0] 
            \State \textbf{Input:} $h$-layer $S$-$T$ digraph $D = (V = V_1 \sqcup V_2 \sqcup \ldots \sqcup V_{h+1}, A)$ and $S$-$T$ flow $\hat{f}$ where $\supp(\hat{f}) = A$.
            \State \textbf{Output:} An $h$-length $S$-$T$ flow $f$.
            \State Initialize $\mcP$: for each $a \in \delta^+(S)$ add to $\mcP$ the path $(a)$.
            \State Initialize $f$: $f_{P} \gets \hat{f}_a$ if $P = (a)$ for some $a \in \delta^+(S)$ and $f_P \gets 0$ otherwise.
            \State Initialize $i = 2$.
            \While{$i \leq h+1$}:
                \If{$\exists P \in \mcP$ with an endpoint in $V_i$}
                    \State Let $v \in V_i$ be an endpoint of $P$ and let $a$ be any arc in $\delta^+(v)$ with $\hat{f}_a - f(a) > 0$.
                    \State Let $x_a := \min(f_P, \hat{f}_a - f(a))$ be the flow of $P$ that can be sent along $a$.
                    \State Add $P \oplus a$ to $\mcP$ and set $f_{P \oplus a} \gets x_a$ and $f_P \gets f_P - x_a$.
                    \If{$f_P = 0$}
                        \State Remove $P$ from $\mcP$.
                    \EndIf
                \Else
                    \State $i \gets i+1$.
                \EndIf
            \EndWhile
            \State \Return $f$.
    \end{algorithmic}
\end{algorithm}

The following gives the formal properties of \Cref{alg:sparseDecomp}; note that in the following $|\supp(f)|$ is the cardinality of a collection of paths.
\begin{restatable}[Sparse Decomposition of Acyclic Flows]{thm}{sparseDecomp}\label{thm:sparseDecomp}
Given $h$-layer digraph $D=(V,A)$ with unit lengths and $h \geq 1$, source and sink vertices $S, T \subseteq V$ and an $S$-$T$ flow $\hat{f}$ where $A = \supp(\hat{f})$, \Cref{alg:sparseDecomp} computes an $h$-length $S$-$T$ flow $f$ where:
\begin{itemize}
    \item \textbf{$f$ Decomposes $\hat{f}$:} $f(a) = \hat{f}_a$ for every $a \in A$;
    \item \textbf{$f$ is Sparse:} $|\supp(f)| \leq m$;
\end{itemize}
and can be implemented in deterministic parallel time $\tilde{O}(h)$ with $m$ processors.
\end{restatable}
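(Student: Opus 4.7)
The plan is to handle the three claims in turn: correctness of the decomposition and the sparsity bound both follow from a layer-by-layer invariant and a potential-function argument, while I expect the parallel implementation to be the main obstacle, since \Cref{alg:sparseDecomp} describes the work one path--arc extension at a time. Starting with the decomposition $f(a) = \hat{f}_a$, I would maintain by induction on $i$ the invariant that just before the outer loop increments away from value $i$: (a) $f(a) = \hat{f}_a$ for every arc $a$ whose tail lies in some $V_j$ with $j \leq i$, and (b) every path still in $\mcP$ has $f_P > 0$ and its endpoint lies in some $V_{j'}$ with $j' > i$. The initialization covers the base case $i=1$. For the inductive step, flow conservation of $\hat{f}$ at any non-source non-sink vertex $v \in V_{i+1}$ gives $\sum_{a \in \delta^-(v)} \hat{f}_a = \sum_{a \in \delta^+(v)} \hat{f}_a$; combined with (a) applied to the in-arcs of $v$, the total $f_P$ carried by paths ending at $v$ equals the total out-capacity $\sum_{a \in \delta^+(v)} \hat{f}_a$, so the rule $x_a = \min(f_P, \hat{f}_a - f(a))$ exactly saturates every out-arc of $v$ and drains every path ending at $v$ before the loop moves on.

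For sparsity, I would track the potential
\begin{equation*}
\Phi \;:=\; \bigl|\{P \in \mcP : f_P > 0\}\bigr| \;+\; \bigl|\{a \in A : \hat{f}_a - f(a) > 0\}\bigr|.
\end{equation*}
Using $\supp(\hat{f}) = A$, the initialization yields $\Phi_0 = |\delta^+(S)| + (m - |\delta^+(S)|) = m$. Each iteration adds a single new path $P \oplus a$ with $f_{P \oplus a} > 0$ (contributing $+1$ to $\Phi$), and since $x_a = \min(f_P, \hat{f}_a - f(a))$ is attained by at least one of its arguments, at least one of the events ``$P$ is removed'' or ``$a$ becomes newly saturated'' must occur, contributing at most $-1$. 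Hence $\Phi$ is non-increasing. At termination the decomposition invariant with $i = h$ shows every arc of $A$ is saturated (every arc has tail in $V_1 \cup \cdots \cup V_h$), so $\Phi_{\text{end}} = |\supp(f)|$ and therefore $|\supp(f)| \leq m$.

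For the parallel implementation the plan is to batch the inner loop by layer: process layers $V_2, \ldots, V_h$ sequentially, and within each layer handle all vertices $v \in V_i$ in parallel. At a fixed $v$, let $P_1, \ldots, P_k$ be the paths currently ending at $v$ with flows $f_{P_1}, \ldots, f_{P_k}$ and let $a_1, \ldots, a_l$ enumerate $\delta^+(v)$ with residuals $c_{a_1}, \ldots, c_{a_l}$; by the invariant these two vectors have the same total. Compute prefix sums $F_j = \sum_{j' \leq j} f_{P_{j'}}$ and $G_m = \sum_{m' \leq m} c_{a_{m'}}$ in $O(\log n)$ depth and $O(k+l)$ work, merge the two sorted sequences of breakpoints along $[0, F_k]$, and for each maximal sub-interval $(\alpha, \beta]$ lying inside a cell $(F_{j-1}, F_j] \cap (G_{m-1}, G_m]$ emit a single new path $P_j \oplus a_m$ with flow value $\beta - \alpha$. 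This reproduces one valid sequentialization of the greedy rule and so inherits its correctness and the sparsity bound. Summed over vertices and the $h$ layers, the total work is $O(mh)$ (bounded by the total length of paths in $\supp(f)$ plus $m$) and the depth is $O(h \log n)$, which executes in $\tilde{O}(h)$ time on $m$ processors.
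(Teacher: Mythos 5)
Your proposal is correct and follows essentially the same approach as the paper: the same layer-by-layer invariant (using flow conservation of $\hat{f}$ to show the greedy rule drains all incoming paths and saturates all out-arcs at each interior vertex), and the same batched parallel implementation via prefix sums over the per-vertex path-flow and residual-capacity breakpoints, merged and matched in $\tilde{O}(1)$ depth per layer. The one place you deviate is the sparsity argument, where you use the potential $\Phi = |\{P : f_P > 0\}| + |\{a : \hat{f}_a - f(a) > 0\}|$ rather than the paper's direct charging of each $|\mcP|$-increase to a newly-saturated arc; these are two phrasings of the same accounting, and your potential-function version is, if anything, a cleaner way to make the bookkeeping explicit.
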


\begin{proof}
    Let $S = V_1 \sqcup V_2 \sqcup \ldots \sqcup V_{h+1} = T$ be the $h+1$ layers of $D$. Likewise, let $V_{\leq i} := \bigcup_{j \leq i} V_j$ and let $V_{> i} := \bigcup_{j > i} V_j$ be symmetric. A standard argument by induction and the flow conservation of $f$ shows that after the $i$th iteration we have:
    \begin{enumerate}
        \item Every path in $\mcP$ has one endpoint in $S$ and one endpoint in $V_{>i}$;\label{eq:1}
        \item For every $a$ with a source in $V_{\leq i}$ we have $\hat{f}(a) = \hat{f}_a$.\label{eq:2}
    \end{enumerate}
    Observe that every $S$ to $T$ path has length at most $h$ and so by \ref{eq:1} we know that $f$ is indeed an $h$-length $S$ to $T$ flow. Furthermore by \ref{eq:2} we have that $f(a) = \hat{f}_a$ for every $a \in A$.
    
    To see $|\supp(f)| \leq m$ consider one iteration $i$ of our while loop. It suffices to argue that at the end of this iteration we have that $|\mcP|$ is at most the number of arcs with a source in $V_{\leq i}$. Observe that as long as $f_P \neq x_a$ (before updating $f_P$) we have that $|\mcP|$ does not grow and so it suffices to argue that the number of iterations of our while loop where $f_P = x_a$ is at most this number of arcs. To this end, observe that when $f_P = x_a$ this means that at the end of this loop $\hat{f}_a - f(a) = 0$ and so $a$ will never again be considered by our while loop. Thus, we can map this increase in the size of $\mcP$ to $a$ where overall this mapping is injective.
    
    Lastly, it remains to argue that our algorithm can be implemented in parallel with the appropriate runtime. We will argue that one iteration of one loop of the while loop can be implemented in $\tilde{O}(1)$ parallel time with $m$ processors, giving the claim. Fix an iteration $i$ and a $v \in V_i$. We initailize $\mcP' \gets \emptyset$.
    
    The main challenge here is as follows. There might be an arc of $\delta^+(v)$ that must have $\Omega(m)$ paths' flow forwarded along it; this suggests we cannot have a single processor responsible for each arc of $\delta^+(v)$ as they would have to forward too many paths. Symmetrically, there might be a path of $\mcP$ which must be duplicated and have its flow split among $\Omega(m)$ different arcs of $\delta^+(v)$; this suggests we cannot have a single processor responsible for each path of $\mcP$. In the former case we would rather have a processor for each path and in the latter case we would like to have a processor for each arc in $\delta^+(v)$. Thus, the main trick we use here is to essentially have two types of processors: each processor is either responsible for a path $P \in \mcP$ \emph{or} an arc in $\delta^+(v)$.
    
    More formally, we do the following for each $v \in V_i$ in parallel. Let $P_1, P_2, \ldots$ be an arbitrary ordering of paths in $\mcP$ with $v$ as an endpoint and let 
    \begin{align*}
        p_j^- := \sum_{x \leq j} f_{P_x}
    \end{align*}
    be the $j$th prefix sum according to this ordering of $f$ flow. Let $a_1, a_2, \ldots$ be an arbitrary ordering of $\delta^+(v)$ and let 
    \begin{align*}
        p_j^+ := \sum_{x \leq j} \hat{f}_{a_x}
    \end{align*}
    be the $j$th prefix sum according to this ordering of $\hat{f}$ flow. Lastly, let $p_1, p_2, \ldots$ be  $\{p_j^-\}_j \bigcup \{p_j^+\}_j$ sorted in ascending order. 
    
    Then for each $j$ we forward $p_{j} - p_{j-1}$ flow from an appropriate path along an appropriate arc. Namely, by construction there must be some $k$ and some $l$ such that 
    \begin{align*}
        [p_{j-1},p_j] \subseteq [p_{k-1}^-, p_{k}^-] \text{ and }[p_{j-1}, p_{j}] \subseteq [p_{l-1}^+, p_{l}^+]
    \end{align*}
    and so we continue $p_{j} - p_{j-1}$ of $P_k$'s flow along arc $a_l$: add $P_k \oplus a_l$ to $\mcP'$ and set $f_{P \oplus a_l} \gets p_{j} - p_{j-1}$. At the end of this iteration we update by setting $f_{P} = 0$ for each $P \in \mcP$ and then set $\mcP \gets \mcP'$. See \Cref{fig:matchFlow} for an illustration.
    
    \begin{figure}
    \centering
    \begin{subfigure}[b]{0.27\textwidth}
        \centering
        \includegraphics[width=\textwidth,trim=200mm 40mm 200mm 0mm, clip]{./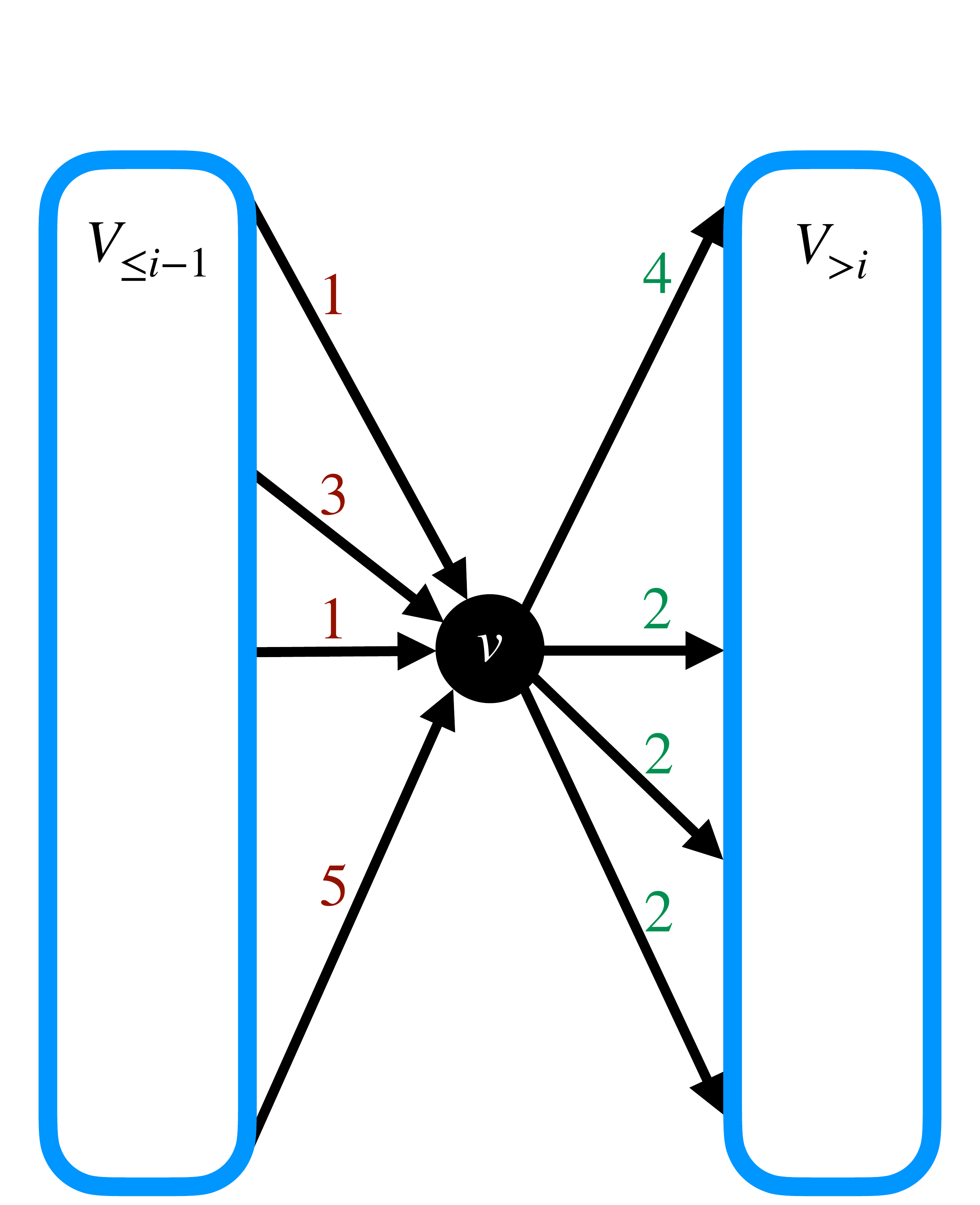}
        \caption{In $f$ and out $\hat{f}$ flow.}\label{sfig:match1}
    \end{subfigure}    \hfill
    \begin{subfigure}[b]{0.32\textwidth}
        \centering
        \includegraphics[width=\textwidth,trim=200mm 40mm 120mm 0mm, clip]{./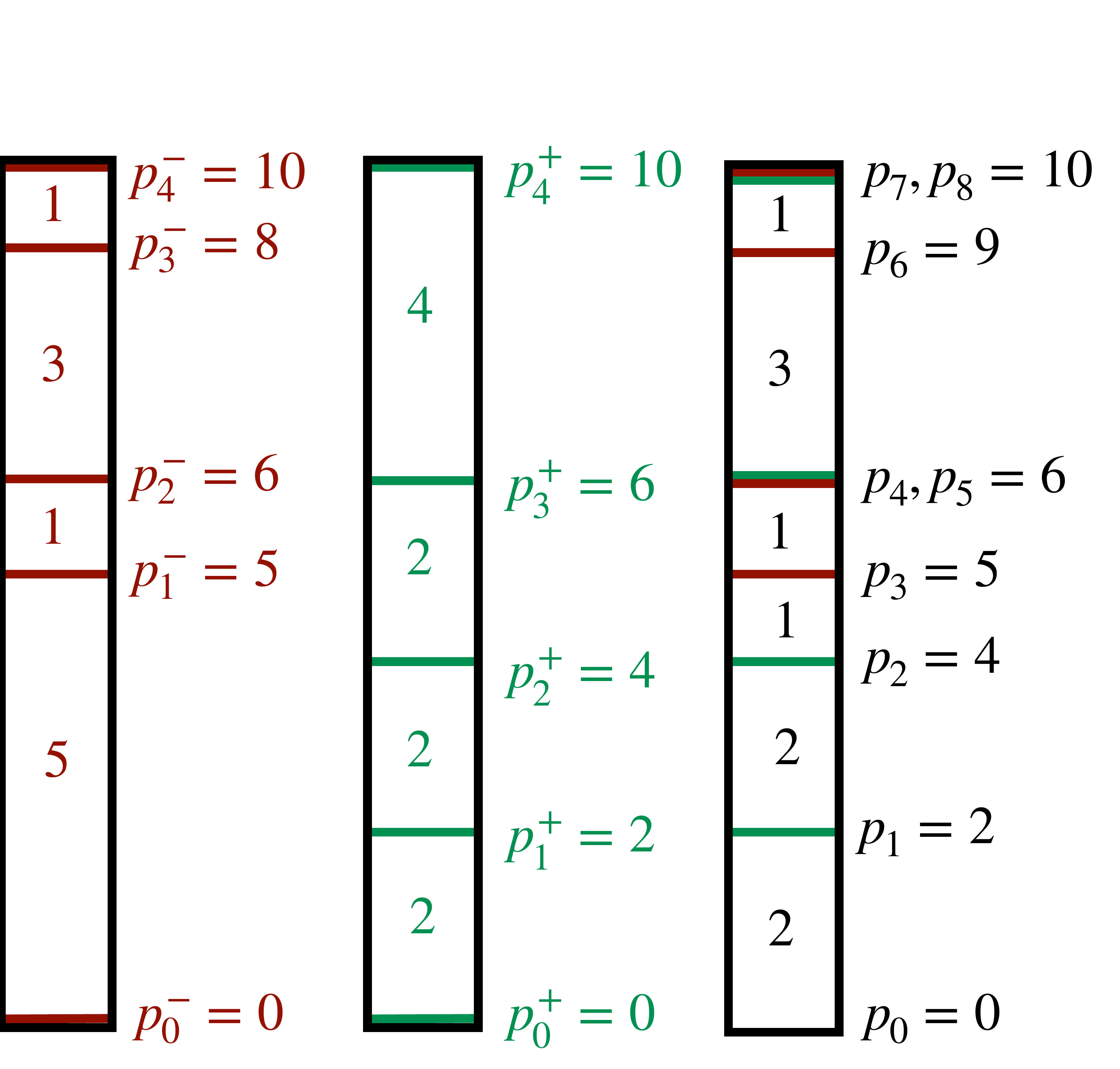}
        \caption{Prefix sums considered.}\label{sfig:match2}
    \end{subfigure}   \hfill
    \begin{subfigure}[b]{0.27\textwidth}
        \centering
        \includegraphics[width=\textwidth,trim=200mm 40mm 200mm 0mm, clip]{./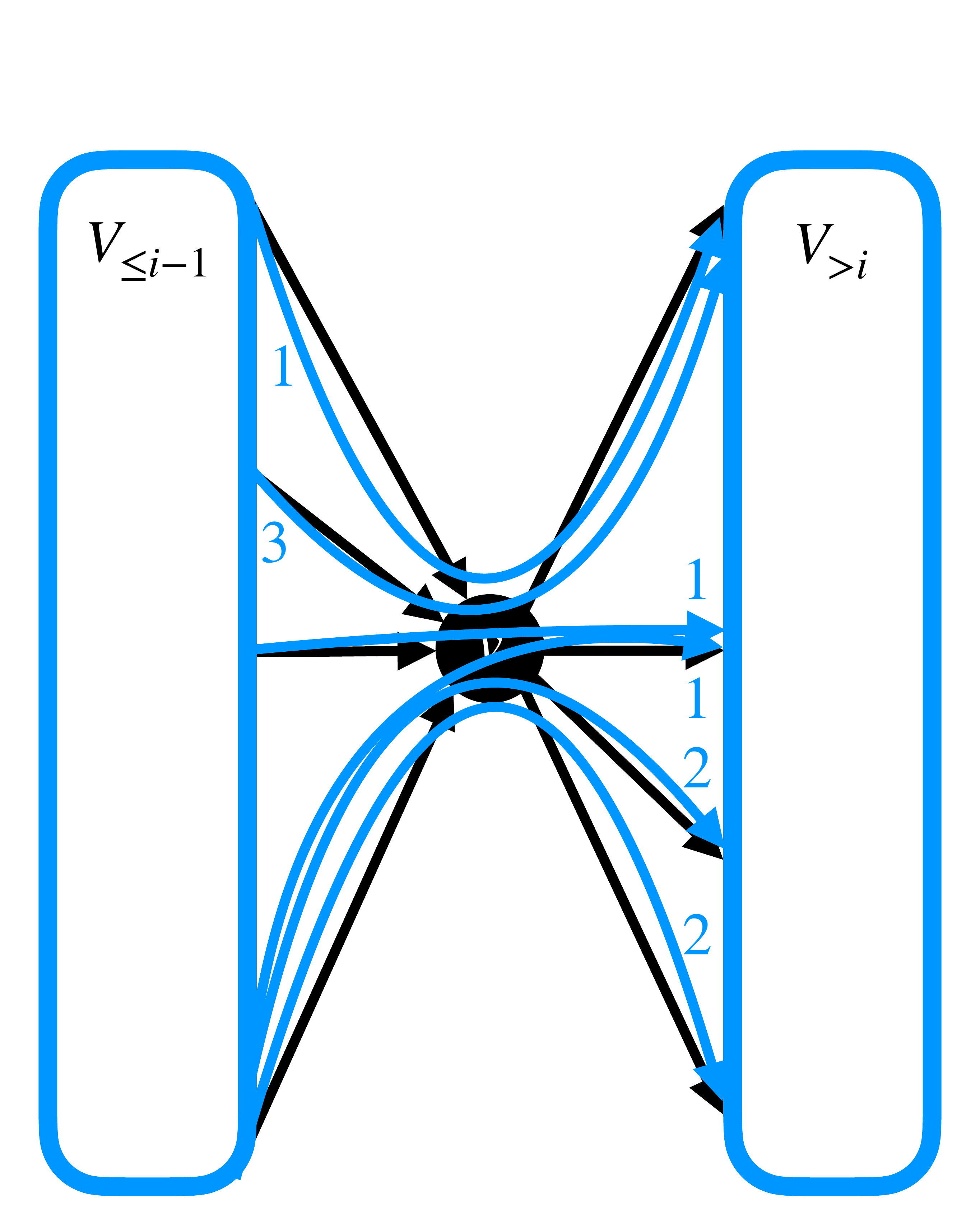}
        \caption{Forwarded flow.}\label{sfig:match3}
    \end{subfigure}
    \caption{An illustration of how we forward flow from paths incoming into $v$ along outgoing arcs. Here, we imagine that $P_1, P_2, \ldots$ are labelled bottom-up with their $f$ value in red and arcs $a_1, a_2, \ldots$ are labelled in bottom-up order by their $\hat{f}$ value in green.}\label{fig:matchFlow}
\end{figure}
    
    It is easy to verify that this is equivalent to one iteration of our algorithm's while loop where we always choose the $P$ which is earliest in the order $P_1, P_2, P_3, \ldots$ and always choose the $a \in \delta^+(v)$ satisfying $\hat{f}_a - f(a)$ that is earliest in the ordering $a_1, a_2, \ldots$. 
    
    Furthermore, we have that the total number of intervals of the form $p_j$ that we must consider across all vertices in layer $V_i$ is at most $m$ since the size of $\mcP$ is at most $|\delta^+(V_{\leq i-1})|$, each path in $\mcP$ contributes at most one such interval and each arc in $\delta^+(V_i)$ also contributes at most one such interval. Thus, computing $\{p_j^-\}_j$, $\{p_j^+\}_j$ and $\{p_j\}_j$ reduce to prefix sums and sorting at most $m$ numbers at most $O(1)$ times which are well-known to be doable in $\tilde{O}(1)$ parallel time with $m$ processors \cite{cole1989faster,blelloch2021introduction}. Lastly, observe that for a given $j$, identifying the $P_k$ and $a_l$ to extend $P_k$ along and  send $p_j - p_{j-1}$ flow across reduces to a binary search of $\{p_j^+\}_j$ and $\{p_j^-\}_j$.
\end{proof}

\section{$h$-Length $(1+\epsilon)$-Lightest Path Blockers}\label{sec:shortestPathBlockers}

In this section we show how to efficiently compute our main subroutine for our multiplicative-weights-type algorithm; what we call $h$-length $(1+\epsilon)$-lightest path blockers. We will use the blocking integral flow primitives of \Cref{sec:randMaxPaths} for our randomized algorithm and that of \Cref{sec:detMaxPaths} for our deterministic algorithm. Likewise, we will use the sparsification procedure from \Cref{sec:sparseDecomp} (as formalized by \Cref{thm:sparseDecomp}) to guarantee that these $(1+\epsilon)$-lightest path blockers are sparse.

Our $(1+\epsilon)$-lightest path blockers are defined below. In what follows, $\lambda$ is intuitively a guess of $d_{\w}^{(h)}(S,T)$. Also, in the following recall that if $f$ is an $h$-length flow then $f$ assigns flow values to entire paths (rather than just arcs as a non-length-constrained flow does). As such the support of $f$, $\supp(f)$, is a collection of paths. However, as mentioned earlier, for an $h$-length flow $f$, we will use $f(a)$ as shorthand for $\sum_{P \ni a} f_P$.

\begin{definition}[$h$-length $(1+\epsilon)$-Lightest Path Blockers]\label{dfn:alphaPathBlocker}
Let $G = (V,E)$ be a graph with lengths $\l$, weights $\w$ and capacities $\U$. Fix $\epsilon >0$, $h \geq 1$, $\lambda \leq d^{(h)}_{\w}(S,T)$ and $S, T \subseteq V$. Let $f$ be an $h$-length integral $S$-$T$ flow. $f$ is an $h$-length $(1+\epsilon)$-lightest path blocker if:
\begin{enumerate}
    \item \textbf{Near-Lightest:} $P \in \supp(f)$ has weight at most $(1 + 2\epsilon) \cdot \lambda$;
    \item \textbf{Near-Lightest Path Blocking:} If $P' \in \mcP_h(S,T)$ has weight at most $(1+\epsilon) \cdot \lambda$ then there is some $a \in P'$ where $f(a) = \U_a$.
\end{enumerate}
\end{definition}
Our main theorem in this section shows how to compute $(1 + \epsilon)$-lightest path blockers efficiently.

\begin{restatable}{thm}{pathBlockerAlg}
\label{thm:pathBlockerAlg}
Given digraph $D = (V,A)$ with lengths $\l$, weights $\w$, capacities $\U$, length constraint $h \geq 1$,  $\eps >0$, $S, T \subseteq V$ and $\lambda \leq d_{\w}^{(h)}(S,T)$, one can compute $h$-length $(1+\epsilon)$-lightest path blocker $f$ in:
\begin{enumerate}
    \item Deterministic parallel time $\tilde{O}(\frac{1}{\eps^5} \cdot h^{16})$ with $m$ processors where $|\supp(f)| \leq \tilde{O}(\frac{h^9}{\eps^3} \cdot |A|)$;
    \item Randomized CONGEST time $\tilde{O}(\frac{1}{\eps^5} \cdot h^{16})$ with high probability;
    \item Deterministic CONGEST time $\tilde{O}\left(\frac{1}{\eps^5} \cdot h^{16}  + \frac{1}{\eps^3} \cdot h^{15} \cdot (\rho_{CC})^{10} \right)$.
\end{enumerate}
\end{restatable}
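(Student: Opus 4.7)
The plan is to reduce computing an $h$-length $(1+\eps)$-lightest path blocker to computing a blocking integral flow in a layered DAG that approximately captures all $h$-length near-lightest paths in $D$. I would proceed in four steps.

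\textbf{Step 1: Length-weight expanded DAG.} I would construct a DAG $D^{(h,\lambda)}$ whose vertices are copies $v_{i,j}$ of each $v \in V$, where $i \in \{0, 1, \ldots, h\}$ tracks length used so far and $j \in \{0, 1, \ldots, K\}$ indexes a weight bucket with $K = O(h/\eps)$. Weights would be discretized in multiples of roughly $\eps \lambda / h$, so any path of weight at most $(1+\eps)\lambda$ fits into this bucketing with only multiplicative $(1+\eps)$-distortion of its total weight. Each arc $a = (u,v) \in A$ with length $\ell_a$ and discretized weight $\hat{w}_a$ gives rise to arcs $a_{i,j} = (u_{i,j}, v_{i+\ell_a,\, j+\hat{w}_a})$, each with capacity $U_a$. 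A super-source connects into $s_{0,0}$ for every $s \in S$, and a super-sink is reached from $t_{i,j}$ for $t \in T$, $i \leq h$, $j \leq K$. The resulting DAG has $O(h \cdot K) = O(h^2/\eps)$ layers and $O(|A| \cdot h^2/\eps)$ arcs. Crucially, every $h$-length path in $D$ of weight at most $(1+\eps)\lambda$ lifts to a super-source to super-sink path in $D^{(h,\lambda)}$, and every such path in $D^{(h,\lambda)}$ projects back to an $h$-length path in $D$ of weight at most $(1+2\eps)\lambda$.

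\textbf{Step 2: Blocking integral flow in $D^{(h,\lambda)}$.} I would apply Lemma \ref{lem:detMax} (deterministic) or Lemma \ref{lem:randMax} (randomized, high probability) to compute a blocking integral flow $\hat{f}$ from super-source to super-sink in $D^{(h,\lambda)}$. The running times are polynomial in the number of layers $O(h^2/\eps)$. In CONGEST, each vertex $v \in V$ simulates its $O(h^2/\eps)$ copies $v_{i,j}$, incurring a further $O(h^2/\eps)$ per-round overhead that accounts for the polynomial factors in the claimed running time.

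\textbf{Step 3: Project and decongest.} Define $f(a) := \sum_{i,j} \hat{f}(a_{i,j})$ on each $a \in A$. The blocking property of $\hat{f}$ in $D^{(h,\lambda)}$ guarantees that any $h$-length path in $D$ of weight at most $(1+\eps)\lambda$ contains some arc $a$ with $f(a) \geq U_a$. However, $f$ may exceed $U_a$ by a factor of $K' = O(h^2/\eps)$ (the number of copies of $a$). I would decongest by extracting an integral subflow $f' \leq f$ in $D$ that (i) respects capacities $U_a$ on every arc and (ii) retains a saturated arc on every $h$-length path of weight at most $(1+\eps)\lambda$. This can be done by again applying the flow rounding machinery (Lemma \ref{lem:detRounding} and Lemma \ref{lem:maxSubflow}) to the $h$-length analog of the projected flow, losing only a constant in the blocking parameter that is absorbed into the $\eps$-slack (explaining the $(1+2\eps)$ in the definition of lightest path blockers). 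Finally, I would apply Theorem \ref{thm:sparseDecomp} to produce a sparse $h$-length path decomposition with $|\supp(f')| \leq \tilde{O}(|A| \cdot \poly(h/\eps))$.

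The main obstacle is the decongesting step in Step 3: extracting an integral subflow that simultaneously respects capacities in $D$ and preserves a saturated arc along every near-lightest path. The key insight that makes this tractable is that the length-weight expansion forces paths in the projected flow to be nearly arc-disjoint (because two paths sharing $a$ with different length/weight signatures route through distinct copies $a_{i,j}$), so the amount of redundant overflow on any single arc is controlled, allowing decongestion without destroying the blocking property. Getting the polynomial dependence on $h$ and $1/\eps$ right in the decongesting also requires careful bookkeeping when translating between the expanded DAG and the original digraph.
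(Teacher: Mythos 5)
Your construction of the length-weight expanded DAG and the use of blocking integral flows from Lemmas~\ref{lem:detMax} and~\ref{lem:randMax} match the paper's approach, but there is a genuine gap in Step~3 that the paper handles with an entirely different mechanism: iteration. Your plan computes a single blocking flow in $D^{(h,\lambda)}$, projects it to $D$, decongests once, and claims the result is still blocking (``losing only a constant in the blocking parameter that is absorbed into the $\eps$-slack''). This does not work. Decongesting the projected flow to respect capacities in $D$ (via Lemma~\ref{lem:decongest}) can only guarantee that you retain a $\Theta(\eps^2/h^6)$ fraction of the flow \emph{value}; it gives no guarantee that any particular arc stays saturated, so the blocking property is destroyed outright, not degraded by a constant. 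Moreover, the $(1+2\eps)$ slack in Definition~\ref{dfn:alphaPathBlocker} applies to the \emph{weight} of the paths in $\supp(f)$, not to the saturation requirement: the blocking condition requires $f(a)=\U_a$ exactly on some arc of every near-lightest path, and there is no ``approximate blocking'' allowance to absorb the loss into. Also note that Lemmas~\ref{lem:detRounding} and~\ref{lem:maxSubflow} are not the right tools here: the former rounds a capacity-respecting fractional flow to an integral one, and the latter extracts an $S$-$T$ subflow with small deficit, but neither addresses an overcongested integral flow.

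The paper resolves this by wrapping your three steps in a loop of $\tilde{\Theta}(h^7/\eps^2)$ iterations (Algorithm~\ref{alg:pathBlocker}): after each blocking-flow-plus-decongestion step it decrements the residual capacities $\hat{\U}$ in $D^{(h,\lambda)}$, accumulates $f \leftarrow f + \hat{f}$, and recomputes. The blocking property is proved not per-iteration but globally: since each decongested flow is an $\Omega(\eps^2/h^7)$-approximation of the current optimum in $D^{(h,\lambda)}$, after $\tilde{\Theta}(h^7/\eps^2)$ iterations the residual max flow in $D^{(h,\lambda)}$ is driven to zero, which forces $\hat{\U}_a = 0$ on some copy along every $V'(S)$-$V'(T)$ path, and hence $f(a)=\U_a$ on some arc of every near-lightest path in $D$ via the forward projection property. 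That iteration (and the accompanying charging argument for the number of repetitions) is the missing piece in your proposal, and it is also what produces the $h^{16}/\eps^5$ running time and the $h^9/\eps^3$ support bound in the theorem. A smaller point: $D^{(h,\lambda)}$ is an $h$-layer DAG (each arc strictly advances the length coordinate), not an $O(h^2/\eps)$-layer DAG; the $O(h^2/\eps)$ factor enters as the number of copies per vertex (hence simulation overhead and congestion bound $\alpha$), not as the depth.
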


The main idea for computing these objects is to reduce finding them to computing a series of blocking flows in a carefully constructed ``length-weight expanded DAG.'' In particular, by rounding arc weights up to multiples of $\frac{\eps}{h}\lambda$ we can essentially discretize the space of weights. Since each path has at most $h$ arcs, it follows that this increases the weight of a path by at most only $\lambda\eps$. This discretization allows us to construct DAGs from which we may extract blocking flows which we then project back into $D$ and then ``decongest'' so as to ensure they are feasible flows.


\subsection{Length-Weight Expanded DAG}\label{sec:lengthExpanded}

We now formally define the length-weight-expanded DAGs on which we compute blocking integral flows. Roughly, the length-weight expanded graph will create many copies of vertices and organize them into a grid where moving further down in rows corresponds to increases in length and moving further along in columns corresponds to increases in weight.

Let $D = (V, A)$ be a digraph with specified source and sink vertices $S$ and $T$, lengths $\l$, weights $\w$, capacities $\U$ and a parameter $\lambda \leq d_{\w}^{(h)}(S,T)$. We let $\tilde{\w}$ be $\w$ but rounded up to the nearest multiple of $\frac{\eps}{h} \cdot \lambda$. That is, for each $a \in A$ we have
\begin{align*}
    \tilde{\w}_a = \frac{\eps \cdot  \lambda}{h }  \cdot \left\lceil \w_a \cdot \frac{h}{\eps \cdot \lambda} \right\rceil
\end{align*}
See \Cref{fig:gridPre} for an illustration of $\tilde{w}$.

\begin{figure}
    \centering
    \begin{subfigure}[b]{0.45\textwidth}
        \centering
        \includegraphics[width=\textwidth,trim=0mm 0mm 0mm 0mm, clip]{./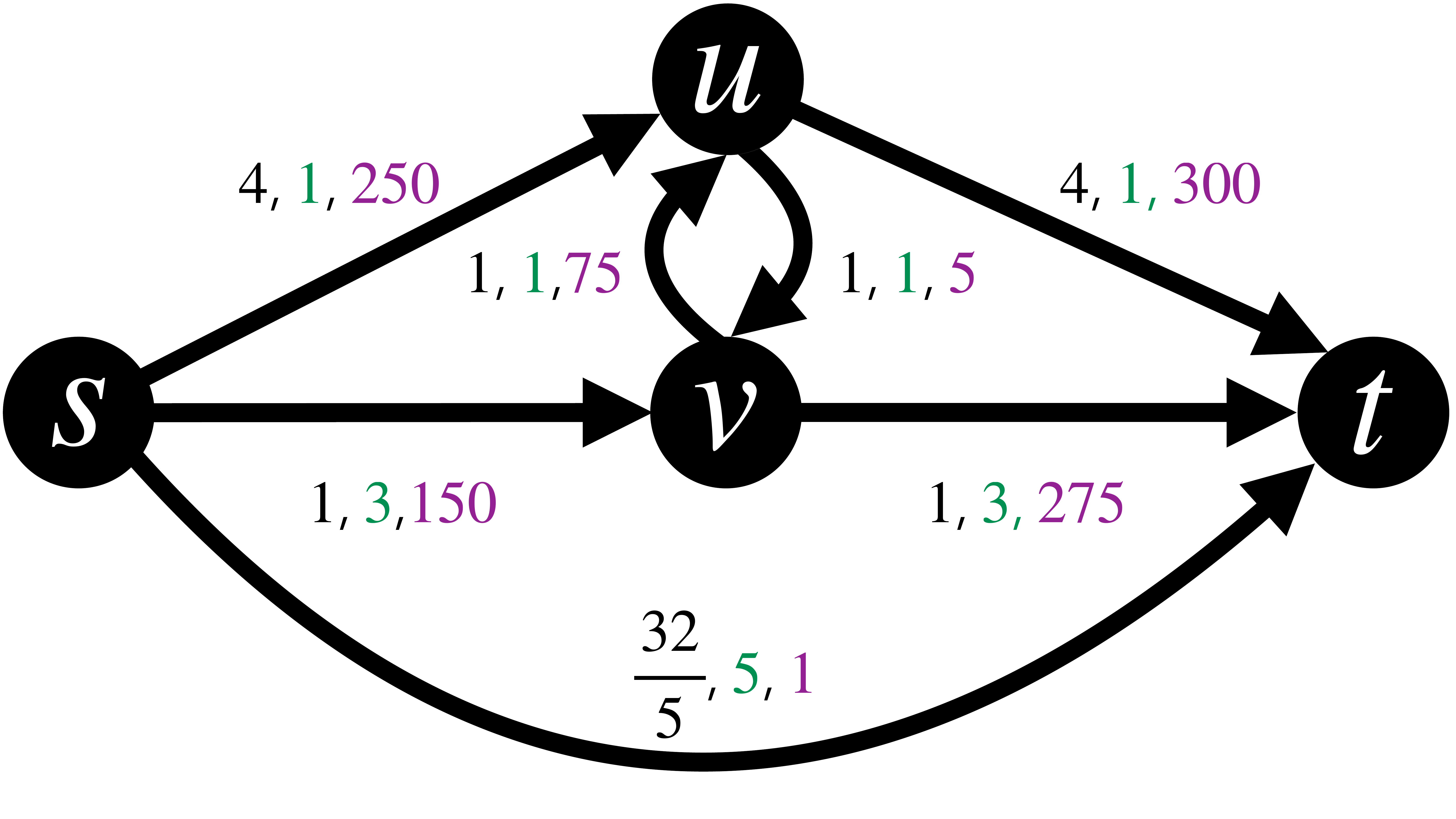}
        \caption{Digraph $D$ with $w$.}\label{sfig:gridPre1}
    \end{subfigure}    \hfill
    \begin{subfigure}[b]{0.45\textwidth}
        \centering
        \includegraphics[width=\textwidth,trim=0mm 0mm 0mm 0mm, clip]{./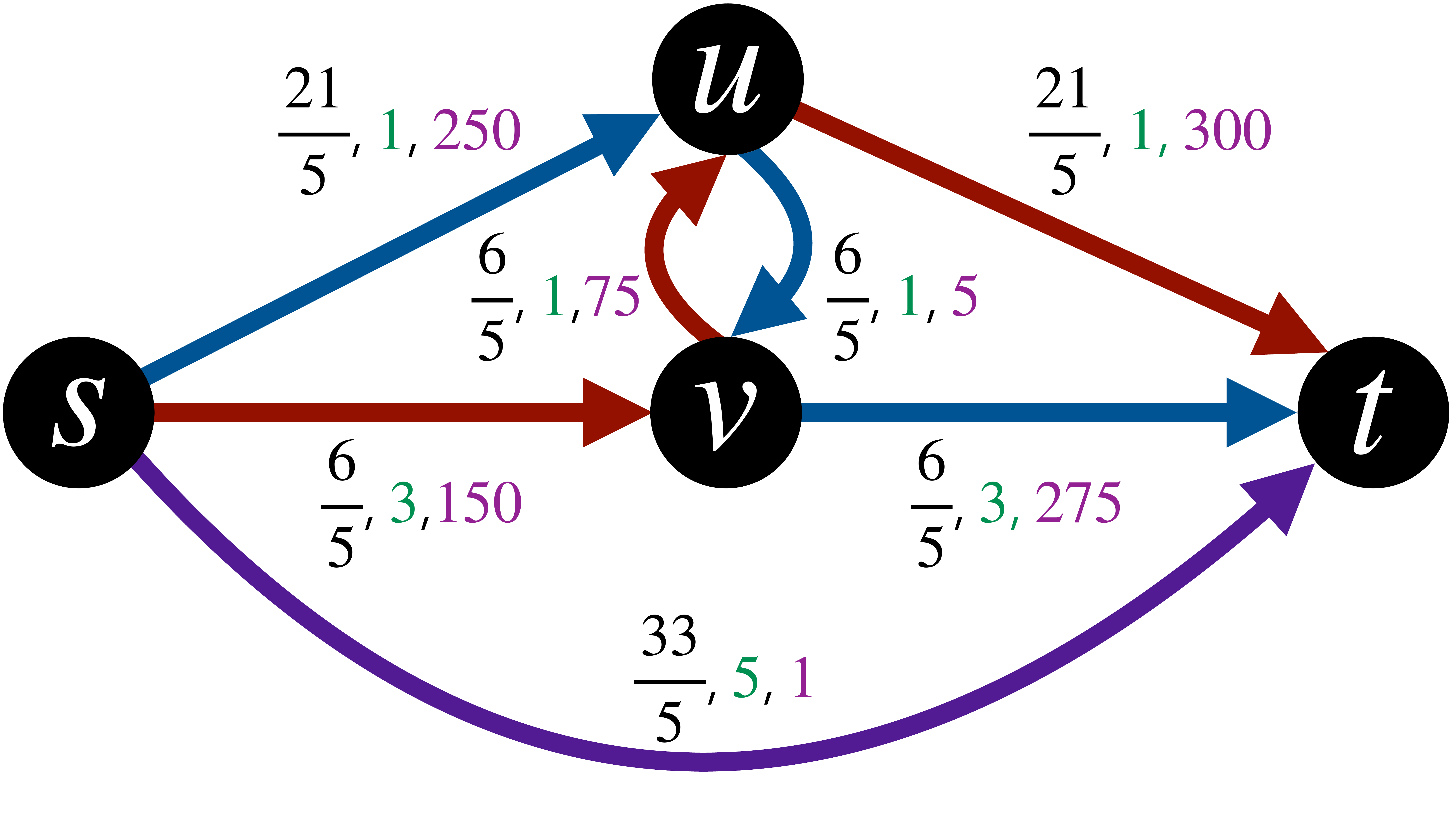}
        \caption{Digraph $D$ with $\tilde{w}$.}\label{sfig:gridPre2}
    \end{subfigure}   
    \caption{An illustration of how we round weights according to $\eps$, $\lambda$ and $h$. Here $h = 5$, $\lambda = 6$ and $\eps = .5$ and so we round to multiples of $\frac{\eps}{h} \lambda = \frac{3}{5}$. \ref{sfig:gridPre1} gives our input DAG where each arc is labeled with its weight, then length, then capacity and \ref{sfig:gridPre2} gives the weights after we round them where we color each lightest $5$-length path from $s$ to $t$.}\label{fig:gridPre}
\end{figure}

Next, we define the length-weight expanded DAG $D^{(h,\lambda)}  = (V', A')$ with capacities $\U'$. See \Cref{fig:grid} for an illustration of $D^{(h,\lambda)}$.
\begin{figure}
    \centering
        \includegraphics[width=\textwidth,trim=0mm 80mm 0mm 0mm, clip]{./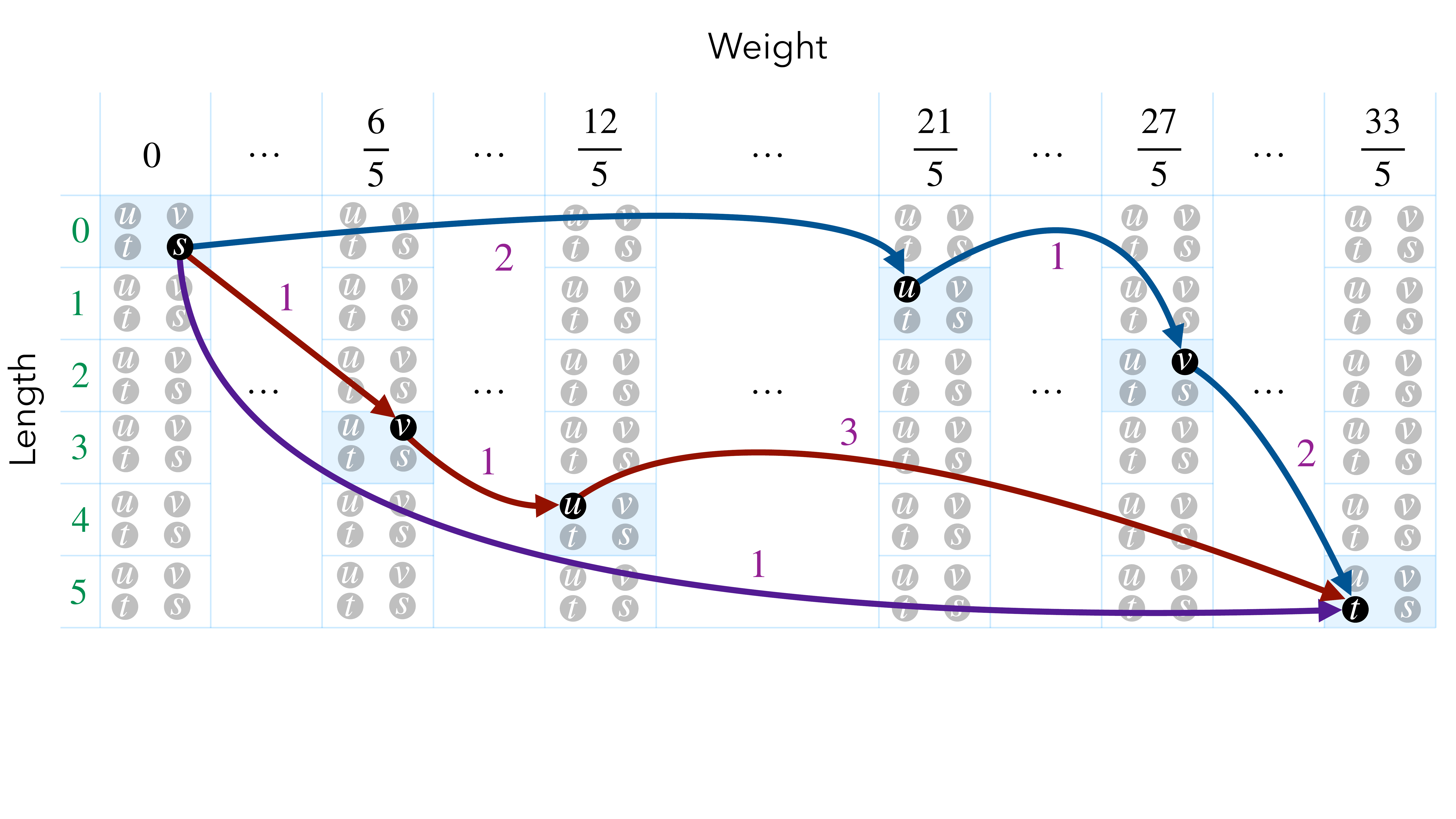}
    \caption{An illustration of $D^{(h,\lambda)}$ where $D$ and the parameters we use are given by \Cref{fig:gridPre}, $\kappa = 100$, $S = \{s\}$ and $T = \{t\}$. Copy $v(x,h')$ of vertex $v$ is in the $(x,h')$th grid cell and each arc is labelled with its capacity. We only illustrate the subgraph between $s(0, 0)$ and $t(\frac{33}{5}, 5)$. Each path is colored according to the path in \Cref{sfig:gridPre2} of which it is a copy. Notice that the graph induced by all $5$-length lightest paths in \Cref{sfig:gridPre2} is not a DAG but $D^{(h,\lambda)}$ is.}\label{fig:grid}
\end{figure}

\begin{itemize}
    \item \textbf{Vertices:} We construct the vertices $V'$ as follows. For each each vertex $v \in V$ we make $\kappa = h \cdot (\frac{h}{\epsilon} + 2h)$ copies of $v$, where we let $v(x,h')$ be one of these vertices; here $x$ ranges over all multiples of $\frac{\eps}{h} \cdot \lambda$ up to $(1+2\eps) \cdot \lambda$ (of which there are $\frac{h}{\epsilon} + 2h$) and $h' \leq h$. Intuitively, there will be a path from a copy of a vertex $s\in S$ to a vertex $v(x,h')$ iff there is a path with exactly $x$ weight (according to $\tilde{\w}$) and $h'$-length from $s$ to $v$ in $D$.
    \item \textbf{Arcs:} We construct the arcs $A'$ as follows. For each each vertex $v \not \in T$ and each $a = (v,u) \in \delta^+(v)$ we do the following. For each copy $v(x,h')$ of $v$ we add an arc to $A'$ from $v(x,h')$ to $u(x + \tilde{w}_a, h' + \l_a)$ provided $u(x + \tilde{w}_a, h' + \l_a)$ is actually a vertex in $V'$. That is, provided $x + \tilde{w}_a \leq (1 + 2 \eps) \cdot \lambda$ and $h' + \l_a \leq h$. We say that the arc $v(x,h')$ to $u(x + \tilde{w}_a, h' + \l_a)$ in $A'$ is a copy of arc $a$. For a given $a \in A$, we let $A'(a)$ give all copies of arc $a$ that are in $A'$.
    \item \textbf{Capacities:} We construct the capacities $\U'$ as follows. For low capacity arcs we set the capacity of all copies to $1$; for high capacity arcs we evenly distribute the capacity across all copies. Specifically, suppose arc $a' \in A'$ is a copy of arc $a \in A$. Then if $0 < \U_{uv} \leq \kappa$ we let $\U'_{a'} = 1$. Otherwise, we let $\U'_{a'}$ have capacity $\lfloor \U_{a} / \kappa \rfloor$.  As we will see later in our proofs, this rebalancing of flows will guarantee that when we ``project'' a flow from $D^{(h,\lambda)}$ to $D$, the only arcs that end up overcapacitated in $D$ are arcs with capacity at most $\kappa$. This, in turn, will allow us to argue that the conflict graph on which we compute an MIS is small.
\end{itemize}
We let $V'(S)$ and $V'(T)$ be all copies of $S$ and $T$ in $D^{(h,\lambda)}$ and we delete any vertex from $D^{(h,\lambda)}$ that does not lie on a $V'(S)$ to $V'(T)$ path. This will guarantee that the resulting digraph is indeed a $V'(S)$-$V'(T)$ DAG.

Lastly, we clarify what it means for a path to have its copy in $D^{(h,\lambda)}$. Suppose $P = (a_1, a_2, \ldots)$ is a path in $D$ that visits vertices $s=v_1, v_2, \ldots, v_k = t$ in $D$ and let $\tilde{\w}_i$ and $\l_i$ be the weight (according to $\tilde{\w}$) and length of $P$ summed up to the $i$th vertex it visits. Then we let $a_i'$ be the arc from $v_i(\tilde{\w}_i, \l_i)$ to $v_{i+1}(\tilde{\w}_i + \tilde{\w}_{a_i}, \l_i + \l_{a_i})$. If $a_i'$ is in $D^{(h,\lambda)}$ for every $i$ then we call $P' = (a_1', a_2', \ldots)$ the copy of $P$ in $D^{(h,\lambda)}$. Observe that a path in $D$ has at most one copy in $D^{(h,\lambda)}$ but every path in $D^{(h,\lambda)}$ is the copy of some path in $D$.

The following summarizes the key properties of our length-weight expanded digraphs.
\begin{lemma}\label{lem:lengthWeightExpanded}
Let $D = (V,A)$ be a digraph with weights $\w$, $S, T \subseteq V$ and some $\lambda \leq d^{(h)}_{\w}(S,T)$. Let $D^{(h,\lambda)} = (V', A)$ be the length-weight expanded digraph of $D$. Then $D^{(h,\lambda)}$ is an $h$-layer $V'(S)$-$V'(T)$ DAG which satisfies
\begin{enumerate}
    \item \textbf{Few  Arc Copies}: $|A'(a)| \leq O(\frac{h^2}{\eps})$.
    \item \textbf{Forward Path Projection}: For each path $P$ in $D$ from $S$ to $T$ of weight at most $\lambda \cdot(1+\epsilon)$ according to $\w$, there is a copy of $P$ in $D^{(h,\lambda)}$ from $V'(S)$ to $V'(T)$.
    \item \textbf{Backward Path Projection}: If $P'$ is a $V'(S)$ to $V'(T)$ path in $D^{(h,\lambda)}$ then it is a copy of a path with weight at most $(1 + 2\epsilon) \cdot \lambda$ according to $\w$.
    \item \textbf{Optimal Flow Preserving}: the maximum $V'(S)$-$V'(T)$ flow on $D^{(h,\lambda)}$ has value at least $\Omega(\frac{\eps}{h^2})$ times that of the maximum $h$-length flow on $D$.
\end{enumerate}
\end{lemma}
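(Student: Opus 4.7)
The plan is to verify each clause in turn, handling (1)--(4) as structural bookkeeping and spending the bulk of the effort on (5). For (1), every arc of $A'$ goes from some $v(x,h')$ to $u(x+\tilde{\w}_a,h'+\l_a)$, and because $\l_a\geq 1$, the $h'$-coordinate strictly increases along each arc; organizing $V'$ by $h'$ therefore gives a valid $h$-layer decomposition, and since we explicitly delete any vertex not lying on some $V'(S)$-to-$V'(T)$ path, the result is an $h$-layer $V'(S)$-$V'(T)$ DAG. For (2), a copy of an arc $a=(v,u)$ is uniquely determined by the starting coordinate $(x,h')$ of its tail, which has at most $(h/\eps+2h)\cdot h = O(h^2/\eps)$ possibilities.

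For (3), consider a path $P=(a_1,\dots,a_k)$ from $S$ to $T$ in $D$ with $\w(P)\leq (1+\eps)\lambda$ and $\l(P)\leq h$, and define the prefix sums $\tilde{\w}_i := \sum_{j\leq i}\tilde{\w}_{a_j}$ and $\l_i := \sum_{j\leq i}\l_{a_j}$. Because rounding inflates each $\w_a$ by less than $\eps\lambda/h$ and $P$ uses at most $h$ arcs, $\tilde{\w}_i \leq \w(P)+\eps\lambda \leq (1+2\eps)\lambda$, while $\l_i\leq h$. Hence every intermediate vertex $v_i(\tilde{\w}_i,\l_i)$ lies in $V'$, so the copy of $P$ exists in $D^{(h,\lambda)}$. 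For (4), by construction any $V'(S)$-to-$V'(T)$ path in $D^{(h,\lambda)}$ ends at some $t(x,h')\in V'(T)$ with $x\leq(1+2\eps)\lambda$, so the underlying path $P$ satisfies $\w(P)\leq\tilde{\w}(P)=x\leq(1+2\eps)\lambda$ since $\tilde{\w}\geq\w$ pointwise.

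For (5), I would exhibit a feasible flow in $D^{(h,\lambda)}$ by routing a scaled copy of an optimal flow from $D$. Let $f^*$ be a maximum $h$-length flow in $D$ supported on paths of weight at most $(1+\eps)\lambda$ (the relevant regime for the lightest-path-blocker application). Decompose $f^*$ into paths; in $D^{(h,\lambda)}$, route $\tfrac{1}{2\kappa}\, f^*_P$ along the copy of each such $P$, which exists by (3). On any copy arc $a'$ of $a$, the routed flow is at most $f^*(a)/(2\kappa)\leq \U_a/(2\kappa)$. In the high-capacity case $\U_a\geq\kappa$, the copy capacity $\lfloor \U_a/\kappa\rfloor\geq \U_a/(2\kappa)$ accommodates this; in the low-capacity case $\U_a<\kappa$, even the \emph{total} flow over \emph{all} copies of $a$ is at most $\U_a/(2\kappa)<1/2$, so no single copy exceeds its capacity of $1$. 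Hence the routed flow is feasible and has value $|f^*|/(2\kappa)=\Omega(\eps/h^2)\cdot|f^*|$ since $\kappa=O(h^2/\eps)$. The main obstacle is precisely this case split on $\U_a$ versus $\kappa$ --- it is why the construction assigns copy capacities of $1$ versus $\lfloor \U_a/\kappa\rfloor$ depending on whether $\U_a<\kappa$ or $\U_a\geq\kappa$, rather than naively replicating $\U_a$ on every copy.
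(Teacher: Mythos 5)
Your argument mirrors the paper's for the DAG structure and items (1)--(4): the same layer-by-$h'$ decomposition, the same prefix-sum reasoning for forward projection, and the same coordinate-range bound for backward projection. For the optimal-flow-preserving clause you also take the paper's route---scale an optimal flow by $\Theta(1/\kappa)$ and route it along path copies---but you fill in the capacity case-split ($\U_a \le \kappa$ with copy capacity $1$ versus $\U_a > \kappa$ with copy capacity $\lfloor \U_a/\kappa\rfloor$) that the paper dismisses as ``by how we chose the capacities,'' which is a genuine improvement in rigor. One thing to surface more prominently: your proof of that clause restricts $f^*$ to flows supported on paths of $\w$-weight at most $(1+\eps)\lambda$, while the lemma as written compares against the \emph{unrestricted} maximum $h$-length flow on $D$. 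You are right that some such restriction is needed---a flow path whose $\tilde{\w}$-weight exceeds $(1+2\eps)\lambda$ has no copy in $D^{(h,\lambda)}$, so its contribution simply cannot be routed---which means the clause as stated in the lemma overclaims, and the paper's one-line proof makes the same implicit assumption that every $P \in \supp(f^*)$ has a copy. Since that clause does not appear to be invoked later (the proof of \Cref{thm:pathBlockerAlg} uses only the DAG structure, forward/backward projection, and arc-copy count), this is harmless downstream, but you should state the restriction as an explicit hypothesis in the proof rather than in a parenthetical aside. A small additional nit: in your item (4) you write $\tilde{\w}(P) = x$, which assumes the path begins at a copy with $x$-coordinate $0$; the inequality $\tilde{\w}(P) \le x$ is what you actually want and suffices.
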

\begin{proof}
First, we argue that $D^{(h,\lambda)}$ is indeed a DAG. To see this, observe that if $a'$ is an arc in $A'$ from $v(x_1,h_1)$ to $v(x_2, h_2)$ then by construction it must be the case that $h_1 < h_2$. It follows that $D^{(h,\lambda)}$ has no cycles and has at most $h$ layers. Next, observe that $D^{(h,\lambda)}$ is a $V'(S)$-$V'(T)$ DAG by construction since we deleted any any vertices that do not lie on a path between $V'(S)$ and $V'(T)$. Additionally, we have $|A'(a)| \leq O(\frac{h^2}{\eps})$ for every $a$ since each vertex has at most $O(\frac{h^2}{\eps})$-many copies.

Next, consider an arc $a$ with weight $\w_a$ according to $\w$ and weight $\tilde{\w}_a$ according to $\tilde{\w}$. Observe that since we are rounding arc weights up we have $\w_a \leq \tilde{\w}_a$. Combining this with the fact that we are rounding to multiples of $\frac{\epsilon}{h} \cdot \lambda$ we have that 
\begin{align}\label{eq:lengthInc}
\w_a \leq \tilde{\w}_a \leq \w_a + \frac{\epsilon}{h} \cdot \lambda
\end{align}

We next argue our forward path projection property. That is, for each $h$-length path $P$ in $D$ from $S$ to $T$ of weight at most $\lambda \cdot(1+\epsilon)$ according to $\w$, there is a copy of $P$ in $D^{(h,\lambda)}$ from $V'(S)$ to $V'(T)$. First, observe that $P$ consists of at most $h$-many edges and so applying \Cref{eq:lengthInc}, its weight according to $\tilde{\w}$ is at most $\lambda \cdot(1+\epsilon) + h \cdot \frac{\epsilon}{h} \cdot \lambda = \lambda \cdot(1+2\epsilon)$. Next, observe that since $P$ has weight at most $\lambda \cdot(1+2\epsilon)$ according to $\tilde{w}$, it must have a copy in $D^{(h,\lambda)}$. In particular, suppose $P = (a_1, a_2, \ldots)$ visits vertices $s=v_1, v_2, \ldots, v_k = t$ in $D$ and let $\tilde{\w}_i$ and $\l_i$ be the weight (according to $\tilde{\w}$) and length of $P$ up to the $i$th vertex it visits. Then $D^{(h,\lambda)}$ always includes the arc from $v_i(\tilde{\w}_i, \l_i)$ to $v_{i+1}(\tilde{\w}_i + \tilde{\w}_{a_i}, \l_i + \l_{a_i})$ since $\tilde{w}_i \leq (1 + 2 \epsilon) \lambda$ and  $\l_i \leq h$ for every $i$.

We argue our backward path projection property. That is, if $P'$ is a $V'(S)$ to $V'(T)$ path in $D^{(h,\lambda)}$ then it is a copy of a path with weight at most $(1 + 2\epsilon) \cdot \lambda$ in $D$ according to $\w$. Since each arc in $D^{(h,\lambda)}$ is a copy of some arc in $D$, we know that $P'$ is a copy of \emph{some} path in $D$. Moreover, since we let $v(x,h')$ only range over $x \in \frac{h}{\epsilon} + 2h$, it follows that the weight of this path according to $\tilde{w}$ is at most $(1 + 2 \eps) \cdot \lambda$. However, since weights according to $\tilde{w}$ are only larger than those according to $w$ by \Cref{eq:lengthInc}, it follows that $P'$ is a copy of a path with weight at most $(1 + 2 \eps) \cdot \lambda$ according to $\w$.

Lastly, to see the optimal flow preserving property notice that if $f^*$ is the optimal $h$-length flow on $D$ then by how chose the capacities of $D^{(h,\lambda)}$ we have that the flow that gives path $P'$ in $D^{(h,\lambda)}$ value $\Theta(\frac{\eps}{h^2}) \cdot f^*_P$ where $P'$ is the copy of $P$ is indeed a feasible flow in $D^{(h,\lambda)}$.
\end{proof}

\subsection{Decongesting Flows}
Part of what makes using our length-weight expanded digraph non-trivial is that when we compute a flow in it and then project this flow back into $D$, the projected flow might not respect capacities. However, this flow will only violate capacities to a bounded extent and so in this section we show how to resolve such flows at a bounded loss in the value of the flow. In the below we say that an $h$-length flow $\hat{f}$ is $\alpha$-congested if any arc $a$ where $\hat{f}(a) > \U_a$ satisfies $\hat{f}(a) \leq \alpha$.
\begin{lemma}\label{lem:decongest}
There is a deterministic algorithm that, given a digraph $D = (V,A)$ with capacities $\U$, a length constraint $h \geq 1$, $S, T \subseteq V$ and an $h$-length $\alpha$-congested $S$-$T$ integral flow $\hat{f}$, computes an $S$-$T$ $h$-length integral flow $f$ where $\st(f) \geq \frac{1}{\alpha^2 h^2} \cdot \st(\hat{f})$ and $|\supp(f)| \leq |\supp(f')|$ in:
\begin{enumerate}
    \item Deterministic parallel time $\tilde{O}(\alpha^2 \cdot h)$ with $m$ processors;
    \item Deterministic CONGEST time $\tilde{O}(\alpha^3 \cdot h^3)$.
\end{enumerate}
\end{lemma}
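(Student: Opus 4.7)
The plan is to view the integral flow $\hat{f}$ as a multi-set of $\st(\hat{f})$ path copies (each path $P$ appearing $\hat{f}_P$ times) and to extract a large sub-multi-set that respects arc capacities, by reducing the problem to an instance of maximum independent set. Specifically, I would build a conflict graph $G_C$ whose vertices are the path copies of $\hat{f}$, with an edge between two copies $P, P'$ whenever they share some overcongested arc $a$ (one with $\hat{f}(a) > \U_a$). Since each path uses at most $h$ arcs and each overcongested arc supports at most $\alpha$ path copies by the $\alpha$-congestion hypothesis, the maximum degree $\Delta$ of $G_C$ is at most $h(\alpha - 1) \leq h\alpha$. Any independent set $I$ in $G_C$ then yields a valid integral $S$-$T$ sub-flow $f$ of $\hat{f}$: on each overcongested arc at most one path of $I$ passes (so $f(a) \leq 1 \leq \U_a$, using $\U_a \geq 1$, as is the case for arcs in $\supp(\hat{f})$ in our applications), and on each undercongested arc $f(a) \leq \hat{f}(a) \leq \U_a$ holds automatically.

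To bound the value of the extracted sub-flow I would invoke the Caro--Wei/Tur\'an bound: $G_C$ admits an independent set of size at least $|V(G_C)|/(\Delta + 1) \geq \st(\hat{f})/(h\alpha)$. Applying \Cref{thm:maxIS} with unit node weights returns a $\tfrac{1}{\Delta}$-approximate maximum independent set, of cardinality at least $\tfrac{1}{h\alpha} \cdot \tfrac{\st(\hat{f})}{h\alpha} = \tfrac{\st(\hat{f})}{h^2 \alpha^2}$; taking $f$ to be the union of the paths in this independent set gives the required $\st(f) \geq \tfrac{1}{\alpha^2 h^2} \st(\hat{f})$. The sparsity bound follows immediately since $f$ keeps only a subset of the paths of $\hat{f}$.

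For the parallel implementation with $m$ processors, I would construct $G_C$ explicitly: each overcongested arc $a$ contributes at most $\binom{\hat{f}(a)}{2} = O(\alpha^2)$ conflict edges for $O(m\alpha^2)$ edges in total, which is straightforward to build in $\tilde{O}(1)$ depth. A standard parallel approximate MIS on a graph of maximum degree $\Delta$ can then be computed in depth $\tilde{O}(\Delta) = \tilde{O}(h\alpha)$, yielding the claimed $\tilde{O}(\alpha^2 h)$ parallel bound after accounting for construction.

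The main obstacle is the CONGEST implementation, since $G_C$ is a virtual graph whose vertices (path copies) do not reside at single nodes of $G$ but are instead spread across the $h$ arcs of a path. The plan is to maintain state for each path copy at all of its endpoints, to use a shared overcongested arc as the communication channel between two conflicting copies, and to simulate each round of $G_C$-communication by $O(\alpha^2)$ $G$-rounds (since each arc hosts at most $O(\alpha^2)$ conflict pairs), plus an additional $O(h)$-round overhead for each path to keep its own state synchronized along its length. Combining this simulation cost with the $O(\Delta) = O(h\alpha)$-round complexity of \Cref{thm:maxIS} on $G_C$ yields the stated $\tilde{O}(\alpha^3 h^3)$ CONGEST time. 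The delicate point is managing the per-arc $O(\log n)$ bandwidth budget so that all IDs, priorities, and status bits for the $O(\alpha^2)$ conflict pairs sharing an arc can be exchanged round-by-round without further blow-up.
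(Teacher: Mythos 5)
Your proposal is the same approach as the paper's: form a conflict graph on the paths in $\supp(\hat{f})$ (the paper keeps one node per path weighted by $\hat{f}_P$, you instead duplicate into unit-weight path copies---equivalent up to the Tur\'an/Caro--Wei bound), put an edge between two paths sharing an overcongested arc so the maximum degree is at most $\alpha h$, apply the $\tfrac{1}{\Delta}$-approximate weighted MIS of \Cref{thm:maxIS}, and return the selected paths. Your CONGEST sketch is also the paper's simulation scheme (maintain path state along its length, communicate via shared overcongested arcs), though the per-$G_C$-round overhead is more precisely dilation $O(h)$ times congestion $O(\alpha^2 h)$---using the $O(\U_a)$-bits-per-round assumption---i.e.\ $O(\alpha^2 h^2)$ rather than your $O(\alpha^2 + h)$ estimate; multiplied by the $\tilde{O}(\alpha h)$ MIS rounds this yields the stated $\tilde{O}(\alpha^3 h^3)$.
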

\begin{proof}
The basic idea is to consider the conflict graph induced by our flow paths and then to compute a approximate maximum-weighted independent set among these flow paths where flow paths are weighted according to their flow value.

Specifically, construct our conflict graph $G' = (V', E')$ of $\supp(\hat{f})$ as follows. $V' = \supp(\hat{f})$ has a vertex for each path in the support of $\hat{f}$.  We say that $P_1$ and $P_2$ in $\supp(\hat{f})$ \emph{conflict} if there is some arc $a$ in both $P_1$ and $P_2$ such that $\hat{f}(a) > \U_a$. Then we add edge $\{P_1, P_2\}$ to $E'$ iff $P_1$ and $P_2$ conflict.

Observe that since each path in $\supp(\hat{f})$ consists of at most $h$ arcs and since $\hat{f}$ is $\alpha$-congested, we know that the maximum degree in $G'$ is at most $h \cdot \alpha$. 

We then apply \Cref{thm:maxIS} to $G'$ to compute a $\frac{1}{h \alpha}$-approximate maximum independent set in $G'$ in deterministic CONGEST time $\tilde{O}(h \alpha)$ with the node weight of $P \in \supp(\hat{f})$ as $\hat{f}_P$. Let $\mcI$ be this independent set and let $f = \sum_{P \in \mcI} \hat{f}_P$ be the flow corresponding to this set. We return $f$.

We trivially have $|\supp(f)| \leq |\supp(f')|$ by construction of $f$.

We next argue that $\st(f) \geq \frac{\st(\hat{f})}{\alpha^2 h^2}$
Since the total node weight in $G'$ is $\st(\hat{f})$ and the maximum degree in $G'$ is $\alpha \cdot h$, it follows that the maximum independent set in $G'$ has node weight at least $\frac{\st(\hat{f})}{\alpha h}$. Since $\mcI$ is $\frac{1}{\alpha h}$-approximate, we conclude that $f$ has $\st(f) \geq \frac{\st(\hat{f})}{\alpha^2 h^2}$.

Lastly, we argue that we achieve the claimed running times. Notice that the total number of vertices in $G'$ is at most $m \cdot \alpha$ because each congested arc $a$ where $\U_a <\hat{f}(a) \le \alpha$ is contained in at most $\alpha$ integral flow paths.  Hence, we can simulate any CONGEST algorithm in $G'$ with at most $\alpha$ overhead. \Cref{thm:maxIS} tells us that we can compute $\mcI$ in time at most $\tilde{O}(\alpha \cdot h)$ in $G'$, giving our parallel running time. 

It remains to describe how to simulate $G'$ in $D$ in CONGEST. 
We keep the following invariant: if a node $P_{1}$ in $G'$ receives a message, we make sure that all vertices $v\in P_{1}$ in $G$ receive the same message too. Because of this, any vertex $v\in P_{1}$ in $G$ can determine what $P_{1}$ as a node in $G'$ will do next. Let us assume that each message in $G'$ from $P_{1}$ to $P_{2}$ is of the form $(\texttt{msg},P_{1},P_{2})$. To simulate sending $(\texttt{msg},P_{1},P_{2})$ in $G$, a vertex $v_{1}\in P_{1}$ first forwards $(\texttt{msg},P_{1},P_{2})$ through $P_{1}$ to make sure that every node in $P_{1}$ gets this message. Let $v_{2}\in P_{1}\cap P_{2}$ be a common vertex in both $P_{1}$ and $P_{2}$. Then, $v_{2}$ forwards $(\texttt{msg},P_{1},P_{2})$ through $P_{2}$. After we are done simulating all messages sent in $G'$, our invariant is maintained. 

Now, we analyze the overhead of simulating one round of $G'$ in $G$. The dilation for simulating sending each message in $G'$ is clearly $O(h)$. Next, we analyze the congestion. Each arc $a$ is contained in at most $\max\{U_{a},\alpha\}\le\alpha U_{a}$ paths. For each such path $P$, there are at most $\alpha h$ messages needed to sent through $a$ because the maximum degree in $G'$ at most $\alpha h$. Therefore, the congestion is at most $\frac{\alpha U_{a}\cdot\alpha h}{U_{a}}=\alpha^{2}h$. Note that, here (and nowhere else in this work) we rely on the fact that we may send $O(\U_a)$ messages over an arc $a$ with capacity $\U_a$ in one round of CONGEST.

To conclude, the deterministic simulation overhead is at most dilation
times congestion which is at most $O(h)\cdot \alpha^{2}h=O(\alpha^{2}h^{2})$. Combining this simulation with the $\tilde{O}(\alpha \cdot h)$ running time of our approximate maximum independent set algorithm gives our CONGEST running time.
%
\end{proof}

\subsection{Computing $h$-Length $(1+\eps)$-Lightest Path Blockers }\label{sec:computeShortestPathBlockers}
Having described our length-weight expanded DAGs, their properties and how to decongest flows that we compute using them, we now use these primitives to build our $h$-length $(1+\epsilon)$-lightest path blockers. Again, the basic idea is to compute the length-weight expanded DAG $D^{(h,\lambda)}$, compute blocking flows in  $D^{(h,\lambda)}$, project these back into $D$, decongest the resulting flows and then repeat. \Cref{alg:pathBlocker} gives our algorithm. We prove its properties below.

\begin{algorithm}
    \caption{$(1+\epsilon)$-Lightest Path Blocker}
    \label{alg:pathBlocker}
    \begin{algorithmic}[0] 
            \State \textbf{Input:} $D = (V, A)$ with weights $\w$, lengths $\l$, capacities $\U$, $h \geq 1$, $S,T \subseteq V$, $\lambda > 0$ and  $\eps > 0$.
            \State \textbf{Output:} $h$-length $(1+\eps)$-lightest path blocker $f$.
            \State Initialize solution $f$ to be $0$ on all paths.
            \State Let $D^{(h,\lambda)} = (V', A')$ be the length-weight expanded digraph of $D$ with capacities $\hat{\U} = \U$
            \For{$\tilde{\Theta}(\frac{h^7}{\eps^2})$ repetitions}
                \State \textbf{Blocking Flows}: Let $f'$ be a blocking integral flow in $D^{(h,\lambda)}$ with capacities $\hat{U}$ (compute \\ \qquad \qquad  using \Cref{lem:randMax}  with randomness or \Cref{lem:detMax} deterministically).
                \State \textbf{Sparsify Flow}: Sparsify $f'$ so that $|\supp(f')| \leq |A'|$ (only for parallel alg.,\ use \Cref{thm:sparseDecomp})
                \State \textbf{Project Into $D$}: Let $\tilde{f}$ be the $h$-length flow that gives path $P$ value $f'_{P'}$ where $P'$ is the\\ \qquad \qquad copy of $P$ in $D^{(h,\lambda)}$.
                \State \textbf{Decongest Flow}: Let $\hat{f}$ be the result of decongesting $\tilde{f}$ with \Cref{lem:decongest}.
                \State For each copy $a' \in A'$ of $a \in A$ update capacities as $\hat{U}_{a'} = \hat{U}_{a'} - \hat{f}(a)$.
                \State Update $f = f + \hat{f}$.
            \EndFor
            \State \Return $f$.
    \end{algorithmic}
\end{algorithm}

\pathBlockerAlg*
\begin{proof}
We first argue that $f$ is a $h$-length $(1+\eps)$-lightest path blocker (\Cref{dfn:alphaPathBlocker}). $f$ is an integral $h$-length $S$-$T$ flow by construction. Moreover, the support of $f$ is near-lightest by the backward path projection property of $D^{(h, \lambda)}$, as stated in \Cref{lem:lengthWeightExpanded}. 

Also, notice that by the guarantees of \Cref{thm:sparseDecomp} and \Cref{lem:decongest} and the fact that projecting into $D$ does not increase the support size (i.e.\ $|\supp(\tilde{f})| \leq |\supp(f')|$) tells us that $|\supp(\hat{f})| \leq |A'|$ for each $\hat{f}$ (where, as a reminder, $A'$ is the arcs of $D^{(h, \lambda)}$) and so we know that
\begin{align*}
    |\supp(f)| \leq \tilde{O}\left(\frac{h^7}{\eps^2} \cdot |A'| \right)
\end{align*}
for our parallel algorithm. Applying the fact that $|A'| \leq O(\frac{h^2}{\eps}) \cdot |A|$ by \Cref{lem:lengthWeightExpanded} gives our bound on the support of $f$.

Thus, it remains to argue the near-lightest path blocking property of $f$ and, in particular that if $P \in \mcP_h(S,T)$ is a path in $D$ and $P$ has weight at most $(1 + \eps) \cdot \lambda$ according to $\w$ then there is some $a \in P$ where $f(a) = \U_a$. Towards this, observe that by the forward path projection property as stated in \Cref{lem:lengthWeightExpanded}, such a path $P$ has copy in $D^{(h, \lambda)}$. By how we construct $f$, it follows that to show $f(a) = \U_a$ for some $a$, it suffices to show that $\hat{\U}_a = 0$ by the end of our algorithm. To show that such an $a$ exists, it suffices to show that the maximum flow in $D^{(h, \lambda)}$ under the capacities $\hat{\U}$ is $0$ by the end of our algorithm. 

We do so now. Our strategy will be to show that we have implicitly computed a flow on $D^{(h,\lambda)}$ of near-optimal value and so after just a few iterations it must be the case that the optimal flow on $D^{(h,\lambda)}$ is reduced to $0$.

Consider a fixed iteration of our algorithm and let $\OPT^{(h, \lambda)}$ be the value of the maximum $V'(S)$-$V'(T)$ flow on $D^{(h,\lambda)}$. Since $f'$ is a blocking flow in $D^{(h,\lambda)}$ and $D^{(h,\lambda)}$ is an $h$-layer DAG by \Cref{lem:lengthWeightExpanded}, it follows from \Cref{lem:blockingGivesApx} that 
\begin{align}
    \st(f') \geq \frac{1}{h} \cdot \OPT^{(h, \lambda)}.\label{eq:ab}
\end{align}

Continuing, we claim that $\tilde{f}$ is an $O(\frac{h^2}{\eps})$-congested flow. In particular, any arc $a$ with capacity in $D$ greater than $O(\frac{h^2}{\eps})$ is such that the sum of its capacities across copies in $D^{(h,\lambda)}$ is at most $\hat{U}_a$. Thus, such an arc is never overcongested by $\tilde{f}$. Any arc with capacity less than $O(\frac{h^2}{\eps})$ in $D'$ has up to $O(\frac{h^2}{\eps})$ copies in $D^{(h,\lambda)}$ each of which has capacity $1$; thus, such an arc may have flow value up to $O(\frac{h^2}{\eps})$ in $\tilde{f}$.  Thus, by $\st(\tilde{f}) = \st(f')$ and  this bound on the congestedness of $\tilde{f}$, we have from \Cref{lem:decongest} that
\begin{align}
    \st(\hat{f}) &\geq \frac{\eps^2}{h^6} \cdot \st(\tilde{f}) \nonumber\\
    & = \frac{\eps^2}{h^6} \cdot \st(f').\label{eq:cd}
\end{align}

Combining \Cref{eq:ab} and \Cref{eq:cd}, we get 
\begin{align}
    \st(\hat{f}) \geq \frac{\eps^2}{h^7} \cdot \OPT^{(h, \lambda)}. \label{eq:ef}
\end{align}

Lastly, let $f''$ be $\hat{f}$ projected back into $D^{(h, \lambda)}$. That is, if arc $a'$ is a copy of arc $a$ then $f''$ assigns to $a'$ the flow value $\sum_{P \ni a} \hat{f}_P$. Observe that by construction of $\hat{f}$, we know that $f''$ is a $V'(S)$-$V'(T)$ flow in $D^{(h, \lambda)}$ of value $\st(f'') = \st(\hat{f})$. Thus, applying this and \Cref{eq:ef} we get

\begin{align*}
    \st(f'') \geq \frac{\eps^2}{h^7} \cdot \OPT^{(h, \lambda)}.
\end{align*}

Since we decrement the value of $\hat{U}_a$ by $f''_a$ in each iteration, it follows that after $\tilde{O}(\frac{h^7}{\eps^2})$ many repetitions of \Cref{alg:pathBlocker},  we must decrease the value of the optimal flow in $D^{(h, \lambda)}$ by at least a constant fraction since otherwise we would have computed a flow with value greater than that of the optimal flow. Since initially $\OPT^{(h, \lambda)} \leq \poly(n)$, we get that after $\tilde{O}(\frac{h^7}{\eps^2})$-many repetitions we have reduced the value of the optimal flow to $0$ on $D^{(h, \lambda)}$, therefore showing that $f$ satisfies the near-lightest path blocking property.

It remains to show our running times. The computation in each of our iterations is dominated by constructing the length-expanded digraph $D^{(h,\lambda)}$, computing our maximal integral flow $f^{(h)}$ in $D^{(h,\lambda)}$ and decongesting our flow.
\begin{itemize}
    \item  We can construct $D^{(h,\lambda)}$ by e.g.\ Bellman-Ford for $\tilde{O}(h)$ rounds for a total running time of $\tilde{O}(h)$ in either CONGEST or parallel. Likewise projecting flows back from $D^{(h,\lambda)}$ is trivial. 
    \item  It is is easy to simulate $D^{(h,\lambda)}$ in either CONGEST or in parallel with an overhead of $O(\frac{h^2}{\eps})$ since this is a bound on the number copies of each vertex.
    
    With randomization, by \Cref{lem:randMax} computing $f'$ takes time $\tilde{O}(h^3)$ in parallel with $m$ processors or $\tilde{O}(h^4)$ in CONGEST on $D^{(h,\lambda)}$ and so $\tilde{O}(\frac{h^5}{\eps})$ in parallel or $\tilde{O}(\frac{h^6}{\eps})$ in CONGEST on $D$. 
    
    For our deterministic algorithm, by \Cref{lem:detMax} doing so takes $\tilde{O}(h^3)$ in parallel with $m$ processors and CONGEST time $\tilde{O}(h^6 \cdot (\rho_{CC})^{10})$ on $D^{(h,\lambda)}$ and so $\tilde{O}(\frac{1}{\eps} \cdot h^5)$ parallel time on $D$ or $\tilde{O}(\frac{1}{\eps} \cdot h^8 \cdot (\rho_{CC})^{10})$ CONGEST time on $D$.
    
    \item Lastly, decongesting our flow by \Cref{lem:decongest} and the fact that $\tilde{f}$ is  $O(\frac{h^2}{\eps})$-congested takes deterministic parallel time $\tilde{O}(\frac{h^5}{\eps^2})$ and deterministic CONGEST time $\tilde{O}(\frac{h^9}{\eps^3})$.
\end{itemize}
Combining these running times with our $\tilde{O}(\frac{h^7}{\eps^2})$-many repetitions gives the stated running times.
\end{proof}

\section{Computing Length-Constrained Flows and Moving Cuts}\label{sec:MW}

Having shown how to compute an $h$-length $(1+\epsilon)$-lightest path blocker, we now use a series of these as batches to which we apply multiplicative-weights-type updates. The result is our algorithm which returns both a length-constrained flow and a (nearly) certifying moving cut.


\begin{algorithm}
    \caption{Length-Constrained Flows and Moving Cuts}
    \label{alg:mw}
    \begin{algorithmic}[0] 
            \State \textbf{Input:} digraph $D = (V,A)$ with lengths $\l$, capacities $\U$, $h \geq 1$, $S, T \subseteq V$ and $\eps \in (0,1)$.
            \State \textbf{Output:} $(1 \pm \eps)$-approximate $h$-length flow $f$ and moving cut $\w$.
            \State Let $\epsilon_0 = \frac{\epsilon}{6}$, let $\zeta = \frac{1+2 \eps_0}{\eps_0} + 1$ and let $\eta = \frac{\eps_0}{(1 + \eps_0) \cdot \zeta} \cdot \frac{1}{\log m}$.
            \State Initialize $\w_a \gets \left(\frac{1}{m}\right)^{\zeta}$ for all $a \in A$.
            \State Initialize $\lambda \gets  \left(\frac{1}{m}\right)^{\zeta}$.
            \State Initialize $f_P \gets 0$ for all $P \in \mcP_h(S,T)$.
            \While{$\lambda < 1$}:
                \For{$\Theta\left(\frac{h \log_{1+\epsilon_0} n}{ \epsilon_0} \right)$ iterations:}
                    \State Compute $h$-length $(1+\epsilon_0)$-lightest path blocker $\hat{f}$ (using \Cref{thm:pathBlockerAlg} with current $\lambda$).
                    \State \textbf{Length-Constrained Flow (Primal) Update:} $f \gets f + \eta \cdot \hat{f}$.
                    \State \textbf{Moving Cut (Dual) Update:} $\w_a \gets (1+\epsilon_0)^{\hat{f}(a)/ \U_a} \cdot \w_a$ for every $a \in A$.
                \EndFor
                \State $\lambda \gets (1 + \eps_0) \cdot \lambda$
            \EndWhile
            \State \Return $(f,\w)$.
    \end{algorithmic}
\end{algorithm}

As a reminder for an $h$-length flow $f$, we let $f(a) := \sum_{P \ni a} f_P$. Throughout our analysis we will refer to the innermost loop of \Cref{alg:mw} as one ``iteration.'' We begin by observing that $\lambda$ always lower bounds $d_{\w}^{(h)}(S,T)$ in our algorithm.
\begin{lemma}\label{lem:lambLowerBound}
At the beginning of each iteration of \Cref{alg:mw} we have $\lambda \leq d^{(h)}_{\w}(S,T)$
\end{lemma}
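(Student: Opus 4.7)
The plan is to establish the invariant by induction on the iterations of \Cref{alg:mw}. There are three cases: (a) initialization, where $\w_a = \lambda = (1/m)^\zeta$ and every $S$-to-$T$ path has at least one arc; (b) a single inner iteration, during which $\lambda$ is unchanged and each $\w_a$ is multiplied only by factors $(1+\epsilon_0)^{\hat{f}(a)/\U_a} \geq 1$; and (c) the step at the end of an outer iteration at which $\lambda$ is multiplied by $(1+\epsilon_0)$. Cases (a) and (b) are immediate: in (a) each $h$-length path has weight at least $(1/m)^\zeta = \lambda$, and in (b) $d^{(h)}_\w(S,T)$ is non-decreasing while $\lambda$ stays fixed.

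The main work is case (c). Starting an outer iteration with $\lambda = \lambda_0$, I need to show that after its $K = \Theta\!\left(\frac{h \log_{1+\epsilon_0} n}{\epsilon_0}\right)$ inner iterations, $d^{(h)}_\w(S,T) \geq (1+\epsilon_0)\lambda_0$. Fix any $h$-length $S$-to-$T$ path $P$ and call an inner iteration \emph{alive for $P$} if $\w(P) \leq (1+\epsilon_0)\lambda_0$ at its start; let $k_P$ count these. By the near-lightest path blocking property of the output of \Cref{thm:pathBlockerAlg} (\Cref{dfn:alphaPathBlocker}), in every alive iteration some arc $a \in P$ satisfies $\hat{f}(a) = \U_a$, so $\w_a$ is multiplied by exactly $(1+\epsilon_0)$. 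Since $P$ has at most $h$ arcs (using $\l_a \geq 1$), pigeonhole yields an arc $a^* \in P$ multiplied at least $k_P/h$ times during alive iterations. At the start of the last alive iteration in which $a^*$ is multiplied, we simultaneously have $\w_{a^*} \geq (1+\epsilon_0)^{k_P/h - 1}(1/m)^\zeta$ (from $a^*$'s multiplication history, using that $\w_{a^*}$ never decreases) and $\w_{a^*} \leq \w(P) \leq (1+\epsilon_0)\lambda_0 \leq 1+\epsilon_0$ (since $\lambda_0 < 1$ by the while-loop guard). Taking logarithms and using $\zeta = \Theta(1/\epsilon_0)$ and $m \leq n^2$ rearranges to $k_P = O\!\left(\frac{h \log_{1+\epsilon_0} n}{\epsilon_0}\right)$. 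Choosing the constant in the $\Theta$ defining $K$ sufficiently large forces $k_P < K$, so no $h$-length $S$-to-$T$ path is alive at the end of the outer iteration, which is precisely $d^{(h)}_\w(S,T) > (1+\epsilon_0)\lambda_0$.

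The main obstacle is case (c), specifically making the pigeonhole step rigorous and calibrating the $\Theta$-constants of $K$ and $\zeta$ so the chain of inequalities closes. A small subtlety worth flagging in the writeup is that $a^*$ may also be multiplied in non-alive iterations; these only further increase $\w_{a^*}$, so the lower bound $(1+\epsilon_0)^{k_P/h - 1}(1/m)^\zeta \leq \w_{a^*}$ used in the pigeonhole argument still holds.
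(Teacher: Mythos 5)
Your proof is correct and follows essentially the same approach as the paper's: by pigeonhole, some arc of any light $h$-length path must be the saturated witness in a $1/h$ fraction of the relevant inner iterations, and each such event multiplies that arc's weight by $(1+\epsilon_0)$, so after $\Theta(h\log_{1+\epsilon_0}n/\epsilon_0)$ iterations that arc's weight would exceed what is compatible with the path still being light. The only difference is presentational — you count ``alive'' iterations directly (which also makes the monotonicity of $\w(P)$ and the applicability of the blocking property at each step more explicit), whereas the paper packages the same bound as a proof by contradiction.
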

\begin{proof}
Our proof is by induction. The statement trivially holds at the beginning of our algorithm.

Let $\lambda_i$ be the value of $\lambda$ at the beginning of the $i$th iteration. We argue that if $d^{(h)}_{\w}(S,T) = \lambda_i$ then after $\Theta\left(\frac{h \log_{1+\epsilon_0} n}{ \epsilon_0} \right)$ additional iterations we must have $d^{(h)}_{\w}(S,T) \geq (1+\eps_0) \cdot \lambda_i$. Let $\lambda_i' = (1 + \epsilon_0) \cdot \lambda$ be $\lambda$ after these iterations. Let $\hat{f}_j$ be our lightest path blocker in the $j$th iteration.

Assume for the sake of contradiction that $d_{\w}^{(h)}(S,T) < \lambda_i'$ after $i + \Theta\left(\frac{h \log_{1+\epsilon_0} n}{ \epsilon_0} \right)$ iterations. It follows that there is some path $P \in \mcP_h(S,T)$ with weight at most $\lambda_i'$ after $i + \Theta\left(\frac{h \log_{1+\epsilon_0} n}{ \epsilon_0} \right)$ many iterations. However, notice that by definition of an $h$-length $(1+\epsilon_0)$-lightest path blocker (\Cref{dfn:alphaPathBlocker}), we know that for every $j \in \left[i, i + \Theta\left(\frac{h \log_{1+\epsilon_0} n}{ \epsilon_0}\right)\right]$ there is some $a \in P$ for which $\hat{f}_j(a) = \U_a$. By averaging, it follows that there is some single arc $a \in P$ for which $\hat{f}_j(a) = \U_a$ for at least $\Theta\left(\frac{\log_{1+\epsilon_0} n}{\epsilon_0}\right)$ of these $j \in [i, i + \Theta\left(\frac{h \log_{1+\epsilon_0} n}{ \epsilon_0}\right)]$. Since every such arc starts with dual value $(\frac{1}{m})^\zeta$ and multiplicatively increases by a $(1+\epsilon_0)$ factor in each of these updates, such an arc after $i + \Theta\left(\frac{h \log_{1+\epsilon_0} n}{ \epsilon_0} \right)$ many iterations must have $\w_a$ value at least $(\frac{1}{m})^\zeta \cdot (1+\epsilon_0)^{\Theta\left(\frac{\log_{1+\epsilon_0} n}{ \epsilon_0}\right)} \geq n^{2}$ for an appropriately large hidden constant in our $\Theta$. However, by assumption, the weight of $P$ is at most $\lambda_i'$ after $i + \Theta\left(\frac{h \log_{1+\epsilon_0} n}{ \epsilon_0} \right)$ iterations and this is at most $2$ since $\lambda_i < 1$ since otherwise our algorithm would have halted. But $2 < n^{2}$ and so we have arrived at a contradiction.

Repeatedly applying the fact that $\lambda_i' = (1+ \epsilon_0) \lambda_i$ gives that $\lambda$ is always a lower bound on $d^{(h)}_{\w}(S,T)$.
\end{proof}

We next prove the feasibility of our solution.
\begin{lemma}\label{lem:ana1}
The pair $(f, \w)$ returned by \Cref{alg:mw} are feasible for \ref{LP:hopConFlow} and \ref{LP:movingCut} respectively.
\end{lemma}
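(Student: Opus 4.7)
The plan is to check each LP constraint for the returned pair $(f, \w)$.

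The moving cut constraints are immediate from \Cref{lem:lambLowerBound}. Non-negativity $\w_a \geq 0$ holds throughout since $\w_a$ is initialized positive and only multiplied by positive factors. For the covering constraint $\sum_{a \in P} \w_a \geq 1$ for every $P \in \mcP_h(S,T)$, it suffices to show $d^{(h)}_\w(S,T) \geq 1$ when the algorithm halts. The argument in the proof of \Cref{lem:lambLowerBound} in fact shows the stronger statement that once the inner for-loop has completed for the current parameter $\lambda$, the updated weights satisfy $d^{(h)}_\w(S,T) \geq (1+\eps_0)\lambda$. Since the algorithm halts only after the assignment $\lambda \gets (1+\eps_0)\lambda$ makes this value at least $1$, this gives $d^{(h)}_\w(S,T) \geq 1$ as required.

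For the flow, non-negativity $f_P \geq 0$ is immediate since $f$ is a non-negative combination of integral flows, and every path in $\supp(f)$ lies in $\mcP_h(S,T)$ because each $\hat{f}$ is an $h$-length $S$-$T$ flow by \Cref{thm:pathBlockerAlg}. The only nontrivial constraint---and the main obstacle---is the capacity bound $f(a) \leq \U_a$. I plan to derive it from a two-sided estimate of the final weight $\w_a^{\text{final}}$. Unrolling the multiplicative updates yields
\begin{align*}
\w_a^{\text{final}} \;=\; \left(\tfrac{1}{m}\right)^{\zeta} \cdot (1+\eps_0)^{f(a) / (\eta \U_a)}.
\end{align*}
On the other hand, the near-lightest property in \Cref{dfn:alphaPathBlocker} implies that whenever $\hat{f}(a) > 0$ in some iteration, the arc $a$ lies on a path in $\supp(\hat{f})$ of $\w$-weight at most $(1+2\eps_0)\lambda$, so $\w_a \leq (1+2\eps_0)\lambda$ just before the update and $\w_a \leq (1+\eps_0)(1+2\eps_0)\lambda$ right after. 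Because the while-loop guard forces $\lambda < 1$ whenever any update happens, every arc $a$ with $f(a) > 0$ satisfies $\w_a^{\text{final}} \leq (1+\eps_0)(1+2\eps_0)$.

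Equating the two estimates of $\w_a^{\text{final}}$ and taking $\log_{1+\eps_0}$ gives
\begin{align*}
\frac{f(a)}{\eta \U_a} \;\leq\; \zeta \log_{1+\eps_0} m \;+\; \log_{1+\eps_0}\!\bigl((1+\eps_0)(1+2\eps_0)\bigr),
\end{align*}
and the proof is finished by verifying that, with the algorithm's choices $\zeta = \tfrac{1+3\eps_0}{\eps_0}$ and $\eta = \tfrac{\eps_0}{(1+\eps_0)\zeta\log m}$, the right-hand side is at most $1/\eta$, so that $f(a) \leq \U_a$. After clearing denominators this reduces to the inequality $\tfrac{(1+\eps_0)\ln(1+\eps_0)}{\eps_0} \cdot \zeta \log m \;\geq\; \zeta \log m + \ln\bigl((1+\eps_0)(1+2\eps_0)\bigr)$, which follows from the Taylor-type bound $(1+x)\ln(1+x) - x \geq \tfrac{x^2}{2(1+x)}$; this provides slack of order $\Theta(\log m)$ on the left, easily absorbing the $O(1)$ correction on the right.
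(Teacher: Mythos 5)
Your proof is correct and follows essentially the same multiplicative-weights accounting as the paper: unroll the dual update to get $\w_a^{\text{final}} = (1/m)^{\zeta}(1+\eps_0)^{f(a)/(\eta\U_a)}$, bound $\w_a^{\text{final}}$ by a constant, and solve for $f(a)$. One point worth flagging: the paper's own proof asserts that $\w_a<1$ at every update and concludes $(1+\eps_0)^{k/\U_a}(1/m)^{\zeta}\leq 1$; in fact the Near-Lightest property only gives $\w_a\leq(1+2\eps_0)\lambda<1+2\eps_0$ before the last update touching $a$ (and up to an extra $(1+\eps_0)$ factor after it), so the correct bound is $\w_a^{\text{final}}\leq(1+\eps_0)(1+2\eps_0)$, exactly as you derived. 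The paper's crude estimate $\ln(1+\eps_0)\geq \eps_0/(1+\eps_0)$ is tight enough only under the (unjustified) ``$\leq 1$'' claim; your refined bound $(1+x)\ln(1+x)-x\geq x^2/(2(1+x))$ supplies the $\Theta(\log m)$ slack needed to absorb the genuine $O(1)$ correction, which is a real improvement. So your proposal is not merely correct but slightly more rigorous than the paper's argument while following the same route.
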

\begin{proof}
First, observe that by \Cref{lem:lambLowerBound} we know that $\lambda$ is always a lower bound on $d_{\w}^{(h)}(S,T)$ and so since we only return once $\lambda > 1$, the $\w$ we return is always feasible.

To see that $f$ is feasible it suffices to argue that for each arc $a$, the number of times a path containing $a$ has its primal value increased is at most $\frac{\U_a}{\eta}$. Notice that each time we increase the primal value on a path containing arc $a$ by $\eta$ we increase the dual value of this edge by a multiplicative $(1+\epsilon_0)^{1/\U_a}$. Since the weight of our arcs according to $\w$ start at $(\frac{1}{m})^{\zeta}$, it follows that if we increase the primal value of $k$ paths incident to arc $a$ then $\w_a = (1+\epsilon_0)^{k / \U_a} \cdot (\frac{1}{m})^{\zeta}$. On the other hand, by assumption when we increase the dual value of an arc $a$ it must be the case that $\w_a < 1$ since otherwise $d_{\w}^{(h)}(S,T) \geq 1$, contradicting the fact that $\lambda$ always lower bounds $d_{\w}^{(h)}(S,T)$. It follows that $(1+\epsilon_0)^{k/\U_a} \cdot (\frac{1}{m})^{\zeta} \leq 1$ and so applying the fact that $\ln(1+ \epsilon_0) \geq \frac{\epsilon_0}{1+\epsilon_0}$ for $\epsilon_0 > -1$ and our definition of $\zeta$ and $\eta$ we get
\begin{align*}
    k & \leq \frac{\zeta \cdot (1+\eps_0)}{\eps_0} \cdot \U_a \log m  \\
    & = \frac{\U_a}{\eta}
\end{align*}
as desired.
\end{proof}

We next prove the near-optimality of our solution.
\begin{lemma}\label{lem:ana2}
The pair $(f, \w)$ returned by \Cref{alg:mw} satisfies $(1-\epsilon)\sum_{a} \w_a \leq \sum_P f_P$.
\end{lemma}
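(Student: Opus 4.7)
The plan is a Garg--K\"onemann-style primal-dual multiplicative-weights analysis. Write $D(w):=\sum_a U_a w_a$ for the dual objective (I read the statement's $\sum_a w_a$ this way, matching the approximation notion of Section~\ref{sec:mainResults}), let $\OPT$ denote the common optimum of \ref{LP:hopConFlow} and \ref{LP:movingCut}, and let $D_t, P_t$ be $D(w^{(t)})$ and $\sum_P f_P^{(t)}$ after $t$ inner-loop iterations. The goal is $P_T\geq(1-\epsilon)D_T$.

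The per-iteration potential bound is the heart. Because each blocker $\hat f^{(t)}$ is an integral feasible flow (Definition~\ref{dfn:alphaPathBlocker}), $\hat f^{(t)}(a)/U_a\in[0,1]$, so the concavity inequality $(1+\epsilon_0)^x\leq 1+\epsilon_0 x$ on $[0,1]$ converts the multiplicative dual update into
\[
D_{t+1}-D_t \;\leq\; \epsilon_0\sum_a w^{(t)}_a\hat f^{(t)}(a) \;=\; \epsilon_0\sum_{P\in\supp(\hat f^{(t)})} \hat f^{(t)}_P\, w^{(t)}(P).
\]
The near-lightest property of the blocker bounds $w^{(t)}(P)\leq(1+2\epsilon_0)\lambda$ on the support; Lemma~\ref{lem:lambLowerBound} gives $\lambda\leq d_{w^{(t)}}^{(h)}(S,T)$; and LP weak duality applied to the feasible dual $w^{(t)}/d_{w^{(t)}}^{(h)}(S,T)$ gives $d_{w^{(t)}}^{(h)}(S,T)\leq D_t/\OPT$. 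Chaining these yields $D_{t+1}\leq D_t\exp\!\bigl(\epsilon_0(1+2\epsilon_0)\st(\hat f^{(t)})/\OPT\bigr)$.

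Telescoping over all inner-loop iterations and using $P_T=\eta\sum_t\st(\hat f^{(t)})$ yields $D_T\leq D_0\exp(\epsilon_0(1+2\epsilon_0)P_T/(\eta\,\OPT))$. At termination $\lambda\geq 1$, so Lemma~\ref{lem:lambLowerBound} gives $d_{w_T}^{(h)}(S,T)\geq 1$, hence $w_T$ is feasible for~\ref{LP:movingCut} and $D_T\geq\OPT$. Substituting this and rearranging gives $P_T \geq \eta\,\OPT\,\ln(D_T/D_0)/[\epsilon_0(1+2\epsilon_0)]$. Plugging in $\eta=\epsilon_0/[(1+\epsilon_0)\zeta\log m]$, $\zeta=(1+2\epsilon_0)/\epsilon_0+1$, and $\epsilon_0=\epsilon/6$, the bound becomes $P_T\geq (1-1/\zeta)\,\OPT\,\ln(D_T/D_0)/[(1+\epsilon_0)(1+2\epsilon_0)\zeta\log m]$. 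The choice of $\zeta$ makes $(1-1/\zeta)\geq(1-\epsilon)(1+\epsilon_0)(1+2\epsilon_0)$, while $D_0=m^{-\zeta}\sum_a U_a$ together with the polynomial bound on $U_{\max}$ and the WLOG assumption $\OPT\geq 1$ forces $\ln(D_T/D_0)\geq\zeta\log m$, yielding $(1-\epsilon)D_T\leq P_T$.

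The hard part will be upgrading the Garg--K\"onemann bound (naturally an approximation against $\OPT$) into the statement's bound against the \emph{reported} dual $D_T$; this requires $D_T$ itself to lie within a $(1+O(\epsilon))$-factor of $\OPT$. The key observation is that $d_{w_T}^{(h)}(S,T)\leq(1+\epsilon_0)^2$ at termination: during the final inner loop any iteration with $d_{w}^{(h)}>(1+\epsilon_0)\lambda$ has an empty $(1+\epsilon_0)$-lightest-path blocker and leaves $w$ unchanged, so $d_{w}^{(h)}$ can exceed $(1+\epsilon_0)\lambda<1+\epsilon_0$ by at most one further multiplicative $(1+\epsilon_0)$ factor through a single subsequent active step. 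Paired with the telescoped potential this pins $D_T$ near $\OPT$ and delivers the clean $(1-\epsilon)$ certification.
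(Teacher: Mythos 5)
Your per-iteration potential bound via $(1+\epsilon_0)^x\le 1+\epsilon_0 x$, the chain $\lambda_i\le d^{(h)}_{\w}(S,T)\le D_i/\beta$ obtained from Lemma~\ref{lem:lambLowerBound} together with dual feasibility of $\w/\lambda_i$, the telescoping, and the parameter accounting all match the paper's proof, and both derivations land on $(1-\eps)\beta\le K\eta$ where $\beta$ is the \emph{optimal} dual value. You then flag, correctly, that this is a bound against $\OPT$ rather than against the \emph{returned} dual $D_T=\sum_a\U_a\w_a\ge\beta$. Notably, the paper's own proof also terminates at ``$(1-\eps)\cdot\beta\le K\eta$'' and contains no step promoting $\beta$ to $D_T$; so up to this point you and the paper coincide, and your ``hard part'' paragraph is in fact an attempt to fill a gap that the paper's text leaves implicit.

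The trouble is that your proposed closing argument does not go through. First, the claim that an iteration with $d^{(h)}_{\w}(S,T)>(1+\epsilon_0)\lambda$ yields an empty blocker and leaves $\w$ unchanged is not implied by Definition~\ref{dfn:alphaPathBlocker}: the near-lightest clause only bounds support-path weight by $(1+2\epsilon_0)\lambda$, and the blocking clause is vacuous as soon as no path has weight $\le(1+\epsilon_0)\lambda$, so a nonempty blocker consisting of paths with weight in $((1+\epsilon_0)\lambda,(1+2\epsilon_0)\lambda]$ is perfectly legal. Indeed Algorithm~\ref{alg:pathBlocker} builds $D^{(h,\lambda)}$ from all paths whose rounded weight is at most $(1+2\epsilon_0)\lambda$, so it will in general return such a flow in exactly the regime you claim is inert; consequently $d^{(h)}_{\w_T}(S,T)\le(1+\epsilon_0)^2$ is not established. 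Second, and independently, even if that bound held it would only give a \emph{lower} bound on $D_T$: since $\w_T/d^{(h)}_{\w_T}(S,T)$ is a feasible moving cut, $D_T\ge\beta\cdot d^{(h)}_{\w_T}(S,T)$, which is the wrong direction. Upgrading $(1-\eps)\beta\le K\eta$ to $(1-\eps)D_T\le K\eta$ requires the upper bound $D_T\le(1+O(\eps))\beta$, and nothing in the $d^{(h)}_{\w_T}$ observation, nor in the telescoped potential as you have used it, delivers that.
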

\begin{proof}
Fix an iteration $i$ of the above while loop and let $\hat{f}$ be our lightest path blocker in this iteration. Let $k_i$ be $\st(\hat{f})$, let $\lambda_i$ be $\lambda$ at the start of this iteration and let $D_i := \sum_{a} \w_a$ be our total dual value at the start of this iteration. Notice that $\frac{1}{\lambda_i} \cdot \w$ is dual feasible and has cost $\frac{D_i}{\lambda_i}$ by \Cref{lem:lambLowerBound}. If $\beta$ is the optimal dual value then by optimality it follows that $\beta \leq \frac{D_i}{\lambda_i}$, giving us the upper bound on $\lambda_i$ of $\frac{D_i}{\beta}$. By how we update our dual, our bound on $\lambda_i$ and $(1+x)^{r} \leq 1 + xr$ for any $x \geq 0$ and $r \in (0,1)$ we have that
\begin{align*}
   D_{i+1} &= \sum_a (1+ \epsilon_0)^{\hat{f}(a)/\U_a} \cdot \w_a \cdot \U_a\\ &
   \leq \sum_a \left(1+ \frac{\epsilon_0 \hat{f}(a)}{\U_a} \right) \cdot \w_a \cdot \U_a \\
   &= D_i  + \epsilon_0 \sum_a  \hat{f}(a) \w_a\\
   & \leq D_i + \epsilon_0 (1 + 2 \eps_0) \cdot k_i \lambda_i\\
    &\leq D_i \left(1+\frac{(1+2\eps_0)\eps_0 \cdot k_i}{\beta} \right)\\
    & \leq D_i \cdot \exp\left(\frac{(1+2\eps_0)\eps_0 \cdot k_i}{\beta} \right).
\end{align*}
Let $T-1$ be the index of the last iteration of our algorithm; notice that $D_T$ is the value of $\w$ in our returned solution. Let $K := \sum_i k_i$. Then, repeatedly applying this recurrence gives us
\begin{align*}
    D_T &\leq D_0 \cdot \exp\left(\frac{(1+2\eps_0)\eps_0 \cdot K}{\beta} \right) \\
    &= \left(\frac{1}{m} \right)^{\zeta-1} \exp\left( \frac{(1+2\eps_0)\eps_0 \cdot K}{\beta} \right)
\end{align*}

On the other hand, we know that $\w$ is dual feasible when we return it, so it must be the case that $D_T \geq 1$. Combining this with the above upper bound on $D_T$ gives us $1 \leq \left(\frac{1}{m} \right)^{\zeta} \exp\left(\frac{(1+2\eps_0)\eps_0 \cdot K}{\beta} \right)$. Solving for $K$ and using our definition of $\zeta$ gives us
\begin{align*}
    \beta \log m \cdot \frac{\zeta-1}{(1 + 2 \eps_0) \cdot \eps_0}  &\leq K\\
    \beta \log m \cdot \frac{1}{\eps_0^2}  &\leq K.
\end{align*}
However, notice that $K\eta$ is the primal value of our solution so using our choice of $\eta$ and rewriting this inequality in terms of $K \eta$ by multiplying by $\eta = \frac{\eps_0}{(1 + \eps_0) \cdot \zeta} \cdot \frac{1}{\log m}$ and applying our definition of $\zeta = \frac{1+2 \eps_0}{\eps_0} + 1$ gives us
\begin{align}
    \frac{\beta}{\eps_0 \cdot (1+\eps_0) \cdot \zeta}&\leq K \eta \nonumber\\
    \frac{\beta}{(1+\eps_0)(1+3\eps_0)} &\leq K \eta.\label{eq:firstP}
\end{align}
Moreover, by our choice of $\eps_0 = \frac{\eps}{6}$ and the fact that $\frac{1}{1+x+x^2} \geq 1-x$ for $x \in (0,1)$ we get
\begin{align}
    1 - \eps &\leq \frac{1}{1 + \eps + \eps^2} \nonumber\\
    &\leq \frac{1}{(1+\frac{1}{2}\eps)^2} \nonumber\\
    &\leq \frac{1}{(1+3\eps_0)^2} \nonumber\\
    &\leq \frac{1}{(1+\eps_0)(1+3\eps_0)}. \label{eq:secondP}
\end{align}
Combining \Cref{eq:firstP} and \Cref{eq:secondP} we conclude that 
\begin{align*}
    (1-\eps) \cdot \beta \leq K \eta. 
\end{align*}\qedhere
\end{proof}

We conclude with our main theorem by proving that we need only iterate our algorithm $\tilde{O}\left(\frac{h}{\epsilon^4} \right)$ times.
\mainThm*
\begin{proof}
We use \Cref{alg:mw}. By \Cref{lem:ana1} and \Cref{lem:ana2} we know that our solution is feasible and $(1\pm \epsilon)$-optimal so it only remains to argue the runtime of our algorithm and that the returned flow decomposes in the stated way.

We argue that we must only run for $O\left(\frac{h \log^2 n}{\epsilon^4} \right)$ total iterations. Since $\lambda$ increases by a multiplicative $(1+\epsilon_0)$ after every $\Theta\left(\frac{h \log n}{\epsilon_0^2} \right)$ iterations and starts at at least $\left(\frac{1}{m}\right)^{\Theta(1/\eps_0)}$, it follows by \Cref{lem:lambLowerBound} that after $y \cdot \Theta \left(\frac{h \log n}{\epsilon_0^2} \right)$ total iterations the $h$-length distance between $S$ and $T$ is at least $(1 + \epsilon_0)^y \cdot \left(\frac{1}{m}\right)^{\Theta(1/\eps_0)}$. Thus, for $y \geq \Omega\left(\frac{\log_{1+\epsilon_0}m}{\epsilon_0}\right) = \Omega\left(\frac{\log n}{\epsilon_0^2} \right)$ we have that $S$ and $T$ are at least $1$ apart in $h$-length distance. Consequently, our algorithm must run for at most $O\left(\frac{h \log^2 n}{\epsilon_0^4}\right) = O\left(\frac{h \log^2 n}{\epsilon^4} \right)$ many iterations.

Our running time is immediate from the the bound of $O\left(\frac{h \log^2 n}{\epsilon^4} \right)$ on the number of iterations of the while loop and the running times given in \Cref{thm:pathBlockerAlg} for computing our $h$-length $(1+\epsilon_0)$-lightest path blocker.

Lastly, the flow decomposes in the stated way because we have at most $O\left(\frac{h \log^2 n}{\epsilon^4} \right)$ iterations and each $f_j$ is an integral $S$-$T$ flow by \Cref{thm:pathBlockerAlg}. Thus, our final solution is $\eta \cdot \sum_{j=1}^k f_j$ and $k = \tilde{O}\left(\frac{h}{\epsilon^4} \right)$. Likewise we have $|\supp(f)| \leq \tilde{O}(\frac{h^{10}}{\eps^7})$ for our parallel algorithm since we have $O\left(\frac{h \log^2 n}{\epsilon^4} \right)$ iterations and the fact that \Cref{thm:pathBlockerAlg} guarantees each $(1+\eps_0)$-lightest path blocker has support size at most $\tilde{O}(\frac{h^9}{\eps^3} \cdot |A|)$.
\end{proof}

\section{Application: Maximal and Maximum Disjoint Paths}\label{sec:disjointPaths}

In this section we show that our main theorem (\Cref{thm:main}) almost immediately gives deterministic CONGEST algorithms for many varieties of maximal disjoint path problems as well as essentially-optimal algorithms for many maximum disjoint path problems. In \Cref{sec:variants} we give the variants we study. In \Cref{sec:reduce} we observe that it suffices to solve the arc-disjoint directed variants of these problems. Lastly, we give our results for maximal and maximum disjoint path problems in \Cref{sec:maximalPaths} and \Cref{sec:maximumPaths} respectively where we observe in \Cref{sec:hardness} that our algorithms for the latter are essentially optimal.

\subsection{Maximal and Maximum Disjoint Path Variants}\label{sec:variants}

We consider the following maximal disjoint path variants.

\newcommand{\maximalVertexDisjointUndirected}{
\begin{quote}
    \textbf{Maximal Vertex-Disjoint Paths}: Given graph $G = (V, E)$, length constraint $h \geq 1$ and two disjoint sets $S, T \subseteq V$, find a collection of $h$-length vertex-disjoint $S$ to $T$ paths $\mcP$ such that any $h$-length $S$ to $T$ path shares a vertex with at least one path in $\mcP$.
\end{quote}
}

\newcommand{\maximalVertexDisjointDirected}{
\begin{quote}
    \textbf{Maximal Vertex-Disjoint Directed Paths}: Given digraph $D = (V, A)$, length constraint $h \geq 1$ and two disjoint sets $S, T \subseteq V$, find a collection of $h$-length vertex-disjoint $S$ to $T$ paths $\mcP$ such that any $h$-length $S$ to $T$ path shares a vertex with at least one path in $\mcP$.
\end{quote}
}

\newcommand{\maximalArcDisjointDirected}{
\begin{quote}
    \textbf{Maximal Arc-Disjoint Directed Paths}: Given digraph $D = (V, A)$, length constraint $h \geq 1$ and two disjoint sets $S, T \subseteq V$, find a collection of $h$-length arc-disjoint $S$ to $T$ paths $\mcP$ such that any $h$-length $S$ to $T$ path shares an arc with at least one path in $\mcP$.
\end{quote}
}

\maximalVertexDisjointUndirected

\maximalEdgeDisjointUndirected

\maximalVertexDisjointDirected

\maximalArcDisjointDirected

As discussed in \Cref{sec:contributions}, the existence of efficient deterministic algorithms for the above problems (specifically the maximal vertex-disjoint paths problem) in CONGEST was stated as an open question by \citet{chang2020deterministic} and the lack of these algorithms is a major barrier to simple deterministic constructions of expander decompositions.

\newcommand{\maximumVertexDisjointUndirected}{
\begin{quote}
    \textbf{Maximum Vertex-Disjoint Paths}: Given graph $G = (V, E)$, length constraint $h \geq 1$ and disjoint sets $S, T \subseteq V$, find a max cardinality collection of $h$-length vertex-disjoint $S$ to $T$ paths.
\end{quote}
}

\newcommand{\maximumVertexDisjointDirected}{
\begin{quote}
    \textbf{Maximum Vertex-Disjoint Directed Paths}: Given digraph $D = (V, A)$, length constraint $h \geq 1$ and disjoint sets $S, T \subseteq V$, find a max cardinality collection of $h$-length vertex-disjoint $S$ to $T$ paths.
\end{quote}
}

\newcommand{\maximumArcDisjointDirected}{
\begin{quote}
    \textbf{Maximum Arc-Disjoint Directed Paths}: Given digraph $D = (V, A)$, length constraint $h \geq 1$ and disjoint sets $S, T \subseteq V$, find a max cardinality collection of $h$-length arc-disjoint $S$ to $T$ paths.
\end{quote}
}

We consider the following maximum disjoint path variants.

\maximumVertexDisjointUndirected

\maximumEdgeDisjointUndirected

\maximumVertexDisjointDirected

\maximumArcDisjointDirected

\subsection{Reducing Among Variants}\label{sec:reduce}

We begin by observing that the arc-disjoint directed paths problem is the hardest of the above variants and so it will suffice to solve this problem. The reductions we use are illustrated in \Cref{fig:red}.
\begin{figure}
    \centering
    \begin{subfigure}[b]{0.32\textwidth}
        \centering
        \includegraphics[width=\textwidth,trim=50mm 100mm 0mm 100mm, clip]{./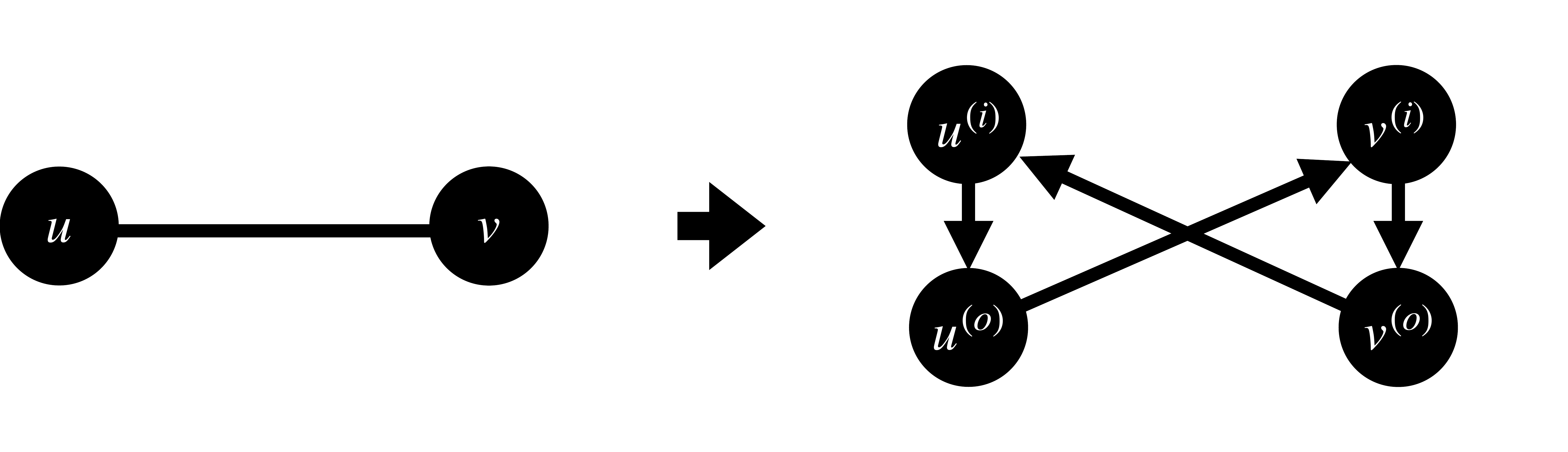}
        \caption{Vertex-disjoint paths.}\label{sfig:red1}
    \end{subfigure}    \hfill
    \begin{subfigure}[b]{0.32\textwidth}
        \centering
        \includegraphics[width=\textwidth,trim=50mm 100mm 0mm 100mm, clip]{./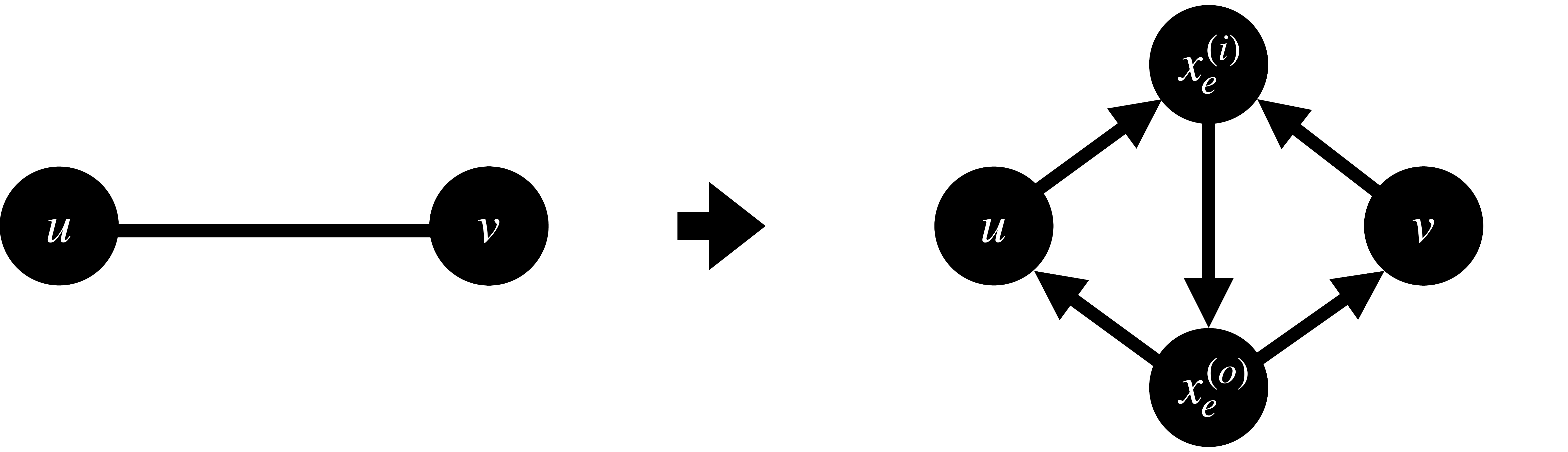}
        \caption{Edge-disjoint paths.}\label{sfig:red2}
    \end{subfigure}    \hfill
    \begin{subfigure}[b]{0.32\textwidth}
        \centering
        \includegraphics[width=\textwidth,trim=50mm 100mm 0mm 100mm, clip]{./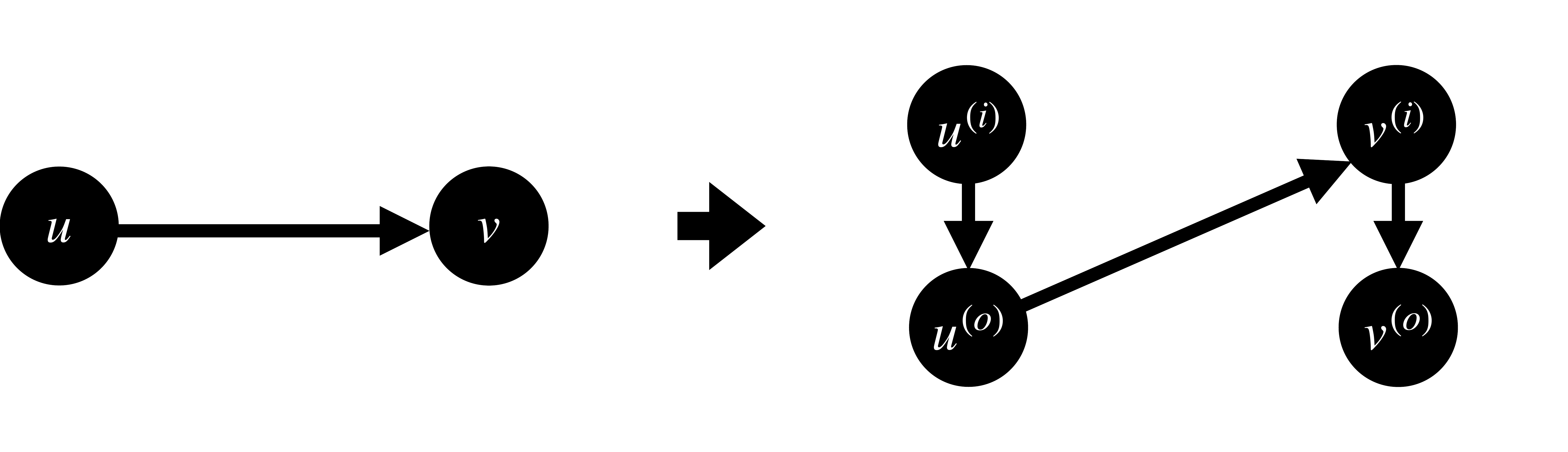}
        \caption{Vertex-disjoint directed paths.}\label{sfig:red3}
    \end{subfigure}
    \caption{Illustration of our reduction on a single edge or arc between $u$ and $v$ for reducing maximal or maximum vertex-disjoint paths, edge-disjoint paths or vertex-disjoint directed paths to arc-disjoint directed paths.}\label{fig:red}
\end{figure}

\begin{lemma}\label{lem:reduceMaximalPaths}
If there is a deterministic algorithm for maximal arc-disjoint directed paths in CONGEST running in time $T$ then there are deterministic CONGEST algorithms for maximal vertex-disjoint paths, edge-disjoint paths and vertex-disjoint directed paths all running in time $O(T)$. 

Likewise, if there is a deterministic (resp. randomized) parallel with $m$ processors or CONGEST algorithm for maximum arc-disjoint directed paths in CONGEST running in time $T$ with approximation ratio $\tilde{O}(h)$ then there are deterministic (resp. randomized) parallel with $m$ processors and CONGEST algorithms for maximum vertex-disjoint paths, edge-disjoint paths and vertex-disjoint directed paths all running in time $O(T)$ with approximation ratio $\tilde{O}(h)$.
\end{lemma}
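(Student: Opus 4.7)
The plan is to give a gadget-based reduction for each of the three input variants to arc-disjoint directed paths, invoke the assumed algorithm on an auxiliary digraph $D'$ of size $O(|V|+|E|)$ with length bound $O(h)$, and project the returned collection of paths back. In every case $D'$ will be locally simulatable in CONGEST: each original vertex $v$ hosts $O(1)$ auxiliary vertices (its copies), and each original edge corresponds to $O(1)$ auxiliary arcs between copies of its endpoints (plus, for the edge-disjoint reduction, two extra copies of a middle vertex hosted by one of $v$'s endpoints via, say, the lower-ID rule). Consequently one round of CONGEST on $D'$ costs $O(1)$ rounds on the original graph, and the parallel model is simulated with the same constant-factor overhead, giving the promised $O(T)$ total runtime.

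For \textbf{vertex-disjoint directed paths}, apply vertex splitting: replace each $v$ by two copies $v^{\text{in}}, v^{\text{out}}$ joined by a unit-capacity internal arc, replace each arc $(u,v)$ by $(u^{\text{out}}, v^{\text{in}})$, and set $S'=\{s^{\text{out}}\}_{s\in S}$, $T'=\{t^{\text{in}}\}_{t\in T}$. For \textbf{vertex-disjoint undirected paths}, apply the same splitting but install both $(u^{\text{out}},v^{\text{in}})$ and $(v^{\text{out}},u^{\text{in}})$ for each undirected edge $\{u,v\}$. For \textbf{edge-disjoint undirected paths}, first subdivide each edge $\{u,v\}$ with a fresh middle vertex $m_{uv}$ and then apply the vertex-splitting gadget to $m_{uv}$; the internal arc $(m_{uv}^{\text{in}}, m_{uv}^{\text{out}})$ then acts as a unit-capacity token for $\{u,v\}$, and two input paths share the edge $\{u,v\}$ iff their lifts share this internal arc in $D'$. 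Lengths on gadget arcs are set to $0$ (and original lengths are carried by one distinguished arc per edge) so that each $h$-length input path lifts to a simple $O(h)$-length path in $D'$ between $S'$ and $T'$. Increasing the length bound to $O(h)$ in $D'$ then yields a natural bijection $\Phi$ between $h$-length input paths and simple $O(h)$-length $S'$-$T'$ paths in $D'$.

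Because $\Phi$ preserves cardinality and, by the unit-capacity internal arcs, converts arc-disjointness in $D'$ into the desired disjointness in the input, both statements of the lemma follow. An $\tilde{O}(h)$-approximate maximum arc-disjoint collection in $D'$ pulls back under $\Phi^{-1}$ to an $\tilde{O}(h)$-approximate maximum collection of the required type in the input, since $\mathrm{OPT}(D')=\mathrm{OPT}(\text{input})$. For maximality, if $\mcP'$ is a maximal arc-disjoint collection in $D'$ then $\Phi^{-1}(\mcP')$ is maximal in the input: any $h$-length input path avoiding $\Phi^{-1}(\mcP')$ would lift to an $O(h)$-length arc-disjoint addition to $\mcP'$, contradicting maximality. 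The main point requiring care, which I expect to be the only real obstacle, is verifying that the set of \emph{simple} $S'$-$T'$ paths in $D'$ of length $O(h)$ is precisely the image of $\Phi$; in the undirected gadgets one must rule out spurious paths that enter a gadget from $u$, traverse its internal arc, and exit back to $u$, which is straightforward since such a walk revisits $u$ and is therefore not simple. With this bookkeeping verified, the reduction is immediate.
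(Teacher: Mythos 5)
The gadgets you build are exactly the paper's (vertex splitting, and edge subdivision followed by splitting the midpoint), and the simulation/runtime argument is the same, so the overall approach matches. There is, however, a genuine flaw in the one place where you differ from the paper: your choice of the lifted source and sink sets.

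You set $S'=\{s^{\mathrm{out}}\}_{s\in S}$ and $T'=\{t^{\mathrm{in}}\}_{t\in T}$. With this choice, the internal arc $(s^{\mathrm{in}},s^{\mathrm{out}})$ of a source $s$ is never the first arc of any $S'$-$T'$ path (and similarly the internal arc of a sink is never the last arc). Consequently arc-disjointness in $D'$ does \emph{not} protect the endpoints: two arc-disjoint $S'$-$T'$ paths in $D'$ may both begin at the same $s^{\mathrm{out}}$ (leaving along different out-arcs), or both end at the same $t^{\mathrm{in}}$, or one may begin at $s^{\mathrm{out}}$ while another traverses $s$ as an intermediate vertex via $(s^{\mathrm{in}},s^{\mathrm{out}})$. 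In every such case the projections share a vertex, so the output is not vertex-disjoint, and likewise the maximality claim fails (an $h$-length path through a shared source would lift to an arc-disjoint addition, contradicting your claim). The sentence ``by the unit-capacity internal arcs, converts arc-disjointness in $D'$ into the desired disjointness in the input'' is simply false at the endpoints under your $S',T'$. The paper avoids this by taking $S'=\{s^{\mathrm{in}}\}$ and $T'=\{t^{\mathrm{out}}\}$, so that every lifted path is forced to begin with $s^{\mathrm{in}}\to s^{\mathrm{out}}$ and end with $t^{\mathrm{in}}\to t^{\mathrm{out}}$; the internal arcs of sources and sinks then act as the unit-capacity tokens you rely on. (Once you flip $S'$ and $T'$, the lifted length of an $h$-hop path becomes $2h+1$ rather than $2h-1$; this is a constant that doesn't affect the $\tilde O(h)$ claim.)

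One smaller mismatch: you assign length $0$ to the gadget arcs, but the arc-disjoint directed paths problem being invoked as a black box is stated for hop counts (implicitly unit lengths), and the paper's length convention requires $\ell_a\in\mathbb{Z}_{>0}$. The clean fix, which the paper uses, is to keep unit lengths on all arcs of $D'$ and inflate the length bound to $O(h)$ ($2h+1$ for the vertex-splitting gadgets, $3h$ for the edge-subdivision gadget); this also makes the bijection $\Phi$ more transparent, since every lifted path has a fixed, predictable number of hops.
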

\begin{proof}
We reduce each of maximal vertex-disjoint paths, maximal edge-disjoint paths and maximal vertex-disjoint directed paths to maximal arc-disjoint directed paths and do the same for the maximum variants of these problems.

\paragraph{Reducing from maximal/maximum vertex-disjoint paths.} Consider an instance of maximal or maximum vertex-disjoint paths on graph $G = (V, E)$ with length constraint $h$ and vertex sets $S$ and $T$. We create a digraph $D = (V', A)$ as follows:
\begin{itemize}
    \item \textbf{Vertices:} $V'$ is constructed as follows: for each $v \in V$ we add to $V'$ vertex $v^{(i)}$ and $v^{(o)}$. 
    \item \textbf{Arcs:} For each $v \in V$ we add an arc from $v^{(i)}$ to $v^{(o)}$. Furthermore, for each $e = \{u,v\} \in E$ we add to $A$ the arcs $(u^{(o)}, v^{(i)})$ and $(v^{(o)}, u^{(i)})$.
\end{itemize}
A collection of arc-disjoint paths in $D$ from $S' = \{s^{(i)} : s \in S \}$ to $T' = \{t^{(o)} : t \in T \}$ with length constraint $2h-1$ uniquely corresponds to an equal cardinality collection of $S$-$T$ vertex-disjoint paths in $G$ with length constraint $h$. Thus, an $\tilde{O}(h)$ approximation on $D$ for the maximum $S'$-$T'$ arc-disjoint directed paths problem gives an $\tilde{O}(h)$ approximation for the maximum vertex-disjoint paths problem on $G$. Likewise, a maximal collection of arc-disjoint $S'$-$T'$ paths on $D$ with length constraint $2h-1$ corresponds to a maximal collection of vertex-disjoint $S$-$T$ paths with length constraint $h$. Lastly, a $T$-time CONGEST algorithm on $D$ can be simulated on $G$ in time $O(T)$ since each $v \in V$ can simulate $v^{(o)}$ and $v^{(i)}$. 

\paragraph{Reducing from maximal/maximum edge-disjoint paths.} Consider an instance of maximal or maximum edge-disjoint paths on graph $G = (V, E)$ with length constraint $h$ and vertex sets $S$ and $T$. We create a digraph $D = (V', A)$ as follows:
\begin{itemize}
    \item \textbf{Vertices:} $V'$ consists of $V$ along with two vertices for each edge $e$, namely $x_{e}^{(i)}$ and $v_{e}^{(o)}$ for each $e \in E$.
    \item \textbf{Arcs:} For each $e \in \{u,v\} \in E$ we add to $A$ an arc from $x_{e}^{(i)}$ to $x_{e}^{(o)}$ as well as an arc from $u$ and $v$ to $x_{e}^{(i)}$ and an arc from $x_{e}^{(o)}$ to $u$ and $v$.
\end{itemize}
A collection of arc-disjoint $S$-$T$ paths in $D$ with length constraint $3h$ uniquely corresponds to an equal cardinality collection of $S$-$T$ edge-disjoint paths in $G$ with length constraint $h$. Thus, an $\tilde{O}(h)$ approximation on $D$ for the maximum $S$-$T$ arc-disjoint directed paths problem gives an $\tilde{O}(h)$ approximation for the maximum edge-disjoint paths problem on $G$. Likewise, a maximal collection of arc-disjoint $S$-$T$ paths on $D$ with length constraint $3h$ corresponds to a maximal collection of edge-disjoint $S$-$T$ paths with length constraint $h$ on $G$. Lastly, a $T$-time CONGEST algorithm on $D$ can be simulated on $G$ in time $O(T)$ since the endpoints of $e \in E$ can simulate $x_e^{(i)}$ and $x_e^{(o)}$ with constant overhead.

\paragraph{Reducing from maximal/maximum vertex-disjoint directed paths.}

Consider an instance of maximal or maximum vertex-disjoint directed paths on graph $D = (V, A)$ with length constraint $h$ and vertex sets $S$ and $T$. We create a digraph $D' = (V', A')$ as follows:
\begin{itemize}
    \item \textbf{Vertices:} $V'$ consists of vertices $v^{(o)}$ and $v^{(i)}$ for each $v \in V$.
    \item \textbf{Arcs:} For each $v \in V$ we add to $A'$ the arc $(v^{(i)}, v^{(o)})$. For each arc $a = (u,v) \in A$ we add to $A'$ the arc $(u^{(o)}, v^{(i)})$.
\end{itemize}
A collection of arc-disjoint paths in $D'$ from $S' = \{s^{(i)} : s \in S \}$ to $T' = \{t^{(o)} : t \in T \}$ with length constraint $2h-1$ uniquely corresponds to an equal cardinality collection of $S$-$T$ vertex-disjoint  paths in $D$ with length constraint $h$. Thus, an $\tilde{O}(h)$ approximation on $D'$ for the maximum $S'$-$T'$ arc-disjoint directed paths problem gives an $\tilde{O}(h)$ approximation for the maximum $S$-$T$ vertex-disjoint directed paths problem on $D$. Likewise, a maximal collection of arc-disjoint $S'$-$T'$ paths on $D'$ with length constraint $2h-1$ corresponds to a maximal collection of vertex-disjoint $S$-$T$ paths with length constraint $h$ on $D$. Lastly, a $T$-time CONGEST algorithm on $D'$ can be simulated on $D$ in time $T$ each $v \in V$ can simulate $v^{(i)}$ and $v^{(o)}$.
\end{proof}

\subsection{Maximal Disjoint Path Algorithms}\label{sec:maximalPaths}

We now observe that our length-constrained flow algorithms allow us to solve maximal arc-disjoint directed paths and therefore all of the above variants efficiently.
\begin{theorem}
There are deterministic CONGEST algorithms for maximal vertex-disjoint paths,  edge-disjoint paths,  vertex-disjoint directed paths and arc-disjoint directed paths running in time $\tilde{O}\left(h^{18}  + h^{17} \cdot (\rho_{CC})^{10} \right)$.
\end{theorem}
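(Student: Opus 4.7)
The plan is as follows. By \Cref{lem:reduceMaximalPaths} it suffices to give a deterministic CONGEST algorithm for maximal arc-disjoint directed paths, since the three other variants reduce to this one with constant overhead. Given an instance $(D, h, S, T)$ with unit arc capacities, I want an arc-disjoint collection $\mcP$ of $h$-length $S$-$T$ paths such that $D \setminus A[\mcP]$ admits no $h$-length $S$-$T$ path. My algorithm initializes $D_1 = D$ and $\mcP = \emptyset$, and for $t = 1, 2, \ldots, T$ it applies \Cref{thm:main} to $D_t$ with constant $\epsilon = 1/2$ to obtain $(f^{(t)}, w^{(t)})$ with $f^{(t)} = \eta \sum_{j=1}^{k} f_j^{(t)}$, then selects a summand $f_{j^*}^{(t)}$ of above-average value, appends $\supp(f_{j^*}^{(t)})$ to $\mcP$, and deletes those arcs to form $D_{t+1}$.

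The structural observation driving correctness is that each $f_j^{(t)}$ is an \emph{integral} $h$-length $S$-$T$ flow in a unit-capacity graph, so integrality together with the capacity constraints $\sum_{P \ni a} (f_j^{(t)})_P \leq 1$ forces $(f_j^{(t)})_P \in \{0,1\}$ and makes the paths in $\supp(f_j^{(t)})$ pairwise arc-disjoint $h$-length $S$-$T$ paths; paths added in different outer iterations are globally arc-disjoint because their arc sets are removed from the residual digraph between iterations, so the cumulative $\mcP$ is arc-disjoint. To bound the outer iteration count, let $L_t$ denote the LP value of the maximum $h$-length $S$-$T$ flow on $D_t$. From $\st(f^{(t)}) = \eta \sum_j \st(f_j^{(t)}) \geq (1-\epsilon) L_t$, and using that $\eta = \tilde{\Theta}(1)$ and $k = \tilde{O}(h)$ for constant $\epsilon$, pigeonhole gives some $j^*$ with $\st(f_{j^*}^{(t)}) \geq (1-\epsilon) L_t / (k\eta) = \tilde{\Omega}(L_t / h)$. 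Since the paths in $\supp(f_{j^*}^{(t)})$ are arc-disjoint and live in $D_t$, deleting their arcs lowers the LP value by at least $\st(f_{j^*}^{(t)})$, yielding $L_{t+1} \leq L_t \bigl(1 - \tilde{\Omega}(1/h)\bigr)$. Since $L_0 \leq n$ we obtain $L_T < 1$ after $T = \tilde{O}(h)$ iterations, at which point $D_T$ admits no $h$-length $S$-$T$ path and $\mcP$ is therefore maximal. Multiplying the $T = \tilde{O}(h)$ iteration count by the per-call deterministic CONGEST cost of \Cref{thm:main} at $\epsilon = 1/2$, namely $\tilde{O}(h^{17} + h^{16} \rho_{CC}^{10})$, yields the claimed $\tilde{O}(h^{18} + h^{17} \rho_{CC}^{10})$ runtime.

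The main obstacle I anticipate is implementing ``select a summand $f_{j^*}^{(t)}$ of above-average value'' in CONGEST without a dependence on the diameter of $D$. Evaluating $\st(f_j^{(t)}) = \sum_{s \in S} (f_j^{(t)})(\delta^+(s))$ naively requires aggregating scalars across the source set $S$, whose vertices can be far apart in the communication graph. However, we only need to identify some $j$ whose value exceeds the average $\st(f^{(t)})/(k \eta)$, and the sparse-neighborhood-cover tools used in the proof of \Cref{lem:iterPathCount} (via \Cref{lem:sparseNeigh}) allow aggregating all $k = \tilde{O}(h)$ such values and implicitly comparing them to the average in $\tilde{O}(h)$ additional rounds, which is absorbed into the per-iteration cost and thus does not affect the final bound.
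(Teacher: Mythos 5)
Your outer structure mirrors the paper's proof exactly: reduce to arc-disjoint directed paths via the reduction lemma, iterate the main flow theorem with $\epsilon = 1/2$, extract an integral summand from the decomposition $f = \eta\sum_j f_j$, append its paths, delete their arcs, and repeat $\tilde{O}(h)$ times. Your iteration-count bookkeeping (that $L_{t+1} \leq L_t - \st(f_{j^*}^{(t)})$ because the support of $f_{j^*}^{(t)}$ is deleted to form $D_{t+1}$, hence $L_{t+1} \leq L_t\bigl(1 - \tilde{\Omega}(1/h)\bigr)$) is sound.

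The gap is in the CONGEST implementation of ``select a summand of above-average value.'' You want a single \emph{global} $j^*$ with $\st(f_{j^*}^{(t)}) \geq (1-\epsilon)L_t/(k\eta)$, and you propose resolving the aggregation problem via the sparse-neighborhood-cover machinery from the iterated path count flow lemma. That machinery does not give what you need: the cover aggregates \emph{within} clusters of weak-diameter $\tilde{O}(h)$ (that is exactly how it is used there, to run per-cluster computations), whereas deciding which of the $k$ flows has globally-above-average total value requires comparing sums of contributions spread across the entire source set $S$, which may span a graph of diameter $\Omega(n)$. No radius-$\tilde{O}(h)$ primitive can aggregate over such distances. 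This is precisely the obstruction the paper flags in its own proof of this theorem, writing that ``we cannot compute $\st(f_j)$ efficiently in CONGEST because $D$ may have diameter much larger than $h$,'' and it is why the paper instead has each source $v$ \emph{independently and locally} choose $j^*(v) = \argmax_j \st(f_j^{(v)})$ (a decision $v$ can make by inspecting its own $h$-length flow paths in $\tilde{O}(h)$ rounds) and sets $f_{j^*} := \sum_v f_{j^*(v)}^{(v)}$, a stitching of different $f_j$'s rather than any single one. Your version cleanly yields a genuinely arc-disjoint collection from a single unit-capacity integral flow, but the selection step as you describe it has no $\tilde{O}(\poly(h, \rho_{CC}))$-round implementation, and that implementation difficulty is the whole crux of making this theorem work in CONGEST.
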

\begin{proof}
By \Cref{lem:reduceMaximalPaths}, it suffices to show that maximal arc-disjoint directed paths can be solved in time $\tilde{O}\left(h^{18}  + h^{17} \cdot (\rho_{CC})^{10} \right)$. We proceed to do so on digraph $D$ with length constraint $h$ and vertex sets $S$ and $T$ for the rest of this proof.


Specifically, we repeat the following until no path between $S$ and $T$ consists of $h$ or fewer edges. Apply \Cref{thm:main} to compute a $(1-\epsilon)$-approximate $h$-length $S$-$T$ flow $f$ in $D$ for $\epsilon = .5$ (any constant would suffice) with unit capacities. By the properties of $f$ as guaranteed by \Cref{thm:main}, we have that $f = \eta \cdot \sum_{j=1}^k f_j$ for $\eta = \tilde\Theta(1)$ and $k = \tilde{O}\left(h \right)$ where each $f_j$ is an integral flow. For each vertex $v$ we let $f_j^{(v)}$ be $f_j$ restricted to its flow paths out of $v$ and let $f_{j^*}^{(v)} := \argmax_{f_j^{(v)}} \st(f_j^{(v)})$. Then, we let $f_{j^*} := \sum_v f_{j^*}^{(v)}$ (notice that we cannot simply define $f_{j^*}$ as $\argmax_{f_j} \st(f_j)$ since we cannot compute $\st(f_j)$ efficiently in CONGEST because $D$ may have diameter much larger than $h$). Observe that since $f_{j^*}$ is integral and $h$-length, it exactly corresponds to an arc-disjoint collection of $S$-$T$ paths $\mcP'$ in $D$ each of which consists of at most $h$ edges. We add $\mcP'$ to $\mcP$, delete from $D$ any arc incident to a path of $\mcP'$ and continue to the next iteration.

As the above algorithm removes at least one path from $S$ to $T$ each time, it clearly terminates with a feasible solution for the maximal arc-disjoint directed paths problem.

Stronger, though, we claim that we need only iterate the above $\tilde{O}(h)$-many times until $S$ and $T$ are disconnected. Specifically, fix one iteration and let $\mcP^*$ be the collection of vertex-disjoint paths from $S$ to $T$ of maximum cardinality at the beginning of this iteration. By the $(1 - \epsilon)$-optimality of our flow and an averaging argument we have that $\st(f_{j^*}) \geq \tilde{\Omega}\left(\frac{1}{h}\right) \cdot |\mcP^*|$ which is to say that $|\mcP'| \geq \tilde{\Omega}\left(\frac{1}{h}\right) \cdot |\mcP^*|$. However, it follows that after $\tilde{\Theta}(h)$-many iterations for a large hidden constant we must at least halve $|P^*|$ since otherwise we would have computed a collection of vertex-disjoint $S$-$T$ paths whose cardinality is larger than the largest cardinality of any set of vertex-disjoint $S$-$T$ paths. Since initially $|P^*| \leq n$, it follows that after iterating the above $\tilde{O}(h)$-many times we have reduced $|P^*|$ to $0$ which is to say we have solved the maximal arc-disjoint directed paths problem.

Our running time is immediate from \Cref{thm:main} and the above bound we provide on the number of required iterations of $\tilde{O}(h)$ as well as the fact that each vertex can easily compute $f_{j^*}^{(v)}$ and $\mcP$ deterministically in parallel or CONGEST time $\tilde{O}(h)$ since our flows are $h$-length.
\end{proof}
Applying the fact that it is known that $\rho_{CC} \leq 2^{O(\sqrt{\log n})}$ (see \Cref{sec:cycCovers}), the above gives deterministic CONGEST algorithms running in time $\tilde{O}(\poly(h) \cdot 2^{O(\sqrt{\log n})})$. If $\rho_{CC}$ where improved to be poly-log in $n$ then we would get a $\tilde{O}(\poly(h))$ running time.

\subsection{Maximum Disjoint Path Algorithms}\label{sec:maximumPaths}
Lastly, we observe that our length-constrained flow algorithms allow us to $\tilde{O}(h)$-approximate maximum arc-disjoint directed paths and therefore all of the above variants efficiently.

\begin{theorem}\label{thm:maxPath}
    There are $\tilde{O}(h)$-approximation algorithms for maximum vertex-disjoint paths, edge-disjoint paths,  vertex-disjoint directed paths and arc-disjoint directed paths running in:
    \begin{itemize}
        \item Deterministic parallel time $\tilde{O}( h^{17})$ with $m$ processors;
        \item Randomized CONGEST time $\tilde{O}( h^{17})$ with high probability;
        \item Deterministic CONGEST time $\tilde{O}\left( h^{17}  + h^{16} \cdot (\rho_{CC})^{10} \right)$.
    \end{itemize}
\end{theorem}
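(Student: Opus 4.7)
The plan is to reduce via Lemma~\ref{lem:reduceMaximalPaths} to the arc-disjoint directed paths variant and make a single call to Theorem~\ref{thm:main} with a constant parameter $\eps = 1/2$ on the unit-capacity instance. This returns, within the claimed running time, a flow of the form $f = \eta\sum_{j=1}^k f_j$ where $\eta = \tilde\Theta(1)$, $k = \tilde O(h)$, and each $f_j$ is an integral $h$-length $S$-$T$ flow. Under unit capacities each $f_j$ is literally an arc-disjoint collection of at most $h$-length $S$-$T$ paths, so each is a feasible output on its own.

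Any integral arc-disjoint collection of size $\OPT$ is itself a feasible length-constrained flow, hence $\OPT \le \st(f)/(1-\eps) = O(\st(f))$. Writing $\st(f) = \eta \sum_j \st(f_j)$ and using $\eta k = \tilde O(h)$, an averaging argument yields some index $j^*$ with $\st(f_{j^*}) \ge \st(f)/(\eta k) = \tilde\Omega(\OPT/h)$. Returning the paths of $f_{j^*}$ therefore gives the desired $\tilde O(h)$-approximation. In the parallel model we identify $j^*$ immediately: compute $\st(f_j)$ for each $j$ in parallel in $\tilde O(h)$ time with $m$ processors and output the argmax. In randomized CONGEST we sample $j \in [k]$ uniformly and repeat $O(\log n)$ times, keeping the best trial; since $\E[\st(f_j)] = \st(f)/k = \tilde\Omega(\OPT/h)$, Markov plus boosting yields high-probability success, and each trial requires only local reads because the $f_j$ are stored locally by Theorem~\ref{thm:main}.

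The principal obstacle is the deterministic CONGEST selection, since the input graph may have diameter far larger than $h$ and we therefore cannot afford a global argmax. We circumvent this using the sparse neighborhood cover of Lemma~\ref{lem:sparseNeigh} at radius $h$: every $h$-length $S$-$T$ path lies inside the $h$-neighborhood of its source and hence inside some low-weak-diameter cluster. Assigning each source $s$ a canonical home cluster (one of the $\tilde O(1)$ parts whose single component contains $B_h(s)$), each cluster $C$ can locally aggregate, for every $j \in [k]$, the number of $f_j$-paths rooted at home sources in $C$ in $\tilde O(h)$ rounds, and then pick the argmax $j_C$. Restricting attention to one partition of the cover at a time (within a single partition the clusters are vertex-disjoint, so the per-cluster outputs are automatically arc-disjoint) gives $\tilde O(1)$ candidate arc-disjoint collections whose total size is at least $\st(f_{j^*})$; by averaging, at least one partition's collection has size $\tilde\Omega(\OPT/h)$, and the best partition is identified by broadcasting only $\tilde O(1)$ aggregated scalars per cluster. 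All extraction work costs $\tilde O(h)$ and is dominated by the single invocation of Theorem~\ref{thm:main}, yielding the stated time bounds.
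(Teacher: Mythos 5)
Your skeleton matches the paper's: reduce to arc-disjoint directed paths (Lemma~\ref{lem:reduceMaximalPaths}), call Theorem~\ref{thm:main} once with constant $\eps$, and extract a single integral $f_j$ by averaging over the $k=\tilde O(h)$ terms. The difference is \emph{how} you select the index. The paper has each source $v$ pick $j^*(v) := \argmax_j \st(f_j^{(v)})$ \emph{locally} (looking only at the flow leaving $v$) and outputs $\sum_v f_{j^*(v)}^{(v)}$, so no communication beyond $h$-neighborhoods is ever needed. You instead aim for a single global argmax (or a per-cluster argmax for the deterministic case), and that is where the gaps are.

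For randomized CONGEST, the claim that $O(\log n)$ uniform samples suffice does not follow from the expectation bound. You only know $\E_j[\st(f_j)] = \tilde\Omega(\OPT/h)$; since $\st(f_j) \ge 0$ can concentrate on a small set of indices, the probability that a single uniform sample achieves, say, half the expectation is governed by reverse Markov against the upper bound $\st(f_j)\le\OPT$, which gives only $\tilde\Omega(1/h)$. You would need $\tilde O(h)$ trials rather than $O(\log n)$ (harmless for the stated running time, but the given justification -- ``Markov plus boosting'' -- is an upper-tail argument used where a lower-tail bound is required). You also do not explain how the sampled index $j$ is agreed upon across a graph whose diameter may vastly exceed $h$; without shared randomness this is a global broadcast.

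For deterministic CONGEST the gap is more serious. You correctly argue that within a single partition of the $h$-neighborhood cover the per-cluster choices $j_C$ produce an arc-disjoint collection $\mcP_i$ (because every $h$-length path from a home source stays inside $B_h(s)\subseteq C$, and clusters within one partition are disjoint), and that $\sum_i |\mcP_i| \ge \max_j\st(f_j)$ so some $\mcP_{i^*}$ is $\tilde\Omega(\OPT/h)$. But the union $\bigcup_i \mcP_i$ across partitions is \emph{not} arc-disjoint, so you must output a single $\mcP_{i^*}$. Determining which of the $\tilde O(1)$ partitions is best requires comparing aggregate counts that live in different, possibly far-apart, parts of the graph -- a global aggregation that cannot be done in $\tilde O(\poly(h))$ rounds. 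If instead you output the congested union and decongest (e.g.\ with the MIS-based Lemma~\ref{lem:decongest}), you lose another $\tilde\Theta(h)$ factor and land at $\tilde O(h^2)$-approximation. The paper's per-source selection is precisely the device that makes the choice local and avoids this global comparison; your per-cluster variant gets partway there but then reintroduces a global selection among partitions.

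One last note, for your own calibration: the paper's local construction $\sum_v f_{j^*(v)}^{(v)}$ is asserted to be arc-disjoint ``since it is integral and $h$-length,'' but if two sources pick \emph{different} indices $j$ whose paths share an arc downstream, the resulting integral flow can put value $>1$ on that arc, so arc-disjointness is not automatic from integrality alone. This is not a defect in your write-up -- you avoid it outright by committing to a single $j$ per partition -- but it is worth knowing that the local-selection route requires an additional argument (or a decongestion step) that the paper glosses over.
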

\begin{proof}
By \Cref{lem:reduceMaximalPaths}, it suffices to provide a $\tilde{O}(h)$-approximate algorithm for maximum arc-disjoint directed paths with the stated running times. We do so for the rest of this proof. Let the input be digraph $D = (V, A)$ with length constraint $h \geq 1$ and disjoint sets $S, T \subseteq V$.

We apply \Cref{thm:main} to compute an $\epsilon$-approximate $h$-length constrained flow $f$ in $D$ for $\epsilon = .5$ (any constant would suffice) and capacities $\U_a = 1$ for every $a$. By the properties of $f$ as guaranteed by \Cref{thm:main}, we have that $f = \eta \cdot \sum_{j=1}^k f_j$ for $\eta = \Theta(1)$ and $k = \tilde{O}\left(h \right)$ where each $f_j$ is an integral flow. For each vertex $v$ we let $f_j^{(v)}$ be $f_j$ restricted to its flow paths out of $v$ and let $f_{j^*}^{(v)} := \argmax_{f_j^{(v)}} \st(f_j^{(v)})$. Then, we let $f_{j^*} := \sum_v f_{j^*}^{(v)}$. Observe that since $f_{j^*}$ is integral and $h$-length, it exactly corresponds to an arc-disjoint collection of paths $\mcP$ in $D$ each of which consists of at most $h$ edges. We return $\mcP$ as our solution.

Letting $\mcP^*$ be the optimal solution to the input problem we have by $k = \tilde{O}(h)$ and an averaging argument that
\begin{align*}
|\mcP| = \st(f_{j^*}) \geq \tilde{\Omega}\left(\frac{1}{h}\right) \cdot |\mcP^*|
\end{align*}
and so our solution is $\tilde{\Omega}(\frac{1}{h})$-approximate.

For our running time, observe that each vertex can easily compute $f_{j^*}^{(v)}$ and $\mcP$ deterministically in parallel or CONGEST time $\tilde{O}(h)$ since our flows are $h$-length. Thus, our running time is dominated by \Cref{thm:main}.
\end{proof}

\subsection{On the Hardness of Maximum Disjoint Paths}\label{sec:hardness}

\citet{guruswami2003near} give hardness results for a variety of length-constrained maximum disjoint path problems. In their work they state hardness of approximation result in terms of $m$, the number of edges in the graph. In the following we restate these results but in terms of $h$, the length-constraint. 
\begin{theorem}[Adaptation of Theorem 1 of \citet{guruswami2003near}]
Assume the strong exponential time hypothesis (SETH). Then there does not exist a polynomial-time $O(h)$-approximation algorithm solving the maximum arc-disjoint directed paths problem for instances where $h = \Omega(\log n)$.
\end{theorem}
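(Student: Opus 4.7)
This statement is explicitly phrased as an adaptation of Theorem 1 of \citet{guruswami2003near}, so the plan is to translate the parameters of their construction rather than to produce a new reduction from scratch. Their Theorem 1 provides a polynomial-time gap-producing reduction from $k$-SAT to the maximum arc-disjoint directed paths problem, yielding instances of size $m$ for which distinguishing the satisfiable (large OPT) from the unsatisfiable case (small OPT) within a factor of roughly $m^{1-o(1)}$ is impossible under SETH.

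First I would inspect the hard instances produced by the reduction and identify the length of the source--sink paths; call this length $L$. In their construction these paths arise from a layered network that encodes the CNF structure of the underlying $k$-SAT instance, and the number of layers is governed by an internal parameter of the reduction. Choosing that parameter to make $L$ as small as possible relative to the graph size yields $L = \Theta(\log n)$, which places us inside the regime $h = \Omega(\log n)$ required by the statement. Identifying $h := L$ then fixes the length constraint for the class of instances to which the hardness applies.

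Next I would re-express the hardness gap in terms of $h$ rather than $m$. The gap between the YES and NO cases of the \citet{guruswami2003near} construction is controlled by the number of layers, and when the internal parameters are set as above this gap is $\Omega(h)$. Combined with the polynomial-time reduction, this means any polynomial-time $O(h)$-approximation algorithm, composed with the reduction, would distinguish the YES and NO cases of $k$-SAT in polynomial time, contradicting SETH. SETH (rather than merely $\mathsf{P}\neq\mathsf{NP}$) is used because we want to rule out polynomial-time approximation relative to the strongest widely believed lower bound on SAT; the reduction itself is polynomial time, so the implication goes through directly.

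The main obstacle is the parameter tracking: verifying that the construction of \citet{guruswami2003near} can be tuned to simultaneously achieve $h = \Omega(\log n)$ and an $\Omega(h)$ gap in the produced instances, and that their SETH-based hardness transfers cleanly under this tuning. This amounts to a careful reading of their reduction rather than the introduction of any new technical machinery, which is why the statement can fairly be called an adaptation.
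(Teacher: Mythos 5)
The paper does not actually supply a proof of this theorem; it is stated as an ``adaptation'' of the cited result and left without further argument, so there is no in-paper derivation to compare yours against. That said, your proposal misstates the content of the cited result. Theorem~1 of Guruswami, Khanna, Rajaraman, Shepherd and Yannakakis is an $\mathrm{NP}$-hardness result: it shows that directed edge-disjoint paths is $\mathrm{NP}$-hard to approximate within $m^{1/2-\epsilon}$ for every fixed $\epsilon>0$, via a combinatorial gadget composed with a hardness-of-independent-set instance. It is neither a SETH-based reduction, nor from $k$-SAT, nor does it yield an $m^{1-o(1)}$ gap. Since your whole plan is framed as translating a SETH-based $m^{1-o(1)}$ gap, it is built on a premise the cited theorem does not provide.

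Second, the step ``choose the layer parameter so that $L=\Theta(\log n)$'' is aimed at the wrong target. The clause $h=\Omega(\log n)$ in the statement only delimits the regime in which the hypothetical $O(h)$-approximation must succeed; the hard instances themselves can have $h$ polynomial in $n$, and in the Guruswami et al.\ construction the source--sink paths have length on the order of $\sqrt{m}$, which is already $\Omega(\log n)$ without any tuning. The real obstacle is that, with $h=\Theta(\sqrt{m})$, a gap of $m^{1/2-\epsilon}$ translates to $h^{1-2\epsilon}$, and for any fixed $\epsilon>0$ this does \emph{not} rule out an $O(h)$-approximation, since $C\cdot h$ eventually dominates $h^{1-2\epsilon}$. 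Upgrading $h^{1-o(1)}$-hardness to genuine $\Omega(h)$-hardness is precisely where a stronger hypothesis such as SETH would have to enter, and your proposal never engages with that step: you flag ``parameter tracking'' as the main obstacle but offer no argument that the required tuning of the reduction is achievable. As written, this is an outline of what a proof might look like, not a proof.
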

\noindent Observe that it follows that assuming SETH, the parallel algorithm in \Cref{thm:maxPath} is optimal up to poly-logs.

\section{Application: Simple Distributed Expander Decompositions}\label{sec:expander}
In this section, we explain how our maximal disjoint path algorithm can significantly simplify the distributed deterministic expander decomposition of \citet{chang2020deterministic}.

The key algorithmic primitive of \citep{chang2020deterministic} in their
distributed deterministic expander decomposition is their Lemma D.8. Instead
of computing maximal bounded-hop disjoint paths, they were only be able to compute
a set of paths that are ``nearly maximal''. The formal statement
is as follows:
\begin{lemma}
[Nearly maximal disjoint paths (Lemma D.8 of \cite{chang2020deterministic}] Consider a graph
$G=(V,E)$ of maximum degree $\Delta$. Let $S\subseteq V$ and $T\subseteq V$
be two subsets. There is an $O(d^{3}\beta^{-1}\log^{2}\Delta\log n)$-round
deterministic algorithm that finds a set $P$ of $S-T$ vertex-disjoint
paths of length at most $d$, together with $a$ vertex set $B$ of
size at most $\beta|V\setminus T|<\beta|V|$, such that any $S-T$
path of length at most $d$ that is vertex-disjoint to all paths in
$P$ must contain a vertex in $B$.
\end{lemma}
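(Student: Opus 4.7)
The strategy is to grow $P$ in successive super-phases, where each super-phase extracts a batch of vertex-disjoint length-$\leq d$ augmenting paths from the current residual layered graph via a sequence of distributed maximal matchings; the process stops once the rate of new coverage falls below $\beta|V\setminus T|$ per super-phase, at which point the uncovered but still-augmenting vertices are dumped into $B$.

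I would initialize $P\gets\emptyset$ and $B\gets\emptyset$. Each super-phase begins with one $d$-round BFS from $S$ (using a virtual super-source) inside $G[V\setminus V(P)]$, producing layers $L_0=S,L_1,\ldots,L_d$ and marking $T\cap L_i$ for each $i$. To extract a vertex-disjoint set $P'$ of augmenting paths inside these layers, I would run a backward sweep: at each transition $L_i\to L_{i-1}$ I compute a bipartite maximal matching between current path tips in $L_i$ and unused predecessors in $L_{i-1}$, iterated to convert level-wise matching-maximality into layered path-maximality in the spirit of Even--Tarjan blocking flows. Each bipartite maximal matching uses a deterministic CONGEST matching routine (e.g.\ Fischer's) costing $O(\log^2\Delta\log n)$ rounds. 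Simulating communications for the virtual layered graph in the physical graph $G$ forces each matching step to route along the BFS tree at an $O(d)$ per-step overhead, and the Even--Tarjan-style inner loop contributes another $O(d)$ factor for augmentation passes; combined with the $d$ layers this gives a per super-phase cost of $O(d^3\log^2\Delta\log n)$ rounds.

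After each super-phase, I would append the extracted $P'$ to $P$ if $|V(P')|\geq\beta|V\setminus T|$ and otherwise terminate, setting $B$ to be the set of vertices in the terminal super-phase's layered residual not covered by $V(P)\cup V(P')$. Since each continuing super-phase consumes at least $\beta|V\setminus T|$ vertices of $V\setminus T$, the number of super-phases is at most $\beta^{-1}$, yielding the claimed total round complexity $O(d^3\beta^{-1}\log^2\Delta\log n)$. By the termination threshold, $|B|<\beta|V\setminus T|$. For the blocking property, any $S$-$T$ path $Q$ of length $\leq d$ vertex-disjoint from $V(P)$ after termination lies entirely inside the terminal BFS layered subgraph; if $Q$ avoids $B$ then all its vertices sit in $V(P')$, but then $Q$ itself would be extendable in the layered graph, contradicting the layered path-maximality of $P'$ established by the reverse sweep.

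\textbf{Main obstacle.} The hardest technical ingredient is simultaneously guaranteeing (a) that $P'$ in the terminal super-phase is sufficiently maximal in the layered residual so that every remaining short $S$-$T$ path crosses $B$, and (b) that $|B|\leq\beta|V\setminus T|$. A naive single pass of per-layer maximal matchings is not enough, because it can leave a residual augmenting path threading through unmatched vertices across multiple levels; this forces the $O(d)$-round inner augmentation loop per super-phase, which is what produces the $d^2$ factor per super-phase on top of the per-round $d$ CONGEST simulation overhead. A secondary delicate point is ensuring that the Fischer-style deterministic matching rounds can be pipelined along overlapping BFS trees within $G$ so that their communication does not exceed the claimed $O(d)$ simulation cost per step.
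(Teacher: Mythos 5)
This statement is not proved in the paper at all: it is quoted verbatim as Lemma D.8 of \citet{chang2020deterministic} purely to explain what the present paper improves upon, so there is no in-paper proof to compare against. Judged on its own merits, your proposal has a genuine gap, and it sits exactly at the point that makes this lemma nontrivial. Your termination rule says the last super-phase's batch $P'$ covers fewer than $\beta|V\setminus T|$ vertices; that bounds $|V(P')|$, not the number of \emph{uncovered} vertices in the terminal layered residual, which is what you define $B$ to be. That residual can contain essentially all of $V$, so the claimed bound $|B|<\beta|V\setminus T|$ does not follow. The natural repair is to take $B:=V(P')$ (which does satisfy the size bound), but then the blocking property requires $P'$ to be a genuinely \emph{maximal} set of vertex-disjoint length-$\le d$ paths in $G[V\setminus V(P)]$ --- and if you could compute that deterministically in the claimed round complexity, the lemma would be trivial with $B=\emptyset$ after a single phase. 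The entire reason the left-over set $B$ appears in Chang--Saranurak's statement is that no efficient deterministic CONGEST algorithm for maximal short disjoint paths was known; that is the open problem this paper resolves, and it requires the machinery of Sections 6--8 (near-Eulerian partitions, cycle covers, flow rounding) and costs $2^{O(\sqrt{\log n})}$, not an ``$O(d)$-round inner augmentation loop'' of per-layer maximal matchings. Your own ``main obstacle'' paragraph correctly identifies this tension but then asserts, rather than establishes, that Even--Tarjan-style passes resolve it; blocking-flow computations in the Even--Tarjan/Dinitz style are sequential DFS-with-retreat procedures and do not distribute this way.

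Two further points. First, your layered graph is the BFS layering of $G[V\setminus V(P)]$, which only represents \emph{shortest} $S$-$T$ paths; a path of length at most $d$ that is not layer-monotone (e.g.\ one using an intra-layer edge) is not a path of your layered DAG, so maximality there does not block it. Handling all length-$\le d$ paths requires either iterating over path lengths $1,\dots,d$ \`a la Hopcroft--Karp or a length-expanded vertex-copy construction (compare the length-weight expanded DAG of \Cref{sec:lengthExpanded}). Second, your round-count bookkeeping has the right \emph{shape} ($\beta^{-1}$ phases $\times$ $d$ layers $\times$ $d$ routing overhead $\times$ $O(\log^2\Delta\log n)$ per deterministic matching), which suggests you have correctly guessed the architecture of the Chang--Saranurak argument; but without a correct specification of what goes into $B$ at the stalling point --- the one place where their proof does real work --- the proposal does not constitute a proof.
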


The set $P$ from the lemma is nearly maximal in the sense that if $B$ is deleted
from $G$, then $P$ would be maximal. However, we can see that there
might possibly be many additional disjoint paths that go through $B$.
This set $B$ complicates all of their later algorithmic steps. 

The high-level summary of the issue is that all their flow primitives
that are based on Lemma D.8 must work with source/sink sets that are
very big only. Otherwise, the guarantee becomes meaningless or the
running time becomes very slow. 

Now, we explain in more details. Given two sets $S$ and $T$ where
$|S|\le|T|$, normally if the matching player from the cut-matching
game does not return a sparse cut, then it returns an embedding of
a matching where every vertex in $S$ is matched to some vertex in
$T$. However, in Lemma D.9 of \cite{chang2020deterministic}, the matching player based on
Lemma D.8 may return an embedding that leaves as many as $\approx\beta|V\setminus T|$
vertices in $S$ unmatched. This is called the ``left-over'' set.
We think of $\beta\ge1/n^{o(1)}$ as the round complexity of Lemma
D.8 is proportional to $\beta^{-1}$. Therefore, it is only when $|S|,|T|\ge2\beta|V|\ge|V|/n^{o(1)}$
that Lemma D.9 in \cite{chang2020deterministic} may give some meaningful guarantee, yet
this is still weaker than normal.

The same issue holds for their multi-commodity version of the matching
player (i.e.~Lemma D.11 of \cite{chang2020deterministic}). For the same reasoning, the lemma is meaningful
only when the total number of source and sink is at least $\Omega(\beta|V|)$.
The issue propagates to their important subroutine (Theorem 4.1 of \cite{chang2020deterministic}) for
computing most balanced sparse cut. The guarantee holds when only
the returned cut $C$ is such that $|C|\ge\Omega(\beta|V|)$. At the
end, they managed to obtain an deterministic expander decomposition
(just treat the edges incident to the left-over part as inter-cluster
edges at the end). However, they need to keep track of this left-over
parameter from the first basic primitive until the end result. 

In contrast, in their randomized algorithm for computing expander decomposition, this issues does not appear anyway because of the randomized maximal disjoint path algorithm. Therefore, by plugging in our deterministic maximal disjoint path algorithm into the expander decomposition of \cite{chang2020deterministic}, all these issue  will be resolved immediately.

\section{Application: $(1-\epsilon)$-Approximate Distributed Bipartite $b$-Matching}\label{sec:bmatching}
In this section we give the first efficient $(1-\epsilon)$-approximate CONGEST algorithms for maximum cardinality bipartite $b$-matching. In fact, our results are for the slightly more general edge-capacitated maximum bipartite $b$-matching problem, defined as follow.

\begin{quote}
    \textbf{Edge-Capacitated Maximum Bipartite $b$-Matching}: Given bipartite graph $G = (V, E)$, edge capacities $\U$ and function $b : V \to \mathbb{Z}_{ > 0}$ compute an integer $x_e \in [0, \U_e]$ for each $e \in E$ maximizing $\sum_e x_e$ so that for each $v \in V$ we have $\sum_{e \in \delta(v)} x_e \leq b(v)$.
\end{quote}
Notice that the case where $b(v) = 1$ for every $v$ is just the classic maximum cardinality matching problem. ``$b$-matching'' seems to refer to two different problems in the literature depending on whether edges can be chosen with multiplicity: either it is the above problem where $\U_e = 1$ for every $e \in E$ or it is the above problem where $\U_e = \max_v b_v$ for each $e \in E$. Our algorithms will work for both of these variants since they solve the above problem which generalizes both of these problems.

The following theorem summarizes our main result for bipartite $b$-matching in CONGEST. Again, recall that $\rho_{CC}$ is defined in \Cref{dfn:rhoCC} and is known to be at most $2^{O(\sqrt{\log n})}$.

\begin{theorem}
There is a deterministic $(1-\epsilon)$-approximation for edge-capacitated maximum bipartite $b$-matching running in CONGEST time $\tilde{O}\left(\frac{1}{\eps^9} + \frac{1}{\eps^7} \cdot (\rho_{CC})^{10} \right)$.
\end{theorem}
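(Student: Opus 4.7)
The plan is to reduce edge-capacitated bipartite $b$-matching to a constant-length flow problem on a small-depth DAG and then combine our main theorem with our flow-rounding primitive (\Cref{lem:detRounding}). The reduction I would use is the standard vertex-splitting reduction. Given $G = (L \sqcup R, E)$, edge capacities $\U$ and $b : V \to \mathbb{Z}_{>0}$, construct a digraph $D$ as follows: for each $v \in V$ create $v^{\text{in}}, v^{\text{out}}$ and an arc $(v^{\text{in}}, v^{\text{out}})$ of length $1$ and capacity $b(v)$; for each edge $e = \{u,v\} \in E$ with $u \in L, v \in R$ add an arc $(u^{\text{out}}, v^{\text{in}})$ of length $1$ and capacity $\U_e$. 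Set $S = \{v^{\text{in}} : v \in L\}$, $T = \{v^{\text{out}} : v \in R\}$, and length constraint $h = 3$. In CONGEST each original node $v$ simulates both $v^{\text{in}}$ and $v^{\text{out}}$ with only $O(1)$ overhead per round, since every arc of $D$ either lies inside one original vertex or corresponds one-to-one to an edge of $G$.

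The key observation is that $D$ is a $4$-layer $S$-$T$ DAG in which every $S$-$T$ path has length exactly $3 = h$, so the \ref{LP:hopConFlow} on $D$ coincides with the ordinary max $S$-$T$ flow LP on $D$; moreover an integral $S$-$T$ flow in $D$ corresponds bijectively (via $x_e := \hat{f}_{(u^{\text{out}}, v^{\text{in}})}$) to a feasible edge-capacitated $b$-matching of the same value, because the splitting arcs enforce $\sum_{e \ni v} x_e \le b(v)$ and the middle arcs enforce $x_e \le \U_e$.

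The algorithm I propose is then: (i) apply \Cref{thm:main} with error $\eps/2$ on $D$ to obtain a fractional $(1 - \eps/2)$-approximate $h$-length flow $f$ in deterministic CONGEST time $\tilde{O}(\tfrac{1}{\eps^9} h^{17} + \tfrac{1}{\eps^7} h^{16} (\rho_{CC})^{10}) = \tilde{O}(\tfrac{1}{\eps^9} + \tfrac{1}{\eps^7} (\rho_{CC})^{10})$ since $h = 3$; (ii) extract the arc flow $f(a) = \sum_{P \ni a} f_P$ (each node has this by the CONGEST output convention) and apply the deterministic flow-rounding primitive \Cref{lem:detRounding} with error $\eps/2$ on the $4$-layer DAG $D$ to obtain an integral $S$-$T$ flow $\hat{f}$ satisfying $\st(\hat{f}) \ge (1 - \eps/2)\,\st(f) \ge (1 - \eps)\,\OPT$; (iii) output $x_e := \hat{f}_{(u^{\text{out}}, v^{\text{in}})}$. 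Step (ii) runs in deterministic CONGEST time $\tilde{O}(\tfrac{1}{\eps^5} h^5 (\rho_{CC})^{10}) = \tilde{O}(\tfrac{1}{\eps^5} (\rho_{CC})^{10})$, which is dominated by step (i).

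I do not expect any serious obstacle: the reduction is the textbook one, the two-sided vertex capacity is absorbed into edge capacities by the splitting trick, and everything needed (length-constrained flow with near-optimal integrality, plus a deterministic CONGEST rounding on a constant-depth DAG) is already delivered by \Cref{thm:main} and \Cref{lem:detRounding}. The only points to check carefully are that the simulation of $D$ in $G$ is $O(1)$ in CONGEST (immediate from the vertex-splitting construction) and that the arc-flow representation required by \Cref{lem:detRounding} is directly available from the output format of \Cref{thm:main} (it is, since each node stores $f_a(h')$ for every incident arc and every $h' \le h$, so in particular it knows $f(a) = \sum_{h'} f_a(h')$).
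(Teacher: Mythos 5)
Your proposal is correct and follows essentially the same approach as the paper's proof: the identical vertex-splitting reduction to a $3$-layer DAG with $h=3$, the same appeal to \Cref{thm:main} for the $(1-\Theta(\eps))$-approximate fractional flow, and the same rounding via \Cref{lem:detRounding} followed by reading off $x_e$ from the matching arc. The only cosmetic difference is that you call $D$ a ``$4$-layer DAG'' (counting vertex layers) while the paper calls it a ``$3$-layer $S$-$T$ DAG'' under its convention of $h$ layers meaning $h+1$ vertex levels; the underlying object and all bounds are identical.
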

\begin{proof}
Our algorithm works in two steps. First, we reduce edge-capacitated $b$-matching to length-constrained flow and use our length constrained flow algorithm to efficiently compute a fractional flow. Then, we apply the flow rounding technology we developed in \Cref{sec:flowRounding} to round this flow to an integral flow which, in turn, corresponds to an integral $b$-matching.

More formally our algorithm is as follows. Suppose we are given an instance of edge-capacitated $b$-matching on bipartite graph $G = (V,E)$. Let $L$ and $R$ be the corresponding bipartition of vertices of $G$. We construct the following instance of length-constrained flow on digraph $D = (V',A)$ with $h=3$ as follows. Each $v \in V$ has two copies $v^{(i)}$ and $v^{(o)}$ in $V'$. We add arc $(v^{(i)}, v^{(o)})$ to $A$ with capacity $b(v)$. If $\{u, v\} \in E$ where $u \in L$ and $v \in R$ then we add arc $(u^{(o)}, v^{(i)})$ with capacity $\U_e$ to $A$. Lastly, we let $S = \{u^{(i)} : u \in L \}$, $T = \{v^{(o)} : v \in R\}$ and the length of each arc in $D$ be $1$. Next, we apply \Cref{thm:main} to compute a $(1-\eps_1)$-approximate maximum $3$-length $S$-$T$ flow $f$ on $D$ for some small $\eps_1$ to be chosen later. Since $D$ is a $3$-layer $S$-$T$ DAG we may interpret this as a (non-length-constrained) flow where the flow value on arc $a$ is $f(a)$. 

We then apply \Cref{lem:detRounding} to this non-length-constrained flow to get integral $S$-$T$ flow $f'$ satisfying $\st(f') \geq (1-\eps_2) \cdot \st(f)$ for some small $\eps_2$ to be chosen later. We return as our solution the $b$-matching which naturally corresponds to $f'$. Namely, if $e = \{u, v\}$ then since $f'$ is integral it assigns arc $(u^{(o)}, v^{(i)})$ a value in $\{0, 1, \ldots, \U_e \}$. We let $x_e$ be this value for $e = \{u, v\}$ and we return as our $b$-matching solution $\{x_e\}_e$.

$f'$ is a $(1-\eps_1)(1-\eps_2)$-approximate maximum $S$-$T$ flow. Letting $\OPT$ be the value of the optimal $b$-matching solution, it is easy to see that the maximum $S$-$T$ flow has value $\OPT$ and so the solution we return has value at least $(1-\eps_1)(1-\eps_2) \cdot \OPT$. Letting $\eps_1 = \eps_2 = \Theta(\eps)$ for an appropriately small hidden constant we get that $(1-\eps_1)(1-\eps_2) \cdot \OPT \geq (1-\eps) \cdot \OPT$.

Lastly, we argue our running time. Our running time is dominated by one call to \Cref{thm:main} with $\eps_1 = \Theta(\eps)$ which takes $\tilde{O}\left(\frac{1}{\eps^9} + \frac{1}{\eps^7} \cdot (\rho_{CC})^{10} \right)$ and one call to \Cref{lem:detRounding} with $\eps_2 = \Theta(\eps)$ which takes $\tilde{O}(\frac{1}{\eps^5} \cdot (\rho_{CC})^{10})$. Combining these running times gives the overall running time of our algorithm.
\end{proof}

\section{Application: Length-Constrained Cutmatches}\label{sec:cutMatches}
As it captures low-latency communication subject to bandwidth constraints, the problem of computing low-congestion $h$-length paths between two set of nodes $S$ and $T$ occurs often in network optimization. 

In this section we give algorithms that find a low-congestion $h$-length collection of paths between two sets of nodes and certify that there is no low-congestion way of extending the current collection of paths with a moving cut. Such a construction is called a length-constrained cutmatch. A recent work \cite{haeuplerExpander2022} uses the algorithms we give for cutmatches to give the first efficient constructions of a length-constrained version of expander decompositions. These constructions were then used to give the first distributed CONGEST algorithms for many problems including MST, $(1+\epsilon)$-min-cut and $(1+\epsilon)$-lightest paths that are guaranteed to run in sub-linear rounds as long as such algorithms exist on the input network. 

We now formalize cutmatches. In what follows, for a vertex subset $W \subseteq V$ we let $\U^+(W) = \sum_{v \in W}\sum_{a \in \delta^+(v)}\U_a$ and $\U^-(W) = \sum_{v \in W} \sum_{a \in \delta^-(v)}\U_a$. We also let $\delta^{\pm}(S,T) := \bigcup_{v \in S} \delta^+(v) \cup \bigcup_{v \in T} \delta^-(T)$. Note that throughout this section we assume that each $S$-$T$ path in the support of an $h$-length flow contains exactly one vertex from $S$ and one vertex from $T$ (this is without loss of generality since any such flow can be made to satisfy this property without changing its value).

\begin{definition}[$h$-Length Cutmatch]\label{def:cutmatch}
Given digraph $D = (V,A)$ with capacities $\U$ and lengths $\l$, an $h$-length $\phi$-sparse cutmatch of congestion $\gamma$ between disjoint node sets $S,T \subseteq V$ consists of:
\begin{itemize}
    \item An integral $h$-length $S$-$T$ flow $f$ in $D$ with capacities $\{\U_a\}_{a \in \delta^{\pm}(S,T)} \cup \{\gamma \cdot \U_a\}_{a \not \in \delta^{\pm}(S,T)}$ and lengths $\l$;
    \item A moving cut $\w$ of value $ \sum_a \w_a \cdot \U_a \leq \phi \left( \U^+(S) - \st(f)\right)$ such that $d_{\l'}(S,T) > h$ where
    \begin{align*}
            \l'_a := 
            \begin{cases}
                h+1 & \text{if $a \in \delta^\pm(S,T)$ and $f(a) = \U_a$}\\
                \l_a + h \cdot \w_a & \text{otherwise}
        \end{cases}
    \end{align*}
\end{itemize}
\end{definition}

Our main theorem of this sections shows how to efficiently compute length-constrained cutmatches.
\begin{restatable}{thm}{cutMatchTheorem}\label{lem:matchcut-old-purecut}
    Suppose we are given a digraph $D = (V,A)$ with capacities $\U$ and lengths $\l$. There is an algorithm that, given two node sets $S,T \subseteq V$, $h \geq 1$ and $\phi \leq 1$, outputs an $h$-length $\phi$-sparse cutmatch $(\hat{f}, \hat{w})$ of congestion $\gamma$ between $S$ and $T$, where $\gamma=\tilde{O}(\frac{1}{\phi})$. This algorithm runs in:
\begin{enumerate}
    \item Deterministic parallel time $\tilde{O}( h^{17})$ with $m$ processors where $|\supp(\hat{f})| \leq \tilde{O}(h^{10} \cdot |A|)$;
    \item Randomized CONGEST time $\tilde{O}(h^{17})$ with high probability.
\end{enumerate}
\end{restatable}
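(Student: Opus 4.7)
The plan is to reduce the cutmatch computation to a single application of Theorem~\ref{thm:main}. I construct a modified digraph $D^\star$ with the same underlying vertices, arcs, and lengths as $D$ but with capacities $\U^\star_a = \U_a$ on the boundary arcs $a \in \delta^\pm(S,T)$ and $\U^\star_a = \gamma \U_a$ on all remaining arcs, where $\gamma = \tilde\Theta(1/\phi)$. I then apply Theorem~\ref{thm:main} to $D^\star$ with source $S$, sink $T$, length bound $h$, and a small constant $\eps$, obtaining a $(1-\eps)$-approximate $h$-length flow/moving cut pair $(f^\star, w^\star)$ together with the integral decomposition $f^\star = \eta\sum_{j=1}^k f_j$. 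The capacity boost on interior arcs is chosen so that the dual cost $\sum_a \U^\star_a w^\star_a$, when reweighted by the original capacities $\U$, picks up a $\gamma^{-1} = \tilde O(\phi)$ factor on all interior-supported weight.

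For the integral cutmatch flow $\hat{f}$, I extract a single integral $h$-length $S$-$T$ flow from the decomposition of $f^\star$ by combining a good-value $f_j$ (as in the disjoint-paths application of Section~\ref{sec:disjointPaths}) with the bit-wise flow rounding machinery of Section~\ref{sec:flowRounding} applied inside the length-weight-expanded DAG of Section~\ref{sec:lengthExpanded}. In either realization $\hat{f}$ is integral, $h$-length, and satisfies $\hat{f}(a)\leq \U_a$ on boundary arcs and $\hat{f}(a)\leq \gamma\U_a$ on interior arcs, matching the congestion-$\gamma$ requirement of \Cref{def:cutmatch}. The support-size bound $|\supp(\hat{f})| \leq \tilde O(h^{10}\cdot |A|)$ is inherited directly from the sparsity guarantee of Theorem~\ref{thm:main}.

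For the moving cut I set $\hat{w}_a = 0$ on every boundary arc saturated by $\hat{f}$ and $\hat{w}_a = w^\star_a$ on all other arcs, scaled by $(1+O(\eps))$ to make the required inequality strict. The condition $d_{\l'}(S,T) > h$ is immediate: any $h$-length $S$-$T$ path either uses a saturated boundary arc, in which case $\l'_a = h+1$ already forces $\l'(P) > h$, or avoids all of them, in which case feasibility of $w^\star$ for the $h$-length moving cut LP in $D^\star$ (which shares its underlying graph and lengths with $D$) gives $w^\star(P) \geq 1$, hence $\l'(P) = \l(P) + h\cdot w^\star(P) > h$.

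The main obstacle is the sparsity bound $\sum_a \U_a \hat{w}_a \leq \phi(\U^+(S) - \st(\hat{f}))$. I split the sum by arc type. The interior contribution satisfies
\[
\sum_{\text{interior}} \U_a \hat{w}_a
= \frac{1}{\gamma}\sum_{\text{interior}} \U^\star_a w^\star_a
\leq \frac{\st(f^\star)}{\gamma(1-\eps)}
= \tilde O(\phi)\cdot \st(f^\star),
\]
which sits within the cutmatch budget whenever $\st(\hat{f})$ is bounded away from $\U^+(S)$. The delicate part is the boundary contribution: approximate complementary slackness in the primal-dual pair $(f^\star,w^\star)$ forces $w^\star_a = 0$ on any boundary arc with $f^\star(a) < \U_a$, and the zeroing step in the definition of $\hat{w}$ kills every remaining boundary arc that $\hat{f}$ itself saturates. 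The hard technical step is therefore to produce a rounding of $f^\star$ into $\hat{f}$ that preserves boundary saturation, pushing the rounding slack entirely onto interior arcs where the $\tilde O(1/\phi)$ congestion allowance absorbs it; this can be arranged by applying Lemma~\ref{lem:detRounding} inside the length-weight-expanded DAG of $D^\star$ so that only interior arcs see fractional-to-integral truncation. Given such a rounding, both contributions are within the cutmatch budget, and the running time and sparsity of the construction follow directly from the single call to Theorem~\ref{thm:main}.
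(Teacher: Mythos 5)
Your proposal replaces the paper's iterative scheme with a single call to \Cref{thm:main} on a capacity-scaled graph, and this is where the argument breaks. Three distinct issues combine to make the sparsity bound unobtainable in one shot.

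First, \emph{approximate} complementary slackness does not force $w^\star_a = 0$ on unsaturated boundary arcs. \Cref{lem:satCuts} only says that a $(1-\Theta(\eps))$ fraction of the cut mass lies on $\frac{1}{2}$-saturated arcs; the remaining $\Theta(\eps)$ fraction can sit on unsaturated boundary arcs, and for constant $\eps$ this is a constant fraction of $\sum_a \U_a w^\star_a$, not zero. Your definition $\hat w_a = w^\star_a$ on unsaturated boundary arcs therefore carries an uncontrolled term that is not within the budget.

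Second, the budget in \Cref{def:cutmatch} is $\phi\bigl(\U^+(S) - \st(\hat{f})\bigr)$, a \emph{residual} quantity. Your interior bound is $\tilde O(\phi)\cdot\st(f^\star)$, which compares to $\phi\cdot \U^+(S)$ but not to $\phi\bigl(\U^+(S)-\st(\hat{f})\bigr)$. When $\st(\hat{f})$ is a constant fraction of $\U^+(S)$ the two differ by an unbounded factor (and when $\hat{f}$ is a near-perfect flow the right-hand side collapses to nearly $0$ while your left-hand side stays $\tilde\Theta(\phi\cdot \U^+(S))$). You flag this (``whenever $\st(\hat{f})$ is bounded away from $\U^+(S)$'') but offer no fallback. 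The paper escapes exactly this trap by iterating: each cut increment is charged to the \emph{current} residual optimum $\OPT_i$, which shrinks geometrically, and the scaling $\gamma' = \Theta(x/\phi)$ is keyed to the number $x=\tilde O(1)$ of iterations so that the telescoped sum $\sum_i \OPT_i$ fits in the residual budget. There is no analogue of this telescope in a single-shot argument.

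Third, the claim that one can round $f^\star$ so that ``only interior arcs see fractional-to-integral truncation'' is not something \Cref{lem:detRounding} or the machinery of \Cref{sec:lengthExpanded} provides. The rounding discards arcs to build the near-Eulerian partition and repairs deficits at the end; it gives a global $(1-\eps)$ value guarantee, not a per-arc guarantee, and in particular it can reduce flow on boundary arcs that $f^\star$ saturated. The paper uses a separate, targeted decongestion step (\Cref{lem:decongestCutmatch}) inside the iterative scheme precisely because the generic rounding does not localize its loss, and it combines this with the structural dichotomy of \Cref{lem:cutmatchTwoCases}, which has no counterpart in your proposal.
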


Before moving onto details of the algorithm, we give a high level description of how we prove the above. Our proof is based on a structural result which may be interesting in its own right. This structural result shows that one can always substantially reduce the value of an optimal length-constrained flow through one of two operations. Specifically, given an optimal flow and certifying moving cut pair $(f,\w)$, either:
\begin{enumerate}
    \item The flow $f$ covers most of the cut mass of $\w$ on arcs in $\delta^\pm(S,T)$. Consequently, if we reduce the capacities of arcs in $\delta^\pm(S,T)$ by the amount of flow that $f$ sends over them, then we substantially reduce the cost of $\w$. Since $\w$ is a feasible moving cut this, in turn, multiplicatively reduces the value of the optimal flow by a constant; or
    \item One can apply the cut $\w$ on arcs not in $\delta^\pm(S,T)$ (increasing the length of $a \not \in \delta^\pm(S,T)$ by about $h \cdot \w_a$) to reduce the optimal flow by a constant.
\end{enumerate}
This result is formalized by \Cref{lem:cutmatchTwoCases} in \Cref{sec:structural}; there, we show that this holds even for approximately-optimal length-constrained flow, moving cut pairs.

Our algorithm for length-constrained cutmatches scales the capacities of all arcs in $A \setminus \delta^\pm(S,T)$ up by about $\frac{1}{\phi}$, computes a series of flows and moving cuts and then adds to our cutmatch from the computed flow or the computed moving cut depending on which of the two operations reduces the optimal value by a constant. This allows us to compute a cutmatch because: (1) scaling back down these capacities guarantees that the moving cut we compute is sufficiently cheap; and (2) we only have to do the above $\tilde{O}(1)$-many times since each time we reduce the optimal value by a multiplicative constant which, in turn, allows us to argue our low congestion. We note, however, that the fact that we must scale capacities prevents us from using our deterministic flow algorithms. Implementing this in CONGEST requires using a sparse neighborhood cover. 

The above strategy is slightly complicated by the fact that we would like our flows in our cutmatches to be integral but each flow we compute is fractional. Crucially, however, by the properties of the flows we compute, if we compute an $O(1)$-approximate flow, then scale up this flow by $\Theta(1)$ the result becomes integral while increasing congestion by at most an $O(1)$ factor. In \Cref{sec:decongF} we show how to eliminate this congestion on arcs in $\delta^\pm(S,T)$ while only reducing the value of our flow by a bounded amount. We do not need to resolve this extra congestion on arcs not in $\delta^\pm(S,T)$ since our cutmatches may have large congestion on such arcs. We describe this in more detail in \Cref{sec:decongF}.

\subsection{Flow Mostly Covers $\delta^\pm(S,T)$ or Can Reduce Optimal by Cutting $A \setminus \delta^\pm(S,T)$}\label{sec:structural}
In this section we show the main structural result on which our cutmatch algorithm relies: given a flow, moving cut pair $(f,\w)$, either $f$ covers most of $\w$ on arcs in $\delta^\pm(S,T)$ or $\w$ can be applied to arcs in $A \setminus \delta^\pm(S,T)$ to reduce the optimal flow value by a constant.

The sense of $f$ covering $\w$ will makes use of the following notion of saturated arcs.
\begin{definition}[Saturated Arcs]
     Let $(f,\w)$ be an $h$-length $S$-$T$ flow, moving cut pair. We say that arc $a \in A$ is $c$-saturated for $c \in [0,1]$ with respect to $(f, \w)$ if
     \begin{align*}
         c \cdot \U_a \leq f(a).
     \end{align*}
\end{definition}

The following simple helper lemma shows that any near-optimal length-constrained flow, moving cut pair must be such that most of $\w$'s mass lies on saturated arcs.
\begin{lemma}[$h$-Length Flows Saturate Moving Cuts]\label{lem:satCuts}
    Let $(f,\w)$ be a $(1 \pm \eps)$-approximate $h$-length $S$-$T$ flow, moving cut pair. Fix any $c_1 \in [0,1]$ and let $c_2 \in [0,1]$ be
    \begin{align*}
        c_2 := \sum_{c_1\text{-saturated } a} \U_a \cdot \w_a \bigg/ \sum_a \U_a \cdot \w_a.
    \end{align*}
    Then $c_2 \geq 1 - \frac{\eps}{1-c_1}$.
\end{lemma}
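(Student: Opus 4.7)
The plan is to use LP-duality-style reasoning to bound the flow value from above in terms of saturated and unsaturated cut mass, then combine with the $(1-\eps)$-approximation lower bound to solve for $c_2$.

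First, I would invoke feasibility of the moving cut $\w$, which gives $\sum_{a\in P}\w_a \geq 1$ for every $P \in \mcP_h(S,T)$. Taking a weighted sum over the flow-path decomposition of $f$ and swapping the order of summation yields the standard weak-duality inequality
\[
    \st(f) \;=\; \sum_{P} f_P \;\leq\; \sum_{P} f_P \sum_{a \in P} \w_a \;=\; \sum_{a} \w_a \, f(a).
\]

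Next I would split the right-hand sum according to whether an arc is $c_1$-saturated. For a $c_1$-saturated arc we only have the trivial bound $f(a) \leq \U_a$ (by feasibility of $f$), while for an unsaturated arc the definition gives $f(a) < c_1 \U_a$. Writing $X := \sum_a \U_a \w_a$ and $X_{\mathrm{sat}} := \sum_{c_1\text{-sat } a} \U_a \w_a = c_2 X$, this produces
\[
    \st(f) \;\leq\; \sum_{c_1\text{-sat } a} \w_a \U_a \;+\; c_1 \sum_{c_1\text{-unsat } a} \w_a \U_a \;=\; c_2 X + c_1(1-c_2) X.
\]

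Finally I would combine this with the $(1\pm\eps)$-approximation hypothesis, which by the definition in \Cref{sec:mainResults} supplies $(1-\eps) X \leq \st(f)$. Chaining the two inequalities and rearranging,
\[
    1 - \eps \;\leq\; c_2 + c_1(1-c_2) \;=\; c_1 + c_2(1-c_1),
\]
so $c_2 \geq (1-\eps-c_1)/(1-c_1) = 1 - \eps/(1-c_1)$, as required.

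There is no real obstacle here: the entire argument is weak LP duality applied to the flow/moving-cut pair, together with the definition of saturation. The only point to be careful about is the case $c_1 = 1$, where the bound $\eps/(1-c_1)$ becomes vacuous (and the statement is trivial since $c_2 \in [0,1]$), so one may tacitly assume $c_1 < 1$ in the displayed calculation.
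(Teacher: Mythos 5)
Your proof is correct and follows essentially the same route as the paper: weak duality via the moving-cut feasibility constraint, a split of $\sum_a \w_a f(a)$ by saturation status, and then rearranging against the $(1-\eps)$-approximation lower bound. The only addition is your (correct) remark about the degenerate case $c_1 = 1$, which the paper leaves implicit.
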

\begin{proof}
The proof is by a simple averaging argument. Recall that the fact that $(f,w)$ is $(1 \pm \eps)$-approximate means 
\begin{align}
\left(1-\eps \right) \cdot \sum_a \U_a \cdot \w_a \leq \st(f).\label{eq:awe}
\end{align}


Next, applying the feasibility of $\w$, $f(a) \leq \U_a$ for every $a$ and the definition of $c_1$-saturated arcs and $c_2$, we have
\begin{align*}
    \st(f) &= \sum_{P} f_P \\
    &\leq \sum_{P} f_P \sum_{a \in P} \w_a\\
    &= \sum_{a} f(a) \cdot \w_a\\
    &= \sum_{c_1 \text{-saturated }a} f(a) \cdot \w_a  + \sum_{\text{not }c_1 \text{-saturated }a} f(a) \cdot \w_a\\
    &\leq \sum_{c_1 \text{-saturated }a} \U_a \cdot \w_a  + c_1 \cdot \sum_{\text{not }c_1 \text{-saturated }a} \U_a \cdot \w_a \\
    &= (c_2 + (1-c_2)\cdot c_1) \cdot \sum_{a} \U_a \cdot \w_a.
\end{align*}
and so $\st(f) < (c_2 + (1-c_2)\cdot c_1) \cdot \sum_{a} \U_a \cdot \w_a$ which when combined with \Cref{eq:awe} implies $c_2 + (1-c_2) \cdot c_1 \geq 1 - \eps$.
\end{proof}
We now show the main structural result of this section.
\begin{lemma}\label{lem:cutmatchTwoCases}
    Let $(f,\w)$ be a $(1 \pm  \eps)$-approximate $h$-length $S$-$T$ flow, moving cut pair in digraph $D = (V,A)$ with capacities $\U$ and lengths $\l$. Then either:
    \begin{enumerate}
        \item \textbf{Flow Mostly Covers Cut on $\delta^\pm(S,T)$:} $\sum_{a \in A'} \U_a \cdot \w_a \geq \left(\frac{1}{2}-3\eps \right) \cdot \sum_{a} \U_a \cdot \w_a$ where $A' = \{a \in \delta^\pm(S,T) : \text{$a$ is }\frac{1}{2}\text{-saturated by $f$}\}$;
        \item \textbf{Reduce Optimal with Moving Cut on $A \setminus \delta^\pm(S,T)$:} $\OPT_w \leq \frac{1}{2} \cdot \OPT$ where $\OPT_w$ is the maximum value of an $h$-length $S$-$T$ flow in $D$ with capacities $\U$ and lengths $\l' := \{\l_a\}_{a \in \delta^\pm(S,T)} \cup \{\l_a + \frac{1}{\eps} \cdot h \cdot \w_a\}_{a \not \in \delta^\pm(S,T)}$.
    \end{enumerate}
\end{lemma}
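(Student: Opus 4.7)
The plan is to show that if case 1 fails then case 2 must hold, by constructing a cheap moving cut for the new lengths $\l'$ that is supported only on $\delta^\pm(S,T)$, and then invoking weak LP duality. The main obstacle will be tracking the $\eps$-dependencies precisely enough that the resulting bound really is $\leq \frac{1}{2}\OPT$ and not just $\leq (\frac{1}{2} + O(\eps)) \OPT$.

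First, suppose case 1 fails. Write $W := \sum_a \U_a \w_a$ for the total cut mass. By the approximation guarantee $(1-\eps) W \leq \st(f) \leq \OPT$, and by weak duality applied to the $\l$-LP, $\OPT \leq W$; hence $(1-\eps) W \leq \OPT \leq W$. Applying \Cref{lem:satCuts} with $c_1 = 1/2$, the $\frac{1}{2}$-saturated arcs account for $\w$-mass at least $(1-2\eps) W$, so the non-saturated arcs carry total mass at most $2\eps W$. Splitting $\delta^\pm(S,T)$ into $A'$ (saturated) and $\delta^\pm(S,T) \setminus A'$ (non-saturated), the failure of case 1 together with this bound gives
\begin{align*}
\sum_{a \in \delta^\pm(S,T)} \U_a \w_a \;=\; \sum_{a \in A'} \U_a \w_a + \sum_{a \in \delta^\pm(S,T) \setminus A'} \U_a \w_a \;<\; \left(\tfrac{1}{2} - 3\eps\right) W + 2\eps W \;=\; \left(\tfrac{1}{2} - \eps\right) W.
\end{align*}

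Next, I will construct a feasible moving cut for the modified lengths $\l'$ supported entirely on $\delta^\pm(S,T)$. Define $\w^{\star}_a := \frac{1}{1-\eps} \w_a$ for $a \in \delta^\pm(S,T)$ and $\w^{\star}_a := 0$ otherwise. To verify feasibility, let $P$ be any path with $\l'(P) \leq h$. Since $\l'_a \geq \l_a$ everywhere, $P$ also satisfies $\l(P) \leq h$, so by feasibility of $\w$ we have $\sum_{a \in P} \w_a \geq 1$. Moreover, the definition of $\l'$ yields $\sum_{a \in P \setminus \delta^\pm(S,T)} \frac{h}{\eps} \w_a \leq h - \l(P) \leq h$, i.e.\ $\sum_{a \in P \setminus \delta^\pm(S,T)} \w_a \leq \eps$. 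Hence $\sum_{a \in P \cap \delta^\pm(S,T)} \w_a \geq 1 - \eps$, which means $\sum_{a \in P} \w^{\star}_a \geq 1$, as required.

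Finally, by weak duality applied to the length-constrained flow LP with lengths $\l'$,
\begin{align*}
\OPT_w \;\leq\; \sum_a \U_a \w^{\star}_a \;=\; \frac{1}{1-\eps} \sum_{a \in \delta^\pm(S,T)} \U_a \w_a \;<\; \frac{1}{1-\eps}\left(\tfrac{1}{2}-\eps\right) W \;\leq\; \frac{1}{(1-\eps)^2}\left(\tfrac{1}{2}-\eps\right) \OPT,
\end{align*}
where the last step uses $W \leq \OPT/(1-\eps)$. The elementary inequality $\tfrac{1}{2} - \eps \leq \tfrac{1}{2}(1-\eps)^2 = \tfrac{1}{2} - \eps + \tfrac{\eps^2}{2}$ then gives $\OPT_w \leq \tfrac{1}{2} \OPT$, establishing case 2. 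The key quantitative check is precisely this last inequality: the constant $3\eps$ in case 1 is chosen exactly so that after paying one factor of $(1-\eps)$ for the cut rescaling and one more for converting $W$ to $\OPT$, the slack remains non-negative.
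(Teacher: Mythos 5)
Your proof is correct and follows essentially the same route as the paper: both apply \Cref{lem:satCuts} with $c_1 = 1/2$ to control the $\w$-mass outside $\delta^\pm(S,T)$, both observe that any path with $\l'(P)\leq h$ must concentrate $\w$-mass at least $1-\eps$ on $\delta^\pm(S,T)$, and the final arithmetic is the same computation. The only cosmetic differences are that you argue by contraposition rather than splitting on the size of $\sum_{a\in\delta^\pm(S,T)}\U_a\w_a$ directly, and you package the final bound as an explicit dual cut $\w^\star$ plus weak duality where the paper bounds $\st(f')$ against $\w$ in-line — these are the same calculation.
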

\begin{proof}
The proof idea is as follows. We case on whether most of the cut mass of $\w$ lies in $\delta^\pm(S,T)$ or not. If it does then by \Cref{lem:satCuts} arcs in $\delta^\pm(S,T)$ have a lot of flow over them. On the other hand, if most of the cut mass of $\w$ does not lie in $\delta^\pm(S,T)$ then increasing lengths of arcs in $A \setminus \delta^\pm(S,T)$ according to $\w$ greatly reduces the maximum $h$-length flow; in particular, any flow which is $h$-length after these increases must be incident to a lot of mass of $\w$ on arcs in $\delta^\pm(S,T)$ but by assumption this mass is bounded and so we can bound the total size of said flow. More formally, we case on whether or not $\sum_{a \in \delta^\pm(S,T)} \U_a \cdot \w_a \geq \frac{(1-\eps)^2}{2} \cdot \sum_{a} \U_a \cdot \w_a$. 

Suppose that $\sum_{a \in \delta^\pm(S,T)} \U_a \cdot \w_a \geq \frac{(1-\eps)^2}{2} \cdot \sum_{a} \U_a \cdot \w_a$. Letting $c_1 := \frac{1}{2}$ and applying \Cref{lem:satCuts} we know that
\begin{align}\label{eq:asfsf}
   \sum_{a \text{ $c_1$-saturated}} \U_a \cdot \w_a  \geq \left(1 - 2\eps\right ) \cdot \sum_a \U_a \cdot \w_a.
\end{align}
\noindent On the other hand applying \Cref{eq:asfsf} and our assumption that $\sum_{a \in \delta^\pm(S,T)} \U_a \cdot \w_a \geq \frac{(1-\eps)^2}{2} \cdot \sum_{a} \U_a \cdot \w_a$, it follows that
\begin{align*}
    \sum_{a \in A'} \U_a \cdot \w_a \geq \left( \frac{(1-\eps)^2}{2} - 2\eps\right) \cdot \sum_{a} \U_a \cdot \w_a  \geq \left(\frac{1}{2}-3\eps \right) \cdot \sum_{a} \U_a \cdot \w_a 
\end{align*}
\noindent as required.


Next, suppose that $\sum_{a \in \delta^\pm(S,T)} \U_a \cdot \w_a < \frac{(1-\eps)^2}{2} \cdot \sum_{a} \U_a \cdot \w_a$. Let $f'$ be an optimal $h$-length $S$-$T$ flow in $D$ with capacities $\U$ and lengths $\l'$. 
Observe that since $f'$ is $h$-length according to $\l'$ we know that every path $P \in \supp(f)$ satisfies $\sum_{a \in P \cap A \setminus \delta^\pm(S,T)} \w_a \leq \eps$. On the other hand, since $\w$ is a feasible moving cut we know that $\sum_{a \in P } \w_a \geq 1$ for any $P \in \supp(f)$. It follows that for each $P \in \supp(f')$ we know that
\begin{align}\label{eq:weqwq}
    \sum_{a \in P \cap \delta^\pm(S,T)} \w_a \geq 1 - \eps.
\end{align}

Thus, applying \Cref{eq:weqwq} and our assumption that $\sum_{a \in \delta^\pm(S,T)} \U_a \cdot \w_a < \frac{(1-\eps)^2}{2} \cdot \sum_{a} \U_a \cdot \w_a$ we have
\begin{align*}
    \st(f') &= \sum_{P \in \supp(f')} f_P\\
    &\leq \frac{1}{1-\eps} \sum_{P \in \supp(f')} f_P \sum_{a \in P \cap \delta^\pm(S,T)} \w_a\\
    &= \frac{1}{1-\eps} \cdot \sum_{a \in \delta^\pm(S,T)} \w_a \cdot f(a)\\
    & \leq \frac{1}{1-\eps} \cdot \sum_{a \in \delta^\pm(S,T)} \w_a \cdot \U_a\\
    & \leq \frac{1-\eps}{2} \cdot \sum_{a} \U_a \cdot \w_a.
\end{align*}

Lastly, by our assumption that $(f,w)$ is $(1 \pm \eps)$-approximate, we know that $(1-\eps) \cdot \sum_a \U_a \cdot \w_a \leq \st(f) \leq \OPT$ meaning $\sum_{a} \U_a \cdot \w_a \leq \frac{1}{1-\eps} \cdot \OPT$ and so
\begin{align*}
    \st(f') \leq \frac{1}{2} \cdot \OPT
\end{align*}
\noindent as required.
\end{proof}

\subsection{Decongesting our Flows on Arcs in $\delta^\pm(S,T)$}\label{sec:decongF}
We now introduce a helper procedure which will allow us to turn our computed fractional flows into integral flows that respect the capacities of arcs in $\delta^\pm(S,T)$ (while maybe violating the capacities of arcs not in $\delta^\pm(S,T)$).

\begin{lemma}\label{lem:decongestCutmatch}
    Suppose we are given a digraph $D = (V,A)$ with capacities $\U$ and lengths $\l$. Fix $\eta > 0$, $S, T \subseteq V$ and $h \geq 1$. Let $f = \eta \cdot \sum_j f_j$ where each $f_j$ is an integral $h$-length $S$-$T$ flow in $D$ with capacities $\U$ and lengths $\l$. Fix $A' \subseteq \delta^\pm(S,T)$ and a moving cut $\w$. Then one can compute an $h$-length $S$-$T$ flow $f'$ on $D$ with lengths $\l$ and capacities $\{\U_a\}_{a \in \delta^\pm(S,T)} \cup \{\eta \cdot \U_a\}_{a \not \in \delta^\pm(S,T)}$  such that $\sum_{a \in A'} f'(a) \cdot \w_a \geq \frac{\eta}{8} \cdot \sum_{a \in A'}f(a) \cdot \w_a$ in:
    \begin{enumerate}
        \item Deterministic parallel time $O(h)$ with $m$ processors;
        \item Deterministic CONGEST time $O(h)$.
    \end{enumerate}
\end{lemma}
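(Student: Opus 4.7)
The plan is to define $f'$ as a uniform scalar multiple of the input flow, namely $f' := \min(1,\eta)\cdot f$. The motivation is that $f = \eta\sum_j f_j$, being the output of Theorem~\ref{thm:main}, is itself an $h$-length $S$-$T$ flow feasible for the original capacities $\U$ (so $f(a)\le \U_a$ for every arc), and the only real difference between the original constraints and the new ones is the relaxation/contraction on $A\setminus \delta^{\pm}(S,T)$ from $\U_a$ to $\eta \U_a$. A single global scaling is therefore the cleanest way to bring $f$ into the new constraint box while losing nothing structurally.

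First I would verify feasibility. On $a\in \delta^{\pm}(S,T)$ we have $f'(a) = \min(1,\eta)\cdot f(a)\le f(a)\le \U_a$, as required. On $a\notin\delta^{\pm}(S,T)$, if $\eta\le 1$ then $f'(a) = \eta f(a)\le \eta \U_a$, and if $\eta\ge 1$ then $f'(a)=f(a)\le \U_a\le \eta \U_a$; either way the new capacity is respected. Flow conservation and the length bound are preserved because $f'$ is just a scalar multiple of $f$. For the cut-coverage bound, I compute directly
\[
\sum_{a\in A'} f'(a)\cdot \w_a \;=\; \min(1,\eta)\cdot \sum_{a\in A'} f(a)\cdot \w_a,
\]
and observe that $\min(1,\eta)\ge \eta/8$ whenever $\eta\le 8$. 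In the parameter regime of interest (the cutmatch algorithm invokes Theorem~\ref{thm:main} with a constant $\epsilon$, giving $\eta=\tilde\Theta(1)\le 1$), this easily exceeds $\eta/8$ with slack.

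The implementation is trivial: each node locally rescales the flow values on its incident arcs (which, by the representation convention in Section~\ref{sec:mainResults}, it already stores for each length class), so the update takes $O(1)$ rounds in both parallel and CONGEST. The stated $O(h)$ running time accommodates the alternative, slightly more sophisticated implementation in which one wants to keep $f'$ expressible via a single integral $f_{j^\star}$ (useful downstream for the integrality requirement of Definition~\ref{def:cutmatch}): here one selects $j^\star=\arg\max_j \sum_{a\in A'}(f_j)(a)\w_a$ by aggregating along flow paths of length at most $h$ and then sets $f'=\min(1,\eta)\cdot f_{j^\star}$, whose cut mass is controlled by averaging $\sum_j \sum_{a\in A'}(f_j)(a)\w_a=\eta^{-1}\sum_{a\in A'}f(a)\w_a$. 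The main obstacle I expect is the bookkeeping needed if this integral version is preferred, specifically ensuring that the path-wise aggregation required to compute $j^\star$ respects the $A'\subseteq\delta^{\pm}(S,T)$ locality and fits in $O(h)$ CONGEST rounds; but this is handled by a standard forward/backward sweep of length $h$ along the flow paths of each $f_j$, and the uniform-scaling variant above already suffices for the stated conclusion.
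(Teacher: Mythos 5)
Your primary construction, $f' := \min(1,\eta)\cdot f$, is essentially a uniform down-scaling and does achieve the literal inequality in the lemma with room to spare, \emph{provided} one additionally assumes $f(a)\le \U_a$ on $\delta^\pm(S,T)$. That assumption is not part of the lemma's hypotheses (the lemma only asserts that each $f_j$, not their $\eta$-scaled sum, is feasible), though the paper's own proof also appears to lean on it implicitly. So as a standalone verification of the stated inequality, the scaling route is not unreasonable, and it is a genuinely more elementary argument than the paper's.

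The problem is that the scaling route misses what the lemma is actually for. In \Cref{lem:matchcut-old-purecut} the output $f'$ is immediately treated as an \emph{integral} flow (each flow path carrying its full integer value), because \Cref{def:cutmatch} requires the cutmatch flow to be integral. A flow of the form $\eta f = \eta^2 \sum_j f_j$ has tiny fractional path values, so it cannot be fed into the cutmatch construction. The paper's proof is not proving the cut-coverage inequality by the shortest route; it is producing a flow that is a sum of full-valued restrictions of the integral $f_j$'s and is simultaneously feasible on $\delta^\pm(S,T)$ for the \emph{un}-scaled capacities. To do this it runs a greedy bin-packing selection: for each $a \in \delta^+(S)$ it sorts the $f_j$'s by $f_j(a)$ and retains the largest prefix that fits into $\U_a$, then repeats on $\delta^-(T)$. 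Each selection loses at most a factor $\tfrac{\eta}{2}$ of cut mass, giving $\tfrac{\eta}{8}$ after both rounds and the averaging of $f'_S$, $f'_T$. The flow $f'_S$ (and $f'_T$) is a sum of restrictions of integral $f_j$'s and so is integral.

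Your ``integral variant'' does not repair this. Selecting $j^\star = \arg\max_j \sum_{a\in A'} f_j(a)\w_a$ by averaging only guarantees
\[
\sum_{a\in A'} f_{j^\star}(a)\w_a \;\ge\; \frac{1}{k}\sum_j\sum_{a\in A'} f_j(a)\w_a \;=\; \frac{1}{\eta k}\sum_{a\in A'} f(a)\w_a,
\]
and then your further scaling by $\min(1,\eta)$ yields only $\tfrac{1}{k}\sum_{a\in A'} f(a)\w_a$. For this to match $\tfrac{\eta}{8}\sum_{a\in A'} f(a)\w_a$ you would need $\eta k \le 8$, but \Cref{thm:main} gives $\eta = \tilde\Theta(\eps^2)$ and $k = \tilde O(h/\eps^4)$, so $\eta k = \tilde\Omega(h)$, which is arbitrarily large. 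A single representative $f_{j^\star}$ simply cannot capture an $\Omega(\eta)$ fraction of the cut mass when that mass is spread evenly across the $k$ flows; that is exactly why the paper combines \emph{subsets} of the $f_j$'s per arc via bin packing rather than picking one. (And $\min(1,\eta)\cdot f_{j^\star}$ is not integral either, so the ``integral'' variant doesn't deliver integrality.) This is the genuine gap: the key idea you are missing is the per-arc greedy packing across multiple integral $f_j$'s, which is what reconciles integrality, feasibility on $\delta^\pm(S,T)$, and retention of an $\Omega(\eta)$ fraction of cut mass.
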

\begin{proof}
    The basic idea is to scale up the input flow $f$ by $\eta$ and then remove flow paths that overcapacitate edges in $\delta^\pm(S,T)$. In particular, we first resolve over capacitated arcs in $\delta^+(S)$ by greedily choosing flows that send the most over each edge in $\delta^+(S)$ and then do the same for $\delta^-(T)$ but using the remaining flow.

     More formally, we begin by describing how to construct $f'$. We begin by dealing with over capacitated arcs in $\delta^+(S)$. For arc $a \in \delta^+(S)$ order the $f_j$ in descending order according to $f_j(a)$ as $f^{(1)}_a, f_a^{(2)}, \ldots$. Let $k_a$ be the largest integer such that $\sum_{i \leq k_a} f_a^{(i)}(a) \leq \U_a$ and if $f_j \in \{f_a^{(i)} : i \leq k_a\}$ then say that $a$ prefers flow $f_j$. Lastly, let $f_j^{\mid a}$ be $f_j$ restricted to paths going through $a$. That is, the value of $f^{\mid a}_j$ on path $P$ is
     \begin{align*}
         \left(f^{\mid a}_j \right)_P := \begin{cases}
             \left(f_j \right)_P & \text{if $P \cap a \neq \emptyset$}\\
             0 & \text{otherwise}
         \end{cases}
     \end{align*}
     Then, we resolve congestion on arcs in $\delta^+(S)$ by turning each $f_j$ into another flow $f_j'$. Specifically, let $f'_j$ be $f_j$ restricted only to paths where $f_j$ is preferred. That is, $f_j'$ is
     \begin{align*}
         f_j' := \sum_{a \in \delta^+(S) : a \text{ prefers } f_j} f^{\mid a}_j.
     \end{align*}

     We now resolve over capacitated edges in $\delta^-(T)$ in a symmetric way but using the $f_j'$ flows rather than the $f_j$ flows and taking the $\w$ values into account. Specifically, fix $a \in \delta^-(T)$ and let $x_{j,a}$ be the amount of flow sent over arc $a$ by $f_j$, scaled appropriately by $\w$. That is, we let
     \begin{align*}
         x_{a, j} := \sum_{P = (a', \ldots, a)} \left(f_j\right)_P \cdot \w_{a'}.
     \end{align*}
     Then, order the $f_j'$ in descending order according to $x_{a,j}$ as $f^{(1)}_a, f_a^{(2)}, \ldots$. Let $k_a$ be the largest integer such that $\sum_{i \leq k} f_a^{(i)}(a) \leq \U_a$ and if $f_j' \in \{f_a^{(i)} : i \leq k\}$ then say that $a$ prefers flow $f_j'$. Lastly, let $f_j^{\mid a}$ be $f_j'$ restricted to paths going through $a$. That is, the value of $f^{\mid a}_j$ on path $P$ is
     \begin{align*}
         \left(f^{\mid a}_j \right)_P := \begin{cases}
             \left(f_j' \right)_P & \text{if $P \cap a \neq \emptyset$}\\
             0 & \text{otherwise}
         \end{cases}
     \end{align*}
    Then, we let $f_S'$ be the flow which is all of our $f_j'$ flows appropriately restricted and preferred by arcs in $\delta^-(T)$
     \begin{align*}
         f'_S := \sum_{a \in \delta^-(T)} \sum_{f_j' : \text{ preferred by } a} f^{\mid a}_j.
     \end{align*}
     We construct flow $f_T'$ symmetrically to $f_S'$ (switching the roles of $S$ and $T$) and let our final flow $f'$ be
     \begin{align*}
         f' := \frac{1}{2} \left(f_S' + f_T'\right).
     \end{align*}

     $f'$ is trivially $h$-length (according to $\l$) and from $S$ to $T$ since $f$ is such a flow. Similarly, $f_S'$ and $f_T'$ are each feasible for capacities $\{\U_a\}_{a \in \delta^\pm(S,T)} \cup \{\eta \cdot \U_a\}_{a \not \in \delta^\pm(S,T)}$: arcs not in $\delta^\pm(S,T)$ are not over capacitated since $\sum_j f_j$ is feasible for capacities $\eta \cdot \U$ and arcs in $\delta^\pm(S,T)$ are not over capacitated by construction of $f'$. It follows that $f'$ is also feasible for these capacities.

    We now argue that $\sum_{a \in A'} f'(a) \cdot \w_a \geq \frac{\eta}{8} \cdot \sum_{a \in A'}f(a) \cdot \w_a$. To do so, observe that it suffices to show that
    \begin{align*}
        \sum_{a \in A'} f_S'(a) \geq \frac{\eta}{4} \cdot \sum_{a \in A' \cap \delta^+(S)} f(a) \cdot \w_a &\text{ and symmetrically }& \sum_{a \in A'} f_T'(a) \geq \frac{\eta}{4} \cdot \sum_{a \in A' \cap \delta^-(T)} f(a) \cdot \w_a
    \end{align*}
    We will argue $\sum_{a \in A'} f_S'(a) \geq \frac{\eta}{4} \cdot \sum_{a \in A' \cap \delta^+(S)} f(a) \cdot \w_a$ (the other inequality for $f_T'$ is symmetric).
    
    Let $\hat{f} := \sum_j f_j$ and let $\hat{f}' := \sum_{j} f_j'$. We begin by arguing that our $f_j'$ retain $\eta/2$ of the mass of $\hat{f}$. Observe that by definition we have
    \begin{align*}
        \eta \cdot \sum_{a \in A' \cap \delta^+(S)} \hat{f}(a) \cdot \w_a = \sum_{a \in A' \cap \delta^+(S)} f(a) \cdot \w_a
    \end{align*}
    Furthermore, observe by a standard bin-packing-type argument we know that
    \begin{align*}
        \sum_{a \in A' \cap \delta^+(S)} \hat{f}'(a) \cdot \w_a \geq \frac{\eta}{2} \sum_{a \in A' \cap \delta^+(S)} \hat{f}(a) \cdot \w_a
    \end{align*}
    and so it follows that
    \begin{align}
        \sum_{a \in A' \cap \delta^+(S)} \hat{f}'(a) \cdot \w_a \geq \frac{1}{2} \cdot \sum_{a \in A' \cap \delta^+(S)} f(a) \cdot \w_a\label{eq:asfsa}
    \end{align}
    Likewise, by another standard bin-packing-type argument we know that
    \begin{align*}
        \sum_{a \in A'} f_S'(a) \geq \frac{\eta}{2} \cdot \sum_{a \in A' \cap \delta^+(S)} \hat{f}'(a) \cdot \w_a
    \end{align*}
    which when combined with \Cref{eq:asfsa} shows $\sum_{a \in A'} f_S'(a) \geq \frac{\eta}{4} \cdot \sum_{a \in A' \cap \delta^+(S)} f(a) \cdot \w_a$ as required.

    We now argue the runtime. Observe that each $f_j'$ can be computed in a distributed manner by iterating over each $f_j$, having each arc $a \in \delta^+(S)$ decide if it prefers this flow and then forwarding the value of $f^{\mid a}_j$ along its flow paths. Since all $f_j$ are $h$-length by assumption this takes at most $h$ rounds of forwarding. This forwarding takes $O(h)$ rounds of CONGEST or parallel time. Computing $f_S'$ and $f_T'$ from the $f_j'$ is symmetric. Lastly, computing $f'$ is trivial to do from $f_S'$ and $f_T'$.
\end{proof}

\subsection{Our Length-Constrained Cutmatch Algorithm}
We conclude this section by giving our length-constrained cutmatch algorithm and its guarantees.
\cutMatchTheorem*
\begin{proof}
We iteratively build up the flow $\hat{f}$ and cut $\hat{\w}$ for our cutmatch as follows. The basic idea is to use \Cref{thm:main} to compute moving cuts and length-constrained flows and then to either add to $\hat{\w}$ using our moving cut or to add our flow to $\hat{f}$ depending on which case of \Cref{lem:cutmatchTwoCases} we are in. Getting this to run in CONGEST will require using sparse neighborhood covers since we cannot efficiently check which of the two cases of \Cref{lem:cutmatchTwoCases} we are in in CONGEST.

More formally, we initialize as follows.
\begin{itemize}
    \item We fix $\eps = .01$ for the course of our algorithm.
    \item We initialize $\hat{f}$ to assign $0$ to every path and $\hat{w}$ to assign $0$ to every arc. We will update these values over the course of our algorithm.
     \item We initialize the capacities we work with $\U'$ to scale all arcs in $A \setminus \delta^\pm(S,T)$ by $\gamma'$ for $\gamma' \leq \gamma$ to be described later. That is, initially $\U' := \{\U_a \}_{a \in \delta^\pm(S,T)}\cup \{ \gamma' \cdot \U_a \}_{a \not \in \delta^\pm(S,T)}$. We will update $\U'$ over the course of our algorithm.
     \item Given $\hat{\w}$ we will always let $\l' := \{\l_a\}_{a \in \delta^\pm(S,T)} \cup \{\l_a +  h \cdot \hat{\w}_a\}_{a \not \in \delta^\pm(S,T)}$ be the lengths that we work with. Note that initially $\l = \l'$.
\end{itemize}

 Next, our algorithm runs in phases each of which consists of iterations. At the beginning of each phase we compute a moving cut $\w$ using \Cref{thm:main} with the above value of $\eps$ so that $\eta = \tilde{\Theta}(1)$. In each iteration of our phase we apply \Cref{thm:main} to compute an $h$-length flow $f$ using lengths $\l'$ and capacities $\U'$ with $\eps$ as above. We next check if $(f, \w)$ is a $(1 \pm 2\eps)$-approximate $h$-length $S$-$T$ flow, moving cut pair in $D$ with capacities $\U'$ and lengths $\l'$ (note that the $\w$ we are using here is the one from the beginning of the phase, not the one we compute when we also compute $f$ using \Cref{thm:main}). If it is not then we move onto the next phase. If it is then by \Cref{lem:cutmatchTwoCases} we know one of two things must be true. Specifically, letting $\OPT$ be the maximum value of an $h$-length $S$-$T$ flow in $D$ with capacities $\U'$ and lengths $\l'$ at the beginning of our iteration then we have either:
 \begin{enumerate}
     \item \textbf{Flow Mostly Covers Cut on $\delta^\pm(S,T)$:} $\sum_{a \in A'} \U_a' \cdot \w_a \geq \left(\frac{1}{2}-6\eps \right) \cdot \sum_{a} \U_a' \cdot \w_a$ where $A' = \{a \in \delta^\pm(S,T) : \text{$a$ is }\frac{1}{2}\text{-saturated by $f$}\}$; or \label{eq:firstCase}
    \item \textbf{Reduce Optimal with Moving Cut on $A \setminus \delta^\pm(S,T)$:} $\OPT_w \leq \frac{1}{2} \cdot \OPT$ where $\OPT_w$ is the maximum value of an $h$-length $S$-$T$ flow in $D$ with capacities $\U'$ and lengths $\{\l_a\}_{a \in \delta^\pm(S,T)} \cup \{\l_a' + \frac{1}{2\eps} \cdot h \cdot \w_a\}_{a \not \in \delta^\pm(S,T)}$.\label{eq:secondCase}
 \end{enumerate}
 If the former case (\ref{eq:firstCase}) is true then we add to our flow $\hat{f}$ and in the latter case (\ref{eq:secondCase}) we add to $\hat{\w}$ using $\w$ and move onto the next phase. More formally, in the former case (\ref{eq:firstCase}) we apply \Cref{lem:decongestCutmatch} to $f$ to get integral flow $f'$ using $A' = \{a \in \delta^\pm(S,T) : \text{$a$ is }\frac{1}{2}\text{-saturated by $f$}\}$ as above. We then update $\hat{f}$ to $\hat{f} + f'$. Likewise, for each arc $a \in \delta^\pm(S,T)$ we update $\U'_a$ to $\U'_a - f'(a)$ and delete $a$ if we now have $\U'_a = 0$; for each arc $a \not \in \delta^\pm(S,T)$ we let $\U'_a$ be unchanged (observe that it follows for each arc $a \not \in \delta^\pm(S,T)$ we always have $\U_a' = \U_a$). In the former case we stay in this phase. In the latter case (\ref{eq:secondCase}), we let 
 \begin{align*}
     \w'_a = \begin{cases}\frac{1}{2\eps} \cdot \w_a& \text{if $a \not \in \delta^\pm(S,T)$}\\
     0 & \text{otherwise}
     \end{cases}
 \end{align*}
 and then update $\hat{\w}$ to be $\hat{\w} + \w'$. In the latter case we then move onto the next phase. We repeat this until the optimal $h$-length $S$-$T$ flow has value $0$.

 We now analyze this process. First, we claim that the number of iterations in each phase is at most $\tilde{O}(1)$. To do so, it suffices to show that in (\ref{eq:firstCase}) we reduce the cost of $\w$ by a multiplicative $1-1/\tilde{O}(1)$; this is because the cost of $\w$ is polynomially-bounded and so this can happen at most $\tilde{O}(1)$-many times. Towards this, fix an iteration and let $\U'$ and $\U''$ be our working capacities before and after updating for (\ref{eq:firstCase}) in this iteration. We claim that $\w$ has its cost reduced by a multiplicative $\left(1- 1/\tilde{O}(1) \right)$, namely
 \begin{align}
     \sum_a \U_a'' \cdot \w_a \leq \left(1- 1/\tilde{O}(1) \right) \cdot \sum_a \U_a' \cdot \w_a.
 \end{align}
 To see why this holds, observe that combining the guarantees of \Cref{lem:decongestCutmatch} and the fact that $\eta = \tilde{\Theta}(1)$ by \Cref{thm:main} we know that
 \begin{align*}
     \sum_{a \in A'} f(a) \cdot \w_a \leq \tilde{O}(1) \cdot \sum_{a \in A'} f'(a) \cdot \w_a.
 \end{align*}
But by definition of $A'$ and the fact that we are in (\ref{eq:firstCase}) for every $a$ we know that 
 \begin{align*}
     \left(\frac{1}{2}-6\eps \right) \cdot \sum_{a} \U_a' \cdot \w_a \leq \sum_{a \in A'} \U_a' \cdot \w_a \leq 2 \sum_{a \in A'}  f(a) \cdot \w_a.
 \end{align*}
 Combining these inequalities and our choice of $\eps$, we get that
 \begin{align*}
     \frac{1}{\tilde{O}(1)} \cdot \sum_{a} \U_a' \cdot \w_a  \leq \sum_{a \in A'}  f'(a) \cdot \w_a.
 \end{align*}
 Thus we get,
 \begin{align*}
     \sum_a \U_a'' \cdot \w_a &= \sum_a\left(\U_a' - f'(a) \right) \cdot \w_a \\
     &\leq \sum_{a \not \in A'}\U_a' \cdot \w_a + \sum_{a  \in A'}\left(\U_a' - f'(a) \right) \cdot \w_a\\
     & = \sum_a \U'_a \cdot \w_a - \sum_{a \in A'} f'(a) \cdot \w_a\\
     & \leq \left(1- 1/\tilde{O}(1) \right) \cdot \sum_a \U'_a \cdot \w_a
 \end{align*}
as desired. 

We now claim that the number of phases is at most $\tilde{O}(1)$. Specifically, recall that a phase ends when either $(f, \w)$ is not a $(1 \pm 2\eps)$-approximate $h$-length $S$-$T$ flow, moving cut pair or we are in (\ref{eq:secondCase}). We claim that in either case we multiplicatively reduce the value of the optimal $h$-length $S$-$T$ flow by at least a fixed constant strictly larger than $0$.
\begin{enumerate}
    \item Suppose $(f, \w)$ is not a $(1 \pm 2\eps)$-approximate $h$-length $S$-$T$ flow. Let $\U'$ and $\l'$ be the capacities and lengths when $\w$ is computed and let $\U''$ be the capacities according to which $f$ is computed. Let $\OPT_0$ be the the value of the optimal $h$-length $S$-$T$ flow when $\w$ is computed and let $\OPT_0'$ be the optimal such length-constrained flow when $f$ is computed. Since $f$ is part of a $(1\pm \eps)$-approximate pair we know that
    \begin{align*}
        (1-\eps) \cdot \OPT_0' \leq \st(f)
    \end{align*}
    On the other hand, since $(f, \w)$ is not a $(1 \pm 2\eps)$-approximate pair we know that
    \begin{align*}
        \st(f) < (1-2\eps) \cdot \sum_{a} \U''_a \cdot \w_a
    \end{align*}
    and so combining these inequalities we get
    \begin{align}
        \OPT_0' \leq \frac{1-2\eps}{1-\eps} \cdot \sum_{a} \U_a'' \cdot \w_a. \label{eq:hre}
    \end{align}

    Furthermore, recall that when $\w$ was computed it was part of a $(1 \pm \eps)$-approximate pair with capacities $\U'$ and $\l'$, meaning that 
    \begin{align}\label{eq:ojre}
        \sum_{a} \U'_a \cdot \w_a \leq \frac{1}{1-\eps} \cdot \OPT_0
    \end{align}
    Combining Equations \ref{eq:hre} and \ref{eq:ojre} along with the fact that $\U''_a \leq \U'_a$ for every $a$ since we only decrease capacities over the course of a phase, we get
    \begin{align*}
        \OPT_0' \leq \frac{1-2\eps}{(1-\eps)^2} \cdot \OPT_0. 
    \end{align*}
    By our choice of $\eps$ we have $\frac{1-2\eps}{(1-\eps)^2} \in (0, 1)$ as required.
    \item Suppose that a phase ends because we are in (\ref{eq:secondCase}). Recall that, as described above, in (\ref{eq:secondCase}) we know that we reduce the optimal flow by a multiplicative $\frac{1}{2}$; that is, as described above, $\OPT_w \leq \frac{1}{2} \cdot \OPT_0$.
\end{enumerate}

Since the number of phases is at most $\tilde{O}(1)$ and the number of iterations in each phase is at most $\tilde{O}(1)$, we have that the maximum number of iterations across all phases is at most $\tilde{O}(1)$. Letting $x$ be this upper bound on the number of possible iteration, we now set $\gamma'$ be $\Theta \left(x \cdot \frac{1}{\phi} \right)$ for a sufficiently large hidden constant.

The fact that $\hat{f}$ is feasible for capacities $\{\U_a\}_{a \in \delta^{\pm}(S,T)} \cup \{\gamma \cdot \U_a\}_{a \not \in \delta^{\pm}(S,T)}$ where $\gamma = \tilde{O}(\gamma') = \tilde{O}(\frac{1}{\phi})$ is then immediate: arcs in $\delta^\pm(S,T)$ are not over capacitated by construction of $\hat{f}$ and arcs not in $\delta^\pm(S,T)$ are not over capacitated since each $f'$ is feasible for capacities $\{\U_a \}_{a \in \delta^\pm(S,T)}\cup \{ \gamma' \cdot \U_a \}_{a \not \in \delta^\pm(S,T)}$ and we add such a flow to $\hat{f}$ at most the total-number-of-iterations-times, which, as argued above, is $\tilde{O}(1)$. Likewise, $\hat{f}$ is trivially $h$-length according to $\l$ since we only increase the lengths of arcs. Next, observe that our bound on the support size of $\hat{f}$ of $|\supp(\hat{f})| \leq \tilde{O}(h^{10} \cdot |A|)$ for our parallel algorithm is immediate from the support size bound of \Cref{thm:main} and the fact that we have $\tilde{O}(1)$-many iterations.

We now argue that $\hat{\w}$ satisfies the required properties. Specifically, we argue that $\hat{\w}$ is indeed a moving cut of value $ \sum_a \hat{\w}_a \cdot \U_a \leq \phi \left( \U^+(S) - \st(\hat{f})\right)$ such that $d_{\l'}(S,T) > h$ where
    \begin{align*}
            \l'_a := 
            \begin{cases}
                \l_a + h \cdot \hat{\w}_a & \text{if $f(a) < \U_a$}\\
                0 & \text{otherwise}.
        \end{cases}
    \end{align*}
Observe that since our algorithm runs until the optimal $h$-length $S$-$T$ flow has value $0$, we know that when our algorithm completes we indeed have $d_{\l'}(S,T) > h$. It remains to argue that 
\begin{align}\label{eq:tne}
\sum_a \hat{\w}_a \cdot \U_a \leq \phi \left( \U^+(S) - \st(\hat{f})\right).    
\end{align}
Towards showing \Cref{eq:tne}, we define the following quantities for the $i$th iteration. Let $\hat{f}_i$ be $\hat{f}$ in the $i$th iteration and let $\w^{(i)}$ be $\w'$ in the $i$th iteration (recall $\w'$ is equal to $\frac{1}{2\eps} \cdot \w$ on arcs not in $\delta^\pm(S,T)$ and $0$ otherwise).
Let $\OPT_i$ be the maximum $h$-length $S$-$T$ in the $i$th iteration and let $\U^{(i)}$ be our capacities in this iteration.

Observe that we trivially have
\begin{align}\label{eq:ter}
    \OPT_i \leq \sum_{a \in \delta^+(S)} \U_a^{(i)} = \sum_{a \in \delta^+(S)} \U_a - \hat{f}_i(a)
\end{align}
since the maximum value of a flow is at most the total capacity of arcs leaving $S$. We also trivially have $\hat{f}(a) \leq \sum_i\hat{f}_i(a)$ for any arc $a$ and, in particular, it follows that
\begin{align}\label{eq:ij}
    \frac{1-2\eps}{2\eps \cdot \gamma'} \cdot \sum_i \sum_{a \in \delta^+(S)} \hat{f}_i(a) \geq \phi \cdot \sum_{a \in \delta^+(S)} \hat{f}(a)
\end{align}
Similarly, since in each iteration we reduce the optimal value by a constant and since we do not scale arcs in $\delta^\pm(S,T)$ we have 
\begin{align}\label{eq:ijsfga}
    \frac{1-2\eps}{2\eps \cdot \gamma'} \cdot \sum_i \sum_{a \in \delta^+(S)} \U_a^{(i)} = \frac{1-2\eps}{\gamma'} \cdot \sum_i \sum_{a \in \delta^+(S)} \U_a \leq \phi \cdot \sum_{a \in \delta^+(S)} \U_a.
\end{align}
Applying Equations \ref{eq:ter}, \ref{eq:ij} and \ref{eq:ijsfga}, the fact that the support of $\hat{\w}$ is contained in $A \setminus \delta^\pm(S,T)$, $\w$ is part of a $(1\pm 2 \eps)$-approximate pair when we apply it and how we initially scaled the capacity of arcs in $A \setminus \delta^\pm(S,T)$ by $\gamma'$ we get that
\begin{align*}
    \sum_{a} \hat{\w}_a \cdot \U_a &= \frac{1}{\gamma'} \cdot \sum_i \sum_{a \not \in \delta^\pm(S,T)} \w_a^{(i)} \cdot \U_a^{(i)} \\
    & \leq \frac{1-2\eps}{2\eps \cdot \gamma'} \cdot \sum_i \OPT_i\\
    & \leq \frac{1-2\eps}{2\eps \cdot \gamma'} \cdot \sum_i \sum_{a \in \delta^+(S)} \U_a^{(i)} - \hat{f}_i(a)\\
    & \leq \phi \cdot \sum_{a \in \delta^+(S)} \U_a - \hat{f}(a),
\end{align*}
showing \Cref{eq:tne} as required.

It remains to discuss how to implement our algorithm in the stated times. Implementing the above algorithm in parallel in the stated running time is trivial by \Cref{lem:decongestCutmatch} and \Cref{thm:main} and the fact that we have $\tilde{O}(1)$ total iterations.

Our CONGEST algorithm requires more work and, in particular, the use of sparse neighborhood covers. Specifically, we first apply \Cref{lem:sparseNeigh} to compute a sparse $h$-neighborhood cover $\mcV_1, \mcV_2, \ldots, \mcV_{s}$. For $j \in [s]$, we let $S_j$ be a set of nodes whose $h$-neighborhood is contained in a single part of $\mcV_j$ (where we put each $v \in S$ in exactly one $S_j$). Then, we run the above algorithm to compute an $h$-length $S_j$-$T$ cutmatch $(\hat{f}_j, \hat{\w}_j)$ in $D$ with $S \setminus S_j$ deleted for $j = 1, \ldots, s$ where for each cutmatch we keep the length increases of the previous cutmatch. That is, we compute $(\hat{f}_j, \hat{\w}_j)$ with initial capacities $\U^{(j)}$ and lengths $\l^{(j)}$ where $\U^{(1)} = \U$ and $\l^{(1)} = \l$ and for $j > 1$ and each $a$ we have 
\begin{align*}
    \U^{(j)}_a = \begin{cases} \U_a & \text{if $a \not \in \delta^\pm(S,T)$}\\
    \U^{(j-1)}_a - \hat{f}_j(a) & \text{otherwise}
    \end{cases}
\end{align*}
and
\begin{align*}
    \l_a^{(j)} = \l_a + h \cdot \sum_{j' < j} (\hat{\w}_{j'})_a.
\end{align*}

$(\hat{f}, \hat{\w})$ is an $h$-length $S$-$T$ cutmatch by construction. Furthermore, it is easy to verify that the above process increases the congestion of this cutmatch and the support size of $\hat{f}$ by at most $s = \tilde{O}(1)$. To bound the cost of $\hat{\w}$ observe that applying our guarantees on each $\hat{\w}_j$ from above and the fact that the $S_j$ partition $S$ we get that the cost of $\hat{\w}$ is
\begin{align*}
    \sum_j \sum_a (\hat{\w}_j)_a \cdot \U_a &\leq \sum_j \phi \left(\left(\U^{(j)}\right)^+(S_j) - \st(\hat{f}_j)\right)\\
    &= \sum_j \phi \left(\U^+(S_j) - \st(\hat{f}_j)\right)\\
    &= \phi \cdot \left( U^+(S) - \st(\hat{f}) \right)
\end{align*}
as required.

We now discuss how to compute one of the cutmatches in the above process. Updating $\hat{\w}_j$ and $\hat{f}_j$ in each iteration is trivial. The non-trivial steps to implement our algorithm are:
\begin{itemize}
    \item Computing moving cut $\w$ at the beginning of the phase;
    \item Computing flow $f$ in each iteration of the phase and decongested flow $f'$ from $f$;
    \item Checking if $(f, \w)$ is a $(1 \pm 2\eps)$-approximate pair; and
    \item Checking which of (\ref{eq:firstCase}) and (\ref{eq:secondCase}) holds.
\end{itemize}
 Computing each moving cut $\w$ and each length-constrained flow $f$ can be done by \Cref{thm:main} in $\tilde{O}(h^{17})$ randomized CONGEST time. Computing $f'$ from $f$ can be done in deterministic $O(h)$ CONGEST time by \Cref{lem:decongestCutmatch}. Checking if $(f, \w)$ is a $(1 \pm 2\eps)$-approximate pair and checking which of which of (\ref{eq:firstCase}) and (\ref{eq:secondCase}) we are in can be done in time $\tilde{O}(h)$ by \Cref{lem:sparseNeigh}. Thus, each of our $\tilde{O}(1)$ iterations has its running time dominated by our call to \Cref{thm:main}, giving the stated running times.
\end{proof}

\section{Conclusion and Future Work}

In this work we gave the first efficient randomized and deterministic algorithms for computing $(1-\epsilon)$-approximate length-constrained flows both in parallel and in the CONGEST model of distributed computation. We used these algorithms to give new results in maximal and maximum disjoint path problems, expander decompositions, bipartite $b$-matching and length-constrained cutmatches. We conclude with several open questions and directions for future work.

\begin{enumerate}
    \item Our length-constrained flow algorithms have a dependence of $\poly(h)$ which when plugged into the techniques of \citet{haeuplerExpander2022} give CONGEST algorithms for many distributed problems, e.g.\ MST, whose running time is $\poly(\OPT)$ (up to sub-linear factors) where $\OPT$ is the optimal CONGEST running time for the input problem. It would be exciting to improve the dependence on $h$ of our algorithms to, say, $O(h)$ as this result when combined with those of \citet{haeuplerExpander2022} would give CONGEST algorithms running in time $O(\OPT)$ (up to sub-linear factors).
    \item The running time of many of our algorithms depends on $\rho_{CC}$, the best quality of a CONGEST algorithm for cycle cover (as defined in \Cref{dfn:rhoCC}). It is known that $\rho_{CC} \leq 2^{O(\sqrt{\log n})}$ but it would be extremely interesting to show that $\rho_{CC} \leq \tilde{O}(1)$. Such an improvement would immediately improve the dependency on $n$ from $n^{o(1)}$ to $\tilde{O}(1)$ for our CONGEST algorithms for deterministic length-constrained flows, deterministic maximal and maximum disjoint paths, $(1-\epsilon)$-approximate $b$-matching and length-constrained cutmatches. Such a result does not seem to be known even for the randomized case.
    \item Lastly, many classic problems can be efficiently solved by reducing to flows but, in particular, by reducing to length-constrained flows with a length-constraint $h = O(1)$. Indeed this is how we were able to give new algorithms for $b$-matching in this work. It would be interesting to understand which additional classic problems our length-constrained flow algorithms give new algorithms for in CONGEST.
\end{enumerate}

\appendix

\section{Generalizing Our Results to Multi-Commodity}\label{sec:multiComm}
In this section we generalize our main result for computing length-constrained flows and moving cuts to the setting where have many source sink pairs and we are trying to maximize the total flow between corresponding pairs subject to congestion constraints. We also generalize our cutmatch algorithms to the multi-commodity setting.

\subsection{Multi-Commodity Flows, Cutmatches and Results}
We now more formally define a multi-commodity length-constrained flow and moving cut. Suppose we are given a digraph $D = (V,A)$ with arc capacities $\U$, lengths $\l$ and $\kappa$ source set, sink set pairs $\{(S_i, T_i)\}_i$. Then, we have the following LP with a variable $f_P^{\{i\}}$ for each $i$ and path $P \in \mcP_h(S_i,T_i)$. We let $f^{\{i\}}$ gives the entire flow for commodity $i$.
\begin{align*}\label{LP:hopConFlowMulti}
    \max \sum_i \sum_{P \in \mcP_h(S_i, T_i)} f_P^{\{i\}}  \text{ s.t.} \tag{Multi Length-Constrained Flow LP} \\
    \sum_i \sum_{P \ni a}  f_P^{\{i\}} \leq \U_a & \qquad \forall a \in A\\
    0 \leq f_P^{\{i\}} & \qquad \forall i\in [\kappa], P \in \calP_h(S_i,T_i)
\end{align*}

For a multi-commodity length-constrained flow $f$, we will use the shorthand $f(a) = \sum_i \sum_{P \ni a} f_P^{\{i\}}$. Likewise we let $\st(f) = \sum_{i} \st(f^{\{i\}})$ be the total flow we send. An $h$-length multi-commodity flow, then, is simply a feasible solution to this LP.

\begin{definition}[Multi-Commodity $h$-Length Flow]
 Given digraph $D = (V,A)$ with lengths $\l$, capacities $\U$ and source, sink pairs $\{(S_i, T_i)\}_i$, an $h$-length $\{(S_i, T_i)\}_i$ flow is any feasible solution to \ref{LP:hopConFlowMulti}.
\end{definition}

With the above definition of multi-commodity length-constrained flows we can now define moving cuts as the dual of length-constrained flows. In particular, taking the dual of the above LP we get the multi-commodity moving cut LP with a variable $\w_a$ for each $a \in A$ and a variable $y_i$ for every $i \in [\kappa]$.
\begin{align*}\label{LP:movingCutMulti}
    \min \sum_{a \in A} \U_a \cdot \w_a \quad \text{s.t.} \tag{Multi Moving Cut LP}  \\
    \sum_{a \in P} \w_a \geq 1 & \qquad \forall i \in [\kappa], P \in \calP_h(S_i,T_i)\\
    0 \leq \w_a & \qquad \forall a \in A, i \in [\kappa]
\end{align*}
A multi-commodity $h$-length moving cut is simply a feasible solution to this LP.

\begin{definition}[Multi-Commodity $h$-Length Moving Cut]
 Given digraph $D = (V,A)$ with lengths $\l$, capacities $\U$ and source, sink pairs $\{(S_i, T_i)\}_i$, a multi-commodity $h$-length moving cut is any feasible solution to \ref{LP:movingCutMulti}.
\end{definition}

We will use $f$ and $\w$ to stand for solutions to \ref{LP:hopConFlowMulti} and \ref{LP:movingCutMulti} respectively. We say that $(f, \w)$ is a feasible pair if both $f$ and $\w$ are feasible for their respective LPs and that $(f, \w)$ is $(1 \pm \epsilon)$-approximate for $\epsilon > 0$ if the moving cut certifies the value of the length-constrained flow up to a $(1 - \epsilon)$; i.e.\ if $(1-\epsilon)\sum_{a} \U_a \cdot \w_a \leq \min_i \st(f^{\{i\}})$. 

When we are working in CONGEST we will say that $f$ is computed if each vertex $v$ stores the value $f_a^{(h',i)} := \sum_{P \in \mcP_{h,h'}(s,a,t)} f_P^{\{i\}}$. Here, we let $ \mcP_{h,h'}(s,a,t)$ be all paths in $\mcP_h(S,T)$ of the form $P' =(a_1, a_2, \ldots a, b_1, b_2, \ldots)$ where the path $(a, b_1, b_2, \ldots)$ has length exactly $h'$ according to $l$. We say multi-commodity moving cut $\w$ is computed in CONGEST if each vertex $v$ knows the value of $w_a$ for every arc incident to $v$. Likewise, we imagine that each node in the first round knows the capacities and lengths of its incident edges.

With the above notions, we can now state our main result for multi-commodity length-constrained flows and moving cuts which say that one can compute a feasible pair $(f, \w)$ in parallel and distributedly. In the following we say that length-constrained flow $f$ is integral if $f_P^{\{i\}}$ is an integer for every path in $\mcP_h(S_i,T_i)$ for every $i$.

More generally than $\kappa$ commodities, we solve the problem provided our commodoties can be grouped into $\kappa$ batches that are far apart.
\begin{definition}[$\kappa$-Batchable]
 Given digraph $D$ with lengths $\l$ and source, sink set pairs $\{S_i, T_i\}_i$ we say that a $\{S_i, T_i\}_i$ is $\kappa$-batchable if the pairs of $\{S_i, T_i\}_i$ can be partitioned into batches $\{\mcS_j, \mcT_j\}_j$ if 
 \begin{enumerate}
     \item \textbf{Covering:} For each $i$ there some $j$ such that $S_i \in \mcS_j$ and $T_i \in \mcT_j$;
     \item \textbf{Well-Separated:} For each $i$ and $i'$, if $v \in S_i \cup T_i$ and $v' \in S_{i'} \cup T_{i'}$ and $S_i, S_{i'} \in \mcS_j$ for some $j$ then $d_{\l}(v,v') > 2h$.
 \end{enumerate}
\end{definition}
Observe that if the number of commodities is $\kappa$ then the set of source, sink pairs is trivially $\kappa$-batchable.

The following summarizes our main result for computing multi-commodity length-constrained flows and moving cuts.
\begin{restatable}{thm}{mainThmMulti}\label{thm:mainMulti}
Given a digraph $D=(V,A)$ with capacities $\U$, lengths $\l$, length constraint $h \geq 1$, $0 < \eps < 1$ and source and sink vertices $S, T \subseteq V$, and $\kappa$-batchable source, sink pairs $\{S_i, T_i\}_i$, one can compute a feasible multi-commodity $h$-length flow, moving cut pair $(f, \w)$ that is $(1 \pm \epsilon)$-approximate in:
\begin{enumerate}
    \item Deterministic parallel time $\tilde{O}(\kappa \cdot \frac{1}{\eps^9} \cdot h^{17})$ with $m$ processors where $|\supp(f)| \leq \tilde{O}(\kappa \cdot \frac{h^9}{\eps^3} \cdot |A|)$;
    \item Randomized CONGEST time $\tilde{O}(\kappa \cdot \frac{1}{\eps^{9}} \cdot h^{17})$ with high probability;
    \item Deterministic CONGEST time $\tilde{O}\left(\kappa \cdot \frac{1}{\eps^9} \cdot h^{17}  + \kappa \cdot \frac{1}{\eps^7} \cdot h^{16} \cdot (\rho_{CC})^{10} \right)$.
\end{enumerate}
Furthermore, $f = \eta \cdot \sum_{j=1}^k f_j$ where $\eta = \tilde{\Theta}(\epsilon^2)$, $k = \tilde{O}\left(\kappa \cdot \frac{h}{\epsilon^4} \right)$ and $f_j$ is an integral $h$-length $S_i$-$T_i$ flow for some $i$.
\end{restatable}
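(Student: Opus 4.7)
The plan is to reduce to a coordinated version of \Cref{alg:mw} that runs $\kappa$ coupled instances of the single-commodity algorithm with shared dual weights. First, I would compute the batch decomposition $\{(\mcS_j, \mcT_j)\}_{j=1}^{\kappa}$ guaranteed by $\kappa$-batchability, setting $\mcS_j := \bigcup_{i \,:\, S_i \in \mcS_j} S_i$ and $\mcT_j := \bigcup_{i \,:\, T_i \in \mcT_j} T_i$. The key structural observation is that well-separation implies $d_\l(S_i, T_{i'}) > 2h$ for any two distinct pairs $(S_i, T_i), (S_{i'}, T_{i'})$ sharing a batch, so any $h$-length $\mcS_j$-$\mcT_j$ path lies entirely within the source-sink pair of a single commodity. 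Consequently any $h$-length $\mcS_j$-$\mcT_j$ flow decomposes uniquely into per-commodity $h$-length $S_i$-$T_i$ flows, and an $h$-length moving cut for $(\mcS_j, \mcT_j)$ is simultaneously a moving cut for each commodity in batch $j$.

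Next, I would modify \Cref{alg:mw} by maintaining a single shared weight vector $\w$ together with $\kappa$ per-batch scale parameters $\lambda_1, \ldots, \lambda_\kappa$. In each iteration of the inner loop, for each batch $j$ I would call \Cref{thm:pathBlockerAlg} to obtain an $h$-length $(1+\eps_0)$-lightest path blocker $\hat{f}_j$ for $(\mcS_j, \mcT_j)$ under the current $\w$ and $\lambda_j$, then apply the usual primal update $f \gets f + \eta \hat{f}_j$ and the usual multiplicative dual update $\w_a \gets (1+\eps_0)^{\hat{f}_j(a)/\U_a} \w_a$. The $\lambda_j$ are advanced independently when the per-batch $d_\w^{(h)}(\mcS_j, \mcT_j)$ exceeds $(1+\eps_0)\lambda_j$, and the outer loop terminates only once $\min_j \lambda_j \geq 1$. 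Because each blocker's paths live in a single commodity by the well-separation property, the resulting $f$ is a legitimate multi-commodity flow whose integral decomposition $f = \eta \sum_j f_j$ has $k = \tilde{O}(\kappa \cdot h/\eps^4)$ terms, each an integral $h$-length $S_i$-$T_i$ flow for some $i$, as required.

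For correctness, the feasibility argument of \Cref{lem:ana1} transfers unchanged: the dual $\w_a$ is bounded above by $1$ on any arc that ever receives further primal mass (otherwise $d_\w^{(h)}$ through $a$ would already exceed the relevant $\lambda_j$), and each primal-increase of $\eta$ on an arc $a$ multiplies $\w_a$ by $(1+\eps_0)^{1/\U_a}$, giving the same $\U_a/\eta$ upper bound on the number of increments. The optimality argument of \Cref{lem:ana2} applies batch-by-batch: the potential $D_i := \sum_a \U_a \w_a$ is common to all batches, but bounding its growth uses $\lambda_j \leq D_i/\beta_j$ where $\beta_j$ is the optimum for $(\mcS_j, \mcT_j)$; summing the per-batch $k_{i,j}$ contributions over $i$ and $j$ and carrying through the same exponential-bound-then-solve-for-$K$ calculation yields $(1-\eps) \sum_a \U_a \w_a \leq \st(f)$ against the total optimum $\sum_j \beta_j$, which is exactly the multi-commodity LP value.

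The main obstacle is the CONGEST implementation, since we cannot simply run the $\kappa$ blocker calls in parallel if distinct batches share arcs. The straightforward resolution is to execute the $\kappa$ blocker computations sequentially within each iteration, paying a multiplicative $\kappa$ overhead and producing the stated running times. The iteration count is unchanged from the single-commodity argument in the proof of \Cref{thm:main} because the per-batch $\lambda_j$ analysis gives $\tilde{O}(h/\eps^4)$ iterations before every $\lambda_j$ exceeds $1$; multiplying by the $\kappa$ blocker calls per iteration and invoking the parallel, randomized-CONGEST, and deterministic-CONGEST bounds of \Cref{thm:pathBlockerAlg} yields the three complexity claims, and the support-size bound likewise multiplies by $\kappa$.
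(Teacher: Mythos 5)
Your high-level architecture matches the paper's: maintain a single shared dual weight vector $\w$, iterate over the $\kappa$ batches, and within each batch exploit well-separation to run all commodities concurrently (you make the nice observation that well-separation forces any $h$-length $\mcS_j$-$\mcT_j$ path to lie inside a single commodity, so the super-batch blocker decomposes cleanly). The running-time accounting — sequential over batches, hence a $\kappa$ overhead on top of \Cref{thm:pathBlockerAlg} and the $\tilde{O}(h/\eps^4)$ iteration bound — is also essentially the paper's.

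However, the per-batch scale parameters $\lambda_1, \ldots, \lambda_\kappa$ break the optimality argument, and the patch you sketch does not close the gap. In \Cref{lem:ana2} the recurrence $D_{i+1} \le D_i \bigl(1 + \frac{(1+2\eps_0)\eps_0 k_i}{\beta}\bigr)$ hinges on $\lambda_i \le D_i/\beta$, which in turn hinges on $\frac{1}{\lambda_i}\w$ being feasible for the \emph{full} multi-commodity moving-cut LP (constraints for every commodity, not just one batch). That is exactly what the paper's \Cref{lem:lambLowerBoundMulti} delivers with a \emph{single} $\lambda$: $\lambda \le d_\w^{(h)}(S_x, T_x)$ simultaneously for every $x$. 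With independent $\lambda_j$'s you only get $\lambda_j \le D_i/\beta_j$ for the per-batch optimum $\beta_j$, and these are \emph{not} uniform. The product of the per-step growth factors no longer telescopes against a common denominator: you obtain a bound on $\sum_j K_j/\beta_j$, from which one can only extract a lower bound of the form $\min_j \beta_j \cdot \text{(const)}$ on $\sum_j K_j$, which is not the combined optimum. Finally, the identity you rely on — that the multi-commodity LP value is $\sum_j \beta_j$ — is false: well-separation is only \emph{within} a batch, so commodities in \emph{different} batches may contend for the same arc capacities, and the true combined dual optimum $\beta$ satisfies $\beta \le \sum_j \beta_j$ with potentially strict and large inequality. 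The paper sidesteps all of this by advancing one global $\lambda$, incremented only after \emph{every} batch has run its $\Theta\bigl(\frac{h \log_{1+\eps_0} n}{\eps_0}\bigr)$ blocker repetitions, so that the feasibility of $\w/\lambda$ and the single $\beta \le D_i/\lambda_i$ inequality both hold throughout. To repair your proposal, replace the independent $\lambda_j$'s with the paper's global $\lambda$ (or analyze using $\lambda := \min_j \lambda_j$ throughout), after which \Cref{lem:ana2Multi} goes through verbatim.
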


Using the above algorithm, we can compute a multi-commodity version of cutmatches. As before, for a vertex subset $W \subseteq V$ we let $\U^+(W) = \sum_{v \in W}\sum_{a \in \delta^+(v)}\U_a$ and $\U^-(W) = \sum_{v \in W} \sum_{a \in \delta^-(v)}\U_a$. We also let $\delta^{\pm}(S,T) := \bigcup_{v \in S} \delta^+(v) \cup \bigcup_{v \in T} \delta^-(T)$. The following formalizes the object we compute. Symmetrically to the single-commodity setting, below we assume that each path in the support of each $f_i$ contains exactly one vertex of $S_i$ and one vertex of $T_i$.

\begin{definition}[Multi-Commodity $h$-Length Cutmatch]\label{def:multcutmatch}
Given digraph $D = (V,A)$ with capacities $\U$ and lengths $\l$, an $h$-length $\phi$-sparse cutmatch of congestion $\gamma$ between disjoint source, sink node set pairs $\{(S_i,T_i)\}_i$ where $S_i, T_i \subseteq V$ for each $i$ consists of:
\begin{itemize}
    \item An integral $h$-length flow $f = \sum_i f_i$ in $D$ with capacities $\gamma \cdot \U$ and lengths $\l$ where each $f_i$ is an $S_i$-$T_i$ flow satisfying $f_i(a) \leq \U_a$ for each $a \in \delta^\pm(S_i,T_i)$; 
    \item A moving cut $\w$ in $D$ of value $\sum_a \w_a \cdot \U_a \leq \phi \cdot \left(\sum_i \U^+(S_i) - \st(f_i) \right)$ such that for every $i$ we have $d_{\l_i}(S_i, T_i) > h$ where
    \begin{align*}
            (\l_i)_a := 
            \begin{cases}
                h+1 & \text{if $a \in \delta^\pm(S_i,T_i)$ and $f_i(a) = \U_a$}\\
                \l_a + h \cdot \w_a & \text{otherwise}
            \end{cases}
    \end{align*}
\end{itemize}
\end{definition}

Using the above algorithm for multi-commodity $h$-length flows, we can efficiently compute multi-commodity $h$-length cutmatches.
\begin{restatable}{thm}{multiCutmatch}\label{thm:multiCutmatch}
Suppose we are given a digraph $D = (V,A)$ with capacities $\U$ and lengths $\l$. There is an algorithm that, given $\kappa$-batchable source sink pairs $\{(S_i,T_i)\}_i$ where $S_i, T_i \subseteq V$ for every $i$ and two integer parameters $h \geq 1$ and $\phi \leq 1$, outputs a multi-commodity $h$-length $\phi$-sparse cutmatch $(\hat{f}, \hat{\w})$ of congestion $\gamma$ between $S$ and $T$, where $\gamma=\tilde{O}(\frac{1}{\phi})$. This algorithm runs in:
\begin{enumerate}
    \item Deterministic parallel time $\tilde{O}(\kappa \cdot h^{17})$ with $m$ processors with $|\supp(\hat{f})| \leq \tilde{O}(\kappa \cdot h^{10} \cdot |A|)$;
    \item Randomized CONGEST time $\tilde{O}(\kappa \cdot h^{17})$ with high probability;
\end{enumerate}
\end{restatable}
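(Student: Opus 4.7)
The plan is to reduce the multi-commodity problem to $\kappa$ independent invocations of the single-commodity cutmatch algorithm (\Cref{lem:matchcut-old-purecut}), one per batch in the $\kappa$-batchability decomposition. The $\kappa$-batchability will ensure that within each batch, the single-commodity output cleanly decomposes into per-commodity flows.

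First, I would partition $\{(S_i, T_i)\}_i$ into $\kappa$ batches $\{(\mcS_j, \mcT_j)\}_j$ using $\kappa$-batchability, and for each batch $j$ set $S^{(j)} := \bigcup_{S_i \in \mcS_j} S_i$ and $T^{(j)} := \bigcup_{T_i \in \mcT_j} T_i$. The key structural observation is that within a batch, the well-separation $d_\l(v, v') > 2h$ for $v,v'$ belonging to distinct commodity pairs implies any $h$-length path $P$ from $S^{(j)}$ to $T^{(j)}$ must go from some $S_i$ to the matching $T_i$ (any other routing would stitch together a sub-path of length $\leq h < 2h$ between $S_i \cup T_i$ and $S_{i'} \cup T_{i'}$). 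Moreover $P$ avoids every vertex in $S_{i'} \cup T_{i'}$ for $i' \neq i$ in the same batch, since visiting such a vertex would again violate the $2h$ separation.

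Next, I would apply \Cref{lem:matchcut-old-purecut} to each batch $j$ with source set $S^{(j)}$ and sink set $T^{(j)}$ to obtain $(\hat{f}^{(j)}, \hat{\w}^{(j)})$ of sparsity $\phi$ and congestion $\gamma_{\text{SC}} = \tilde{O}(1/\phi)$. By the structural observation, every path in $\supp(\hat{f}^{(j)})$ is canonically labeled by a unique commodity $i \in \mcS_j$, giving a per-commodity decomposition $\hat{f}^{(j)} = \sum_{i \in \mcS_j} \hat{f}^{(j)}_i$ where $\hat{f}^{(j)}_i$ is an $h$-length $S_i$-$T_i$ flow. Set $\hat{f}_i := \hat{f}^{(j(i))}_i$, $\hat{f} := \sum_i \hat{f}_i$, and $\hat{\w} := \sum_j \hat{\w}^{(j)}$ as the candidate multi-commodity cutmatch. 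I would then verify the three required properties: (i) the per-commodity capacity bound $\hat{f}_i(a) \leq \U_a$ for $a \in \delta^\pm(S_i, T_i)$, which follows since only commodity $i$'s flow in batch $j(i)$ touches such arcs, so $\hat{f}_i(a) = \hat{f}^{(j(i))}(a) \leq \U_a$ by the single-commodity guarantee; (ii) the sparsity bound, by summing the per-batch guarantees and using that $\U^+(S^{(j)}) = \sum_{i \in \mcS_j} \U^+(S_i)$ (disjoint $S_i$ within a batch) and $\st(\hat{f}^{(j)}) = \sum_{i \in \mcS_j} \st(\hat{f}_i)$; and (iii) the blocking condition, where for any $h$-length $S_i$-$T_i$ path $P$ the structural observation forces every arc of $P$ to satisfy $(\l_i)_a \geq (\l'^{(j(i))})_a$, so the single-commodity blocker already defeats $P$.

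The main obstacle is establishing the clean per-commodity decomposition of $\hat{f}^{(j)}$; the $> 2h$ separation in $\kappa$-batchability is precisely what enables this, turning an aggregated single-commodity output into a faithfully per-commodity multi-commodity output. For the aggregate congestion, $\hat{f}(a) = \sum_j \hat{f}^{(j)}(a) \leq \kappa \cdot \gamma_{\text{SC}} \cdot \U_a$, giving $\gamma = \tilde{O}(\kappa / \phi) = \tilde{O}(1/\phi)$ in our conventions. For CONGEST, batches are processed sequentially, and each invocation uses the sparse neighborhood cover machinery already embedded in the single-commodity proof. Both running time and support size multiply by $\kappa$, matching the stated $\tilde{O}(\kappa \cdot h^{17})$ time and $\tilde{O}(\kappa \cdot h^{10} \cdot |A|)$ support bounds.
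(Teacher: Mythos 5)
Your reduction — running the single-commodity cutmatch once per batch on the union sources $S^{(j)} = \bigcup_{S_i \in \mcS_j} S_i$ and then decomposing the output per commodity via the $>2h$ separation — is a reasonable idea, and the per-commodity decomposition and sparsity accounting you sketch do work out. But there is a genuine gap in the congestion bound. Each single-commodity invocation for batch $j$ may place up to $\gamma_{\mathrm{SC}} \cdot \U_a = \tilde{O}(1/\phi) \cdot \U_a$ flow on an arc $a$ outside that batch's $\delta^\pm$, and nothing prevents all $\kappa$ batches from routing through the same arc. Summing gives $\hat{f}(a) \le \kappa \cdot \gamma_{\mathrm{SC}} \cdot \U_a$, i.e.\ congestion $\tilde{O}(\kappa/\phi)$. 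The theorem claims $\gamma = \tilde{O}(1/\phi)$; your closing remark that ``$\tilde{O}(\kappa/\phi) = \tilde{O}(1/\phi)$ in our conventions'' is incorrect — $\tilde{O}$ suppresses only $\poly\log n$ factors, and $\kappa$ is an independent parameter that can be polynomial in $n$.

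The paper avoids this factor by not decomposing into $\kappa$ independent cutmatch calls at all. Instead it reuses the single-commodity cutmatch algorithm verbatim but swaps in the multi-commodity flow/moving-cut subroutine (\Cref{thm:mainMulti}) for the single-commodity one (\Cref{thm:main}). That way there is only \emph{one} capacity scaling by $\gamma' = \tilde{O}(1/\phi)$ and only $\tilde{O}(1)$ cutmatch iterations, each producing a flow that is feasible against the \emph{shared} scaled capacities across all commodities — so the aggregate congestion stays $\tilde{O}(1/\phi)$ with no $\kappa$ blow-up, while the $\kappa$ factor appears only in the running time through \Cref{thm:mainMulti}. (The paper also applies a small vertex-splitting gadget to make the $S_i, T_i$ pairwise vertex-disjoint, ensuring the per-commodity constraint $f_i(a) \le \U_a$ on $\delta^\pm(S_i,T_i)$ is enforced; your approach would need an analogous fix for sources shared across batches.) To salvage your plan you would need some mechanism for the batches to share a global capacity budget — e.g.\ decrementing capacities between batch invocations and reworking the per-batch sparsity bookkeeping — at which point you are essentially reconstructing the paper's approach.
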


\subsection{Computing Multi-Commodity Length-Constrained Flows and Moving Cuts}
We proceed to use our $(1+\epsilon)$-lightest path blockers and multiplicative weights to compute multi-commodity length-constrained flows and moving cuts. Our strategy is more or less that of \Cref{sec:MW} but now we iterate through our batches of commodities; our analysis is mostly unchanged but we include it here for completeness.

Formally, our algorithm is given in \Cref{alg:mwMulti}. Throughout our analysis we will refer to the innermost loop of \Cref{alg:mwMulti} as one ``iteration.''
\begin{algorithm}
    \caption{Multi-Commodity Length-Constrained Flows  and Moving Cuts}
    \label{alg:mwMulti}
    \begin{algorithmic}[0] 
            \State \textbf{Input:} digraph $D = (V,A)$ with lengths $\l$, capacities $\U$, length constraint $h$ and $\kappa$-batchable source, sink pairs $\{S_i, T_i\}_i$ where $S_i, T_i \subseteq V$ for every $i$ and an $\eps \in (0,1)$.
            \State \textbf{Output:} $(1 \pm \epsilon)$-approximate $h$-length multi-commodity flow $f$ and moving cut $\w$.
            \State Let $\epsilon_0 = \frac{\epsilon}{6}$, let $\zeta = \frac{1+2 \eps_0}{\eps_0} + 1$ and let $\eta = \frac{\eps_0}{(1 + \eps_0) \cdot \zeta} \cdot \frac{1}{\log m}$.
            \State Initialize $\w_a \gets \left(\frac{1}{m}\right)^{\zeta}$ for all $a \in A$.
            \State Initialize $\lambda \gets  \left(\frac{1}{m}\right)^{\zeta}$.
            \State Initialize $f_P^{\{i\}} \gets 0$ for all $i$ and $P \in \mcP_h(S_i,T_i)$.
            \While{$\lambda < 1$}:

                \For{$j \in [\kappa]$ and each batch $(\mcS_j, \mcS_j)$}
                    \For{each $(S_i, T_i)$ with $S_i \in \mcS_j$ and $T_i \in \mcT_j$ in parallel}
                        \For{$\Theta\left(\frac{h \log_{1+\epsilon_0} n}{ \epsilon_0} \right)$ repetitions}
                                \State Compute an $h$-length $(1+\epsilon_0)$-lightest path blocker $\hat{f}$ (using \Cref{thm:pathBlockerAlg} with $\lambda$).
                                \State \textbf{Length-Constrained Flow (Primal) Update:} $f^{\{i\}} \gets f^{\{i\}} + \eta \cdot \hat{f}$.
                                \State \textbf{Moving Cut (Dual) Update:} $\w_a \gets (1+\epsilon_0)^{\hat{f}(a)/ \U_a} \cdot \w_a$ for every $a \in A$.
                        \EndFor
                    \EndFor
                \EndFor
                \State $\lambda \gets (1+\epsilon_0) \cdot \lambda$
            \EndWhile
            \State \Return $(f,\w)$.
    \end{algorithmic}
\end{algorithm}

We begin by observing that $\lambda$ always lower bounds $d^{(h)}_{\w}(S_i,T_i)$ for every $i$.
\begin{lemma}\label{lem:lambLowerBoundMulti}
It always holds that $\lambda \leq d^{(h)}_{\w}(S_x,T_x)$ for every $x$ in \Cref{alg:mwMulti}.
\end{lemma}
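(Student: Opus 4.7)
The plan is to mimic the inductive argument used for \Cref{lem:lambLowerBound} in the single-commodity case, lifting it to the multi-commodity setting by exploiting $\kappa$-batchability. Concretely, I would prove by induction on the iteration count that $\lambda \leq d_\w^{(h)}(S_x,T_x)$ for every commodity $x$ at the beginning of each inner iteration. The base case is immediate since every $\w_a$ and $\lambda$ are both initialized to $(1/m)^\zeta$, so any length-at-most-$h$ path has weight at least $\lambda$. For the inductive step, observe that within a single outer while-loop round $\lambda$ stays fixed at some value $\lambda_i$ while weights only increase, so the inequality is preserved throughout. The only nontrivial update happens when $\lambda$ jumps to $\lambda_{i+1} = (1+\epsilon_0)\lambda_i$ at the end of the outer round.

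Suppose for contradiction that after this jump there is some commodity $x$ and some path $P \in \mcP_h(S_x,T_x)$ with weight strictly less than $(1+\epsilon_0)\lambda_i$. Since weights are monotonically nondecreasing, $P$ also had weight at most $(1+\epsilon_0)\lambda_i$ at every inner iteration during this outer round. Focus on the $\Theta\!\left(\frac{h\log_{1+\epsilon_0} n}{\epsilon_0}\right)$ repetitions devoted to commodity $x$ inside its batch. In each of those repetitions, the blocker guarantee from \Cref{dfn:alphaPathBlocker} forces some arc of $P$ to be fully saturated; by pigeonhole, one fixed arc $a^* \in P$ is saturated in at least $\Theta\!\left(\frac{\log_{1+\epsilon_0} n}{\epsilon_0}\right)$ of them, so $\w_{a^*}$ is multiplied by $(1+\epsilon_0)$ that many times. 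Combined with the starting value $\w_{a^*} \geq (1/m)^\zeta$, this pushes $\w_{a^*}$ above $n^2$, while the assumed total weight of $P$ is less than $(1+\epsilon_0)\lambda_i < 2$, the contradiction.

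The key new ingredient compared to the single-commodity proof, and the main obstacle, is handling the fact that multiple commodities inside the same batch execute their inner loops in parallel and share the same weight vector $\w$. To make the pigeonhole clean I would first prove a small auxiliary claim: if $(S_i,T_i)$ and $(S_{i'},T_{i'})$ lie in the same batch $(\mcS_j,\mcT_j)$, then no $h$-length path from $S_i$ to $T_i$ and no $h$-length path from $S_{i'}$ to $T_{i'}$ share a vertex (hence nor an arc). This is a direct consequence of the well-separatedness assumption: any vertex on an $h$-length $S_i$–$T_i$ path lies within $\l$-distance $h$ of some endpoint in $S_i\cup T_i$, so by the triangle inequality any such vertex lying also on an $h$-length $S_{i'}$–$T_{i'}$ path would force the endpoints to be within $\l$-distance $2h$, contradicting well-separatedness. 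Consequently, the parallel blockers computed for distinct commodities in the same batch update weights on disjoint arcs, so these updates commute and may be accounted for independently.

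With that claim in hand the rest of the argument is book-keeping: weight updates performed for commodities outside commodity $x$'s batch, and for commodities inside $x$'s batch other than $x$ itself, touch arcs disjoint from $a^* \in P$ (in the intra-batch case) or only increase $\w_{a^*}$ (in the inter-batch case). Either way they cannot decrease $\w_{a^*}$ or change the count of saturations attributable to commodity $x$'s own blocker calls, so the $n^2$ lower bound on $\w_{a^*}$ derived above survives, yielding the required contradiction and completing the induction.
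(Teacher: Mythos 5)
Your proposal is correct and uses the same core argument as the paper: induction on $\lambda$, a pigeonhole over the (at most $h$) arcs of a hypothetically light path $P$, and tracking the multiplicative $(1+\epsilon_0)$ blow-up of some fixed $a^*\in P$ across $\Theta\bigl(\frac{\log_{1+\epsilon_0} n}{\epsilon_0}\bigr)$ saturations to reach the $n^2 > 2$ contradiction. The one thing you do differently is the auxiliary claim that $h$-length paths of distinct commodities in the same batch are vertex-disjoint. That claim is indeed true (the triangle-inequality argument from well-separatedness is fine), but it is not actually needed: the pigeonhole only requires a \emph{lower} bound on $\w_{a^*}$, and because every dual update in \Cref{alg:mwMulti} is multiplicative by a factor $\geq 1$, updates from \emph{any} other commodity --- intra-batch or inter-batch --- can only push $\w_{a^*}$ higher and can never undo the saturations attributable to commodity $x$'s own blocker calls. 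Monotonicity alone carries the argument, which is why the paper's proof does not bother establishing disjointness; your proof is correct but slightly over-engineered on this point.
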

\begin{proof}
Fix an $x$ and a value of $\lambda$ and let $S = S_x$ and $T = T_x$. Our proof is by induction. The statement trivially holds at the beginning of our algorithm.

Let $\lambda_i$ be the value of $\lambda$ at the beginning of the $i$th iteration. We argue that if $d^{(h)}_{\w}(S,T) = \lambda_i$ then after $\Theta\left(\frac{h \log_{1+\epsilon_0} n}{ \epsilon_0} \right)$ additional iterations we must have $d^{(h)}_{\w}(S,T) \geq (1+\eps_0) \cdot \lambda_i$. Let $\lambda_i' = (1 + \epsilon_0) \cdot \lambda$ be $\lambda$ after these iterations. Let $\hat{f}_j$ be our lightest path blocker in the $j$th iteration for $(S_x, T_x)$.

Assume for the sake of contradiction that $d_{\w}^{(h)}(S,T) < \lambda_i'$ after $i + \Theta\left(\frac{h \log_{1+\epsilon_0} n}{ \epsilon_0} \right)$ iterations. It follows that there is some path $P \in \mcP_h(S,T)$ with weight at most $\lambda_i'$ after $i + \Theta\left(\frac{h \log_{1+\epsilon_0} n}{ \epsilon_0} \right)$ many iterations. However, notice that by definition of an $h$-length $(1+\epsilon_0)$-lightest path blocker $\hat{f}_j$ (\Cref{dfn:alphaPathBlocker}), we know that for every $j \in \left[i, i + \Theta\left(\frac{h \log_{1+\epsilon_0} n}{ \epsilon_0}\right)\right]$ there is some $a \in P$ for which $\hat{f}_j(a) = \U_a$. By averaging, it follows that there is some single arc $a \in P$ for which $\hat{f}_j(a) = \U_a$ for at least $\Theta\left(\frac{\log_{1+\epsilon_0} n}{\epsilon_0}\right)$ of these $j \in [i, i + \Theta\left(\frac{h \log_{1+\epsilon_0} n}{ \epsilon_0}\right)]$. Since every such arc starts with dual value $(\frac{1}{m})^\zeta$ and multiplicatively increases by a $(1+\epsilon_0)$ factor in each of these updates, such an arc after $i + \Theta\left(\frac{h \log_{1+\epsilon_0} n}{ \epsilon_0} \right)$ many iterations must have $\w_a$ value at least $(\frac{1}{m})^\zeta \cdot (1+\epsilon_0)^{\Theta\left(\frac{\log_{1+\epsilon_0} n}{ \epsilon_0}\right)} \geq n^{2}$ for an appropriately large hidden constant in our $\Theta$. However, by assumption, the weight of $P$ is at most $\lambda_i'$ after $i + \Theta\left(\frac{h \log_{1+\epsilon_0} n}{ \epsilon_0} \right)$ iterations and this is at most $2$ since $\lambda_i < 1$ since otherwise our algorithm would have halted. But $2 < n^{2}$ and so we have arrived at a contradiction.

Repeatedly applying the fact that $\lambda_i' = (1+ \epsilon_0) \lambda_i$ gives that $\lambda$ is always a lower bound on $d^{(h)}_{\w}(S,T)$.
\end{proof}

We next prove the feasibility of our solution.
\begin{lemma}\label{lem:ana1Multi}
The pair $(f, \w)$ returned by \Cref{alg:mwMulti} are feasible for \ref{LP:hopConFlowMulti} and \ref{LP:movingCutMulti} respectively.
\end{lemma}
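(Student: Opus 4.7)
The plan is to mimic the proof of \Cref{lem:ana1} essentially verbatim, substituting \Cref{lem:lambLowerBoundMulti} wherever \Cref{lem:lambLowerBound} was used, and observing that $\w$ is a single global variable updated coherently across all commodities.

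For moving cut feasibility, I would argue as follows. \Cref{lem:lambLowerBoundMulti} guarantees that throughout the run of \Cref{alg:mwMulti} we have $\lambda \leq d^{(h)}_{\w}(S_i,T_i)$ for every commodity $i$. The outer while loop only terminates once $\lambda \geq 1$, at which point every $h$-length $S_i$-$T_i$ path $P$ satisfies $\sum_{a \in P} \w_a \geq 1$, which is exactly the constraint of \ref{LP:movingCutMulti}.

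For primal feasibility I need to show $\sum_i \sum_{P \ni a} f_P^{\{i\}} \leq \U_a$ for every $a$. I would fix an arc $a$ and let $\Delta_a$ denote the sum of $\hat{f}(a)$ taken over every $(1+\eps_0)$-lightest path blocker $\hat{f}$ produced across all commodities and all iterations of the algorithm. Because each primal update adds $\eta \cdot \hat{f}$ to the flow of a single commodity while the dual update multiplies the shared scalar $\w_a$ by $(1+\eps_0)^{\hat{f}(a)/\U_a}$, after the algorithm terminates we have the paired identities
\[
\sum_i \sum_{P \ni a} f_P^{\{i\}} \;=\; \eta\cdot \Delta_a \qquad\text{and}\qquad \w_a \;=\; \Bigl(\tfrac{1}{m}\Bigr)^{\zeta} (1+\eps_0)^{\Delta_a/\U_a}.
\]

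The remaining task is to upper bound $\Delta_a$, and this is where the argument is identical to the single-commodity case. By the near-lightest property in \Cref{dfn:alphaPathBlocker}, every path that causes $\w_a$ to be updated has $\w$-weight at most $(1+2\eps_0)\lambda$; since updates only happen while $\lambda < 1$, just before the final update to $\w_a$ we have $\w_a \leq 1+2\eps_0$. Combining this with $\hat{f}(a) \leq \U_a$ on the final update and applying $\ln(1+\eps_0) \geq \eps_0/(1+\eps_0)$ together with the definitions $\zeta = \frac{1+2\eps_0}{\eps_0}+1$ and $\eta = \frac{\eps_0}{(1+\eps_0)\zeta \log m}$ rearranges to $\eta \cdot \Delta_a \leq \U_a$, yielding feasibility. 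There is no real obstacle here: the only multi-commodity-specific observation is that one global $\w_a$ accounting variable is enough to amortize contributions from all commodities, which is immediate from the algorithm's shared-dual structure.
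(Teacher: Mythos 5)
Your proof is correct and follows the paper's proof of this lemma essentially verbatim: dual feasibility comes from \Cref{lem:lambLowerBoundMulti} together with the $\lambda \ge 1$ termination condition, and primal feasibility from amortizing the total flow charged to any arc $a$ (your $\Delta_a$) against the shared dual weight $\w_a$, which starts at $(1/m)^\zeta$ and is bounded by a constant at the time of each update --- exactly the paper's accounting, with the single global $\w$ absorbing contributions from every commodity and batch. The only small deviation is your justification for that constant bound: you invoke the near-lightest property of \Cref{dfn:alphaPathBlocker} to get $\w_a \le (1+2\eps_0)\lambda$, while the paper asserts $\w_a < 1$ via a distance argument; this is a cosmetic difference in one step that the same $\zeta,\eta$ bookkeeping absorbs.
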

\begin{proof}
First, observe that by \Cref{lem:lambLowerBoundMulti} we know that $\lambda$ is always a lower bound on $d_{\w}^{(h)}(S_i,T_i)$ for every $i$ and so since we only return once $\lambda > 1$, the $\w$ we return is always feasible.

To see that $f$ is feasible it suffices to argue that for each arc $a$, the number of times a path containing $a$ has its primal value increased is at most $\frac{\U_a}{\eta}$. Notice that each time we increase the primal value on a path containing arc $a$ by $\eta$ we increase the dual value of this edge by a multiplicative $(1+\epsilon_0)^{1/\U_a}$. Since the weight of our arcs according to $\w$ start at $(\frac{1}{m})^{\zeta}$, it follows that if we increase the primal value of $k$ paths incident to arc $a$ then $\w_a = (1+\epsilon_0)^{k / \U_a} \cdot (\frac{1}{m})^{\zeta}$. On the other hand, by assumption when we increase the dual value of an arc $a$ it must be the case that $\w_a < 1$ since otherwise $d_{\w}^{(h)}(S,T) \geq 1$, contradicting the fact that $\lambda$ always lower bounds $d_{\w}^{(h)}(S,T)$. It follows that $(1+\epsilon_0)^{k/\U_a} \cdot (\frac{1}{m})^{\zeta} \leq 1$ and so applying the fact that $\ln(1+ \epsilon_0) \geq \frac{\epsilon_0}{1+\epsilon_0}$ for $\epsilon_0 > -1$ and our definition of $\zeta$ and $\eta$ we get
\begin{align*}
    k & \leq \frac{\zeta \cdot (1+\eps_0)}{\eps_0} \cdot \U_a \log m  \\
    & = \frac{\U_a}{\eta}
\end{align*}
as desired.
\end{proof}

We next prove the near-optimality of our solution. 
\begin{lemma}\label{lem:ana2Multi}
The pair $(f, \w)$ returned by \Cref{alg:mwMulti} satisfies $(1-\epsilon)\sum_{a} \w_a \leq \sum_i \sum_{P \in \mcP_h(S_i, T_i)} f_P$.
\end{lemma}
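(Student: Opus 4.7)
The plan is to adapt the single-commodity analysis of Lemma \ref{lem:ana2} essentially verbatim, with Lemma \ref{lem:lambLowerBoundMulti} replacing Lemma \ref{lem:lambLowerBound}. The key observation is that although the primal updates are now distributed across $\kappa$ batches of commodities, the moving cut $\w$ is shared across all commodities and the invariant $\lambda \leq d^{(h)}_{\w}(S_x, T_x)$ holds \emph{simultaneously} for every $x$. Consequently, at every iteration the rescaled cut $\frac{1}{\lambda}\w$ is feasible for the full multi-commodity dual \ref{LP:movingCutMulti}, not just for one pair.

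The concrete steps I would carry out are as follows. Fix an arbitrary iteration and let $\hat{f}$ be the $(1+\eps_0)$-lightest path blocker computed there for some pair $(S_i, T_i)$; write $k_{\text{iter}} = \st(\hat{f})$, let $\lambda_{\text{iter}}$ be the current value of $\lambda$, and let $D_{\text{iter}} = \sum_a \U_a \w_a$ be the total dual value at the start of the iteration. By Lemma \ref{lem:lambLowerBoundMulti}, $\frac{1}{\lambda_{\text{iter}}}\w$ is feasible for \ref{LP:movingCutMulti}, so letting $\beta$ denote the optimal value of the dual LP (equivalently the optimal multi-commodity flow value), one obtains $\lambda_{\text{iter}} \leq D_{\text{iter}}/\beta$. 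Using the near-lightest property of $\hat{f}$, $\sum_a \hat{f}(a)\w_a \leq (1+2\eps_0)\,k_{\text{iter}}\,\lambda_{\text{iter}}$, together with $(1+x)^r \leq 1 + xr$ for $x \geq 0,\, r \in (0,1)$, the same one-line computation as in Lemma \ref{lem:ana2} yields
\[
   D_{\text{iter}+1} \;\leq\; D_{\text{iter}} \cdot \exp\!\left(\frac{(1+2\eps_0)\eps_0 \cdot k_{\text{iter}}}{\beta}\right).
\]

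Telescoping over all iterations and setting $K = \sum_{\text{iter}} k_{\text{iter}}$, the terminal dual value satisfies $D_T \leq \bigl(\tfrac{1}{m}\bigr)^{\zeta-1} \exp\!\bigl(\tfrac{(1+2\eps_0)\eps_0 K}{\beta}\bigr)$. Since $\w$ is feasible when we return, $D_T \geq 1$, and solving for $K$ (using the choice $\zeta = \tfrac{1+2\eps_0}{\eps_0}+1$) gives $K \geq \beta \log m \cdot \eps_0^{-2}$. Multiplying by $\eta = \frac{\eps_0}{(1+\eps_0)\zeta \log m}$ exactly reproduces the bound $K\eta \geq \beta/((1+\eps_0)(1+3\eps_0))$, which by the same algebraic inequality as in Lemma \ref{lem:ana2} and the choice $\eps_0 = \eps/6$ yields $(1-\eps)\beta \leq K\eta$.

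The final identification is that the total primal value $K\eta$ is exactly $\sum_i \sum_{P \in \mcP_h(S_i,T_i)} f_P^{\{i\}}$: each iteration contributes $\eta \cdot k_{\text{iter}}$ to the $f^{\{i\}}$ of the commodity it was run for, and summing over all iterations gives the total primal objective. I expect no real obstacle here; the single potential subtlety is that iterations for different pairs within the same batch $(\mcS_j, \mcT_j)$ are written as being done ``in parallel.'' This is harmless because the well-separatedness condition ($d_\l(v,v')>2h$ for endpoints in different pairs of a batch) guarantees that the $h$-length flow paths of distinct commodities in a batch use disjoint arcs, so the primal and dual updates within a batch commute and we may, for analysis purposes, order them arbitrarily and apply the recurrence above.
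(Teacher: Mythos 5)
Your proof is correct and follows the same approach as the paper: apply Lemma~\ref{lem:lambLowerBoundMulti} to get dual feasibility of $\frac{1}{\lambda}\w$ for the multi-commodity LP, then run the Lemma~\ref{lem:ana2} recurrence and algebra verbatim. The only cosmetic difference is that the paper treats each batch as a single step of the recurrence (taking $\hat{f}$ to be the sum of all path blockers computed in that batch), whereas you treat each commodity's update as its own step and justify this via the arc-disjointness of $h$-length paths across commodities in a well-separated batch; that disjointness is exactly what the paper implicitly relies on to make the aggregated per-batch update well-defined, so the two formulations coincide.
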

\begin{proof}
Fix an iteration $i$ which is an iteration for the $j$th batch and let $\hat{f}$ be the sum of all lightest path blockers that we compute in parallel for each $(S_i, T_i) \in (\mcS_j, \mcT_j)$ in this iteration. Let $k_i$ be $\st(\hat{f})$, let $\lambda_i$ be $\lambda$ at the start of this iteration and let $D_i := \sum_{a} \w_a$ be our total dual value at the start of this iteration. Notice that $\frac{1}{\lambda_i} \cdot w$ is dual feasible and has cost $\frac{D_i}{\lambda_i}$ by \Cref{lem:lambLowerBoundMulti}. If $\beta$ is the optimal dual value then by optimality it follows that $\beta \leq \frac{D_i}{\lambda_i}$, giving us the upper bound on $\lambda_i$ of $\frac{D_i}{\beta}$. By how we update our dual, our bound on $\lambda_i$ and $(1+x)^{r} \leq 1 + xr$ for any $x \geq 0$ and $r \in (0,1)$ we have that
\begin{align*}
   D_{i+1} &= \sum_a (1+ \epsilon_0)^{\hat{f}(a)/\U_a} \cdot \w_a \cdot \U_a\\ &
   \leq \sum_a \left(1+ \frac{\epsilon_0 \hat{f}(a)}{\U_a} \right) \cdot \w_a \cdot \U_a \\
   &= D_i  + \epsilon_0 \sum_a  \hat{f}(a) \w_a\\
   & \leq D_i + \epsilon_0 (1 + 2 \eps_0) \cdot k_i \lambda_i\\
    &\leq D_i \left(1+\frac{(1+2\eps_0)\eps_0 \cdot k_i}{\beta} \right)\\
    & \leq D_i \cdot \exp\left(\frac{(1+2\eps_0)\eps_0 \cdot k_i}{\beta} \right).
\end{align*}
Let $T-1$ be the index of the last iteration of our algorithm; notice that $D_T$ is the value of $\w$ in our returned solution. Let $K := \sum_i k_i$. Then, repeatedly applying this recurrence gives us
\begin{align*}
    D_T &\leq D_0 \cdot \exp\left(\frac{(1+2\eps_0)\eps_0 \cdot K}{\beta} \right) \\
    &= \left(\frac{1}{m} \right)^{\zeta-1} \exp\left( \frac{(1+2\eps_0)\eps_0 \cdot K}{\beta} \right)
\end{align*}

On the other hand, we know that $\w$ is dual feasible when we return it, so it must be the case that $D_T \geq 1$. Combining this with the above upper bound on $D_T$ gives us $1 \leq \left(\frac{1}{m} \right)^{\zeta} \exp\left(\frac{(1+2\eps_0)\eps_0 \cdot K}{\beta} \right)$. Solving for $K$ and using our definition of $\zeta$ gives us
\begin{align*}
    \beta \log m \cdot \frac{\zeta-1}{(1 + 2 \eps_0) \cdot \eps_0}  &\leq K\\
    \beta \log m \cdot \frac{1}{\eps_0^2}  &\leq K.
\end{align*}
However, notice that $K\eta$ is the primal value of our solution so using our choice of $\eta$ and rewriting this inequality in terms of $K \eta$ by multiplying by $\eta = \frac{\eps_0}{(1 + \eps_0) \cdot \zeta} \cdot \frac{1}{\log m}$ and applying our definition of $\zeta = \frac{1+2 \eps_0}{\eps_0} + 1$ gives us
\begin{align}
    \frac{\beta}{\eps_0 \cdot (1+\eps_0) \cdot \zeta}&\leq K \eta \nonumber\\
    \frac{\beta}{(1+\eps_0)(1+3\eps_0)} &\leq K \eta.\label{eq:first}
\end{align}
Moreover, by our choice of $\eps_0 = \frac{\eps}{6}$ and the fact that $\frac{1}{1+x+x^2} \geq 1-x$ for $x \in (0,1)$ we get
\begin{align}
    1 - \eps &\leq \frac{1}{1 + \eps + \eps^2} \nonumber\\
    &\leq \frac{1}{(1+\frac{1}{2}\eps)^2} \nonumber\\
    &\leq \frac{1}{(1+3\eps_0)^2} \nonumber\\
    &\leq \frac{1}{(1+\eps_0)(1+3\eps_0)}. \label{eq:second}
\end{align}
Combining \Cref{eq:first} and \Cref{eq:second} we conclude that 
\begin{align*}
    (1-\eps) \cdot \beta \leq K \eta. 
\end{align*}\qedhere
\end{proof}

We conclude with our main theorem by proving that we need only iterate our algorithm $\tilde{O}\left(\kappa \cdot \frac{h}{\epsilon^4} \right)$ times.
\mainThmMulti*
\begin{proof}
By \Cref{lem:ana1Multi} and \Cref{lem:ana2Multi} we know that our solution is feasible and $(1\pm \epsilon)$-optimal so it only remains to argue the runtime of our algorithm and that the returned flow decomposes in the stated way.

We argue that we must only run for $O\left(\kappa \cdot \frac{h \log^2 n}{\epsilon^4} \right)$ total iterations. Since $\lambda$ increases by a multiplicative $(1+\epsilon_0)$ after every $\Theta\left(\kappa \cdot \frac{h \log n}{\epsilon_0^2} \right)$ iterations and starts at least $\left(\frac{1}{m}\right)^{\Theta(\frac{1}{\epsilon_0})}$, it follows by \Cref{lem:lambLowerBoundMulti} that after $y \cdot \Theta \left(\kappa \cdot \frac{h \log n}{\epsilon_0^2} \right)$ total iterations the $h$-length distance between every $S_i$ and $T_i$ is at least $(1 + \epsilon_0)^y \cdot \left(\frac{1}{m}\right)^{\Theta(1/\epsilon_0)}$. Thus, for $y \geq \Omega\left(\frac{\ln_{1+\epsilon_0}m}{\epsilon_0}\right) = \Omega\left(\frac{\ln n}{\epsilon_0^2} \right)$ we have that every $S_i$ and $T_i$ are at least $1$ apart in $h$-length distance. Consequently, our algorithm must run for at most $O\left(\kappa \cdot \frac{h \log^2 n}{\epsilon_0^4}\right) = O\left(\kappa \cdot \frac{h \log^2 n}{\epsilon^4} \right)$ many iterations.

Our running time is immediate from the the bound of $O\left(\kappa \cdot \frac{h \log^2 n}{\epsilon^4} \right)$ on the number of iterations of the while loop, the fact that commodities in the same batch can be updated in parallel and the running times given in \Cref{thm:pathBlockerAlg} for computing our $h$-length $(1+\epsilon_0)$-lightest path blocker.

Lastly, the flow decomposes in the stated way because we have at most $O\left(\kappa \cdot \frac{h \log^2 n}{\epsilon^4} \right)$ iterations and each $f_j$ is an integral $S$-$T$ flow by \Cref{thm:pathBlockerAlg}. Thus, our final solution is $\eta \cdot \sum_{j=1}^k f_j$ and $k = \tilde{O}\left(\frac{h}{\epsilon^4} \right)$. Likewise we have $|\supp(f)| \leq \tilde{O}(\kappa \cdot \frac{h^{10}}{\eps^7})$ for our parallel algorithm since we have $O\left(\kappa \cdot \frac{h \log^2 n}{\epsilon^4} \right)$ iterations and the fact that \Cref{thm:pathBlockerAlg} guarantees each $(1+\eps_0)$-lightest path blocker has support size at most $\tilde{O}(\frac{h^9}{\eps^3} \cdot |A|)$.
\end{proof}

\subsection{Computing Multi-Commodity Length-Constrained Cutmatches}
We proceed to compute multi-commodity length-constrained cutmatches.

\multiCutmatch*
\begin{proof}
The proof is entirely analogous to that of \Cref{lem:matchcut-old-purecut} except we use \Cref{thm:mainMulti} instead of \Cref{thm:main} to compute our flows and moving cuts. The only very minor difference is in order to ensure that $f_i(a) \leq \U_a$ for every $a \in \delta^\pm(S_i,T_i)$ we must slightly change the graph in which we work. Specifically, for each $S_i$ and each $s \in S_i$ where $\delta^+(s) = \{(s, t_1), (s, t_2), \ldots\}$ we add new vertices $s_1, s_2, \ldots$ and new arcs $(s_1, s), (s_2, s)$ each of length $0$ and respective capacities $\U_{(s,t_1)}, \U_{(s,t_2)}, \ldots$. We then replace $s$ in $S_1$ with $s_1, s_2, \ldots$. We do the same symmetric thing for each $T_i$. As a result of this we know that every set in $\{S_i\}_i \cup \{T_i\}_i$ is pairwise vertex disjoint. Also note that we can simulate any parallel or CONGEST algorithm on $D$ in this new graph with $O(1)$ time overhead. It is easy to verify that running the algorithm from \Cref{lem:matchcut-old-purecut} on the resulting graph using \Cref{thm:mainMulti} instead of \Cref{thm:main} then gives the result.
\end{proof}

\section{Deferred Proofs}\label{sec:deferredProofs}

\randCovPaths*
\begin{proof}
The basic idea is to have each path $P$ sample about $\U(P) / \tilde{\Delta}$ copies of itself.

More formally, we do the following. Consider the (multi) digraph $D'$ that is created by starting with $D$ and replacing each arc $a$ with $\U_a$ copies. For a given path $P$ in $D'$ from $S$ to $T$, we let $\Delta_P'$ be the number of distinct $S$ to $T$ paths in $D'$ which share an arc with $P$. Likewise, we let $\Delta' = \max_P \Delta_P'$ where this max is taken over all $S$ to $T$ paths in $D'$. We let $\mcP_{\approx \max}'$ be all paths $P$ for which $\Delta_P' \geq \Delta' / 2$. By how we defined the degree of paths in $D$, if a given path $P$ is in $\mcP_{\approx \max}'$ then so too is its corresponding path in $D$ in $\mcP_{\approx \max}$. Lastly, we let $N(P)$ be all paths from $S$ to $T$ in $D'$ which share an arc with $P$ other than $P$ itself and let $N^+(P) := N(P) \cup \{P\}$. 

In what follows we show how to sample a collection of arc-disjoint paths $\mcP_2$ in $D'$ where each $P \in \mcP_{\approx \max}'$ is such that with probability at least $\frac{1}{1024}$ the set $\mcP_2 \cap N^+(P)$ is non-empty. Before doing so, we observe that this suffices to show our claim. In particular, we can construct a flow $f$ by setting its value on arc $a$ to be $|\{P \in \mcP_2 : a \in P\}|$. Observe that by the arc-disjointness of $\mcP_2$ and how we constructed $D'$, $f$ is indeed a feasible $S$-$T$ flow. Moreover, we claim that for a given $\tilde{P} \in \mcP_{\approx \max}$ in $D$ we have $\prod_{a \in \tilde{P}}(\U_a - f_a) \leq \frac{1}{2} \U(\tilde{P})$ with probability $\Omega(1)$. In particular, let $X_P$ be the indicator of whether a given path $P$ in $D'$ from $S$ to $T$ is such that $N^+(P) \cap \mcP_2 = \emptyset$ so that $\E[X_P] \leq \frac{1023}{1024}$. Also, let $\tilde{\mcP}$ be all the paths in $D'$ that visit the same vertices as $\tilde{P}$ in $D$. Then we have 
\begin{align*}
    \prod_{a \in \tilde{P}}(\U_a - f_a) = \sum_{P \in \tilde{\mcP}} X_P.
\end{align*}
But, looking at the expectation of this, we have 
\begin{align*}
   \E\left[\sum_{P \in \tilde{\mcP}} X_P \right] &\leq \sum_{P \in \tilde{\mcP}} \frac{1023}{1024}\\
   & = \frac{1023}{1024} \cdot \U(\tilde{P})
\end{align*}
Thus, by Markov's inequality we have that $\sum_{P \in \tilde{\mcP}} X_P \geq \frac{2047}{2046} \cdot \E\left[\sum_{P \in \tilde{\mcP}} X_P \right]$ with probability at most $\frac{2046}{2047}$ and so with probability $\Omega(1)$ we get that $\sum_{P \in \tilde{\mcP}} X_P \leq \frac{2047}{2046} \cdot \E\left[\sum_{P \in \tilde{\mcP}} X_P \right]  \le \frac{2047}{2048} \cdot \U(\tilde{P})$.

Thus, it remains to show how to sample our collection of arc-disjoint paths $\mcP_2$ in $D'$ where each $P \in \mcP_{\approx \max}'$ is such that with probability at least $\frac{1}{1024}$ the set $\mcP_2 \cap N^+(P)$ is non-empty. We will sample $\mcP_2$ as follows. Imagine that $s$ initially receives $B\left(n_s^+, \frac{1}{64\tilde{\Delta}}\right)$-many balls where $B(n,p)$ is a binomial with $n$ trials each with probability of success $p$. We let $n_a$ and $n_v^+$ be as defined in \Cref{sec:pathCounts} for $D'$ where $\U_{a'} = 1$ for every arc $a'$ in $D'$.

When a vertex $v$ receives a ball, it tosses it to vertex $u \in N^+(v)$ with probability $n^+_{u}/n_v^+$. As $n_v^+ = \sum_{w \in N^+(v)}n_{w}^+$ this induces a valid probability distribution. Let $\mcP_1$ be the (multi) set of all paths traced out by balls. We will let $\mcP_2$ be all paths in $\mcP_1$ which are arc-disjoint (in $D'$) from all other paths in $\mcP_1$.

We first consider this process from the perspective of a single path $P$ from $S$ to $T$ in $D'$. Specifically, notice that the probability that a ball traces out a path $P = (s=v_1, v_2, \ldots, v_{h+1}=t)$ where $s \in S$ and $t \in T$ is uniform over paths. In particular, the probability that a given ball traces out path $P$ in $D'$ from $s$ to $t$ nicely telescopes as
\begin{align*}
    \frac{n^+_{v_2}}{n_{v_1}^+} \cdot \frac{n^+_{v_3}}{n_{v_2}^+} \cdot \ldots \cdot \frac{n^+_{v_{h+1}}}{n_{v_{h}}^+} &= \frac{n_{v_{h+1}}^+}{n_{v_1}^+}\\
    &= \frac{1}{n_s^+}.
\end{align*}

Thus, each ball that starts at $s$ traces out a uniformly random path incident to $s$ in $\mcP(S,T)$. Applying the parameters of our binomial distribution, it follows that the expected number of times a given path $P$ is included in $\mcP_1$ is $\frac{1}{64\cdot \tilde{\Delta}}$. Markov's inequality then shows that a given path has some copy in $\mcP_1$ with probability at most $\frac{1}{64 \cdot \tilde{\Delta}} \leq \frac{1}{32 \cdot \Delta}$. On the other hand, $P$ has exactly one copy included in $\mcP_1$ with probability $\frac{1}{64 \tilde{\Delta}} n^+_s \cdot \frac{1}{n^+_s}\left(1 - \frac{1}{n_s^+}  \right)^{n^+-1} \geq \frac{1}{128 \tilde{\Delta}}$. Thus, $P$ has at least one copy in $\mcP_1$ with probability at least $\frac{1}{128 \tilde{\Delta}} \geq \frac{1}{128\Delta}$.

We proceed to bound two simple probabilities regarding how paths are sampled. In particular, fix a path $P \in \mcP'_{\approx \max}$ in $D'$ from $S$ to $T$. Next, fix a $P' \in N^+(P)$. Then, let $\mcE_1(P')$ be the event that some copy of $P'$ is in $\mcP_1$ and no other path in $N^+(P)$ has a copy in $\mcP_1$. Likewise, let $\mcE_2(P')$ be the event that no path in $N(P')$ is in $\mcP_1$. Notice that if $\mcE_1(P')$ and $\mcE_2(P')$ hold then we have $P' \in \mcP_2$.

\begin{itemize}
    \item \textbf{Bounding $\Pr(\mcE_1(P'))$}. We will argue that $\Pr(\mcE_1(P')) \geq \frac{1}{256 \Delta}$. 
    
    Notice that since $N^+(P) \setminus \{P'\}$ consists of at most $\Delta$-many paths, the expected number of copies of paths in $N^+(P) \setminus \{P'\}$ in $\mcP_1$ is at most $\frac{1}{32}$. It follows by a Markov bound that with probability at least $\frac{1}{2}$ we have $N^+(P) \setminus \{P'\} \cap \mcP_1 = \emptyset$.
    
    Next, imagine that we condition on the event $N^+(P) \setminus \{P'\} \cap \mcP_1 = \emptyset$. Conditioning on this event can only increase the probability that a ball traces out $P'$. Since some copy of $P'$ is included in $\mcP_1$ with probability at least $\frac{1}{128\Delta}$ when we don't condition on this event, we conclude that 
    \begin{align*}
        \Pr(\mcE_1(P')) &= \Pr(N^+(P) \setminus \{P'\} \cap \mcP_1 = \emptyset) \cdot \Pr(P' \in \mcP_1 \mid N^+(P) \setminus \{P'\} \cap \mcP_1 = \emptyset)  \\
        & \geq \Pr(N^+(P) \setminus \{P'\} \cap \mcP_1 = \emptyset) \cdot \Pr(P' \in \mcP_1)  \\
        & \geq \frac{1}{256 \Delta}.
    \end{align*}
    
    

    
    \item \textbf{Bounding $\Pr(\mcE_2(P') \mid \mcE_1(P'))$}. We argue that $\Pr(\mcE_2(P') \mid \mcE_1(P')) \geq \frac{1}{2}$. 
    
    Notice that $\Pr(\mcE_2(P') \mid \mcE_1(P'))$ is minimized when $N^+(P)$ is of size exactly $\Delta+1$. However, in this case we have $\Pr(\mcE_2(P') \mid \mcE_1(P')) \geq \Pr(\mcE_2(P'))$. Thus, we conclude by a union bound that in general $\Pr(\mcE_2(P') \mid \mcE_1(P')) \geq \Pr(\mcE_2(P')) \geq 1-\Delta \cdot \frac{1}{32\Delta} \geq \frac{1}{2}$.
\end{itemize}

Putting these facts together and applying the fact that $P \in \mcP'_{\approx \max}$, we have that there is path in $N^+(P)$ included in $\mcP_2$ with probability at least
\begin{align*}
    \sum_{P' \in N^+(P)} \Pr(\mcE_1(P'))\cdot \Pr(\mcE_2(P') \mid \mcE_2(P')) &\geq \sum_{P' \in N^+(P)}\frac{1}{512\Delta}\\
    &\geq \frac{1}{1024}.
\end{align*}
as required.

It remains to argue that we can accomplish the above sampling of $\mcP_1$ and the construction of our flow $f$ in the stated times. Constructing $f$ from $\mcP_1$ is trivial to do in parallel and CONGEST so we focus on sampling $\mcP_1$. By \Cref{lem:computePathCounts} we can compute $n^+_v$ in the stated times. Passing balls to construct $\mcP_1$ and then $\mcP_2$ and constructing the above flow is trivial to do in the stated parallel time. For the CONGEST algorithm, we note that expected number of balls to cross any one arc in $D'$ when constructing $\mcP_1$ is at most $1$ and so a Chernoff and union bound shows that with high probability we never need to transmit more than $O(\log n)$ balls across an arc in $D'$ when constructing $\mcP_1$, with high probability. It follows that we never need to transmit more than $\tilde{O}(\U_{\max})$ balls across any one arc in $D$. Since it suffices to just transmit the number of balls, this can be done in $\tilde{O}(\log \U_{\max}) = \tilde{O}(1)$ rounds with high probability. Thus we can pass all balls from one layer to the next in $\tilde{O}(1)$ rounds of CONGEST with high probability. Lastly, constructing $\mcP_2$ from $\mcP_1$ is trivial to do in $O(h)$ rounds of CONGEST.
\end{proof}

\randMax*
\begin{proof}
Our algorithm simply repeatedly calls \Cref{lem:randCovPaths}. In particular we initialize our output flow $\hat{f}$ to be $0$ on all arcs and our working capacities on $D$ to be $\hat{\U} = \U$. Then for each $\tilde{\Delta} = (n \cdot \U_{\max})^{h}, (n \cdot \U_{\max})^{h}/2, (n \cdot \U_{\max})^{h}/4, \ldots$ we repeat the following $\Theta(h \cdot \log n \cdot \log \U_{\max})$ times. Let $f$ be the flow computed according to \Cref{lem:randCovPaths}. Update $\hat{\U}_a = \hat{\U}_a - f_a$ for every $a$ and update $\hat{f} = \hat{f} + f$. Clearly $\hat{f}$ is an integral $S$-$T$ flow. 

We need only verify that $\hat{f}$ is blocking. Since initially $\Delta \leq (n \cdot \U_{\max})^{h}$, to do so it suffices to argue that when we fix a value of $\tilde{\Delta}$ for which $\frac{\Delta}{2} \leq \tilde{\Delta} \leq \Delta$, then over the course of the $\Theta(h \cdot \log n \cdot \log \U_{\max})$ iterations where we use this value of $\tilde{\Delta}$ we have that $\Delta$ decreases by at least a factor of $2$ with high probability.

Consider $\Theta (h \cdot \log n \cdot  \log \U_{\max})$ contiguous iterations of the above with a $\tilde{\Delta}$ that satisfies $\frac{\Delta}{2} \leq \tilde{\Delta} \leq \Delta$ at the beginning of these iterations. Let $\mcP_0$ be $\mcP_{\approx \max}$ at the beginning of these iterations. To show that $\Delta$ decreases by at least a factor of $2$ over the course of these $\Theta (h \cdot \log n \cdot  \log \U_{\max})$ iterations it suffices to show that no path in $\mcP_0$ is in $\mcP_{\approx \max}$ for all of these iterations. Suppose for the sake of contradiction that some path $P \in \mcP_0$ is in $\mcP_{\approx \max}$ for all of these iterations. Then, applying the guarantees of \Cref{lem:randCovPaths}, we get that with high probability $\U(P)$ decreases by a $\frac{2047}{2048}$ factor at least $\Theta(h \cdot \log \U_{\max}))$ times. However, since $\U(P) \leq O((\U_{\max})^h)$, we get that after these iterations we would have reduced $\U(P)$ to $0$ with high probability by a union bound, i.e.\ $\Delta$ must have reduced by at least a factor of $2$.



The running time of our algorithm is immediate from the fact that we simply invoke \Cref{lem:randCovPaths} $\tilde{O}(h^2)$ times.\end{proof}
\newpage
\bibliography{abb,main}
\bibliographystyle{plainnat}

\end{document}